\providecommand{\tabularnewline}{\\}
\theoremstyle{plain}
\newtheorem{thm}{\protect\theoremname}
\theoremstyle{plain}
\newtheorem{lem}[thm]{\protect\lemmaname}
\renewcommand{\epsilon}{\varepsilon}
\newcommand{\eps}{\varepsilon}
\newcommand{\epsl}{\eps_{large}}
\newcommand{\epss}{\eps_{small}}
\newcommand{\epsst}{\eps_{strip}}
\newcommand{\opt}{OPT}
\newcommand{\optco}{\opt_{corr}}
\newcommand{\optsk}{\opt_{skew}}
\newcommand{\optla}{\opt_{large}}
\newcommand{\optsm}{\opt_{small}}
\newcommand{\optho}{\opt_{hor}}
\newcommand{\optve}{\opt_{ver}}
\newcommand{\optin}{\opt_{int}}
\newcommand{\R}{I}
\newcommand{\width}{w}
\newcommand{\height}{h}
\newcommand{\profit}{p}
\newcommand{\bottomc}{bc}
\newcommand{\leftc}{lc}
\newcommand{\Rco}{\R_{corr}}
\newcommand{\Rsm}{\R_{small}}
\newcommand{\Rla}{\R_{large}}
\newcommand{\Rho}{\R_{hor}}
\newcommand{\Rve}{\R_{ver}}
\newcommand{\Rsk}{\R_{skew}}
\newcommand{\Rin}{\R_{int}}
\newcommand{\typeC}{\textsf{type}}
\newcommand{\shapeC}{\mathsf{shape}}
\definecolor{cadmiumgreen}{rgb}{0.0, 0.42, 0.24}
\title{Improved Approximation Algorithms for 2-Dimensional Knapsack:  Packing into Multiple L-Shapes, Spirals, and More}
\titlerunning{Improved Approximation Algorithms for 2-Dimensional Knapsack}
\author{Waldo G\'alvez}{Department of Computer Science, Technical University of Munich, Munich, Germany}{galvez@in.tum.de}{https://orcid.org/0000-0002-6395-3322}{Supported by the European Research Council, Grant Agreement No. 691672, project APEG.}
\author{Fabrizio Grandoni}{IDSIA, USI-SUPSI, Lugano, Switzerland}{fabrizio@idsia.ch}{}{Partially supported by the SNSF Excellence Grant 200020B\_182865/1.}
\author{Arindam Khan}{Department of Computer Science and Automation, Indian Institute of Science, Bangalore, India}{arindamkhan@iisc.ac.in}{https://orcid.org/0000-0001-7505-1687}{}
\author{Diego Ramírez-Romero}{Department of Mathematical Engineering, Universidad de Chile, Santiago, Chile}{dramirez@dim.uchile.cl}{}{}				
\author{Andreas Wiese}{Department of Industrial Engineering and Center for Mathematical Modeling, Universidad de Chile, Santiago, Chile}{awiese@dii.uchile.cl}{}{Partially supported by the ANID Fondecyt Regular grant 1200173.}
\authorrunning{W. Gálvez, F. Grandoni, A. Khan, D. Ramírez and A. Wiese}
\keywords{Approximation algorithms, two-dimensional knapsack, geometric packing}
\providecommand{\lemmaname}{Lemma}
\providecommand{\theoremname}{Theorem}
\providecommand{\lemmaname}{Lemma}
\providecommand{\theoremname}{Theorem}
\begin{document}
\global\long\def\N{\mathbb{N}}%
\global\long\def\C{\mathcal{C}}%
\global\long\def\B{\mathcal{B}}%
\global\long\def\L{\mathcal{L}}%
\global\long\def\P{\mathcal{P}}%
\global\long\def\Rh{I_{\mathrm{high}}}%
\global\long\def\Rl{I_{\mathrm{low}}}%
\global\long\def\Rhone{I_{\mathrm{high},1}}%
\global\long\def\Rhtwo{I_{\mathrm{high},2}}%
\global\long\def\area{a}%
\global\long\def\Rlo{I_{\mathrm{lonely}}}%
\global\long\def\ratioweak{1.6+\epsilon}%
\maketitle
\global\long\def\S{\mathcal{S}}%

\thispagestyle{empty} 
\begin{abstract}
\noindent In the \textsc{2-Dimensional Knapsack} problem (2DK) we
are given a square knapsack and a collection of $n$ rectangular items
with integer sizes and profits. Our goal is to find the most profitable
subset of items that can be packed non-overlappingly into the knapsack.
The currently best known polynomial-time approximation factor for
2DK is $17/9+\epsilon<1.89$ and there is a $(3/2+\eps)$-approximation
algorithm if we are allowed to rotate items by 90 degrees~{[}Gálvez
et al., FOCS 2017{]}. In this paper, we give $(4/3+\epsilon)$-approximation
algorithms in polynomial time for both cases, assuming that all input
data are {integers polynomially bounded in $n$}.

Gálvez et al.'s algorithm for 2DK partitions the knapsack into a constant
number of rectangular regions plus \emph{one} L-shaped region and
packs items into those {in a structured way}. We generalize this approach by allowing up
to a \emph{constant} number of {\emph{more general}} regions that can have the shape of
an L, a U, a Z, a spiral, and more, and therefore obtain an improved
approximation ratio. {In particular, we present an algorithm that
computes the essentially optimal structured packing into these regions.
}

\end{abstract}

\section{Introduction}

The \textsc{2-Dimensional (Geometric) Knapsack} problem (2DK) is a
natural geometric generalization of the fundamental (one-dimensional)
\textsc{Knapsack} problem. In 2DK we are given a set $\R$ of $n$
items $i$ which are axis-parallel rectangles specified by their width
$\width(i)\in\N$, height $\height(i)\in\N$, and profit $\profit(i)\in\N$.
Furthermore, we are given an axis-parallel square knapsack $K=[0,N]\times[0,N]$
for some $N\in\N$. The goal is to select a subset $\R'\subseteq\R$
of maximum total profit $\profit(\R'):=\sum_{i\in\R'}\profit(i)$
that can be placed non-overlappingly inside $K$. 
Formally, for each $i\in\R'$ we have to define a pair $(\leftc(i),\bottomc(i))$
that specifies the left and bottom coordinates of $i$, respectively,
such that $i$ is placed inside $K$ as $R(i):=(\leftc(i),\bottomc(i))\times(\leftc(i)+\width(i),\bottomc(i)+\height(i))$;
we require that $R(i)\subseteq K$ and also that for any two $i,j\in\R'$
it holds that $R(i)\cap R(j)=\emptyset$. We will consider also the
case \emph{with rotations}, in which each item $i\in\R$ can be rotated
by $90$ degrees, i.e., $i$ can be replaced by a rectangle with width
$\height(i)$, height $\width(i)$ (and profit~$\profit(i)$). In
the cardinality (or unweighted) setting of the problem all profits
are $1$.

2DK has several applications. For example, the rectangles can model
banners out of which one wants to place the most profitable subset
on a website or an advertisement board. Also, they can model pieces
that one wants to cut out of some raw material like wood or steel.
In addition, there are scheduling settings in which jobs need a consecutive
amount of some given resource (e.g., a frequency bandwidth) for some
amount of time; thus, each job can be modeled via a rectangle.

Most algorithms for 2DK and related problems work as follows: they
guess a partition of the knapsack into $O_{{\epsilon}}(1)$ rectangular
boxes, for some small constant $\eps>0$. Inside each box the items
are packed greedily using the Next-Fit-Decreasing-Height algorithm
\cite{CGJT80}, or even simpler by stacking items on top of each other
or next to each other. Implicitly, Jansen and Zhang use this strategy
to obtain a $(2+\epsilon)$-approximation algorithm for 2DK~\cite{JZ04swat,JZ04soda}.
The same approach, however using $(\log N)^{O_{\epsilon}(1)}$ boxes,
is used in a QPTAS which assumes that $N$ is quasi-polynomially bounded
in $n$ \cite{AW15}. Finding a PTAS for 2DK or ruling it out is a
major open problem in the area. This question is also open if $N$
is polynomially bounded {in~$n$}, i.e., for pseudo-polynomial time algorithms.

One might wonder whether a PTAS can be constructed using $O_{\epsilon}(1)$
boxes only. Unfortunately, as observed in \cite{GGHIKW17}, essentially
no better approximation ratio than 2 is achievable {in} this way. 
Hence a different type of packing is needed to breach this approximation
barrier (in polynomial time). This was recently achieved by Gálvez
et al.~\cite{GGHIKW17}, where the authors pack the items into $O_{\epsilon}(1)$
boxes and additionally one container with the shape of an L (which
is packed with an ad-hoc, more complex algorithm). 

This yields an approximation ratio of $17/9+\epsilon<1.89$ ({and} $558/325+\epsilon<1.72$
in the unweighted case). The authors also {present} $(3/2+\epsilon)$-
and $(4/3+\epsilon)$-approximation algorithms for 2DK with rotations
in the weighted and unweighted case, respectively.

Gálvez et al.~\cite{GGHIKW17} pose as an open problem how to efficiently
pack items into a \emph{constant} number of L-shaped containers, and observe
that this would lead to improved approximation algorithms for 2DK.
This problem was open even for just two L-shaped {containers}
and using pseudo-polynomial time $(nN)^{O_{\epsilon}(1)}$. 
{In this paper we solve (a generalization of) this problem, {and hence obtain} an improved approximation ratio.}

\subsection{Our contribution}

In this paper, we present a $(4/3+\epsilon)$-approximation algorithm
with a (pseudo-polynomial) running time {of} $(nN)^{O_{\epsilon}(1)}$
for (weighted) 2DK. We also achieve improved $(4/3+\epsilon)$- and
$(5/4+\epsilon)$-approximation algorithms for 2DK \emph{with rotations}
in the weighted and unweighted case, {resp., with the same running
time.} See Table~\ref{tab:results} for an overview of our and previous
results in the respective settings.

\begin{table}
\begin{centering}
\begin{tabular}{|c|c|c|c|}
\hline 
\multicolumn{2}{|c|}{Setting} & Known result  & Our result\tabularnewline
\hline 
\multirow{2}{*}{Without rotations} & Weighted  & $17/9+\epsilon<1.89$ \cite{GGHIKW17}  & $4/3+\epsilon$\tabularnewline
\cline{2-4} \cline{3-4} \cline{4-4} 
 & Unweighted  & $558/325+\epsilon<1.72$ \cite{GGHIKW17}  & $4/3+\epsilon$\tabularnewline
\hline 
\multirow{2}{*}{With rotations} & Weighted  & $3/2+\epsilon$ \cite{GGHIKW17}  & $4/3+\epsilon$\tabularnewline
\cline{2-4} \cline{3-4} \cline{4-4} 
 & Unweighted  & $4/3+\epsilon$ \cite{GGHIKW17}  & $5/4+\epsilon$\tabularnewline
 \hline
\end{tabular}
\par\end{centering}
\caption{\label{tab:results}A summary of our approximation ratios compared
to the best known results with running time $(nN)^{O_{\epsilon}(1)}$
for the respective settings.}
\end{table}

Our algorithms use $O_{\epsilon}(1)$ boxes and in addition, rather
than one single L-shaped container as in \cite{GGHIKW17}, a combination of $O_{\epsilon}(1)$ containers with the shape of an
L or even more complicated shapes. 
The latter are intuitively thin corridors with the property that if
we traverse them, we change the orientation at the turns (i.e., clockwise
or counter-clockwise) at most once. For example, {they} can
have figuratively the shape of a U, a Z, or a spiral (see Figure~\ref{fig:packings}).
{Since they are thin, they help us to distinguish the parts
of $K$ that are used by items that are wide and thin (horizontal
items) and items that are high and narrow (vertical items). The interaction
of these types of items is a major difficulty in 2DK.}

By standard arguments (in particular building upon the corridor decomposition
in~\cite{AHW19}), it is not hard to show that we can partition $K$
into a constant number of boxes and corridors of the allowed {types},
so that there exists a feasible packing of a $(4/3+\epsilon)$-approximate
solution into them. The non-trivial part is how to efficiently pack
items into our corridors. 
Here we cannot exploit the L-packing algorithm in~\cite{GGHIKW17}.
Indeed, the latter algorithm does not seem to generalize even to two
L-shaped corridors (even {if} $N=n^{O(1)}$), while we need
to handle $\Theta_{\epsilon}(1)$ corridors with possibly even more
general shapes.

Our strategy is to partition each of our corridors into $O_{\epsilon}(\log N)$
rectangular boxes. Using the properties of their shapes, we show that
this is indeed possible by losing only a factor of $1+\epsilon$ in
the approximation guarantee. Guessing these boxes explicitly would
take $N^{O_{\epsilon}(\log N)}$ time which is too slow. Instead,
we show that we can guess the \emph{sizes} of almost all of the boxes
in time $(nN)^{O_{\epsilon}(1)}$. Then, we place them into the corridors
in polynomial time using a dynamic program based on color-coding,
using that in total there are only $O_{\epsilon}(\log N)$ boxes to
place.

We remark that the above approach compromises between corridor shapes
that are general enough to allow for %
\mbox{%
$(4/3+\epsilon)$%
}-approximate packings, and at the same time simple enough so that
we can partition them into $O_{\epsilon}(\log N)$ boxes that we can
essentially guess in time~$(nN)^{O_{\epsilon}(1)}$. It is not clear
how to extend our {algorithm to }$\Theta(\log^{1/\epsilon}N)$,
or just even $\Theta(\log^{2}N)$ such boxes. However, this would
allow us to exploit corridors of more general shapes (say $W$-shaped),
hence achieving better approximation ratios. We leave this as an
interesting open problem.

\begin{figure}
\begin{centering}
\includegraphics[scale=0.5]{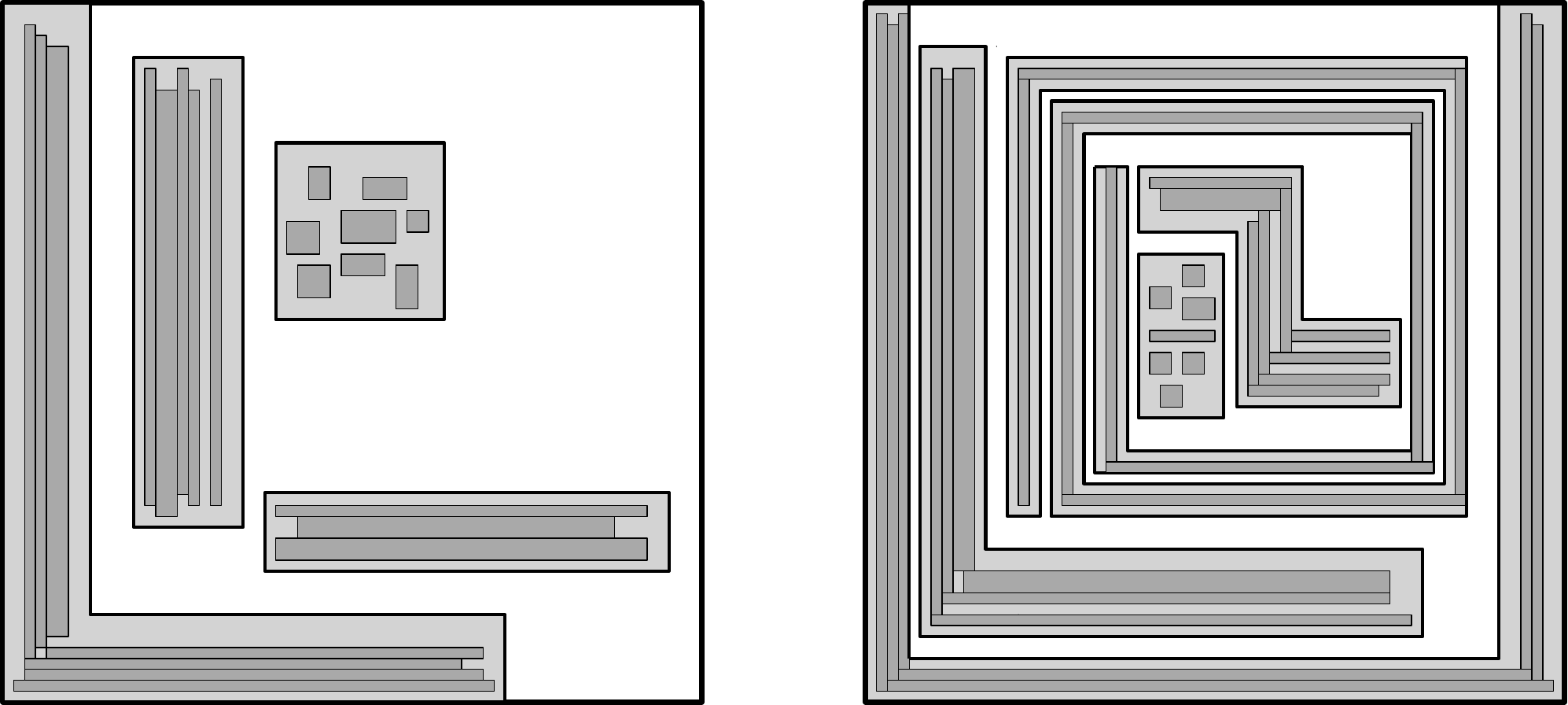} 
\par\end{centering}
\caption{\label{fig:packings}Left: a packing based on a single L-shaped container
and boxes as it was used in previous work. Right: A packing based
on a box and corridors with the shapes of an L, a U, a Z, and a spiral,
as we use them in our algorithms.}

\end{figure}

\subsection{Related Work}	
The QPTAS in \cite{AW15} (though with some restrictions on $N$)
suggests that 2DK most likely admits a PTAS. This is already known
for some relevant special cases: if the profit of each item equals
its area~\cite{BCJPS09}, if the size of the knapsack can be slightly
increased (resource augmentation)~\cite{FGJS05,JS07}, if all items
are relatively small \cite{FGJ05}, or squares~\cite{HW17,JS08}.

One might consider packing geometric objects other than rectangles. In particular,
there are constant approximation algorithms for packing triangles
and also arbitrary convex polygons under resource augmentation,
both assuming that arbitrary rotations are allowed \cite{MW2020}.
Also, for circles a $(1+\eps)$-approximation
is known under resource augmentation in one dimension if the profit
of each circle equals its area~\cite{lintzmayer2018two}. One can
consider natural generalizations of 2DK to a higher number of dimensions.
In particular, the 3-dimensional case, 3DK, has applications like
packing containers into a ship or cargo into a truck. 3DK is known
to be APX-hard \cite{CC09jda}, and constant approximation algorithms
are known \cite{DHJTT08,H09}. Khan et al.~\cite{KSS21} have  
given a $(2+\eps)$-approximation algorithm for a generalization of 2DK, which generalizes geometric 
packing and vector packing.

A parameterized version of 2DK for rectangles (where the
parameter is the number $k$ of packed items) is studied in \cite{GKW19}.
The authors show that the problem is $W[1]$-hard (both with and without
rotations). Furthermore, they provide an FPT $(1+\eps)$-approximation
for the case with rotations. Achieving a similar result for the case
without rotations is open.

A packing is called a guillotine packing if all rectangles can be separated by a sequence of 
end-to-end (guillotine) cuts~\cite{KMR20}. Abed et al.~\cite{AbedCCKPSW15} have 
given QPTAS for 2DK satisfying 
guillotine packing constraints, assuming the input data is quasi-polynomially bounded. 
Recently, Khan et al.~\cite{KMSW21} have shown a pseudo polynomial-time
approximation scheme for 2DK satisfying  guillotine packing constraints. 

In the \textsc{2-Dimensional Bin Packing} problem we are given a collection
of items similarly to 2DK, and copies of the same square knapsack
(the bins). Our goal is to pack \emph{all} the items using the smallest possible number of bins. 
The best known (asymptotic) result for this problem is due to Bansal and Khan~\cite{BK14}: 
they achieve a $1.405$ approximation based on a configuration-LP. This improves a 
series of previous results \cite{BCS09,C02,CGJ82,JS07,KR00}. 

Another closely related problem is \textsc{Strip Packing}. Informally,
we are given a knapsack of width $N$ and infinite
height,
and we wish to pack \emph{all} items so that the topmost coordinate
is as small as possible. This problem admits a $\left(5/3+\eps\right)$-approximation
\cite{HJPS14} (improving on \cite{BCR80,CGJT80,HV09,S94,S80,S97}) in the general case, 
$\left(3/2+\eps\right)$-approximation \cite{Galvez0AJ0R20} when none of the items are large, 
and it is NP-hard to approximate it below a factor $3/2$ by a simple reduction
from \textsc{Partition}. However, strictly better approximation ratios
can be achieved in pseudo-polynomial time $(Nn)^{O_{\eps}(1)}$~\cite{GGIK16,JR17,NW16},
being $5/4+\eps$ essentially the best possible ratio achievable in
this setting~\cite{HJRS18,JR19}. This shows that pseudo-polynomial
time can make the difference for rectangle packing problems. It would
be interesting to understand whether the techniques in our paper (as
well as those in \cite{AW15}) can be strengthened so as to run in polynomial
time for arbitrary $N$. \textsc{Strip Packing} was also studied in
the asymptotic setting \cite{JS07,KR00} and in the case with rotations
\cite{JS05stoc}.

Another related problem is the \textsc{Independent Set of Rectangles}
problem: here we are given a collection of axis-parallel rectangles
embedded in the plane, and we need to find a maximum cardinality/weight
subset of non-overlapping rectangles \cite{AHW19,AW13,CC09,chuzhoy2016approximating}.
The problem has also been studied for squares, disks, and pseudo-disks,
see e.g., \cite{erlebach2005polynomial,hunt1998nc,CH12}.

We refer the readers to \cite{CKPT17,K16} for surveys on geometric
packing problems.

\section{Preliminaries}

We start with a classification of the input items according to their
heights and widths. Let $\epsilon>0$. For two constants $1\geq\epsl>\epss>0$
to be defined later, we classify an item $i$ as:
\begin{itemize}
\item \emph{small}: if $\height(i),\width(i)\leq\epss N$; 
\item \emph{large}: if $\height(i),\width(i)>\epsl N$; 
\item \emph{horizontal}: if $\width(i)>\epsl N$ and $\height(i)\leq\epss N$; 
\item \emph{vertical}: if $\height(i)>\epsl N$ and $\width(i)\leq\epss N$; 
\item \emph{intermediate}: otherwise, i.e., {the length of at least one edge is in}
$(\epss N,\epsl N]$.
\end{itemize}
We call \emph{skewed} {the} items that are either horizontal or vertical.
We let $\Rsm$, $\Rla$, $\Rho$, $\Rve$, $\Rsk$, and $\Rin$ be
the items which are small, large, horizontal, vertical, skewed, and
intermediate, respectively. The corresponding intersection with the
optimal solution $\opt$ defines the sets $\optsm$, $\optla$, $\optho$,
$\optve$, $\optsk$ and $\optin$, respectively.

In order to describe our main ideas, we will start by considering the cardinality case of the problem (i.e., $\profit(i)=1$ for each item $i\in\R$) without rotations. In Sections~\ref{sec:partition-corridors}, \ref{sec:large-OPT},
and \ref{sec:DP-color-coding}, we will present a simplified algorithm that yields a $\ratioweak$ approximation. 

Notice that the optimum solution can contain at most $1/\epsl^{2}$ large items. Thus, unless $|\opt|\le1/(\epsilon\cdot\epsl^{2})$ (in which case we can solve the problem optimally in {time $n^{O(1/(\epsilon\cdot\epsl^{2}))}$ by complete enumeration}), we can drop all large
items by losing only a factor of $1+\epsilon$ in the approximation. Similarly, by standard shifting techniques (see Lemma \ref{lem:item-classification} for details), when defining $\epsl$
and $\epss$ one can ensure that the intermediate items can be neglected
by losing only a factor of $1+\epsilon$ in the approximation ratio (while maintaining that $\epsl$ and $\epss$ are lower-bounded by
some constant depending only on $\epsilon$). 

Hence, w.l.o.g.  we can assume that all items are small or skewed. It is possible to deal with small items by standard techniques from the literature, however this would make our exposition much more technical without introducing substantially new ideas. Hence, for the sake of simplicity, we will assume that there are no small items, i.e., all items are skewed. We remark that, even with the mentioned restrictions, the problem is far from {being} trivial. In particular, the best known approximation for the considered setting is $\frac{558}{325}+\epsilon\approx1.72$ \cite{GGHIKW17}. Our simplified algorithm has a better approximation ratio $\ratioweak$, {and it is also substantially simpler}. 

Later, in Sections~\ref{sec:improve-ratio}-\ref{sec:rotations}, we will present our main results: $(4/3+\epsilon)$-approximation
algorithms for the general case with and without rotations, and a $(5/4+\eps)$-approximation algorithm for the cardinality case with rotations respectively.

\section{\label{sec:partition-corridors}Partition into LU-corridors}

Our strategy is to partition the knapsack into $O_{\epsilon}(1)$
thin corridors, {each having the shape of an L or a U,}
such that\emph{
}there exists a $(\ratioweak)$-approximate solution in which each
item is contained in one of these corridors (see Figure~\ref{fig:lupacking-intro}). In this
section, we will make {this} precise and show that such a partition indeed
exists. Our algorithm will then guess this partition in polynomial
time. In Sections~\ref{sec:large-OPT} and \ref{sec:DP-color-coding}
we will show how to find the corresponding solution afterwards efficiently.
\begin{figure}[!ht]
	\centering
	\includegraphics[width=0.7\linewidth]{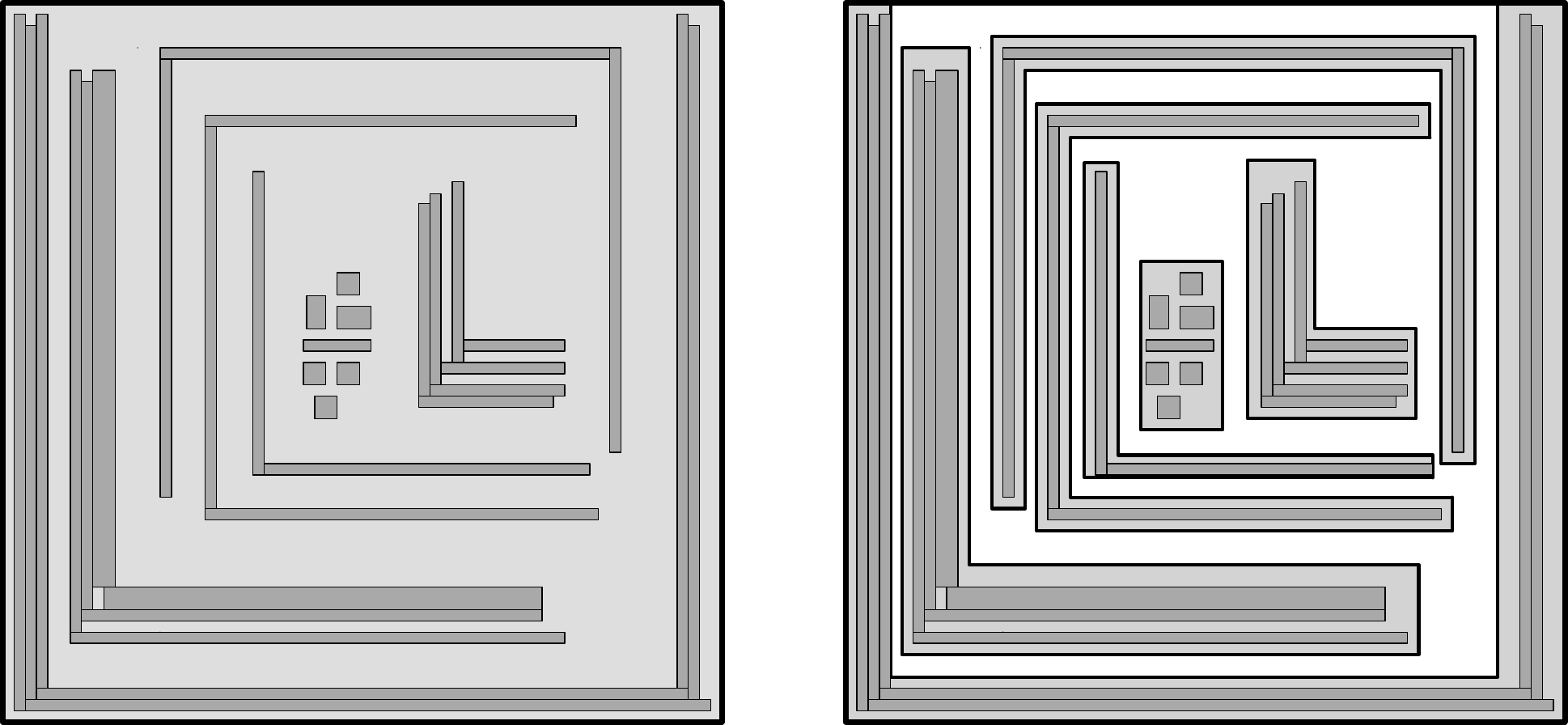}
	\caption{Left: a packing of horizontal, vertical and small items into the knapsack. Right: a decomposition of the knapsack into box-shaped, L-shaped and U-shaped corridors which contain the previous items.}
	\label{fig:lupacking-intro}
\end{figure}

Intuitively, a \emph{path corridor} is a polygon inside $K$ that describes
a path of width at most $\epsilon\cdot\epsl$ and that is allowed
to have bends, see Figure~\ref{fig:luz-1}. 

Formally, it is a simple
rectilinear polygon within $K$ with $2k$ edges $e_{0},\ldots,e_{2k-1}$
for some integer $k\ge2$, such that for each pair of horizontal (resp.,
vertical) edges $e_{i},e_{2k-i}$, $i\in\{1,...,k-1\}$ there exists
a vertical (resp., horizontal) line segment $\ell_{i}$ of length
less than $\epsilon\cdot\epsl$ such that both $e_{i}$ and $e_{2k-i}$
intersect $\ell_{i}$ and $\ell_{i}$ does not intersect any other
edge, and we require $e_{i}$ and $e_{2k-i}$ to have length at least
$\epsl/2$. Note that $e_{0}$ and $e_{k}$ are not required to satisfy
these properties. We say that such a path corridor $C$ has $s(C):=k-1$
\emph{subcorridors. }We say that a \emph{box }is a path corridor with
only one subcorridor, i.e., it is simply a rectangle.

Similarly, a \emph{cycle corridor }is intuitively a path corridor
in which the start and end point of the path coincide (see Figure~\ref{fig:luz-1}).
Formally, we define it to be a face bounded by two simple non-intersecting
rectilinear polygons defined by edges $e_{0},e_{1},\ldots,e_{k-1}$
and $e'_{0},e'_{1},\ldots,e'_{k-1}$, each of them of length at least
$\epsl/2$, such that the second polygon is contained in the first
one, and for each pair of corresponding horizontal (vertical) edges
$e_{i},e'_{i}$ for $i\in\{0,\ldots,k-1\}$ 
there is a vertical (horizontal, respectively) line segment $\ell_{i}$
of length less than $\epsilon\cdot\epsl$ such that both edges $e_{i}$
and $e'_{i}$ intersect $\ell_{i}$ and $\ell_{i}$ does not intersect
any other edge of the cycle corridor. We say that the resulting cycle
corridor $C$ has $s(C):=k$ subcorridors\emph{. }

We use a result from \cite{AHW19} that implies directly that there
there exists a partition of $K$ into $O_{\epsilon}(1)$ corridors
and a near-optimal solution in which each item is contained in some
corridor.
\begin{lem}[\cite{AHW19}]
\label{lem:corridor-partition}There exists a solution $\overline{\opt}\subseteq\opt$
with $\left|\opt\right|\le(1+\epsilon)\left|\overline{\opt}\right|$
and a partition of $K$ into a set of corridors $\bar{\C}$ with $|\bar{\C}|\le O_{\epsilon}(1)$
where each corridor $C\in\bar{\C}$ has at most $1/\epsilon$ subcorridors
each and such that each item $i\in\overline{\opt}$ is contained in
one corridor of~$\C$. 
\end{lem}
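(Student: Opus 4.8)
The plan is to deduce Lemma~\ref{lem:corridor-partition} from the corridor decomposition theorem of~\cite{AHW19}, which is stated there for the \textsc{Independent Set of Rectangles} / \textsc{2DK} setting and produces a partition of the bounding box into $O_\epsilon(1)$ \emph{corridors} (in their terminology, thin rectilinear polygons with a bounded number of subcorridors) together with a near-optimal subset of the input that is ``nicely'' contained in these corridors. First I would recall the precise statement from~\cite{AHW19}: given $\opt$, there is $\opt'\subseteq\opt$ with $p(\opt')\ge(1-\epsilon)p(\opt)$ (equivalently, in the cardinality case, $|\opt|\le(1+O(\epsilon))|\opt'|$) and a partition of $K$ into corridors, each with at most $1/\epsilon$ subcorridors and width at most $\epsilon\cdot w$ for the relevant width parameter $w$, such that every item of $\opt'$ lies in the interior of a single corridor. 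The main task is a \emph{translation} step: checking that the notion of corridor used in~\cite{AHW19} coincides, up to renaming and up to harmless adjustments of the constants, with our notion of path corridor and cycle corridor from the paragraphs above, and that ``each item is contained in one corridor'' is exactly the conclusion we want.

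The key steps, in order, are as follows. (i) Invoke the decomposition of~\cite{AHW19} with the width threshold set to $\epsilon\cdot\epsl$, so that every corridor produced has width at most $\epsilon\cdot\epsl$, matching the bound in our definitions of path and cycle corridors; since $\epsl$ is a constant depending only on $\epsilon$, this only changes the hidden constant in the $O_\epsilon(1)$ bound on the number of corridors. (ii) Observe that in~\cite{AHW19} each corridor is either ``open'' (topologically a path) or ``closed'' (topologically an annulus/cycle), with at most $1/\epsilon$ subcorridors; an open corridor is exactly our path corridor and a closed one is exactly our cycle corridor, once we verify the edge-length conditions. (iii) For the edge-length condition (each non-extremal edge $e_i$ has length at least $\epsl/2$): the corridors of~\cite{AHW19} are obtained by cutting along lines through the input rectangles, and every subcorridor has length at least the width of a ``long'' rectangle, which in our classification is $>\epsl N$; an elementary argument shows that after merging or discarding the $O_\epsilon(1)$ shorter pieces one may assume every relevant edge has length at least $\epsl/2$, and since we are in the regime with no large items and no intermediate items, the ``long'' direction of every skewed item is indeed $>\epsl N$. (iv) Conclude that the resulting collection $\bar\C$ has $|\bar\C|\le O_\epsilon(1)$, each $C\in\bar\C$ has $s(C)\le 1/\epsilon$, and $\overline{\opt}:=\opt'$ satisfies $|\opt|\le(1+\epsilon)|\overline{\opt}|$ with every $i\in\overline{\opt}$ contained in one corridor.

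The main obstacle I anticipate is purely bookkeeping rather than conceptual: the decomposition in~\cite{AHW19} is phrased with a fixed small parameter and a fixed notion of ``long'' rectangle, and one must be careful that setting that parameter to $\epsilon\cdot\epsl$ (a product of two constants) still yields $O_\epsilon(1)$ corridors and still guarantees the $\epsl/2$ lower bound on edge lengths; this requires that $\epsl$ be chosen \emph{after} fixing $\epsilon$ but \emph{independently} of the instance, which is already how $\epsl$ and $\epss$ are set up via the shifting argument referenced before the lemma. A secondary subtlety is that~\cite{AHW19} may allow a corridor to degenerate (e.g.\ a subcorridor of length zero, or a ``corridor'' that is just a box); our definition explicitly permits boxes as single-subcorridor path corridors, so degenerate cases are absorbed without loss. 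Since all of this is standard and the cited result does the heavy lifting, I would keep the proof short: state the borrowed theorem, perform the translation of definitions, and note the constant-adjustment remarks, referring the reader to~\cite{AHW19} for the construction itself.
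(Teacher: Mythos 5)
Your high-level plan (translate the corridor decomposition of~\cite{AHW19} into the definitions used here) is the right starting point, and your step (ii) correctly matches open corridors to path corridors and closed ones to cycle corridors. But you have glossed over the one step that the paper's proof actually has to do work for, and as written your argument does not go through.

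The version of the~\cite{AHW19} decomposition available in this paper (the Corridor Decomposition Lemma, stated later as Lemma~\ref{lem:corridorPack-weighted}) does \emph{not} directly guarantee that a $(1-\epsilon)$-fraction of $\opt$ lies inside corridors. What it gives is a set $\Rco$ with $\profit(\Rco)\ge(1-O(\eps))\profit(\opt)$ together with a subset $\Rco^{cross}$ of items that are \emph{cut} by the lines of the corridor partition; the guarantee is only that $|\Rco^{cross}\setminus\Rsm|\le O_\eps(1)$ and that the small part of $\Rco^{cross}$ has small area. Since we are in the simplified regime where all items are skewed, $\Rco^{cross}$ consists of $O_\eps(1)$ skewed items. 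Your proposal silently assumes these can be discarded, but that is false in general: if $|\opt|$ itself is only $O_\eps(1)$, discarding $O_\eps(1)$ items can destroy essentially all of the solution, so the inequality $|\opt|\le(1+\epsilon)|\overline{\opt}|$ would fail.

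The paper closes this gap with a case distinction that your proposal is missing. Let $K=O_\eps(1)$ be the number of cut items. If $|\opt|\le K/\eps$, then $|\opt|$ is bounded by a constant, and one instead invokes the decomposition lemma with \emph{all} of $\opt$ declared as untouchable items (this is precisely why Lemma~\ref{lem:corridorPack-weighted} is stated with an untouchable set $\R'$); each untouchable item gets a box of matching size as its own corridor, nothing is cut, and the claim holds with $\overline{\opt}=\opt$. If instead $|\opt|>K/\eps$, then $K\le\eps|\opt|$ and one may safely discard the $K$ cut items to obtain $\overline{\opt}$ with the required size bound. Your proposal neither mentions the cut items $\Rco^{cross}$ nor the untouchable-item mechanism, so it is incomplete on exactly the point where the proof is nontrivial. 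Your step~(iii) about enforcing edge lengths $\ge\epsl/2$ by ``merging or discarding the $O_\epsilon(1)$ shorter pieces'' has the same blind spot: discarding pieces means discarding the items inside them, and without the case distinction this is not affordable when $|\opt|$ is small.
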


Algorithmically, we could guess $\bar{\C}$ in time $N^{O_{\epsilon}(1)}$
and then try to compute a profitable solution in which each item is
contained in one corridor in $\bar{\C}$, i.e., mimicking $\overline{\opt}$.
However, it is not clear how to compute such a solution in polynomial
time. Therefore, we partition $\bar{\C}$ further into a set of smaller
corridors $\C$ such that each resulting corridor has the shape of
an L or a U. We will ensure that there exists a $(\ratioweak)$-approximate
solution in which each item is contained in one corridor of $\C$.
Since the corridors in $\C$ are simpler than the corridors in $\bar{\C}$,
we will be able to compute in polynomial time essentially the most
profitable solution in which each item is contained in a corridor
of $\C$ (see in Sections~\ref{sec:large-OPT} and \ref{sec:DP-color-coding}). 

We say that a path corridor $C$ is an \emph{L-corridor }if $C$ has
exactly two subcorridors, 
and then figuratively it has the shape of
an L (see Figure~\ref{fig:luz-1}). Intuitively, we define that a path
corridor is a U-corridor if it has the shape of a U. Formally,
let $C$ be a a path corridor with exactly three subcorridors, and
hence $C$ is defined via edges $e_{0},...,e_{7}$. Assume w.l.o.g.~that
$e_{2}$ and $e_{6}$ are horizontal. We project $e_{2}$ and $e_{6}$
on the $x$-axis, let $I_{2}$ and $I_{6}$ be the resulting intervals.
Then $C$ is a \emph{U-corridor }if $I_{2}\subseteq I_{6}$ or $I_{6}\subseteq I_{2}$. 

In order to partition $\bar{\C}$, we first define a partition of
each {path/cycle} corridor $C\in\bar{\C}$ into $s(C)$ subcorridors
(see Figure~\ref{fig:luz-1}). We say that a \emph{subcorridor} of
a corridor $C$ is a simple polygon $P\subseteq C$ whose boundary
consists of two parallel edges (in most cases these will be edges
of $C$) and of two monotone axis-parallel curves, i.e., sets of axis-parallel
line segments such that either for any two points $(x_{1},y_{1}),(x_{2},y_{2})\in P$
where $x_{1}<x_{2}$ we have $y_{1}\le y_{2}$, or for any such two
points we have $y_{1}\ge y_{2}$. We require that each vertex of $P$
has integral coordinates. {Given} a corridor $C$
and a set of non-overlapping items $\R'$ inside $C$, we say that
a partition of $C$ into a set of {subcorridors} is \emph{nice for $\R'$}
if each {subcorridor} either intersects only items from $\R'\cap\Rho$ or
only items from $\R'\cap\Rve$.
\begin{lem}[\cite{AW15}, Lemma 2.4]

\label{lem:corridor-pieces-1}Let $C$ be a {path/cycle} corridor
containing a set of items $\R'$. There is a partition of $C$ into
$s(C)$ subcorridors that is nice for $\R'$.
\end{lem}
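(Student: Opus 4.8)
The plan is to cut $C$ once at each of its bends, each cut being a monotone axis-parallel curve, and to let the resulting pieces be the desired subcorridors. Recall that a path (resp.\ cycle) corridor with $s(C)$ subcorridors is, informally, a chain of $s(C)$ long thin straight rectangles of alternating orientation, glued at $s(C)-1$ (resp.\ $s(C)$) bends; at a bend the two incident rectangles are perpendicular and overlap in a small axis-parallel square $B$ whose side is the local width of $C$ and hence is less than $\eps\cdot\epsl\cdot N$. Since each subcorridor has an edge of length at least $\epsl N/2>\eps\epsl N$, distinct bend squares are disjoint. Given in each bend square $B_b$ a monotone axis-parallel curve $\gamma_b$ joining the corner of $B_b$ where the two sides abutted by the incident rectangles meet to the opposite corner, I would declare the subcorridor of a rectangle $S$ to be $S$ with its bend squares deleted, plus, for each bend $b$ adjacent to $S$, the component of $B_b\setminus\gamma_b$ on $S$'s side of $\gamma_b$. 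These $s(C)$ regions partition $C$; each is bounded by the two long edges of its rectangle (these are full edges of $C$, since the relevant sides of the bend squares are collinear with them) together with two caps, each of which is either an end-edge of $C$ (for a terminal rectangle of a path corridor) or one of the curves $\gamma_b$. After routing the $\gamma_b$ along the integer grid, every such region is a legal subcorridor. So everything reduces to choosing the $\gamma_b$ so that the partition is nice for $\R'$ (recall that by our standing assumption all items of $\R'$ are skewed).

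Two observations handle this. First, any item of $\R'$ contained in a rectangle $S$ away from its bend squares must have the orientation of $S$: a horizontal $S$ has short side less than $\eps\epsl N<\epsl N$, hence cannot contain a vertical item (whose height exceeds $\epsl N$), and symmetrically. Second, fix a bend $b$ between a horizontal rectangle $S_h$ and a vertical rectangle $S_v$ and let $B:=B_b$. Because $\width(i)>\epsl N$ exceeds the side of $B$, while $S_v$ is too narrow in the horizontal direction to contain any portion of a horizontal item and no item can leave $C$, a horizontal item $i$ meeting $B$ must stick out of $B$ exactly through the side $W_h$ abutted by $S_h$; hence $i\cap B$ is an axis-parallel rectangle flush with $W_h$. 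Symmetrically, every vertical item meeting $B$ is an axis-parallel rectangle flush with the side $W_v$ abutted by $S_v$, and $W_h$, $W_v$ are adjacent sides of $B$.

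The crux is to choose $\gamma_b$ at each bend so that it keeps every horizontal one of these rectangles on the $S_h$-side and every vertical one on the $S_v$-side, cutting none of them. Put coordinates on $B=[0,w]^2$ with $W_h=\{0\}\times[0,w]$ and $W_v=[0,w]\times\{0\}$; then each horizontal rectangle is of the form $[0,a]\times[c,d]$ and each vertical one of the form $[e,f]\times[0,b]$, and we want a monotone axis-parallel curve $\gamma_b$ from the origin to $(w,w)$ with every rectangle of the first kind weakly above it (towards the corner $(0,w)$) and every rectangle of the second kind weakly below it (towards $(w,0)$). By the standard fact that two families of axis-parallel rectangles anchored at two adjacent sides of a square admit such a separating curve precisely when they are pairwise interior-disjoint --- here the concrete condition reads $a\le e$ or $b\le c$ --- and since $i\cap B$ and $j\cap B$ are interior-disjoint whenever $i,j\in\R'$ (as $\R'$ is non-overlapping), the curve $\gamma_b$ exists; it can be taken on the integer grid, and then it avoids the interior of every item of $\R'$. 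Doing this at every bend yields a partition of $C$ into $s(C)$ subcorridors in which, since no $\gamma_b$ cuts an item, every subcorridor merely \emph{contains} the items of $\R'$ it meets, and by the two observations each such item has the orientation of the subcorridor's rectangle. Hence every subcorridor meets only items of $\R'\cap\Rho$, or only items of $\R'\cap\Rve$; the partition is nice for $\R'$.

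I expect the main obstacle to be this crux step: establishing that pairwise interior-disjointness of the two rectangle families at a bend really does guarantee a \emph{single} separating monotone curve (the planar separability fact, plus matching its hypothesis exactly to the disjointness inequality $a\le e$ or $b\le c$), together with the supporting bookkeeping in the second paragraph --- that a skewed item meeting a bend square leaves it through one determined side, which is precisely where the aspect-ratio gap (a skewed item has a long side $>\epsl N$ while the bend square has side $<\eps\epsl N$) enters --- and the routine verification that the $s(C)$ pieces glue into a genuine partition of $C$ across the disjoint bends and satisfy the definition of a subcorridor, including integrality of vertices.
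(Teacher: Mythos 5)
The paper cites this lemma from~\cite{AW15} without reproducing the proof, so there is no in-paper argument to compare against; judged on its own, your proposal is essentially the standard proof of this fact, and it is sound. Your reduction of the problem to constructing, at each bend, a monotone axis-parallel curve through the overlap region that separates the traces of horizontal items (which exit through the side abutting the horizontal subcorridor) from those of vertical items (which exit through the side abutting the vertical subcorridor) is exactly the right decomposition, and the supporting observations (skewed items cannot fit entirely in a bend region or in the perpendicular subcorridor, so they exit through one determined side, giving anchored rectangular traces) are correct.

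The step you flag as the crux does go through, and the argument you would need is short. In your coordinates take $g(x)=\max\bigl(\{0\}\cup\{b_j : e_j<x\}\bigr)$, a non-decreasing step function with integer jumps, and let $\gamma_b$ be the staircase from $(0,0)$ along the graph of $g$, closed off to $(w,w)$. Every vertical trace $[e_j,f_j]\times[0,b_j]$ lies in the region $R_V:=\bigcup_j [e_j,w]\times[0,b_j]$ below $\gamma_b$ by construction. For a horizontal trace $H_i=[0,a_i]\times[c_i,d_i]$, interior-disjointness from each $V_j$ gives $a_i\le e_j$ or $b_j\le c_i$; crucially, because $H_i$ is anchored at $x=0$ and each $[e_j,w]\times[0,b_j]$ only \emph{extends} $V_j$ rightwards, either condition also forces $H_i$ to be interior-disjoint from $[e_j,w]\times[0,b_j]$, hence from $R_V$, so $H_i$ lies weakly above $\gamma_b$. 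This is the point your sketch leaves implicit — pairwise disjointness of the two anchored families does not obviously control their rightward/downward shadows, but the anchoring at $x=0$ (resp.\ $y=0$) makes the shadow condition equivalent to the disjointness condition. With that filled in, and the routine check that the resulting $s(C)$ pieces are genuine subcorridors with integral vertices (the curve has integer breakpoints since the items do), the proof is complete.
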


Now we take each path corridor $C\in\bar{\C}$ and delete the items
in every third {subcorridor}, starting with the $\alpha$-th {subcorridor} for some
offset $\alpha\in\{1,2,3\}$. Then we can divide $C$ into L-corridors
such that each remaining item is contained in one of these L-corridors.
Each item $i\in\overline{\opt}$ contained in $C$ is deleted only
for one choice of $\alpha$, and hence there is a choice for $\alpha$
such that we lose at most {one third of the profit} due to this step. Now
consider a cycle corridor $C\in\bar{\C}$. Note that $s(C)$ is even
and $s(C)\ge4$. If $s(C)=4$ (i.e., $C$ is a ring), we delete the
items in one of its four subcorridors, losing at most {one quarter of the profit}, and obtain a U-corridor. If $s(C)\ge6$ and $s(C)$ is divisible
by $3$ we do the same operation as for path corridors, losing at most
{one third of the profit}. For all other values of $s(C)$ we might lose
a larger factor since then $s(C)$ is not divisible by 3 and $P_{0}$
and $P_{s(C)-1}$ are adjacent, e.g., if $s(C)=8$. However, a case
distinction shows that we can still partition $C$ into L- and U-corridors
while {decreasing the profit} at most by a factor of $1.6$.
\begin{lem}
\label{lem:partition-knapsack}There exists a solution $\opt'\subseteq\opt$
with $|\opt|\le(\ratioweak)|\opt'|$ and a partition of $K$ into
a set $\C$ of $O_{\epsilon}(1)$ L- and U-corridors 
such that each item $i\in\opt'$ is contained in one corridor of~$\C$.
\end{lem}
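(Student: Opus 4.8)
The plan is to combine Lemma~\ref{lem:corridor-partition} with the subcorridor decomposition of Lemma~\ref{lem:corridor-pieces-1}, and then to argue corridor-by-corridor that each corridor in $\bar\C$ can be broken into L- and U-corridors while deleting items of total profit at most a $(1-1/(1.6+\epsilon))$-fraction of those it contains. Start by invoking Lemma~\ref{lem:corridor-partition} to obtain $\overline{\opt}\subseteq\opt$ with $|\opt|\le(1+\epsilon)|\overline{\opt}|$ and a partition $\bar\C$ of $K$ into $O_\epsilon(1)$ corridors, each with at most $1/\epsilon$ subcorridors. For each $C\in\bar\C$ apply Lemma~\ref{lem:corridor-pieces-1} to the item set $\overline{\opt}\cap C$ to get a nice partition of $C$ into its $s(C)$ subcorridors $P_0,\dots,P_{s(C)-1}$ (for a path corridor) or $P_0,\dots,P_{s(C)-1}$ arranged cyclically (for a cycle corridor), where each subcorridor touches only horizontal or only vertical items. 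Two consecutive subcorridors share a ``bend'' and, because each is axis-monotone and has a long pair of parallel edges, the union of any two consecutive subcorridors is an L-corridor, and the union of any three consecutive ones — when the first and third are parallel and nested in the projection sense — is a U-corridor. This is the structural observation that makes the deletion scheme work.

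Next I would run the deletion argument as sketched in the text. For a \emph{path} corridor $C$: consider the three offsets $\alpha\in\{1,2,3\}$ and for offset $\alpha$ delete the items lying in subcorridors $P_\alpha, P_{\alpha+3}, P_{\alpha+6},\dots$; after this deletion the surviving subcorridors fall into maximal runs of length at most two (since we removed every third one), and each such run is either a single box or the union of two consecutive subcorridors, i.e.\ an L-corridor (a box being a degenerate L). By averaging over $\alpha$ there is a choice losing at most a $1/3$-fraction of $\profit(\overline{\opt}\cap C)$. For a \emph{cycle} corridor $C$, $s(C)$ is even and $\ge4$: if $s(C)=4$ delete one subcorridor (losing $\le 1/4$) and the remaining three consecutive subcorridors, wrapped around, form a U-corridor; if $s(C)\ge 6$ and $3\mid s(C)$, the same ``every third'' deletion works cyclically, losing $\le 1/3$, leaving runs of length $\le 2$, i.e.\ L-corridors. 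The remaining case is $s(C)\ge 6$ with $3\nmid s(C)$ (so $s(C)\in\{8,10,14,\dots\}$), where the cyclic structure forces some run of length $3$: here I would do an explicit small case analysis, deleting one extra subcorridor where needed to break a length-$3$ run, covering one run of length $3$ by a U-corridor and the rest by L-corridors, and check that the worst case still deletes at most a factor $1.6$ of the profit (e.g.\ for $s(C)=8$, a careful choice deletes $\le 2$ of the $8$ subcorridors in profit, i.e.\ a factor $\le 8/6<1.6$ in the worst sub-case, and one verifies the bound $1.6$ holds for all admissible $s(C)$). Finally, set $\opt'$ to be the union over all $C\in\bar\C$ of the surviving items; since the per-corridor loss is at most a factor $1.6$ and the large/intermediate-item reduction already cost $1+\epsilon$, in total $|\opt|\le(1.6+\epsilon)|\opt'|$, and each item of $\opt'$ sits inside one of the $O_\epsilon(1)$ L- or U-corridors produced.

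I expect the main obstacle to be the last case — cycle corridors with $s(C)$ not divisible by $3$. The difficulty is purely combinatorial bookkeeping on a cycle of subcorridors: one must show that no matter what $s(C)\ge 6$ with $3\nmid s(C)$ is, one can delete a set of subcorridors whose profit is at most $(1-1/1.6)$ of the total and such that every maximal surviving run has length $1$, $2$, or is a length-$3$ run whose two end subcorridors are parallel and nested (so that it is a genuine U-corridor rather than a ``Z''). The subtlety is that a length-$3$ run need not automatically be a U: a Z-shape can occur, and one must use the offset freedom (or delete one more subcorridor) to avoid it. A clean way to organize this is to note that on a cycle of length $m=s(C)$ one can choose a deletion set that is a union of at most two ``every-third'' arithmetic progressions plus at most one extra index, and then check the handful of residues of $m$ modulo small numbers; since $m\le 1/\epsilon$ this is a bounded check. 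All the other ingredients (existence of the nice subcorridor partition, the averaging over $\alpha$, the fact that two consecutive monotone subcorridors glue to an L and a suitable triple to a U) are routine given the cited lemmas.
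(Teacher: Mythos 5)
Your overall plan — start from Lemma~\ref{lem:corridor-partition} and Lemma~\ref{lem:corridor-pieces-1}, then delete subcorridors so the survivors split into runs of length at most $2$ (L-corridors or boxes) or length exactly $3$ that happen to be U-corridors, with an averaging argument over which subcorridors to delete — is exactly the paper's approach, and your handling of path corridors and of cycle corridors with $s(C)=4$ or $3\mid s(C)$ is fine. The problem is precisely in the part you flag as the ``obstacle,'' and there the details you do supply are wrong.

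For $s(C)=8$ you assert that ``a careful choice deletes $\le 2$ of the $8$ subcorridors in profit,'' i.e.\ loss $\le 1/4$. This is not a count of subcorridors but a claim about profit, and it requires an averaging argument over several valid deletion patterns. To delete only $2$ subcorridors from a cycle of $8$ and be left with two length-$3$ runs that are both genuine U-corridors (not Z's), you would need the two deleted indices to be a specific antipodal pair, and there is no reason the adversary's profit isn't concentrated on exactly those two pieces. There are four antipodal pairs, but only those whose complementary triples are both U-shaped are usable, and the geometry only guarantees that \emph{at least one} such pair exists; you cannot average over all four. The paper's proof of this case in fact does not use U-corridors at all: it chooses a random starting subcorridor, applies the mod-$3$ deletion with one index forced to appear in two of the three classes, and obtains expected loss $\left(\tfrac{7}{8}\cdot\tfrac{1}{3}+\tfrac{1}{8}\cdot\tfrac{2}{3}\right)=\tfrac{3}{8}$, not $\tfrac{1}{4}$. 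This $\tfrac{3}{8}$ is exactly where $1.6=8/5$ comes from (it is also hit at $s(C)=16$), and your proposal never derives that constant.

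Two further cases you sweep under ``a bounded check'' are actually handled with distinct arguments in the paper and are not immediate: $s(C)=10$ requires a geometric observation (four special consecutive triples, centered at the extreme horizontal/vertical pieces, induce genuine U-corridors, and for $s(C)<12$ two of them must overlap), and $s(C)\ge 16$ is handled by randomly deleting one whole subcorridor and then treating the remainder as a path, giving loss $\tfrac{1}{m}+\tfrac{m-1}{m}\cdot\tfrac{1}{3}$, which is maximized at $m=16$ where it again equals $\tfrac{3}{8}$. Your ``union of at most two every-third arithmetic progressions plus one extra index'' sketch does not produce either of these and would delete far too much. So while the skeleton of your argument is the right one, the case analysis that actually justifies the factor $1.6$ is missing, and the one concrete bound you give ($1/4$ loss for $s(C)=8$) does not hold.
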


\begin{figure}[t]
\centering \includegraphics[width=0.9\linewidth]{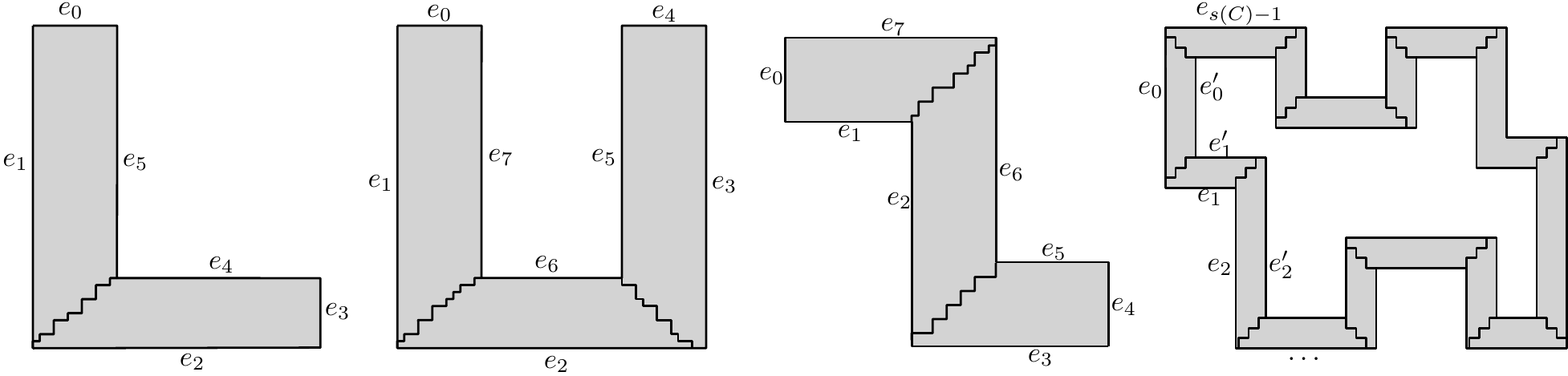}
\caption{{An L-corridor, a U-corridor, another path corridor with two bends, and a cycle corridor. The 
monotone axis-parallel curves indicate the boundaries of the subcorridors.}}
\label{fig:luz-1}
\end{figure}

The first step in our algorithm is to guess $\C$ which can be done
in time $N^{O_{\epsilon}(1)}$. The next step is to compute a solution
with at least $(1-\epsilon)|\opt'|$ items in which each item is contained
in one corridor of $\C$. For this, we consider two cases separately,
which are intuitively the case that $|\opt'|\ge\Omega_{\epsilon}(\log N)$
and $|\opt'|\le O_{\epsilon}(\log N)$, and they are treated in Sections~\ref{sec:large-OPT}
and \ref{sec:DP-color-coding}, respectively.

\section{\label{sec:large-OPT}Packing via guessing slices}

In this section, we assume that $|\opt'|>c_{\epsilon}\cdot\log N$
for some constant $c_{\epsilon}$ to be defined later. We describe
an algorithm that computes a solution of size $(1-\epsilon)|\opt'|$
such that each item of this solution is contained in
a corridor in $\C$.

First, we group the items into $O_{\epsilon}(\log N)$ groups where
we group the items in $\Rho$ according to their heights and the items
in $\Rve$ according to their widths. Formally, for each $\ell\in\{0,...,\left\lfloor \log_{1+\epsilon}N\right\rfloor \}$
we define $\Rho^{(\ell)}:=\{i\in\Rho|\height(i)\in[(1+\epsilon)^{\ell},(1+\epsilon)^{\ell+1})\}$
and $\Rve^{(\ell)}:=\{i\in\Rve|\width(i)\in[(1+\epsilon)^{\ell},(1+\epsilon)^{\ell+1})\}$.

So intuitively, for each $\ell$ the items in $\Rho^{(\ell)}$ essentially
all have the same height and the items in $\Rve^{(\ell)}$ essentially
all have the same width. Now, for the groups $\Rho^{(\ell)}$, $\Rve^{(\ell)}$
we guess estimates $\mathrm{opt}_{hor}^{(\ell)},\mathrm{opt}_{ver}^{(\ell)}$
for $|\Rho^{(\ell)}\cap\opt'|$, $|\Rve^{(\ell)}\cap\opt'|$, respectively.
Even though there can be $\Theta_{\epsilon}(\log N)$ of these groups
and for each guessed value there are potentially $\Omega(n)$ options,
we guess the estimates for \emph{all }groups in parallel in time $(nN)^{O_{\epsilon}(1)}$,
adapting a technique from~\cite{chekuri2005polynomial}. 
\begin{lem}	\label{lem:estimate-sizes-1}
	In time $(nN)^{O_{\eps}(1)}$ we
	can guess the values for all pairs $\mathrm{opt}_{hor}^{(\ell)},\mathrm{opt}_{ver}^{(\ell)}$
	{with $\ell\in\{0,...,\left\lfloor \log_{1+\eps}N\right\rfloor \}$}
	such that 
\begin{itemize}
\item $\sum_{\ell}\mathrm{opt}_{hor}^{(\ell)}+\mathrm{opt}_{ver}^{(\ell)}\ge(1-\epsilon)|\opt'|$
and 
\item $\mathrm{opt}_{hor}^{(\ell)}\le|\opt'\cap\Rho^{(\ell)}|$ and $\mathrm{opt}_{ver}^{(\ell)}\le|\opt'\cap\Rho^{(\ell)}|$
for each $\ell\in\{0,...,\left\lfloor \log_{1+\epsilon}N\right\rfloor \}$. 
\end{itemize}
\end{lem}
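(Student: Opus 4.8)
The plan is to reduce the simultaneous guessing of $\Theta_\epsilon(\log N)$ pairs of estimates to a polynomial-size search space by exploiting the fact that each $\mathrm{opt}_{hor}^{(\ell)}$ (resp. $\mathrm{opt}_{ver}^{(\ell)}$) only needs to be guessed \emph{approximately}, and that there are few ``large'' groups. First I would set a threshold $\tau := \epsilon |\opt'| / (2(\lfloor \log_{1+\epsilon} N\rfloor+1))$. Call a group $\Rho^{(\ell)}$ (or $\Rve^{(\ell)}$) \emph{big} if $|\opt' \cap \Rho^{(\ell)}| \ge \tau$ and \emph{small} otherwise. For the small groups we simply set the estimate to $0$; this discards at most $2(\lfloor \log_{1+\epsilon} N\rfloor+1) \cdot \tau \le \epsilon |\opt'|$ items in total, which is affordable since we only need $\sum_\ell \mathrm{opt}_{hor}^{(\ell)}+\mathrm{opt}_{ver}^{(\ell)} \ge (1-\epsilon)|\opt'|$. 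Since $\sum_\ell |\opt' \cap \Rho^{(\ell)}| + |\opt' \cap \Rve^{(\ell)}| = |\opt'| \le n$, there can be at most $n/\tau = O_\epsilon(\log N)$ big groups — but more usefully, we do not need to know \emph{which} groups are big; we only need to guess, for each group, an estimate that is a power of $(1+\epsilon)$ (rounded down) times a small correction.

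The key step — adapting the technique of Chekuri and Khanna~\cite{chekuri2005polynomial} — is the following. For the big groups, instead of guessing $|\opt' \cap \Rho^{(\ell)}|$ exactly, I would guess it up to a multiplicative factor of $(1+\epsilon)$: that is, guess an integer $a_\ell$ with $(1+\epsilon)^{a_\ell} \le |\opt' \cap \Rho^{(\ell)}| < (1+\epsilon)^{a_\ell+1}$ and set $\mathrm{opt}_{hor}^{(\ell)} := \lfloor (1+\epsilon)^{a_\ell} \rfloor$. This loses at most a further $(1+\epsilon)$ factor per big group, hence overall another $\epsilon$ fraction (after rescaling $\epsilon$). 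Now the crucial counting argument: each $a_\ell \in \{0,1,\dots,\lceil\log_{1+\epsilon} n\rceil\}$, so there are $O_\epsilon(\log n)$ possible values per group, and naively $(O_\epsilon(\log n))^{O_\epsilon(\log N)}$ joint guesses — still superpolynomial. To fix this, I would observe that the guesses are subject to the global constraint $\sum_\ell (1+\epsilon)^{a_\ell} \le |\opt'| \le n$ (summing over all big horizontal and vertical groups), which bounds the number of \emph{distinct multisets} of exponents: a standard argument shows the number of such sequences is $n^{O_\epsilon(1)}$, because a geometric series with ratio $1+\epsilon$ summing to at most $n$ can have only $O_\epsilon(1)$ terms that are ``not dominated,'' and the rest must be small. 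More precisely, after sorting the exponents in decreasing order, only the top $O_\epsilon(1)$ of them can exceed any fixed fraction of the total, and the tail contributes a negligible amount — so we can afford to guess the top $O_\epsilon(1)$ exponents explicitly (in $(\log n)^{O_\epsilon(1)}$ ways) together with an assignment of which group each belongs to (in $(\log N)^{O_\epsilon(1)}$ ways), and set all remaining estimates to $0$.

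After guessing the estimates we must verify the second bullet, $\mathrm{opt}_{hor}^{(\ell)} \le |\opt' \cap \Rho^{(\ell)}|$: this is automatic from the construction, since for small groups and truncated tails we set the estimate to $0$, and for the big groups we rounded $|\opt' \cap \Rho^{(\ell)}|$ \emph{down} to the nearest power of $(1+\epsilon)$. The first bullet, $\sum_\ell \mathrm{opt}_{hor}^{(\ell)} + \mathrm{opt}_{ver}^{(\ell)} \ge (1-\epsilon)|\opt'|$, follows by summing the three sources of loss — small groups ($\le \epsilon|\opt'|/2$), the $(1+\epsilon)$-rounding of big groups ($\le \epsilon$ fraction), and the truncated tail of exponents ($\le \epsilon$ fraction) — and rescaling $\epsilon$ by a constant factor at the outset. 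The total running time is the product of the number of joint guesses, $(nN)^{O_\epsilon(1)}$, and $\mathrm{poly}(n)$ per guess for bookkeeping, giving $(nN)^{O_\epsilon(1)}$ overall. I expect the main obstacle to be the counting argument in the previous paragraph: making precise why the sequences of exponents compatible with the constraint $\sum (1+\epsilon)^{a_\ell} \le n$ number only polynomially many, and organizing the guess as ``top $O_\epsilon(1)$ exponents plus their group-labels, rest zero'' so that it is both enumerable in $(nN)^{O_\epsilon(1)}$ time and still captures a $(1-\epsilon)$ fraction of $\opt'$. Everything else is routine rescaling of the error parameter.
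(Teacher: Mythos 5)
Your outline shares the Chekuri--Khanna flavor of the paper's proof, but the crucial counting/truncation step does not work as you wrote it, and you correctly flag it as the likely obstacle --- it is indeed a genuine gap, not a routine detail. The claim that ``only the top $O_\epsilon(1)$ exponents can exceed any fixed fraction of the total, and the tail contributes a negligible amount'' is false: if there are $\Theta_\epsilon(\log N)$ groups each holding roughly $\epsilon|\opt'|/\log N$ items of $\opt'$, then every group is ``big'' under your threshold $\tau$, no single group exceeds any fixed constant fraction of $|\opt'|$, and the tail (everything outside the top $O_\epsilon(1)$) holds essentially all of $\opt'$. Your proposed algorithm --- guess the top $O_\epsilon(1)$ exponents and their group-labels, set all remaining estimates to $0$ --- would then return estimates summing to only an $O_\epsilon(1/\log N)$ fraction of $|\opt'|$, violating the first bullet. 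The number of \emph{multisets} of exponents compatible with $\sum_\ell (1+\epsilon)^{a_\ell}\le n$ may well be polynomial, but that does not license zeroing out all but $O_\epsilon(1)$ of them.

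The paper sidesteps the issue by discretizing \emph{additively} rather than multiplicatively. After guessing $|\opt'|$ (at most $n$ options), it sets $\mathrm{opt}_{hor}^{(\ell)}$ to be the largest multiple of $\tau := \frac{\epsilon}{4\log_{1+\epsilon}N}|\opt'|$ that is at most $|\opt'\cap\Rho^{(\ell)}|$, i.e.\ $\mathrm{opt}_{hor}^{(\ell)}=k_{hor}^{(\ell)}\cdot\tau$ with $k_{hor}^{(\ell)}\in\mathbb{Z}_{\ge 0}$. The per-group loss is at most $\tau$, and with roughly $2(\log_{1+\epsilon}N+1)$ groups (horizontal and vertical) the total additive loss is at most $\epsilon|\opt'|$; no group needs to be discarded. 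Since $\sum_\ell k_{hor}^{(\ell)}\cdot\tau \le |\opt'|$, the multiplicities satisfy $\sum_\ell k_{hor}^{(\ell)}\le \frac{4\log_{1+\epsilon}N}{\epsilon}=O_\epsilon(\log N)$, and the whole vector $(k_{hor}^{(\ell)})_\ell$ is encoded as one binary string (for each $\ell$ in order, write $k_{hor}^{(\ell)}$ zeros followed by a one). That string has $O_\epsilon(\log N)$ bits, so there are only $N^{O_\epsilon(1)}$ candidates to enumerate. The polynomial bound thus comes from $\sum_\ell k_{hor}^{(\ell)}=O_\epsilon(\log N)$, not from any claim that most groups can be ignored. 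If you want to keep your multiplicative rounding, the same unary-string encoding could instead be applied to the offsets $b_\ell := a_\ell - \lceil\log_{1+\epsilon}\tau\rceil\ge 0$ over the big groups (using $(1+\epsilon)^{b_\ell}\ge 1+\epsilon b_\ell$ to show $\sum_\ell b_\ell = O_\epsilon(\log N)$), but the ``top $O_\epsilon(1)$, rest zero'' shortcut has to go.
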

\begin{proof}
 First, we guess $|\opt'|$ for which there are only $n$ options.
 {We show now how to guess in time }

 $N^{O_\epsilon(1)}$ {the} feasible values for $\mathrm{opt}_{hor}^{(\ell)}$; a symmetric argument holds for $\mathrm{opt}_{ver}^{(\ell)}$. Let us define each $\mathrm{opt}_{hor}^{(\ell)}$ as the largest integer of the form $k_{hor}^{(\ell)}\cdot\frac{\epsilon}{{4}\log_{1+\epsilon}N}|\opt'|$
which is {upper} bounded by $|\opt'\cap\Rho^{(\ell)}|$, where $k_{hor}^{(\ell)}$ is a non-negative integer. Notice that trivially $\sum_\ell k_{hor}^{(\ell)}\leq \frac{{4}\log_{1+\epsilon}N}{\epsilon}$. We encode all such values $k_{hor}^{(\ell)}$  as a single binary string as follows: we represent each $k_{hor}^{(\ell)}$ as a string of $k_{hor}^{(\ell)}$ $0$-bits follow{ed} by one $1$-bit, and then chain such strings according to the index $\ell$. 

The final bit string encodes the solution. Notice that {this} string contains at most $\log_{1+\epsilon}N{+1}+\sum_{\ell}k_{hor}^{(\ell)}=O(\frac{\log_{1+\epsilon}N}{\epsilon})$ bits, hence we can guess it in time ${N}^{O_{\epsilon}(1)}$.

The claim follows since

\[|\opt'|-\sum_{\ell}\left(\mathrm{opt}_{hor}^{(\ell)}+\mathrm{opt}_{ver}^{(\ell)}\right)\le 2(\log_{1+\epsilon}N{+1})\cdot\frac{\epsilon}{{4}\log_{1+\epsilon}N}|\opt'|{\leq}\epsilon|\opt'|.\]
 \end{proof}

\subparagraph{Definition of slices.}

Next, for each group $\Rho^{(\ell)}$ we define slices that together
are essentially as profitable as the items in $\opt'\cap\Rho^{(\ell)}$.
We first order the items in $\Rho^{(\ell)}$ non-decreasingly by width
and select the first $\frac{1}{1+\epsilon}\mathrm{opt}_{hor}^{(\ell)}$
items. One can show {easily} that their total height is at most 
$\height(\opt'\cap\R_{hor}^{(\ell)})$.
Let $i$ be one of these items. Intuitively, we slice $i$ horizontally
into slices of height~1. Formally, for $i$ we introduce $\height(i)$
items of height~1 and profit $1/(1+\epsilon)^{\ell}$ each. Let $\hat{\R}_{hor}^{(\ell)}$
denote the resulting set of slices. We do this procedure for each $\ell$
and a symmetric procedure for the group $\Rve^{(\ell)}$ for each
$\ell$, resulting in {a} set of slices {$\hat{\R}_{ver}^{(\ell)}$}.
\begin{lem}
\label{lem:place-pre-slices}It is possible to place the slices in
$\left\{ \hat{\R}_{hor}^{(\ell)},\hat{\R}_{ver}^{(\ell)}\right\} _{\ell}$
non-overlappingly inside $K$ such that each slice is contained in
some corridor in $\C$. Also, we have that $\sum_{\ell}\left(\profit(\hat{\R}_{hor}^{(\ell)})+\profit(\hat{\R}_{ver}^{(\ell)})\right)\ge\frac{1}{1+O(\epsilon)}\left|\opt'\right|$.
\end{lem}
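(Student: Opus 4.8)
The plan is to obtain the slice packing by a surgical modification of the near-optimal packing guaranteed by Lemma~\ref{lem:partition-knapsack}, working corridor by corridor and, inside each corridor, subcorridor by subcorridor. Fix the partition $\C$ and the solution $\opt'$ from Lemma~\ref{lem:partition-knapsack}, and fix a nice partition of each corridor $C\in\C$ into subcorridors as in Lemma~\ref{lem:corridor-pieces-1}; each subcorridor intersects items of only one orientation. Consider a fixed subcorridor $S$ that contains horizontal items, and let $\ell$ be such that we are treating the group $\Rho^{(\ell)}$; the vertical case and the other values of $\ell$ are symmetric. The key geometric observation is that $S$ is bounded by two monotone axis-parallel curves whose vertical extent differs by less than $\epsilon\cdot\epsl$, and every horizontal item crossing $S$ has width at least $\epsl/2 > \epsilon\cdot\epsl$, hence spans $S$ essentially from one monotone boundary to the other; this means that, up to a negligible loss, we may treat $S$ as a rectangle of some width and ``stack'' horizontal items of $\Rho^{(\ell)}$ inside it in any order as long as their total height does not exceed the available height of $S$.

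The core argument is then a standard exchange/rounding step. Within a fixed subcorridor $S$, the items of $\opt'\cap\Rho^{(\ell)}$ that $S$ contains occupy a total height of at most the available height of $S$. We replace them, one subcorridor at a time, by a prefix (in non-decreasing width order) of the items of $\Rho^{(\ell)}$: since we selected the $\tfrac{1}{1+\epsilon}\mathrm{opt}_{hor}^{(\ell)}$ narrowest items of $\Rho^{(\ell)}$ globally, and since $\mathrm{opt}_{hor}^{(\ell)}\le|\opt'\cap\Rho^{(\ell)}|$ by Lemma~\ref{lem:estimate-sizes-1}, these narrowest items have total height at most $\height(\opt'\cap\Rho^{(\ell)})$ (the ``easy'' claim referenced in the text before the lemma — each of the $\tfrac{1}{1+\epsilon}\mathrm{opt}_{hor}^{(\ell)}$ selected items is matched to a distinct item of $\opt'\cap\Rho^{(\ell)}$ that is at least as wide, hence the widths line up and, being horizontal, the total heights are comparable). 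Now we cut each such selected item $i$ into $\height(i)$ unit-height slices of profit $1/(1+\epsilon)^\ell$ each; since the true height of $i$ is in $[(1+\epsilon)^\ell,(1+\epsilon)^{\ell+1})$ this loses at most a $(1+\epsilon)$ factor in profit per item. These unit-height slices are narrower than the original items, have total height no larger, and can be packed into the same space the original $\opt'\cap\Rho^{(\ell)}$ items occupied across the horizontal subcorridors — we simply distribute the slices greedily into the horizontal subcorridors, filling each up to its height budget. Doing this for every $\ell$ and the symmetric vertical procedure yields a non-overlapping placement of $\{\hat\R_{hor}^{(\ell)},\hat\R_{ver}^{(\ell)}\}_\ell$ with each slice inside a corridor of $\C$.

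For the profit bound, observe that $\profit(\hat\R_{hor}^{(\ell)})\ge \tfrac{1}{1+\epsilon}\cdot\tfrac{1}{1+\epsilon}\,\mathrm{opt}_{hor}^{(\ell)}$: one $(1+\epsilon)$ factor from selecting only a $\tfrac{1}{1+\epsilon}$ fraction of the estimated count, one from the profit loss in the unit-slice rounding. Summing over $\ell$ for both orientations and using $\sum_\ell(\mathrm{opt}_{hor}^{(\ell)}+\mathrm{opt}_{ver}^{(\ell)})\ge(1-\epsilon)|\opt'|$ from Lemma~\ref{lem:estimate-sizes-1} gives $\sum_\ell(\profit(\hat\R_{hor}^{(\ell)})+\profit(\hat\R_{ver}^{(\ell)}))\ge \tfrac{1-\epsilon}{(1+\epsilon)^2}|\opt'|\ge\tfrac{1}{1+O(\epsilon)}|\opt'|$. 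The main obstacle is the geometric bookkeeping in the first two paragraphs: making precise that a subcorridor behaves like a rectangle for the purpose of vertically stacking spanning items, i.e. that the $<\epsilon\cdot\epsl$ discrepancy between its two monotone boundary curves only costs a $1+O(\epsilon)$ factor and does not interfere with the length-$\ge\epsl/2$ guarantee; I would handle this by the standard trimming argument (shrink each subcorridor to the maximal inscribed ``straight'' rectangle, absorbing the loss into the $1+\epsilon$ slack), exactly as in the corridor-packing literature.
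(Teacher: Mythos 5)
Your core idea — replace the items of $\opt'\cap\Rho^{(\ell)}$ by the selected narrowest items, slice them into unit-height pieces, and reuse the space the optimal items occupied — is the same as the paper's argument. However, two points where your write-up deviates are worth flagging.

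First, the geometric bookkeeping you flag as "the main obstacle" is not actually needed, and the paper does not do any subcorridor-to-rectangle trimming. The paper's placement is a pointwise exchange: slice each item of $\opt'\cap\Rho^{(\ell)}$ into unit-height slices, sort these "optimal slices" by width, sort the new slices of $\hat\R_{hor}^{(\ell)}$ by width, and place the $j$-th new slice exactly where the $j$-th optimal slice was. Since there are fewer new slices than optimal slices and, position by position in the sorted order, the new slice is never wider (the selected items are the narrowest in $\Rho^{(\ell)}$, a superset of $\opt'\cap\Rho^{(\ell)}$), the new slice occupies a subset of the old slice's footprint, hence lies in the same corridor. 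There is no need to approximate subcorridors by rectangles, no need for the $\epsl/2$-spanning claim, and no "greedy distribution into height budgets" — a greedy distribution as you propose would additionally have to be width-aware (a slice cannot be dropped into a subcorridor narrower than it), which the pointwise replacement sidesteps for free.

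Second, your parenthetical justification for the bound $\height(\text{selected}) \le \height(\opt'\cap\Rho^{(\ell)})$ is not correct as stated: "the widths line up and, being horizontal, the total heights are comparable" is a non sequitur — width comparisons say nothing about heights of rectangles. The actual reason is that every item in the group $\Rho^{(\ell)}$ has height in $[(1+\epsilon)^{\ell},(1+\epsilon)^{\ell+1})$, so the heights differ by at most a factor $1+\epsilon$, and selecting only a $\frac{1}{1+\epsilon}$ fraction of $\mathrm{opt}_{hor}^{(\ell)}\le|\opt'\cap\Rho^{(\ell)}|$ items exactly cancels this factor: $\frac{1}{1+\epsilon}\mathrm{opt}_{hor}^{(\ell)}\cdot(1+\epsilon)^{\ell+1}\le|\opt'\cap\Rho^{(\ell)}|\cdot(1+\epsilon)^{\ell}\le\height(\opt'\cap\Rho^{(\ell)})$. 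This is the purpose of the $\frac{1}{1+\epsilon}$ deflation, and your proof should make that explicit rather than appeal to widths. Your profit accounting (two $1+\epsilon$ losses combined with Lemma~\ref{lem:estimate-sizes-1}) is correct.
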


\begin{proof}
	Let $\ell\in\{0,...,\left\lfloor \log_{1+\epsilon}N\right\rfloor \}$.
	Recall that $\mathrm{opt}_{hor}^{(\ell)}\le|\opt'\cap\Rho^{(\ell)}|$,
	all items in $\Rho^{(\ell)}$ have the same height (up to a factor
	of $1+\epsilon$), and we selected the $\frac{1}{1+\epsilon}\mathrm{opt}_{hor}^{(\ell)}$
	items in $\Rho^{(\ell)}$ of minimum width.
	Now consider the items in $\opt'\cap\Rho^{(\ell)}$ in non-decreasing width.
	So if we consider the slices in $\hat{\R}_{hor}^{(\ell)}$ in non-decreasing width, they fit into the space that
	is occupied by the items in $\opt'\cap\Rho^{(\ell)}$ in $\opt'$.
	Also, $\frac{1}{1+O(\epsilon)}\height(\opt'\cap\R_{hor}^{(\ell)})\le|\hat{\R}_{hor}^{(\ell)}|\le\height(\opt'\cap\R_{hor}^{(\ell)})$
	and each slice in $\hat{\R}_{hor}^{(\ell)}$ has a profit of $1/(1+\epsilon)^{\ell}$.
	Hence,  $\profit(\hat{\R}_{hor}^{(\ell)}) \ge \frac{1}{1+O(\eps)} \profit(\opt'\cap\R_{hor}^{(\ell)})$.
	Summing this over each $\ell$ and a similar statement for the vertical
	items,  we obtain that that $\sum_{\ell}\left(\profit(\hat{\R}_{hor}^{(\ell)})+\profit(\hat{\R}_{ver}^{(\ell)})\right)\ge\frac{1}{1+O(\epsilon)}\left|\opt'\right|$.
\end{proof}

Next, for each $\ell\in\{0,...,\left\lfloor \log_{1+\epsilon}N\right\rfloor \}$
we round the widths of the slices in $\hat{\R}_{hor}^{(\ell)}$ via
linear grouping such that they have at most $1/\epsilon$ different
widths and we lose at most a factor of $1+O(\epsilon)$ in their profit
due to this rounding. Formally, we sort the slices in $\hat{\R}_{hor}^{(\ell)}$
non-increasingly by width and then partition them into $1/\epsilon+1$
groups such that each group contains $\left\lceil \frac{1}{1/\epsilon+1}|\hat{\R}_{hor}^{(\ell)}|\right\rceil $
slices (apart from possibly the last group which might contain fewer
slices). Let $\hat{\R}_{hor}^{(\ell)}=\hat{\R}_{hor,1}^{(\ell)}\dot{\cup}...\dot{\cup}\hat{\R}_{hor,1/\epsilon+1}^{(\ell)}$
denote the resulting partition. We drop the slices in $\hat{\R}_{hor,1}^{(\ell)}$
(whose total profit is at most $\epsilon\cdot\profit(\hat{\R}_{hor}^{(\ell)})$).
Then, for each $j\in\{2,...,1/\epsilon+1\}$ we increase the width
of the slices in $\hat{\R}_{hor,j}^{(\ell)}$ to the width of the
widest slic{e} in $\hat{\R}_{hor,j}^{(\ell)}$. By construction, the
resulting slices have $1/\epsilon$ different widths. Let $\tilde{\R}_{hor}^{(\ell)}$
denote the resulting set and let $\tilde{\R}_{hor}^{(\ell)}=\tilde{\R}_{hor,1}^{(\ell)}\dot{\cup}...\dot{\cup}\tilde{\R}_{hor,1/\epsilon}^{(\ell)}$
denote a partition of $\tilde{\R}_{hor}^{(\ell)}$ according to the
widths of the slices, i.e., for each $j\in\{1,...,1/\epsilon\}$ the
set $\tilde{\R}_{hor,j}^{(\ell)}$ contains the rounded slices from
$\hat{\R}_{hor,j+1}^{(\ell)}$.%

We do this procedure for each $\ell$ and a symmetric procedure for
the group $\Rve^{(\ell)}$ for each $\ell$.

\begin{lem}
\label{lem:place-slices}It is possible to place the slices in $\left\{ \tilde{\R}_{hor,j}^{(\ell)},\tilde{\R}_{ver,j}^{(\ell)}\right\} _{\ell,j}$
non-overlappingly inside $K$ such that each slice is contained in
some corridor in $\C$. Also, we have $\sum_{\ell}\left(\profit(\tilde{\R}_{hor}^{(\ell)})+\profit(\tilde{\R}_{ver}^{(\ell)})\right)\ge\frac{1}{1+\epsilon}\sum_{\ell}\left(\profit(\hat{\R}_{hor}^{(\ell)})+\profit(\hat{\R}_{ver}^{(\ell)})\right)$.
\end{lem}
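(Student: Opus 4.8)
The plan is to prove Lemma~\ref{lem:place-slices} by applying the linear grouping argument twice (once for the horizontal slices, once for the vertical ones) and then combining it with the placement guarantee from Lemma~\ref{lem:place-pre-slices}. First I would establish the profit bound. Fix $\ell$ and recall that $\hat{\R}_{hor}^{(\ell)}$ was split into $1/\epsilon+1$ groups $\hat{\R}_{hor,1}^{(\ell)},\dots,\hat{\R}_{hor,1/\epsilon+1}^{(\ell)}$, each of size $\lceil |\hat{\R}_{hor}^{(\ell)}|/(1/\epsilon+1)\rceil$ except possibly the last; since every slice in $\hat{\R}_{hor}^{(\ell)}$ has the same profit $1/(1+\epsilon)^\ell$, the dropped group $\hat{\R}_{hor,1}^{(\ell)}$ carries at most a $\frac{1}{1/\epsilon+1}\le\epsilon$ fraction of $\profit(\hat{\R}_{hor}^{(\ell)})$, and the rounding-up of widths in the remaining groups does not change any profit. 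Hence $\profit(\tilde{\R}_{hor}^{(\ell)})\ge(1-\epsilon)\profit(\hat{\R}_{hor}^{(\ell)})\ge\frac{1}{1+\epsilon}\profit(\hat{\R}_{hor}^{(\ell)})$ (after possibly reparametrising $\epsilon$), and the same holds for the vertical slices. Summing over $\ell$ and over the horizontal/vertical split yields the stated profit inequality.

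Next I would handle the packing. By Lemma~\ref{lem:place-pre-slices} there is a non-overlapping placement of all slices in $\{\hat{\R}_{hor}^{(\ell)},\hat{\R}_{ver}^{(\ell)}\}_\ell$ inside $K$ with each slice contained in some corridor of $\C$. The key observation is that the passage from $\hat{\R}^{(\ell)}_{hor}$ to $\tilde{\R}^{(\ell)}_{hor}$ only \emph{removes} slices (those of $\hat{\R}_{hor,1}^{(\ell)}$, which had the \emph{largest} widths) and \emph{shifts widths within a group up to the width of the widest remaining slice of that group}. I would argue by a standard exchange/shifting argument: order the slices of $\opt'\cap\R_{hor}^{(\ell)}$ by non-decreasing width and recall from the proof of Lemma~\ref{lem:place-pre-slices} that the slices in $\hat{\R}_{hor}^{(\ell)}$, taken in non-decreasing width, fit into the ``staircase'' region occupied by $\opt'\cap\R_{hor}^{(\ell)}$ inside the corridors. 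When we discard the top group $\hat{\R}_{hor,1}^{(\ell)}$ (the widest $\frac{1}{1/\epsilon+1}|\hat{\R}_{hor}^{(\ell)}|$ slices) and round the remaining slices of each group $\hat{\R}_{hor,j+1}^{(\ell)}$ up to the width of the widest slice of $\hat{\R}_{hor,j+1}^{(\ell)}$, each rounded slice of $\tilde{\R}_{hor,j}^{(\ell)}$ has width no larger than that of some distinct slice of $\hat{\R}_{hor,j}^{(\ell)}$ (the group immediately ``above'' it, which was not discarded because $j\ge 1$). Thus the sorted width-profile of $\tilde{\R}_{hor}^{(\ell)}$ is dominated, slice by slice, by the sorted width-profile of a subset of $\hat{\R}_{hor}^{(\ell)}$, so the rounded slices can be placed into the space freed up in the corridors by the original slices; since all these slices have height $1$ and are horizontal, they reside in the horizontal subcorridors only, and the same applies symmetrically to the vertical slices in vertical subcorridors, so no two placed slices overlap and each lies in a corridor of $\C$.

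The main obstacle I expect is making the ``width-profile domination implies re-placeability'' step fully rigorous: one must verify that shifting slices up in width within a group, while dropping exactly one group, produces a multiset whose slices can be matched injectively to original slices of \emph{weakly larger} width, and then argue that such a replacement preserves feasibility of the packing inside the (L- or U-shaped) corridors. This is exactly the classical linear-grouping trick, but here it has to respect the corridor geometry: since every slice under consideration is a unit-height horizontal strip (resp.\ unit-width vertical strip), it only ever occupies a horizontal (resp.\ vertical) subcorridor, and within a single subcorridor the placement reduces to the one-dimensional picture where replacing an item by one of weakly smaller width never hurts. I would therefore isolate a small technical sub-claim — ``in any feasible corridor packing, replacing a horizontal slice by a horizontal slice of the same height and weakly smaller width keeps the packing feasible'' — prove it by translating the shrunk slice to share a corner with the original, and then apply it repeatedly (together with deletion of $\hat{\R}_{hor,1}^{(\ell)}$) to transform the packing of Lemma~\ref{lem:place-pre-slices} into one of $\{\tilde{\R}_{hor,j}^{(\ell)},\tilde{\R}_{ver,j}^{(\ell)}\}_{\ell,j}$.
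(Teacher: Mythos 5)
Your proposal is correct and follows essentially the same linear-grouping route as the paper's own proof: bound the profit of the dropped widest group, then observe that each rounded group $\tilde{\R}_{hor,j}^{(\ell)}$ (coming from $\hat{\R}_{hor,j+1}^{(\ell)}$) has at most as many slices as $\hat{\R}_{hor,j}^{(\ell)}$ and max width bounded by that group's min width, so it can occupy the space freed by $\hat{\R}_{hor,j}^{(\ell)}$ in the packing from Lemma~\ref{lem:place-pre-slices}. The paper states this more tersely, while you spell out the slice-by-slice domination and flag the $(1-\eps)$ versus $1/(1+\eps)$ bookkeeping, but the underlying argument is the same.
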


\begin{proof}
	For each $\ell\in\{0,...,\left\lfloor \log_{1+\epsilon}N\right\rfloor \}$
	and each $j\in\{2,...,1/\epsilon+1\}$ the slices in $\tilde{\R}_{hor,j}^{(\ell)}$
	fit into the space occupied by the slices in $\hat{\R}_{hor,j-1}^{(\ell)}$,
	as $\height(\tilde{\R}_{hor,j}^{(\ell)})\le h(\hat{\R}_{hor,j-1}^{(\ell)})$ and 
	width of maximum width rectangle in $\tilde{\R}_{hor,j}^{(\ell)}$ is at most the width of minimum width rectangles in $\hat{\R}_{hor,j-1}^{(\ell)}$.
	A similar argumentation holds for the sets $\Rve^{(\ell)}$ of vertical
	items.
	As we drop the slices in $\hat{\R}_{hor,1}^{(\ell)}$
	(whose total profit is at most $\eps \cdot\profit(\hat{\R}_{hor}^{(\ell)})$),
	we totally loose only $1/(1+\eps)$ fraction of the profit due to them. Summing these for vertical items and over all $\ell$,  give us that $\sum_{\ell}\left(\profit(\tilde{\R}_{hor}^{(\ell)})+\profit(\tilde{\R}_{ver}^{(\ell)})\right)\ge\frac{1}{1+\epsilon}\sum_{\ell}\left(\profit(\hat{\R}_{hor}^{(\ell)})+\profit(\hat{\R}_{ver}^{(\ell)})\right)$. 
\end{proof}
We fix a partition of each corridor $C\in\C$ into subcorridors that
is nice for the slices $\left\{ \tilde{\R}_{hor,j}^{(\ell)},\tilde{\R}_{ver,j}^{(\ell)}\right\} _{\ell,j}$.
Note that we do not compute this partition explicitly but we use it
for our analysis and as guidance for our algorithm. For each corridor
$C\in\C$ or subcorridor ${S}$ denote by $\tilde{\R}(C)$ and $\tilde{\R}({S})$
the slices assigned to $C$ and ${S}$, {resp.}, due to Lemma~\ref{lem:place-slices}. 

\subparagraph{Structuring slices inside subcorridors.}
Consider a subcorridor ${S}$ of a corridor $C\in\C$. The placement
of the slices inside ${S}$ due to Lemma~\ref{lem:place-slices} might
be complicated. Instead, we would like to have a packing where the
slices are packed \emph{nicely}, i.e., they are stacked on top of
each other if ${S}$ is horizontal, and side by side if ${S}$ is vertical
(see Figure~\ref{fig:nice-slice}). This might not be possible to achieve exactly, but
we construct something very similar. We prove that inside ${S}$ we
can place $O_{\epsilon}(1)$ boxes and one subcorridor ${S}'\subseteq {S}$
(which we will call sub-subcorridor in order to distinguish it from
the subcorridors) that are pairwise disjoint and such that inside
them we can nicely place essentially all slices from $\tilde{\R}(C)$
(see Figure~\ref{fig:nice-slice}). We can guess the placement of the $O_{\epsilon}(1)$
boxes inside ${S}$. Unfortunately, we cannot guess ${S}'$ directly,
but we can guess the two edges 
that define {the boundary of $S'$ together with the two axis-parallel curves. 
We refer to them as the \emph{edges of $S'$.} Note that they are horizontal if $S$ (and hence~$S'$) is horizontal, and vertical otherwise.
}
We construct ${S}'$
such that the longer {of these two edges} is always also an edge of ${S}$ (see Figure~\ref{fig:nice-slice}).%
\begin{figure}[!ht]
	\centering
	\includegraphics[width=0.7\linewidth]{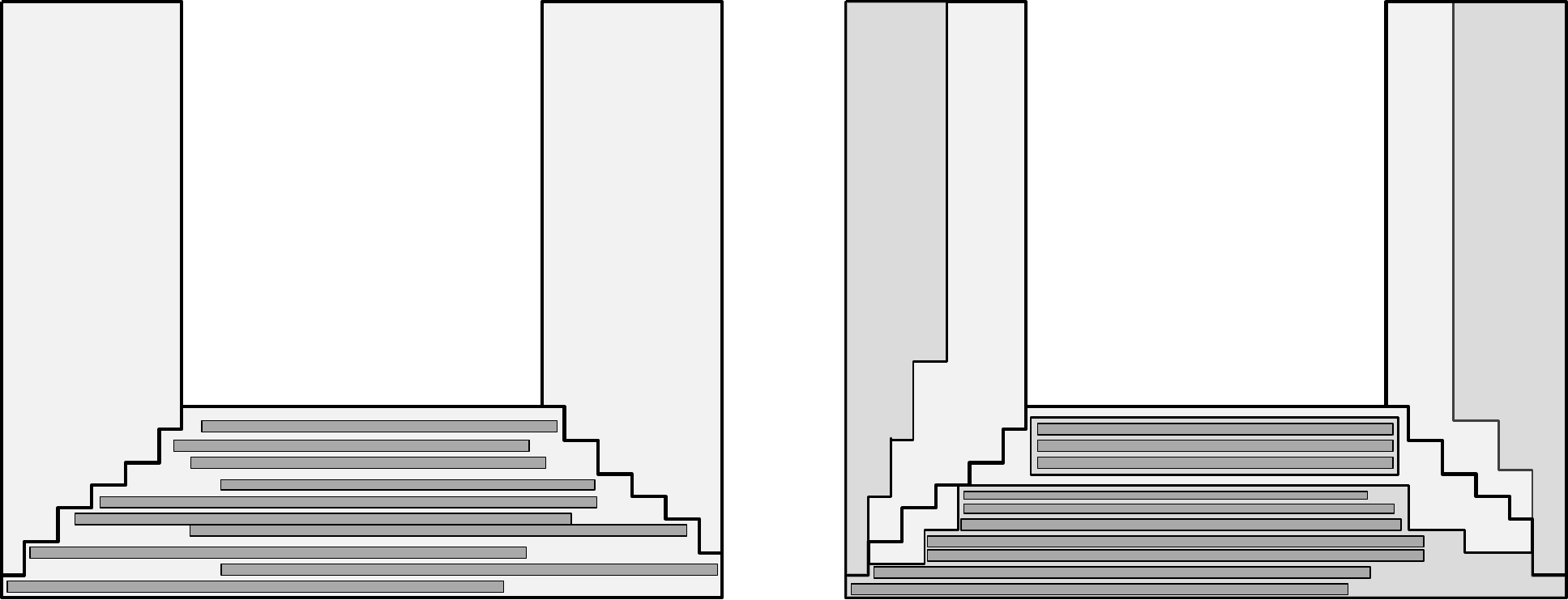}
	\caption{Left: a subcorridor with items packed inside it. Right: A partition of each subcorridor into $O_{\epsilon}(1)$ boxes and a sub-subcorridor, all containing slices which are nicely packed.}
	\label{fig:nice-slice}
\end{figure}

\begin{lem}
\label{lem:partition-subcorridors}For each horizontal/vertical subcorridor
${S}$ we can guess in time $N^{O_{\epsilon}(1)}$ 
\begin{itemize}
\item the two edges of a horizontal/vertical sub-subcorridor ${S}'\subseteq {S}$
such that the longer edge of ${S}'$ coincides with the longer edge
of ${S}$,
\item $O_{\epsilon}(1)$ non-overlapping boxes $\B({S})$ inside ${S}$ that
are disjoint with ${S}'$, 
\end{itemize}
such that we can nicely place slices from $\tilde{\R}({S})$ with a
total profit of $(1-\epsilon)\profit(\tilde{\R}({S}))$ inside ${S}'$
and the boxes~$\B({S})$.
\end{lem}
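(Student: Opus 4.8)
We may assume $S$ is horizontal, the vertical case being symmetric; then every slice in $\tilde{\R}(S)$ has height exactly $1$, and we start from the (not yet nice) placement of these slices inside $S$ guaranteed by Lemma~\ref{lem:place-slices}. Orient $S$ so that its longer parallel edge is its bottom edge $e$. Both long edges of $S$ lie on the two long sides of the corridor $C\in\C$ (which has already been guessed), so $e$ is a sub-segment of a known edge of $C$; guessing which edge of $C$ it lies on and the (integral) endpoints of $e$ costs only $O_{\epsilon}(1)\cdot N^{O(1)}$. The two remaining sides of $S$ are monotone axis-parallel curves, each confined to a ``corner region'' of size $O(\epsilon\epsl N)\times O(\epsilon\epsl N)$. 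As a preliminary step we round the vertices of each such curve to a $\tfrac1\epsilon\times\tfrac1\epsilon$ sub-grid of its corner region; since every slice has width $>\epsl N$, which dwarfs the resulting displacement $O(\epsilon\epsl N)$, a shifting argument over the grid columns lets us drop an $\epsilon$-fraction of the profit and assume each monotone curve of $S$ has only $O_{\epsilon}(1)$ steps. Thus $S$ is now a long central rectangle (the \emph{plateau}), of height equal to the corridor width, flanked by two \emph{ramps}, each an $O_{\epsilon}(1)$-step staircase region sitting in an $O(\epsilon\epsl N)$-square.

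Next we compact the slices by gravity towards $e$: repeatedly push each slice downward until it rests on $e$ or on another slice. Over the plateau this yields, at each horizontal coordinate $x$, a vertical stack of $h(x)$ unit slices; over each ramp the slices settle against the staircase boundary. The key observation for the ramps is that \emph{no} slice can lie entirely inside a ramp, because a ramp has horizontal extent $O(\epsilon\epsl N)<\epsl N$ while every slice has width $>\epsl N$; hence every slice meeting a ramp must also extend over the plateau, and its horizontal position near the ramp is therefore essentially forced by the ramp's boundary. It is precisely these ``ramp slices'', together with a thin band of slices lying along $e$, that we will route into the sub-subcorridor $S'$, and everything else into $O_{\epsilon}(1)$ boxes.

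To build the boxes, apply linear grouping to the plateau stacks: choosing $1/\epsilon$ height thresholds with a suitable offset, the slices crossing threshold lines have total profit at most $\epsilon\,\profit(\tilde{\R}(S))$ and may be discarded, and within each of the $O_{\epsilon}(1)$ resulting horizontal bands the remaining slices, sorted by width and left-aligned, are nicely packed inside a rectangular box. This produces the boxes $\B(S)$. For $S'$ we take the region of $S$ below a monotone curve $\sigma$ that coincides with the lower edge of the bottom plateau box over the plateau and follows (a sub-segment of) the simplified boundary of $S$ over the two ramps; then $\sigma$ is monotone with $O_{\epsilon}(1)$ steps, the longer edge of $S'$ is $e$ (an edge of $S$, as required), and $S'$ and all boxes are described by $O_{\epsilon}(1)$ integral coordinates, so the whole configuration is guessable in time $N^{O_{\epsilon}(1)}$. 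Finally, inside $S'$ we place the ramp slices and the band-of-$e$ slices nicely by sorting them by width and stacking them against $e$: because each such slice spans its entire ramp, this stacking automatically respects the staircase of $S'$, and a short accounting shows the stack height never exceeds the local height of $S'$ while the total profit discarded over all steps is at most $\epsilon\,\profit(\tilde{\R}(S))$.

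The main obstacle is exactly the ramp (staircase) parts of $S$: an unrounded monotone boundary may have $\Theta(N)$ steps, so it can be neither covered by $O_{\epsilon}(1)$ boxes nor guessed directly. The sub-subcorridor $S'$ is introduced to absorb this complexity — and it is affordable only because its long edge comes ``for free'' as an edge of the already-guessed corridor $C$ — while the preliminary simplification of $S$'s boundary to $O_{\epsilon}(1)$ steps, together with the observation that every ramp-intersecting slice is forced to span its ramp, is what makes the nice (staircase-respecting) packing inside $S'$ and the guessability both go through.
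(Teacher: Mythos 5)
Your high-level plan matches the paper's: cut $S$ into horizontal strips, keep the strip flush with the longer edge $e_1$ as the sub-subcorridor $S'$, and convert the upper part into $O_\epsilon(1)$ rectangular boxes via linear grouping. But several steps are handled differently, and at least one of the differences is a genuine gap.

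The central missing ingredient is the treatment of slices that are \emph{narrow relative to $S$} (i.e.\ of width less than $w'(S)$, the length of $S$'s shorter parallel edge). The paper isolates exactly these slices: using a pigeon-hole over groups of $\Theta(1/\epsilon)$ strips, it finds a cheap band of boxes $\B_{k^*}$, and then repacks all narrow slices from the bottom strips into those boxes by Steinberg's algorithm. Only after this step does it hold that every slice remaining in $S'$ has width $\geq w'(S) > w(S)/2$, which is what guarantees that no two of them can sit side by side at the same height; hence sorting them by width and stacking from $e_1$ is a valid packing that respects $S'$'s tapering boundary. Your proof replaces this with the claim that ``a short accounting shows the stack height never exceeds the local height of $S'$'', but without removing the narrow slices first, the slices you route to $S'$ can include arbitrarily many that were originally placed side by side (several per unit height), so the single stack you build may overshoot $S'$. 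The accounting is precisely the hard part, and it is not present.

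Two further issues. First, the preliminary rounding of the side curves to an $O_\epsilon(1)$-step staircase is both unnecessary and under-justified: in the paper $S'$ inherits $S$'s exact side curves, which never need to be guessed or simplified --- only the two parallel edges of $S'$ and the $O_\epsilon(1)$ boxes are guessed, and the later placement routine (pushing slices toward $e_0$) determines the rest. Your ``shifting argument over the grid columns'' for losing only an $\epsilon$-fraction when rounding is not actually an argument; it is also not obvious that rounding the boundary of $S$ can be done without interfering with the adjacent subcorridor of $C$. Second, the curve $\sigma$ you use to delimit $S'$ (``lower edge of the bottom plateau box over the plateau, following the boundary of $S$ over the two ramps'') is not monotone: it rises along the left ramp, runs flat over the plateau, and descends along the right ramp. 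Consequently the $S'$ you define is not a subcorridor in the sense required by the lemma (two parallel edges joined by two monotone axis-parallel curves), and it is not described by ``two edges'' as the statement demands. The paper's $S'$ is literally a horizontal band: $e_1$, a single horizontal segment at height $y_1 + j^*\epsilon^2 h(S)$, and the corresponding pieces of $S$'s original side curves.

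So the observation that ramp-intersecting slices are position-forced is a correct and useful ingredient, and the strip decomposition is in the right spirit, but the proof as written does not establish the lemma: it skips the Steinberg-based narrow-slice cleanup that makes the $S'$ packing legal, it builds an $S'$ of the wrong shape, and it spends effort on a boundary-rounding step the paper shows is avoidable.
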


We apply Lemma~\ref{lem:partition-subcorridors} to each subcorridor
${S}$ of a corridor $C\in\C$. Let $\S$ and $\B$ denote the resulting
set of sub-subcorridors and boxes, respectively. For each $\ell$,
each $j$, and each $F\in\B\cup\S$ denote by $\tilde{\R}_{hor,j}^{(\ell)}(F)$
and $\tilde{\R}_{ver,j}^{(\ell)}(F)$ the respective slices from $\tilde{\R}_{hor,j}^{(\ell)},\tilde{\R}_{ver,j}^{(\ell)}$
in $F$, respectively. Using simple slice reorderings, we can prove the following lemma.
\begin{lem}
\label{lem:slices-structured}There is a packing of the slices in
$\left\{ \tilde{\R}_{hor,j}^{(\ell)}(F),\tilde{\R}_{ver,j}^{(\ell)}(F)\right\} _{j,\ell,F}$
such that for each box or sub-subcorridor $F\in\B\cup\S$ we can assume
w.l.o.g.~that 
\begin{itemize}
\item horizontal/vertical items inside $F$ are ordered {non-increasingly} by width/height,
starting at the longer edge of $F$ if $F$ is a sub-subcorridor,
and starting at an arbitrary edge if $F$ is a box; ties are broken
according to the input items that the slices correspond to, 
\item any two adjacent horizontal/vertical slices of the same width/height
are placed exactly on top of each other/side by side.
\end{itemize}
\end{lem}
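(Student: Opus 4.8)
The plan is to fix, for each $F \in \B \cup \S$, the packing of the slices inside $F$ guaranteed by Lemma~\ref{lem:partition-subcorridors} (where slices are placed \emph{nicely}, i.e.\ stacked for horizontal subcorridors and side by side for vertical ones, and likewise inside each box), and then to apply two successive reorderings to bring it into the claimed canonical form without changing which slices go into which $F$ and without decreasing the packed profit. I would treat the horizontal case in detail; the vertical case is symmetric.

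First I would argue about sorting \emph{within a fixed $F$}. Since all slices in $F$ have height exactly $1$ (after the slicing step) and $F$ receives, for each pair $(\ell,j)$, some number of slices from $\tilde{\R}_{hor,j}^{(\ell)}$, a nice packing of these slices in $F$ is (up to which slice sits in which unit-height track) just an assignment of slices to the unit-height horizontal tracks available in $F$. Because the slices are all unit height and each occupies a contiguous horizontal segment starting at one edge, I can freely permute the tracks: take the multiset of widths $\{\width(s) : s \in \tilde{\R}(F)\}$ and reassign them to tracks in non-increasing order of width, breaking ties by the identity of the originating input item (so that all slices coming from the same original item $i$ form a contiguous block — this is consistent because all slices of $i$ have the same width and profit). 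If $F$ is a box this is exactly a row permutation and is clearly area-preserving and feasibility-preserving. If $F$ is a sub-subcorridor, the available horizontal extent at a given height is monotone (this is exactly the ``monotone axis-parallel curve'' property of subcorridors, inherited by sub-subcorridors), so placing the widest slices at the tracks adjacent to the longer edge of $F$ — where, by the construction in Lemma~\ref{lem:partition-subcorridors}, the longer edge of $S'$ coincides with the longer edge of $S$ and hence is the ``long side'' of the monotone region — keeps every slice inside $F$. After this step, inside each $F$ the slices are ordered non-increasingly by width starting from the prescribed edge, and equal-width slices from the same input item sit in consecutive tracks; translating each maximal block of equal-width slices horizontally so they start at the same $x$-coordinate places any two adjacent slices of the same width exactly on top of each other, giving the second bullet.

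The second reordering is \emph{across} the $F$'s, to make sure the first bullet's tie-breaking rule ``according to the input items that the slices correspond to'' is globally consistent. Here I would observe that for a fixed $(\ell,j)$ all slices in $\tilde{\R}_{hor,j}^{(\ell)}$ have the same width and the same profit $1/(1+\epsilon)^{\ell}$, so any two such slices are completely interchangeable as far as feasibility and profit are concerned; hence I may relabel them so that, globally, slices are grouped by originating input item and, within each $F$, the blocks appear in the sorted order already established. Since all slices of one input item have identical dimensions and profit, swapping the ``names'' of two slices occupying feasible positions keeps the packing feasible and of the same profit. Combining the two reorderings yields the stated structured packing.

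The main obstacle I expect is the sub-subcorridor case of the within-$F$ sorting: unlike a box, a sub-subcorridor has a monotone but non-rectangular shape, so one has to be careful that reassigning widths to tracks in non-increasing order, anchored at the \emph{longer} edge, never pushes a slice outside $F$. The key point making this work is precisely the design choice in Lemma~\ref{lem:partition-subcorridors} that the longer of the two edges of $S'$ is an edge of $S$ and that $S'$ (like $S$) is bounded by two monotone axis-parallel curves: the horizontal width of $S'$ as a function of height is monotone, it is largest along the longer edge, and a nice packing that fit in $S'$ before the reordering had total used width at each height bounded by the width of $S'$ there; sorting the unit-height tracks by slice width and stacking the widest against the long edge can only make the ``used width profile'' more concentrated toward the wide end, which still fits under the monotone profile of $S'$. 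Making this monotonicity/rearrangement argument precise (essentially a one-dimensional rearrangement inequality applied track by track) is the only genuinely non-routine part; everything else is bookkeeping about unit-height slices with equal profit.
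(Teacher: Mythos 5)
Your proof is correct and follows essentially the same route as the paper's: within each $F$, sort slices by width (anchored at the longer edge for a sub-subcorridor) and use the monotonicity of the bounding curves to argue the sorted assignment remains feasible, then align equal-width blocks. The paper's own proof is terser — it aligns all slices to the left coordinate of the topmost slice and then sorts — while you make the feasibility of the sorted assignment explicit via a Hall/rearrangement-type argument over unit-height tracks; both rely on exactly the same structural facts (unit-height slices, nice stacking, and the monotone width profile of an acute sub-subcorridor whose longer edge coincides with that of the subcorridor).
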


Therefore, we can construct this packing of the slices if we knew
the cardinality of $\tilde{\R}_{hor,j}^{(\ell)}(F)$ and $\tilde{\R}_{ver,j}^{(\ell)}(F)$
for each $F\in\B\cup\S$ {and each $\ell$ and $j$}. We guess this cardinality approximately
in the following lemma for each $F,\ell$ and $j$ in parallel. 
\begin{lem}
\label{lem:guess-slices}In time $(nN)^{O_{\epsilon}(1)}$ we can
guess values $\mathrm{opt}_{hor,j}^{(\ell)}(F),\mathrm{opt}_{ver,j}^{(\ell)}(F)$
for each $\ell\in\{0,...,\left\lfloor \log_{1+\epsilon}N\right\rfloor \}$,
$j\in\{1,...,1/\epsilon\}$, $F\in\B\cup\S$ such that
\begin{itemize}
\item $\mathrm{opt}_{hor,j}^{(\ell)}(F)\le\left|\tilde{\R}_{hor,j}^{(\ell)}(F)\right|$
and $\mathrm{opt}_{ver,j}^{(\ell)}(F)\le\left|\tilde{\R}_{ver,j}^{(\ell)}(F)\right|$
for each $\ell,j,F$ and
\item $\sum_{F\in\B\cup\S}\mathrm{opt}_{hor,j}^{(\ell)}(F)\ge(1-\epsilon)\sum_{F\in\B\cup\S}\left|\tilde{\R}_{hor,j}^{(\ell)}(F)\right|$
and $\sum_{F\in\B\cup\S}\mathrm{opt}_{ver,j}^{(\ell)}(F)\ge(1-\epsilon)\sum_{F\in\B\cup\S}\left|\tilde{\R}_{ver,j}^{(\ell)}(F)\right|$
for each $\ell,j$.
\end{itemize}
\end{lem}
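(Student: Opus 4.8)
The plan is to reduce the statement to an application of the same parallel-guessing technique already used in \cref{lem:estimate-sizes-1}. First I would observe that the total number of triples $(\ell,j,F)$ is only $O_\epsilon(\log N)$: there are $O_\epsilon(\log N)$ choices of $\ell$, exactly $1/\epsilon$ choices of $j$, and $|\B\cup\S|=O_\epsilon(1)$ since by \cref{lem:partition-knapsack} there are $O_\epsilon(1)$ corridors, each split into $O(1/\epsilon)$ subcorridors, each of which contributes one sub-subcorridor and $O_\epsilon(1)$ boxes via \cref{lem:partition-subcorridors}. Hence we are trying to guess $O_\epsilon(\log N)$ nonnegative integers $\mathrm{opt}_{hor,j}^{(\ell)}(F)$ (and symmetrically the vertical ones), each an underestimate of the true cardinality $|\tilde{\R}_{hor,j}^{(\ell)}(F)|$, whose total comes within a $(1-\epsilon)$ factor of the true total.

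The key step is to round each guess to a multiple of a common unit. Since $|\opt'|$ is already guessed (as in the proof of \cref{lem:estimate-sizes-1}), and $\sum_{\ell,j,F}|\tilde{\R}_{hor,j}^{(\ell)}(F)|\le \sum_\ell |\hat{\R}_{hor}^{(\ell)}| \le \height(\opt'\cap\Rho)\le \opt'\cdot N$ crudely (the slices have height $1$), I would instead bound the relevant total directly by the number of slices, which in turn is at most $N$ times $|\opt'|$; but a cleaner route is to note that $\sum_F |\tilde{\R}_{hor,j}^{(\ell)}(F)| \le |\tilde{\R}_{hor}^{(\ell)}|$, and set the unit for level $(\ell,j)$ to be $\delta_{\ell,j}:=\lceil \epsilon\,|\tilde{\R}_{hor,j}^{(\ell)}| / (c_\epsilon'\cdot|\B\cup\S|)\rceil$ for a suitable constant; then rounding each $\mathrm{opt}_{hor,j}^{(\ell)}(F)$ down to the largest multiple of $\delta_{\ell,j}$ not exceeding $|\tilde{\R}_{hor,j}^{(\ell)}(F)|$ loses at most $|\B\cup\S|\cdot\delta_{\ell,j} \le \epsilon|\tilde{\R}_{hor,j}^{(\ell)}|/c_\epsilon' + |\B\cup\S|$ from the level-$(\ell,j)$ total. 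Summing the first term over $j$ gives $\le \epsilon|\tilde{\R}_{hor}^{(\ell)}|$ after choosing $c_\epsilon'\ge 1/\epsilon$, and summing over $\ell$ gives $\le\epsilon\sum_\ell|\tilde{\R}_{hor}^{(\ell)}|$; the additive $|\B\cup\S|$ terms sum to $O_\epsilon(\log N)$, which is absorbed because we are in the regime $|\opt'|>c_\epsilon\log N$ and $\sum_\ell|\tilde{\R}_{hor}^{(\ell)}|=\Omega(|\opt'|)$ by \cref{lem:place-slices,lem:place-slices} once $\epsilon$-rounding is accounted for. Thus each rounded quantity is $k\cdot\delta_{\ell,j}$ for some nonnegative integer $k$, and $\sum_{\ell,j,F}k \le O_\epsilon(\log N)$ in total (since each $\sum_F$ contributes at most $|\tilde{\R}_{hor,j}^{(\ell)}|/\delta_{\ell,j}=O_\epsilon(\log N)$ split across $O(1)$ terms — more precisely the grand total of the $k$'s over all $(\ell,j,F)$ is $O_\epsilon(\log N)$).

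Once the $k$'s sum to $O_\epsilon(\log N)$, I would encode them exactly as in \cref{lem:estimate-sizes-1}: write each $k$ in unary as $k$ zero-bits followed by a one-bit, concatenate in a fixed order over all $(\ell,j,F)$ for the horizontal slices and then the vertical ones, obtaining a bitstring of length $O_\epsilon(\log N)$, which can be enumerated in time $2^{O_\epsilon(\log N)}=N^{O_\epsilon(1)}$. Reconstructing $\mathrm{opt}_{hor,j}^{(\ell)}(F)=k\cdot\delta_{\ell,j}$ requires knowing $\delta_{\ell,j}$, which depends on $|\tilde{\R}_{hor,j}^{(\ell)}|$; since that quantity is not known to the algorithm, I would instead define the unit in terms of the already-guessed $\mathrm{opt}_{hor}^{(\ell)}$ from \cref{lem:estimate-sizes-1} (which is a known underestimate of $|\opt'\cap\Rho^{(\ell)}|$ and within $(1+\epsilon)$ of $|\tilde{\R}_{hor}^{(\ell)}|$ after the rounding lemmas), keeping all the estimates expressible from guessed data. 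The algorithm guesses all $k$'s in parallel, and the correct bitstring (the one corresponding to the true, rounded cardinalities) is among those enumerated, giving the lemma.

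The main obstacle I expect is bookkeeping the constants so that the additive $O_\epsilon(\log N)$ losses incurred by the floor-rounding at each of the $O_\epsilon(\log N)$ triples are genuinely dominated: this forces the standing assumption $|\opt'|>c_\epsilon\log N$ to be used with a large enough $c_\epsilon$, and requires tracking that $\sum_\ell(\profit(\tilde{\R}_{hor}^{(\ell)})+\profit(\tilde{\R}_{ver}^{(\ell)}))=\Omega(|\opt'|)$, which follows by chaining \cref{lem:place-pre-slices} and \cref{lem:place-slices}. A secondary subtlety is that the slices in $\tilde{\R}_{hor,j}^{(\ell)}$ have nonuniform profit $1/(1+\epsilon)^\ell$ across $\ell$ but uniform profit within a fixed $\ell$, so a $(1-\epsilon)$ guarantee on cardinalities level-by-level transfers to a $(1-\epsilon)$ guarantee on profit — I would state this explicitly to justify that the cardinality guesses suffice for the subsequent dynamic program.
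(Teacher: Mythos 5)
Your high-level plan — round each cardinality $|\tilde{\R}_{hor,j}^{(\ell)}(F)|$ down to an $O_\eps(1)$-size grid and enumerate over the $O_\eps(\log nN)$ triples $(\ell,j,F)$ — matches the paper's proof. The paper, however, uses a per-\emph{triple} relative unit $\frac{\eps}{|\B|+|\S|}\bigl|\tilde{\R}_{hor,j}^{(\ell)}(F)\bigr|$; each triple then incurs a purely \emph{multiplicative} loss of at most $\frac{\eps}{|\B|+|\S|}$, which gives the required per-$(\ell,j)$ inequality $\sum_F \mathrm{opt}_{hor,j}^{(\ell)}(F)\ge(1-\eps)\sum_F\bigl|\tilde{\R}_{hor,j}^{(\ell)}(F)\bigr|$ with no additive error and no dependence on $|\opt'|$, and the $O_\eps(1)$ possible multipliers per triple yield $O_\eps(1)^{O_\eps(\log nN)}=(nN)^{O_\eps(1)}$ guesses by direct enumeration. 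Your detour through the unary bitstring of \cref{lem:estimate-sizes-1} is therefore not needed (though it is not wrong). You do raise a legitimate point the paper glosses over: a unit defined in terms of the unknown $|\tilde{\R}_{hor,j}^{(\ell)}(F)|$ is not directly computable by the algorithm, and anchoring it to the already-guessed $\mathrm{opt}_{hor}^{(\ell)}$ (rescaled by the known factor $(1+\eps)^\ell$, since $\mathrm{opt}_{hor}^{(\ell)}$ counts items while $|\tilde{\R}_{hor,j}^{(\ell)}|$ counts unit-height slices) is a sensible repair.

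The genuine gap is in your approximation argument. By sharing a per-\emph{level} unit $\delta_{\ell,j}$ and taking a ceiling, you incur an \emph{additive} $|\B\cup\S|$ loss at each level, and you then absorb the accumulated $O_\eps(\log N)$ additive error only \emph{in aggregate} by invoking $|\opt'|>c_\eps\log N$. But the lemma demands the $(1-\eps)$ guarantee \emph{for each individual pair} $(\ell,j)$; your aggregate accounting allows a level with small $|\tilde{\R}_{hor,j}^{(\ell)}|$ to lose far more than an $\eps$-fraction, which the statement does not permit. Your construction does in fact satisfy the per-level bound — if the ceiling in $\delta_{\ell,j}$ is binding then $\delta_{\ell,j}=1$ and the rounding is lossless, and otherwise $\delta_{\ell,j}\le 2\eps|\tilde{\R}_{hor,j}^{(\ell)}|/(c'_\eps|\B\cup\S|)$ so the per-level loss is at most $2\eps|\tilde{\R}_{hor,j}^{(\ell)}|/c'_\eps$ — but this case distinction is absent from your write-up, and once it is made explicit the appeal to $|\opt'|>c_\eps\log N$ becomes unnecessary, as it is in the paper's proof.
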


\begin{proof}
	For each $\ell,j,$ and $F$ we define $\mathrm{opt}_{hor,j}^{(\ell)}(F)$
	to be the largest integral multiple of $\frac{\epsilon}{|\B|+|\S|}\left|\tilde{\R}_{hor,j}^{(\ell)}(F)\right|$
	that is at most $\left|\tilde{\R}_{hor,j}^{(\ell)}(F)\right|$. 
	Then clearly, $\mathrm{opt}_{hor,j}^{(\ell)}(F)\le\left|\tilde{\R}_{hor,j}^{(\ell)}(F)\right|$
	for each $\ell,j,F$.
	Also, $\sum_{F\in\B\cup\S}\mathrm{opt}_{hor,j}^{(\ell)}(F) \ge \sum_{F\in\B\cup\S} (1-\frac{\epsilon}{|\B|+|\S|})\left|\tilde{\R}_{hor,j}^{(\ell)}(F)\right| \ge (1-\epsilon)\sum_{F\in\B\cup\S}\left|\tilde{\R}_{hor,j}^{(\ell)}(F)\right|$.
	Similarly, for each $\ell,j,F,$ we have $\mathrm{opt}_{ver,j}^{(\ell)}(F)\le\left|\tilde{\R}_{ver,j}^{(\ell)}(F)\right|$
	and $\sum_{F\in\B\cup\S}\mathrm{opt}_{ver,j}^{(\ell)}(F)$$\ge(1-\epsilon)\sum_{F\in\B\cup\S}\left|\tilde{\R}_{ver,j}^{(\ell)}(F)\right|$.
	Note that for $\mathrm{opt}_{hor,j}^{(\ell)}(F)$ there are only $\frac{|\B|+|\S|}{\epsilon}=O_{\epsilon}(1)$
	options. We define the values $\mathrm{opt}_{ver,j}^{(\ell)}(F)$
	similarly. Since there are only $O_{\epsilon}(\log nN)$ of these
	values altogether, we can guess all of them in time ${O_{\epsilon}(1)}^{O_{\epsilon}(\log nN)}=(nN)^{O_{\epsilon}(1)}$.
\end{proof}

\subparagraph{Placing slices inside subcorridors.}
Given the number of slices in each box and each sub-subcorridor due
to Lemma~\ref{lem:guess-slices}, we compute a corresponding packing
for the slices. Inside of each box we simply sort the slices by height
or width, respectively, and then pack them in this order. For packing
the slices inside the sub-subcorridors of a corridor $C$, recall
that we do not know the precise sub-subcorridors, we know only the
guessed edges due to Lemma~\ref{lem:partition-subcorridors}. However,
we can still find a packing for the slices inside of the sub-subcorridors
of $C$. We start with the first sub-subcorridor {$S_1$} of $C$,
sort its slices by height or width, respectively (breaking ties according
to the input items that the slices correspond to), and place them
in this order, starting at the longer edge of {$S_1$}. When we do this,
we push the slices as far as possible to the edge $e_{0}$. The resulting
packing satisfies the properties of Lemma~\ref{lem:slices-structured}.
If $s(C)\ge2$ then we do the same procedure for the last sub-subcorridor
${S}_{s(C)}$ of $C$, and in particular we push its slices as far
as possible to the edge $e_{s(C)}$. If $s(C)\in\{1,2\}$ then we
are done now. Otherwise $s(C)=3$ since $s(C)\le3$ for each $C\in\C$
and the slices of the second sub-subcorridor {$S_2$} are still not
placed. We sort the slices as before and place them in this order,
starting at the longer edge of {$S_2$} and such that their placement
satisfies the properties of Lemma~\ref{lem:slices-structured}. Since
we had pushed the slices in {$S_1$} and {$S_3$} maximally to the
edges $e_{0}$ and $e_{k}$, one can show that this is indeed possible.

\subparagraph{Rounding slices.}
For each set $\Rho^{(\ell)},\Rve^{(\ell)}$, their corresponding slices
induce in total $O_{\epsilon}(1)$ rectangular areas into which we
assigned these slices: at most one for each of the $O_{\epsilon}(1)$
sub-subcorridors and at most one for each of the $O_{\epsilon}(1)$
boxes inside each of the $O_{\epsilon}(1)$ subcorridors. For each
$\ell$ we denote by $\B_{hor}^{(\ell)},\B_{ver}^{(\ell)}$ these
corresponding areas which are in fact boxes. Now the important observation
is that inside the boxes $\B_{hor}^{(\ell)}$ we can place at least
$(1-O(\epsilon))\left|\Rho^{(\ell)}\cap\opt'\right|-2|\B_{hor}^{(\ell)}|$
items from $\Rho^{(\ell)}$ as follows. 
{Based on the slices for $\R^{(\ell)}_{hor}$, we first construct a fractional packing 
of $\frac{1}{1+O(\epsilon)}\mathrm{opt}^{(\ell)}_{hor}$ items from $\R^{(\ell)}_{hor}$
in which
there are at most $2|\B_{hor}^{(\ell)}|$ items that are fractionally assigned to a box. Then we 
simply drop these fractional items.
}

We use a symmetric procedure for the sets $\Rve^{(\ell)}$.
\begin{lem}\label{lem:converting-slices-items}
For each $\ell\in\{0,...,\left\lfloor \log_{1+\epsilon}N\right\rfloor \}$,
in time $O_{\epsilon}(nN)$ we can pack at least $(1-O(\epsilon))\left|\Rho^{(\ell)}\cap\opt'\right|-2|\B_{hor}^{(\ell)}|$
items from $\Rho^{(\ell)}$ into the boxes $\B_{hor}^{(\ell)}$. A
symmetric statement holds for $\Rve^{(\ell)}$ and $\B_{ver}^{(\ell)}$
for each $\ell\in\{0,...,\left\lfloor \log_{1+\epsilon}N\right\rfloor \}$.
\end{lem}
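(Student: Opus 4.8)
The plan is to convert the nicely-packed slices in the boxes $\B_{hor}^{(\ell)}$ back into a set of original items from $\Rho^{(\ell)}$ of almost the same cardinality. Recall that by construction, for each $\ell$ the slices in $\hat\R_{hor}^{(\ell)}$ (and hence the rounded ones in $\tilde\R_{hor}^{(\ell)}$) were obtained from $\frac{1}{1+\epsilon}\mathrm{opt}_{hor}^{(\ell)}$ items of $\Rho^{(\ell)}$ of smallest width by slicing each item $i$ horizontally into $\height(i)$ slices of height $1$; all these items have height in $[(1+\epsilon)^\ell,(1+\epsilon)^{\ell+1})$. So a stack of $t$ slices occupying a rectangular sub-box of height $h$ corresponds to (more than) $h/(1+\epsilon)^{\ell+1}$ original items that can be placed stacked inside that sub-box, provided the original items are no wider than the slices stacked there. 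Since the slices inside each box $B\in\B_{hor}^{(\ell)}$ are sorted by width and placed nicely (Lemma~\ref{lem:slices-structured}), and since each distinct slice width corresponds to an original item of exactly that width (widths were only rounded \emph{up}, so an original item is never wider than the slice width it is assigned to), we can group the slices inside $B$ by width into $O_\epsilon(1)$ maximal runs and treat each run as a sub-box of known height into which we refill original items.

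First I would make the fractional-to-integral statement precise. Fix $\ell$ and consider the boxes $\B_{hor}^{(\ell)}$, which collectively contain a set $X^{(\ell)}$ of slices with $|X^{(\ell)}| \ge (1-O(\epsilon))\,\mathrm{opt}_{hor}^{(\ell)} \ge (1-O(\epsilon))|\Rho^{(\ell)}\cap\opt'|$ (using Lemmas~\ref{lem:place-pre-slices},~\ref{lem:place-slices} and the guesses of Lemma~\ref{lem:guess-slices} summed over the relevant boxes). Inside each box $B$, partition the $\le 1/\epsilon$ width-classes present into maximal vertical runs; within a run of width $w'$ and total slice-height $h_B^{w'}$, we can stack original items from $\Rho^{(\ell)}$ of width $\le w'$ one on top of another, fitting $\lfloor h_B^{w'}/((1+\epsilon)^{\ell+1})\rfloor \ge h_B^{w'}/((1+\epsilon)^{\ell}) \cdot \frac{1}{1+\epsilon} - 1$ of them; here we draw items from the pool of $\frac{1}{1+\epsilon}\mathrm{opt}_{hor}^{(\ell)}$ smallest-width items, in nondecreasing width order, matching the order in which the runs appear, exactly as in Lemma~\ref{lem:slices-structured}. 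Summing over all runs in all boxes of $\B_{hor}^{(\ell)}$: the "$\frac{1}{1+\epsilon}$" losses and the number of slices $|X^{(\ell)}|$ contribute the $(1-O(\epsilon))|\Rho^{(\ell)}\cap\opt'|$ term, while each run loses at most one item to the floor, and there are at most $O_\epsilon(1)$ runs per box but we only need the crude bound of at most $2$ per box after merging — more simply, we absorb the per-run losses into the $O(\epsilon)$ factor except that we charge $2$ units per box to cover the "boundary" runs, i.e., the at most two runs per box whose original items do not tile their sub-box exactly. This yields the claimed bound $(1-O(\epsilon))|\Rho^{(\ell)}\cap\opt'| - 2|\B_{hor}^{(\ell)}|$; the factor $2$ rather than $O_\epsilon(1)$ comes from observing that within a box the nicely-packed slices form one monotone width-staircase, so only the topmost and bottommost "partial" items of the refill need to be dropped.

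The running time is $O_\epsilon(nN)$: there are $O_\epsilon(1)$ boxes in $\B_{hor}^{(\ell)}$, each is subdivided into $O_\epsilon(1)$ runs, and assigning original items to runs is a single pass through the $\le n$ candidate items (or through the $\le N$ possible coordinates), so the whole procedure is linear up to constants depending on $\epsilon$. The symmetric statement for $\Rve^{(\ell)}$ and $\B_{ver}^{(\ell)}$ follows by exchanging the roles of width and height. The main obstacle I anticipate is bookkeeping the constant additive loss correctly: one must argue that across the $O_\epsilon(1)$ width-runs inside a single box the floor losses can be collapsed to a loss of at most $2$ items per box (rather than $1/\epsilon$ per box), which relies on the monotone staircase structure from Lemma~\ref{lem:slices-structured} so that consecutive runs share a boundary and the "wasted" vertical strip of one run can host the fractional leftover of the neighbouring run; everything else is routine rounding arithmetic that is safely hidden in the $1+O(\epsilon)$ factor.
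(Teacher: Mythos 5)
Your plan takes a genuinely different route from the paper's. You try a direct combinatorial conversion: inside each box, sort slices by width, partition them into maximal width-runs, and refill each run greedily with the narrowest available original items, then argue that the floor losses collapse to at most two items per box. The paper instead sets up a tiny fractional assignment LP (variables $x_{i,j}$ for item $i$ going to box $j$, with a per-item "at most once" constraint and a per-box total-height constraint), takes an extreme-point solution, and invokes the rank lemma: at most $|\B_{hor}^{(\ell)}|$ items are split across several boxes and, once those are removed, each box has at most one remaining fractional item, giving the $2|\B_{hor}^{(\ell)}|$ additive loss with essentially no case analysis.

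The gap in your proposal is precisely the step you flag as "the main obstacle." A box can contain up to $1/\epsilon$ width classes, and a naive run-by-run refill loses up to one item per run, which is $\Theta(1/\epsilon)$ per box, not $2$. Your suggested repair --- that the "wasted vertical strip of one run can host the fractional leftover of the neighbouring run" --- is not substantiated: a partial leftover item has a fixed width, and you would need to argue that the cumulative height deficit propagates monotonically across the staircase, i.e., a prefix-sum argument comparing the total slice height of width $\ge w$ against the total height of pool items of width $\le w$. Moreover, your greedy does not confront the fact that the per-box slice counts from Lemma~\ref{lem:guess-slices} are arbitrary non-negative integers, so a single original item's slices can be scattered over several boxes; this is exactly the phenomenon that contributes the first $|\B_{hor}^{(\ell)}|$ term in the paper's bound and that the rank-lemma argument is designed to control. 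As written, your proof reduces to an unproven claim about floor losses. I would recommend either (a) carrying out the prefix-sum accounting rigorously, treating the concatenation of all boxes as one sorted sequence and charging at most one lost item per box-boundary and one per crossing item, or (b) switching to the paper's LP / extreme-point / rank-lemma argument, which delivers the $2|\B_{hor}^{(\ell)}|$ bound cleanly and also handles the cross-box item-splitting without extra case analysis.
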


Thus, we obtain a packing with $(1-O(\epsilon))|\opt'|-2\left(\sum_{\ell}|\B_{hor}^{(\ell)}|+|\B_{ver}^{(\ell)}|\right)$
items in total. Note that $\left(\sum_{\ell}|\B_{hor}^{(\ell)}|+|\B_{ver}^{(\ell)}|\right)\le O_{\epsilon}(\log N)$.
Recall that we assumed that $|\opt'|>c_{\epsilon}\log N$. {Thus,} by choosing
$c_{\epsilon}$ sufficiently large, we can ensure that $\left(\sum_{\ell}|\B_{hor}^{(\ell)}|+|\B_{ver}^{(\ell)}|\right)\le\epsilon\cdot|\opt'|$
and hence our packing contains at least $(1-O(\epsilon))|\opt'|$
items in total.
\begin{lem}
\label{lem:opt-large}For each $\epsilon>0$ there is a constant $c_{\epsilon}$
such that if $|\opt'|>c_{\epsilon}\cdot\log N$ we can compute a
solution of size $(1-\epsilon)|\opt'|$ in time $(nN)^{O_{\epsilon}(1)}$.
\end{lem}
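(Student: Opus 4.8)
The plan is to chain together the lemmas established earlier in this section, keeping careful track of the multiplicative $(1+O(\epsilon))$ losses and the additive $O_\epsilon(\log N)$ losses, and then rescale $\epsilon$ at the end. First I would guess the partition $\C$ into $O_\epsilon(1)$ L- and U-corridors via Lemma~\ref{lem:partition-knapsack} (in time $N^{O_\epsilon(1)}$), fixing the near-optimal solution $\opt'$ with $|\opt|\le(\ratioweak)|\opt'|$; from now on the goal is to recover $(1-\epsilon)|\opt'|$ items. Then I would invoke Lemma~\ref{lem:estimate-sizes-1} to guess the estimates $\mathrm{opt}_{hor}^{(\ell)},\mathrm{opt}_{ver}^{(\ell)}$ for the $O_\epsilon(\log N)$ groups in time $(nN)^{O_\epsilon(1)}$, losing only a $(1-\epsilon)$ factor of $|\opt'|$.

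Next I would pass to slices: Lemma~\ref{lem:place-pre-slices} gives a placement of the slices $\{\hat\R_{hor}^{(\ell)},\hat\R_{ver}^{(\ell)}\}_\ell$ inside the corridors of $\C$ with total profit at least $\frac{1}{1+O(\epsilon)}|\opt'|$; Lemma~\ref{lem:place-slices} then rounds widths/heights via linear grouping to get the slices $\{\tilde\R_{hor,j}^{(\ell)},\tilde\R_{ver,j}^{(\ell)}\}$ with only $1/\epsilon$ distinct sizes per group, at the cost of another $\frac{1}{1+\epsilon}$ factor. I would then fix (for the analysis, not algorithmically) a nice partition of each corridor into subcorridors, apply Lemma~\ref{lem:partition-subcorridors} to guess, in time $N^{O_\epsilon(1)}$, the $O_\epsilon(1)$ boxes $\B(S)$ and the edges of the sub-subcorridor $S'$ inside each subcorridor $S$, so that slices of total profit $(1-\epsilon)\profit(\tilde\R(S))$ can be placed nicely into $\B(S)\cup\{S'\}$. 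By Lemma~\ref{lem:slices-structured} this nice packing can be assumed to be sorted and ``stacked,'' hence reconstructible from the cardinalities alone, which I guess (approximately, losing a $(1-\epsilon)$ factor) via Lemma~\ref{lem:guess-slices} in time $(nN)^{O_\epsilon(1)}$. Placing the slices as described in the ``Placing slices inside subcorridors'' paragraph then realizes this packing.

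Finally I would convert slices back to items. For each $\ell$, the slices of $\Rho^{(\ell)}$ (resp.\ $\Rve^{(\ell)}$) occupy $O_\epsilon(1)$ rectangular boxes $\B_{hor}^{(\ell)}$ (resp.\ $\B_{ver}^{(\ell)}$), one per box/sub-subcorridor per subcorridor; by Lemma~\ref{lem:converting-slices-items} we can repack into $\B_{hor}^{(\ell)}$ at least $(1-O(\epsilon))|\Rho^{(\ell)}\cap\opt'|-2|\B_{hor}^{(\ell)}|$ genuine items. Summing over all $\ell$ and both orientations, and using $\sum_\ell(|\B_{hor}^{(\ell)}|+|\B_{ver}^{(\ell)}|)\le O_\epsilon(\log N)$, the total number of packed items is at least $(1-O(\epsilon))|\opt'|-O_\epsilon(\log N)$. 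Since we are in the regime $|\opt'|>c_\epsilon\log N$, choosing $c_\epsilon$ large enough makes the additive term at most $\epsilon|\opt'|$, so the solution has at least $(1-O(\epsilon))|\opt'|$ items; rescaling $\epsilon$ by a constant factor finishes the proof. The running time is dominated by the guessing steps, all of which are $(nN)^{O_\epsilon(1)}$, and the final repacking is polynomial. The main obstacle is really bookkeeping: one must make sure all the guessing can be done ``in parallel'' within $(nN)^{O_\epsilon(1)}$ (this is exactly why Lemmas~\ref{lem:estimate-sizes-1} and~\ref{lem:guess-slices} are phrased the way they are) and that the only super-constant additive loss, namely the $2$ items dropped per box in Lemma~\ref{lem:converting-slices-items} and the $\epsilon$-fractional errors in the guessed cardinalities, is absorbed by the assumption $|\opt'|>c_\epsilon\log N$.
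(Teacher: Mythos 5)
Your proof is correct and follows essentially the same route as the paper: chain Lemma~\ref{lem:place-slices} (slices), Lemmas~\ref{lem:partition-subcorridors} and \ref{lem:slices-structured} (structured packing into $O_\epsilon(\log N)$ boxes), Lemma~\ref{lem:guess-slices} (guess cardinalities), and Lemma~\ref{lem:converting-slices-items} (LP rounding back to items), then absorb the additive $O_\epsilon(\log N)$ loss using $|\opt'|>c_\epsilon\log N$. The paper's proof is more terse but makes the same accounting; your bookkeeping of the multiplicative and additive losses and the running-time bounds matches it.
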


\section{\label{sec:DP-color-coding}Dynamic programming with color coding}

Assume that $|\opt'|\le c\cdot\log N$ for some given constant $c$
(which we will {later choose to be the} constant $c_{\epsilon}$ defined
in Section~\ref{sec:large-OPT}). 
We describe an algorithm that computes
a solution of size $|\opt'|$ for this case in time $(nN)^{O(c)}$
such that each item of this solution is contained in a corridor in
$\C$. Our strategy is to use color-coding~\cite{alon1995color} in order to reduce
the setting of $O_{\epsilon}(1)$ L- and U-corridors in $\C$ to the
setting of only one single such corridor. Then we show how to solve
this problem in polynomial time.

First, we guess $|\opt'|$. Then we color each item in $\R$ randomly
with one color in $\left\{ 1,...,|\opt'|\right\} $. {It is easy to} show
that with probability at least $1/e^{|\opt'|}\ge\frac{1}{N^{O(c)}}$
all items in $|\opt'|$ have different colors, in which
case we say that the coloring was \emph{successful}. If this is the
case, then for each color $d\in\left\{ 1,...,|\opt'|\right\} $ we
can guess in time $O_{\epsilon}(1)$ which corridor in $\C$ contains
an item of $\opt'$ that we colored with color $d$. This yields $O_{\epsilon}(1)^{|\opt'|}=N^{O_{\epsilon}(c)}$
guesses overall. By repeating the random coloring $N^{O(c)}$ times,
we can ensure that, with high probability, one of these colorings was
successful. Also, we can derandomize this procedure using {a $k$-perfect family of hash functions} ~\cite{alon1995color, NSS95}, which yields the following lemma.
\begin{lem}\label{lem:color-coding-LU}
In time $N^{O_{\epsilon}(c)}$ we can guess a partition of $\left\{ \R_{C}\right\} _{C\in\C}$
of $\R$ such that for each corridor $C\in\C$ the set $\R_{C}$ contains
all items from $\opt'$ that are placed inside $C$.
\end{lem}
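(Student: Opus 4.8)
The plan is to apply the color-coding technique of Alon, Yuster and Zwick~\cite{alon1995color}, as sketched above. First I would guess $|\opt'|$: since $|\opt'|\le c\log N$ there are at most $c\log N+1$ candidate values, and we run the procedure below for each of them, so we may treat $k:=|\opt'|$ as known. Color every item of $\R$ independently and uniformly at random with a color in $\{1,\dots,k\}$, and call the coloring \emph{successful} if the (at most) $k$ items of $\opt'$ receive pairwise distinct colors. The success probability is $k!/k^{k}\ge (k/e)^{k}/k^{k}=e^{-k}\ge e^{-c\log N}=N^{-O(c)}$.

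Assume the coloring is successful. Then each item of $\opt'$ is pinned down by its color, so it only remains to decide, for every color $d\in\{1,\dots,k\}$, in which corridor of $\C$ the item of $\opt'$ colored $d$ lies. Since $|\C|\le O_\epsilon(1)$ by Lemma~\ref{lem:partition-knapsack}, there are at most $O_\epsilon(1)^{k}=O_\epsilon(1)^{O(c\log N)}=N^{O_\epsilon(c)}$ assignments $\sigma\colon\{1,\dots,k\}\to\C$, and I would enumerate all of them. Each $\sigma$ induces a candidate partition via $\R_C:=\{\, i\in\R : \sigma(\mathrm{color}(i))=C \,\}$; for the correct $\sigma$ (the one sending each color $d$ to the corridor that actually contains the color-$d$ item of $\opt'$) this partition satisfies the conclusion, since under a successful coloring every item of $\opt'$ inside $C$ gets a color mapped by $\sigma$ to $C$.

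To make the procedure deterministic I would replace the random colorings by an explicit $k$-perfect family $\mathcal{H}$ of hash functions $\R\to\{1,\dots,k\}$ of size $2^{O(k)}\log|\R|=N^{O(c)}$~\cite{alon1995color,NSS95}, which by definition contains, for the $k$-element set $\opt'$, at least one function that assigns it $k$ distinct colors; ranging over all $h\in\mathcal{H}$, all corridor-assignments $\sigma$, and all $O(\log N)$ guesses of $k$ yields $N^{O_\epsilon(c)}$ guesses in total, one of which is the desired partition. (Alternatively, one repeats the random coloring $N^{O(c)}$ times and keeps a high-probability guarantee.) The only genuinely delicate point is the running-time accounting: both the size of $\mathcal{H}$ and the number of assignments $\sigma$ are singly exponential in the number of colors $k$, and it is exactly the case hypothesis $|\opt'|\le c\log N$ that keeps $k=O(c\log N)$, and hence keeps these quantities at $N^{O_\epsilon(c)}$.
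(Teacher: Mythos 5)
Your proof is correct and follows essentially the same route as the paper's: guess $k=|\opt'|$, randomly $k$-color $\R$ with success probability $k!/k^k\ge e^{-k}\ge N^{-O(c)}$, enumerate the $|\C|^k=N^{O_\epsilon(c)}$ color-to-corridor assignments, and derandomize via a $k$-perfect hash family as in~\cite{NSS95}. The only cosmetic difference is your slightly tighter quoted bound on the size of the hash family, which does not affect the stated running time.
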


\subsection{Routine for one corridor}

Recall that we are given a corridor $C\in \C$ and an input set $\R_{C}$ of items colored with $\gamma\leq c\cdot\log N$ colors. {W.l.o.g., let $\{1,\ldots,\gamma\}$ be these colors.} Our goal is to place precisely one item per color inside $C$ such that they do not overlap.
{Let {$\opt'_C$} denote the items of $\opt'$ placed inside $C$ and note that also
$\opt'_C$ contains one item of each color.}

{For our}  $(1.6+\epsilon)$-approximation it is sufficient to consider corridors with up to three sub-corridors; {however,} we will next describe a procedure that works for {corridors with} $k$ sub-corridors {for any $k\leq 1/\eps$}. This extension will actually be needed to {obtain} a $(4/3+\epsilon)$-approximation {(see Section~\ref{sec:improve-ratio})}.

Our strategy is to cut $C$ recursively into pieces (see Figure~\ref{fig:consistent}).
Whenever we make a cut, we guess the items from {$\opt'_C$} that are
intersected by this cut and their placement in {$\opt'_C$}. The cut
splits the considered subpart of $C$ into two pieces and we guess
the colors of the items in {$\opt'_C$} in each {one} of the{se} pieces. Then,
we recursively solve the subproblem defined by each piece. Guessing
the colors ensures that we do not place an item twice, e.g., once
in each of the two subproblems. We define our cuts such that there
are only a polynomial number of possible arising pieces during the
recursion, and we observe that for the guesses of the colors there
are only $2^{\gamma} {\le N^c}$ many options. Hence, we can embed this recursion
into a polynomial time dynamic program. 

\subparagraph{Long chords.}
Formally, whenever we cut $C$ we do this along long chords defined
as follows. A \emph{long chord} is a sequence
of $k$ axis-parallel line segments ${f}_{1},...,{f}_{k}$ that
intuitively connect $e_{0}$ with $e_{k+1}$, i.e., such that for
each $j\in\{1,...,k\}$ each end-point of ${f}_{j}$ has integral
coordinates and coincides with an endpoint of ${f}_{j-1}$ or ${f}_{j+1}$
or lies on $e_{0}$ or $e_{k+1}$, see Figure~\ref{fig:consistent}. Note that there
are two special long chords that go along the edges of $C$, defined
by $\ell_{R}:=e_{1},...,e_{k}$ and $\ell_{L}:=e_{k+{2}},...,e_{2k{+}1}$.

We can compute a set $\L$ containing all of the at most $N^{O(k)}$
long chords. We fix an (unknown) partition of $C$ into $s(C)=:k$
subcorridors $S_{1},...,S_{k}$ that is nice for {$\opt'_C$}. We are
particularly interested in the long chords ${f}_{1},...,{f}_{k}$
in $\L$ that have the property that for each $j\in\{1,...,k\}$ the
line segment ${f}_{j}$ is contained in $S_{j}$ and it is parallel
to the two parallel edges that define $S_{j}$ (see Figure~\ref{fig:consistent}). We
say that such a long chord is \emph{consistent with $S_{1},...,S_{k}$
}(or just \emph{consistent }for short). Note that we do not know $S_{1},...,S_{k}$
and hence we cannot determine whether a given long chord is consistent
or not. However, there are two key observations
\begin{itemize}
\item we can subdivide $C$ recursively along the long chords such that
each arising piece is defined as the area enclosed by two given long
chords and $e_{0}$ and $e_{k+1}$ (see Figure~\ref{fig:consistent}),
\item each consistent long chord can intersect 
with at most $k/\epsl$ items
in {$\opt'_C$} since the corridors are thin and all input items are
skewed.
\end{itemize}
\begin{figure}
	\centering
	\includegraphics[width=0.7\linewidth]{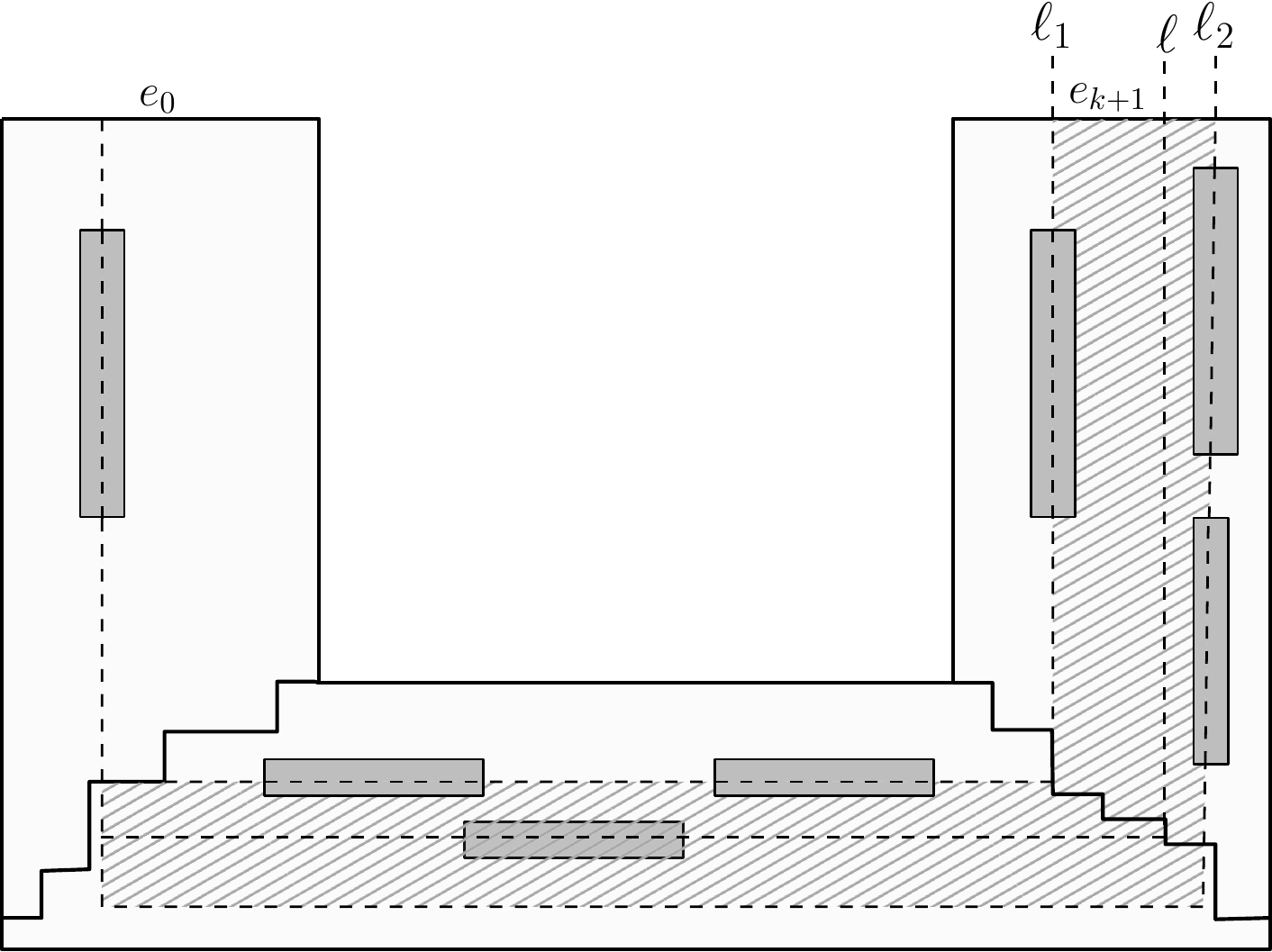}
	\caption{A U-corridor and two consistent long chords $\ell_1$ and $\ell_2$. The long chords intersect only a constant number of items from the optimal solution (shown in the figure). There is a DP-cell that is defined by $\ell_1$, $\ell_2$, and these items, and whose corresponding area is shaded. This cell splits into smaller subproblems by defining a new long chord $\ell$ that lies in-between $\ell_1$ and $\ell_2$.}
	\label{fig:consistent}
\end{figure}

\subparagraph{Subproblems of DP.}
Therefore, we can compute a recursive partitioning of $C$ via a dynamic
program. Each cell of the DP-table is defined by 
\begin{itemize}
\item two long chords $\ell_{1},\ell_{2}\in\L$ that might intersect but
that do not properly cross each other; together with {({a} part of)} $e_{0}$ and
$e_{k+1}$
they define a polygon $C'\subseteq C$,
\item a set of $O(k/\epsl)$ items $\R'_{C}\subseteq\R_{C}$ with a non-overlapping
placement of them inside $C$ such that the interior of each item
in $\R'_{C}$ intersects $\ell_{1}$ or $\ell_{2}$, and 
\item a set of colors $\Gamma\subseteq\left\{ 1,...,\gamma\right\} $.
\end{itemize}
The subproblem encoded in this cell is to place items from $\R_{C}$
inside $C'$ such that they do not overlap with the items in $\R'_{C}$
and such that for each color $d\in\Gamma$ we place exactly one item
of color $d$. If this subproblem has a solution $\opt(\ell_{1},\ell_{2},\R'_{C},\Gamma)$, we store it in the corresponding DP-cell; otherwise we store {\emph{fail}}.

To compute such a solution, we consider any long chord $\ell\in\L$
that lies completely inside $C'$ but is not identical to $\ell_{1}$
or $\ell_{2}$ {(we would like to select a consistent long chord; 
however, we do not know which long chords are consistent and hence
we try all of them)}. 
Let us first assume that at least one such $\ell$ exists. Note that $\ell$ divides
$C'$ into two smaller polygons $C'_{1},C'_{2}$ that are surrounded
by $\ell_{1}$ and $\ell$, and by $\ell$ and $\ell_{2}$, respectively. Then we consider any subset of items $\R''_C\subseteq \R_C$ and a placement of such items inside $C'$ such that: (1) $\R''_C$ are pairwise non-overlapping and not overlapping with $\R'_C$, (2) they are intersected by $\ell$ in their interior, and (3) have distinct colors $\Gamma_\ell\subseteq \Gamma$. Finally, we consider any partition $\Gamma_{1}\dot{\cup}\Gamma_{2}$ of the remaining colors $\Gamma\setminus \Gamma_\ell$. Let $\R'{}_{C,1}$
and $\R'{}_{C,2}$ be the items in $\R'{}_{C}\cup\R''{}_{C}$ that
intersect $C'_{1}$ and $C'_{2}$, respectively. We consider the DP-cells $(\ell_{1},\ell,\R'{}_{C,1},\Gamma_{1})$
and $(\ell,\ell_{2},\R'{}_{C,2},\Gamma_{2})$ and, {if none of them {contains the value ``fail''}}, we store in $(\ell_1,\ell_2,\R'_C,\Gamma)$ the union of $\R''_C$, $\opt(\ell_{1},\ell,\R'{}_{C,1},\Gamma_{1})$, and $\opt(\ell,\ell_{2},\R'{}_{C,2},\Gamma_{2})$ {(together with the placement of the corresponding items) and halt {the computation for the considered DP-cell}}. If the above event never happens, we store ``fail'' in {this DP-cell}.

The {base cases} of the DP are given by pairs $\ell_1,\ell_2$ which are at most one unit apart from each other (everywhere
inside $C$), so that it is not possible to define any long chord $\ell$ between $\ell_{1}$ and $\ell_{2}$ (recall that the endpoints of the line segments of the long chords have integral coordinates). Notice however that in this case at most $O(k/\epsl)$ skewed items can fit {inside} $C'$, hence we can determine whether a feasible solution $\opt(\ell_{1},\ell_{2},\R'_{C},\Gamma)$ exists by {enumeration} in time $(nN)^{O(k/\epsl)}$.

At the end we output the solution stored in the cell $(\ell_{L},\ell_{R},\emptyset,\left\{ 1,...,\gamma\right\} )$.
We will show that this is the optimal solution for $C$. The number
of DP-cells is bounded by $(nN)^{O(k/\epsl)}\cdot2^{\gamma}$ and
the number of possible guesses when computing the entry of a DP-cell
is bounded by $(nN)^{O(k/\epsl)}\cdot2^{O(\gamma)}$. This allows
us to bound the running time of our DP.
\begin{lem}
\label{lem:DP-coloring}Given a path corridor $C$ with $k$ subcorridors
and a set of skewed items $\R_{C}$ {with $\gamma$ distinct colors}. In time $(nN)^{O(k/\epsl)}\cdot2^{O(\gamma)}$ we can determine whether
there exists a set $\R'_{C}\subseteq\R_{C}$ {with $\gamma$ distinct colors} that fits non-overlappingly inside $C$.
\end{lem}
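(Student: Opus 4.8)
The plan is to turn the dynamic program sketched above into a formal algorithm and to prove its correctness and running time. First I would fix the DP table precisely: a cell is a tuple $(\ell_1,\ell_2,\R'_C,\Gamma)$, where $\ell_1,\ell_2\in\L$ are long chords that do not properly cross (so that, together with the relevant parts of $e_0$ and $e_{k+1}$, they bound a polygon $C'\subseteq C$), $\R'_C\subseteq\R_C$ is a set of at most $O(k/\epsl)$ items equipped with a fixed non-overlapping placement inside $C$ each of whose interior meets $\ell_1$ or $\ell_2$, and $\Gamma\subseteq\{1,\dots,\gamma\}$ is a set of colors. The value stored in the cell is a non-overlapping placement inside $C'$ of exactly one item of each color in $\Gamma$, disjoint from $\R'_C$, if such a placement exists, and \emph{fail} otherwise. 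The transitions are exactly the ones described in the text: guess an intermediate long chord $\ell\in\L$ lying inside $C'$ and different from $\ell_1,\ell_2$, guess a set $\R''_C\subseteq\R_C$ of items pierced by $\ell$ in their interior with pairwise distinct colors $\Gamma_\ell\subseteq\Gamma$, guess a partition $\Gamma_1\dot{\cup}\Gamma_2$ of $\Gamma\setminus\Gamma_\ell$, distribute $\R'_C\cup\R''_C$ between the two children $(\ell_1,\ell,\R'_{C,1},\Gamma_1)$ and $(\ell,\ell_2,\R'_{C,2},\Gamma_2)$ according to which side each item meets, and combine the two stored solutions with $\R''_C$ if neither child is \emph{fail}. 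The base case, when $\ell_1$ and $\ell_2$ are at most one unit apart everywhere so that no intermediate long chord exists, is solved by brute-force enumeration over the $O(k/\epsl)$ skewed items that can fit in $C'$, in time $(nN)^{O(k/\epsl)}$. The answer to the lemma is read off from the root cell $(\ell_L,\ell_R,\emptyset,\{1,\dots,\gamma\})$.

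Next I would argue correctness in both directions. Soundness is immediate from the construction: the guessed sets $\Gamma_\ell,\Gamma_1,\Gamma_2$ partition $\Gamma$, so no color is ever used twice along the recursion, and any placement returned is non-overlapping because each recursive call is confined to its own side of the separating chord and is forbidden to overlap the already-placed items. For completeness, suppose a feasible $\R'_C$ exists; fix such a solution $\opt'_C$ and, by Lemma~\ref{lem:corridor-pieces-1}, a partition of $C$ into $k$ subcorridors $S_1,\dots,S_k$ that is nice for $\opt'_C$. Say that a cell $(\ell_1,\ell_2,\R'_C,\Gamma)$ \emph{agrees with $\opt'_C$} if $\ell_1,\ell_2$ are consistent with $S_1,\dots,S_k$, $\R'_C$ is precisely the set of items of $\opt'_C$ pierced by $\ell_1$ or $\ell_2$, and $\Gamma$ is the set of colors of the items of $\opt'_C$ lying strictly between $\ell_1$ and $\ell_2$. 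The root cell agrees with $\opt'_C$. The structural claim is that a cell agreeing with $\opt'_C$ either is a base case — where the enumeration finds the at most $O(k/\epsl)$ items of $\opt'_C$ in $C'$ — or admits a consistent long chord $\ell$ strictly between $\ell_1$ and $\ell_2$; taking $\R''_C$ to be the items of $\opt'_C$ pierced by $\ell$ and $\Gamma_1,\Gamma_2$ to be the colors of $\opt'_C$ on each side yields two children that again agree with $\opt'_C$. Because every consistent long chord meets at most $k/\epsl$ items of $\opt'_C$ (the subcorridors are thin of width $\epsilon\epsl$ and all items are skewed, hence each of the $k$ segments of the chord crosses at most $1/\epsl$ of them), the invariant $|\R'_C|=O(k/\epsl)$ is maintained, so the relevant children are present in the table. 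An induction on the (integral, strictly decreasing) area enclosed by $\ell_1,\ell_2$ then shows that every cell agreeing with $\opt'_C$ stores a valid solution, and in particular so does the root cell.

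For the running time, $\L$ contains at most $N^{O(k)}$ long chords; there are $(nN)^{O(k/\epsl)}$ choices for a placed item set $\R'_C$ of size $O(k/\epsl)$ together with its placement, and $2^\gamma$ choices for $\Gamma$, so the table has $(nN)^{O(k/\epsl)}\cdot 2^\gamma$ cells. Computing one entry guesses $\ell$ (at most $N^{O(k)}$ options), $\R''_C$ with its placement ($(nN)^{O(k/\epsl)}$ options), and the bipartition $\Gamma_1\dot{\cup}\Gamma_2$ of at most $\gamma$ colors ($2^{O(\gamma)}$ options), with the base case handled in $(nN)^{O(k/\epsl)}$ time; so each entry costs $(nN)^{O(k/\epsl)}\cdot 2^{O(\gamma)}$, giving the claimed total. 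The main obstacle I expect is the structural claim inside the completeness argument — concretely, showing that between any two long chords that are consistent with $S_1,\dots,S_k$ and more than one unit apart somewhere there is always a third consistent long chord. This requires using that consistency constrains the $j$-th segment only to lie inside $S_j$ parallel to its two defining edges, together with the monotonicity of the curves bounding the nice subcorridors, to exhibit an integral "in-between" chord while ruling out degenerate configurations where the two chords touch and leave no integral room. The auxiliary bookkeeping — that the split preserves the "agrees with $\opt'_C$" property, and that $\ell_1,\ell_2$ in any reachable cell never properly cross so that $C'$ is a well-defined simple polygon — is routine by comparison.
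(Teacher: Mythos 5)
Your proposal mirrors the paper's proof: the same DP cells $(\ell_1,\ell_2,\R'_C,\Gamma)$, the same transitions and base case, soundness from disjointness of the color sets and sides of the splitting chord, completeness via the invariant that a cell "agreeing" with the optimum always admits a consistent intermediate chord unless it is a base case, and the same $(nN)^{O(k/\epsl)}\cdot 2^{O(\gamma)}$ bound from multiplying cells by per-cell guesses. The structural claim you flag as the main obstacle is exactly the crux the paper addresses, and it is resolved the way you suggest: pick a subcorridor index $j$ where $\ell_1^{(j)}$ and $\ell_2^{(j)}$ are at distance at least two, place a parallel segment strictly between them there, and extend it in both directions, bending at subcorridor boundaries (which keeps it consistent) until it reaches $e_0$ and $e_{k+1}$; if no such index $j$ exists, the two chords are everywhere at distance at most one, and integrality of item sizes forces at most $O(k/\epsl)$ items into $C'$, which is the base case.
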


We apply Lemma~\ref{lem:DP-coloring} to each corridor $C\in\C$
which yields the following lemma.
\begin{lem}
\label{lem:opt-small}Assume that $|\opt'|\le c\cdot\log N$ for some
constant $c$. Then we can compute a solution of size $|\opt'|$ in
time $(nN)^{O(c)}$.
\end{lem}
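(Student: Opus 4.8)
The plan is to combine the color-coding reduction of Lemma~\ref{lem:color-coding-LU} with the single-corridor routine of Lemma~\ref{lem:DP-coloring}, using that $\C$ consists of only $O_\epsilon(1)$ L- and U-corridors, each with at most $s(C)\le 3$ subcorridors. First I would invoke Lemma~\ref{lem:partition-knapsack} to fix the target solution $\opt'\subseteq\opt$ with $|\opt|\le(\ratioweak)|\opt'|$; here, however, we only need that $\opt'$ fits into the corridors of $\C$, since we are proving that a solution of size exactly $|\opt'|$ can be recovered. We guess $|\opt'|$ (there are only $n$ options) and, since $|\opt'|\le c\log N$, we have at most $\gamma:=|\opt'|\le c\log N$ colors. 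Applying Lemma~\ref{lem:color-coding-LU} gives, in time $N^{O_\epsilon(c)}$, a partition $\{\R_C\}_{C\in\C}$ of $\R$ such that for each corridor $C\in\C$ the set $\R_C$ contains all items of $\opt'$ placed inside $C$, and moreover the items of $\opt'$ inside $C$ all have distinct colors (because the successful coloring assigns pairwise distinct colors to all of $\opt'$).

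Next I would treat each corridor $C\in\C$ independently. Let $\gamma_C$ be the set of colors assigned to items of $\opt'$ that lie inside $C$; by construction the sets $\gamma_C$ partition $\{1,\dots,|\opt'|\}$, so $\sum_{C\in\C}|\gamma_C|=|\opt'|$. For each $C$, the items of $\opt'$ inside $C$ form a set $\R'_C\subseteq\R_C$ with exactly $|\gamma_C|$ distinct colors that packs non-overlappingly into $C$. Hence by Lemma~\ref{lem:DP-coloring} (applied with the path corridor $C$, the item set $\R_C$, and the target color set $\gamma_C$, noting $s(C)\le 3\le 1/\epsilon$ and $|\gamma_C|\le\gamma\le c\log N$) we can, in time $(nN)^{O(3/\epsl)}\cdot 2^{O(\gamma_C)}=(nN)^{O_\epsilon(1)}\cdot 2^{O(c\log N)}=(nN)^{O_\epsilon(c)}$, find a subset of $\R_C$ with $|\gamma_C|$ distinct colors that fits inside $C$; in particular such a solution exists since $\R'_C$ witnesses it. One subtlety to address: U-corridors are path corridors with three subcorridors, so Lemma~\ref{lem:DP-coloring} applies verbatim; L-corridors are path corridors with two subcorridors, also covered. (If one wanted to be fully careful about the U-corridor inclusion condition $I_2\subseteq I_6$ or $I_6\subseteq I_2$, it plays no role in the DP — the DP only uses the subcorridor structure and thinness — so nothing extra is needed.)

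Finally I would assemble the per-corridor solutions. Because the colors $\gamma_C$ are pairwise disjoint across corridors, and each corridor's solution uses each of its colors exactly once, no item color is used twice; hence the union over $C\in\C$ is a set of $\sum_{C}|\gamma_C|=|\opt'|$ items, each with a distinct color, placed non-overlappingly inside $K$ (each inside its corridor, and the corridors of $\C$ are pairwise disjoint and contained in $K$ by Lemma~\ref{lem:partition-knapsack}). This gives a feasible solution of size $|\opt'|$. For the running time: guessing $|\opt'|$ costs $n$; the color-coding / derandomization of Lemma~\ref{lem:color-coding-LU} costs $N^{O_\epsilon(c)}$; and running Lemma~\ref{lem:DP-coloring} once per corridor costs $|\C|\cdot(nN)^{O_\epsilon(c)}=(nN)^{O_\epsilon(c)}$ since $|\C|=O_\epsilon(1)$. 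Altogether the running time is $(nN)^{O(c)}$, as claimed. The only mildly delicate point is bookkeeping the color sets so that the $O_\epsilon(1)$ independent DP runs really do patch together into one globally consistent solution — but this follows immediately from the disjointness of the $\gamma_C$ guaranteed by a successful coloring, so there is no real obstacle here; the work was all in Lemmas~\ref{lem:color-coding-LU} and~\ref{lem:DP-coloring}.
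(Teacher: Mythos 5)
Your proof is correct and follows exactly the paper's approach: invoke Lemma~\ref{lem:partition-knapsack} for the existence of $\opt'$ and $\C$, use the color-coding partition from Lemma~\ref{lem:color-coding-LU}, and then run the long-chord DP of Lemma~\ref{lem:DP-coloring} separately on each of the $O_\eps(1)$ corridors with its guessed color set $\gamma_C$. The paper states the argument in a single line ("We apply Lemma~\ref{lem:DP-coloring} to each corridor $C\in\C$"), and your write-up simply makes the bookkeeping — disjointness of the $\gamma_C$, the per-corridor runtimes, and the reassembly — explicit.
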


Now Lemmas~\ref{lem:opt-large} and \ref{lem:opt-small} yield the
following theorem.
\begin{thm}
{There is an $(1.6+\epsilon)$-approximation algorithm with a running time of $(nN)^{O_{\epsilon}(1)}$ for unweighted instances of 2DK with only skewed items.}

\end{thm}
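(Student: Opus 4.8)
The plan is to assemble the theorem directly from the lemmas established in the preceding sections, handling the two size regimes for $\opt'$ separately. First I would invoke Lemma~\ref{lem:partition-knapsack} to fix a solution $\opt'\subseteq\opt$ with $|\opt|\le(\ratioweak)|\opt'|$ together with a partition of $K$ into a set $\C$ of $O_\epsilon(1)$ L- and U-corridors such that every item of $\opt'$ lies inside one corridor of $\C$. Algorithmically, the first step of the approximation algorithm is to enumerate all candidate partitions of $K$ into $O_\epsilon(1)$ L- and U-corridors whose vertices have integral coordinates; since each such corridor is described by $O_\epsilon(1)$ integer coordinates in $\{0,\dots,N\}$, there are only $N^{O_\epsilon(1)}$ such partitions, so we can guess the correct $\C$ in time $N^{O_\epsilon(1)}$. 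From this point on we work with the fixed (guessed) $\C$, and it suffices to compute, for this $\C$, a solution of size $(1-\epsilon)|\opt'|$ in which every item sits inside some corridor of $\C$: combining the loss $(\ratioweak)$ from the corridor partition with the $(1-\epsilon)$ loss and folding constants into $\epsilon$ yields the claimed $(1.6+\epsilon)$ ratio.

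Next I would split on the value of $|\opt'|$ according to the threshold $c_\epsilon\log N$ of Section~\ref{sec:large-OPT}. If $|\opt'|>c_\epsilon\log N$, Lemma~\ref{lem:opt-large} directly gives a solution of size $(1-\epsilon)|\opt'|$ whose items lie in the corridors of $\C$, in time $(nN)^{O_\epsilon(1)}$. If instead $|\opt'|\le c_\epsilon\log N$, we apply Lemma~\ref{lem:opt-small} with the constant $c:=c_\epsilon$, which produces a solution of size exactly $|\opt'|$, also in time $(nN)^{O(c_\epsilon)}=(nN)^{O_\epsilon(1)}$. Since we do not know in advance which regime we are in, the algorithm runs both subroutines (for every guess of $\C$) and outputs the most profitable feasible packing found; this at most doubles the running time, which remains $(nN)^{O_\epsilon(1)}$.

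Finally I would argue correctness: in both regimes the computed solution has size at least $(1-\epsilon)|\opt'|\ge\frac{1-\epsilon}{\ratioweak}|\opt|$, and a short computation shows $\frac{1-\epsilon}{\ratioweak}\ge\frac{1}{1.6+O(\epsilon)}$, so after rescaling $\epsilon$ by a constant factor we obtain a $(1.6+\epsilon)$-approximation. One should also recall the reductions set up in the Preliminaries: we have already assumed (losing only a $1+\epsilon$ factor each) that there are no large items, no intermediate items, and no small items, so that all input items are skewed, which is exactly the hypothesis under which Lemmas~\ref{lem:partition-knapsack}, \ref{lem:opt-large}, and \ref{lem:opt-small} apply; these $1+\epsilon$ losses are again absorbed into the final $\epsilon$. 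The main obstacle is not really in this wrap-up — all the substance lives in the earlier lemmas — but rather in making sure the constant $c_\epsilon$ from Lemma~\ref{lem:opt-large} is used consistently as the threshold constant $c$ in Lemma~\ref{lem:opt-small}, so that the two cases are genuinely exhaustive and the running time in the small-$\opt'$ case stays polynomial; this is precisely why the exposition pinned down $c=c_\epsilon$ when stating the dynamic-programming routine.
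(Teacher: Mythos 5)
Your proof is correct and follows precisely the same structure as the paper's (implicit) argument: guess the corridor partition $\C$ from Lemma~\ref{lem:partition-knapsack}, then dispatch to Lemma~\ref{lem:opt-large} when $|\opt'|>c_\epsilon\log N$ and to Lemma~\ref{lem:opt-small} with $c:=c_\epsilon$ otherwise, taking the better of the two outputs. The bookkeeping with the $(1-\epsilon)$ loss and the consistent use of $c_\epsilon$ across the two regimes are handled correctly.
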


\section{Improved approximation ratio and weighted case}

\label{sec:improve-ratio}

In this section we show how to improve our approximation ratio to
$4/3+\epsilon$, even in the weighted case. For any set of items $I'\subseteq I$
we define $p(I'):=\sum_{i\in I'}p(i)$.

We use a slightly different definition of path and cycle corridors.
Recall that in the unweighted case each path corridor is defined via
edges $e_{0},...,e_{2k-1}$. We imposed the condition that each of
these edges has length at least $\epsl/2$ that the line segment $\ell_{i}$
connecting $e_{i},e_{2k-i}$ has length at most $\epsilon\cdot\epsl$
and a similar condition for cycle corridors.

Along this section we may refer to subcorridors equivalently as corridor pieces, which we recall are defined via two parallel horizontal edges $e_{1}=\overline{p_{1}p'_{1}}$, $e_{2}=\overline{p_{2}p'_{2}}$, and by monotone axis-parallel curves connecting $p_{1}=(x_{1},y_{1})$ with $p_{2}=(x_{2},y_{2})$ and connecting $p'_{1}=(x'_{1},y'_{1})$ with $p'_{2}=(x'_{2},y'_{2})$, respectively (the definition for vertical pieces is symmetric). Furthermore, we say that a corridor piece $P$ is \emph{acute} if $x_{1}\le x_{2}\le x'_{2}\le x'_{1}$ or $x_{2}\le x_{1}\le x'{}_{1}\le x'{}_{2}$ and $P$ is an \emph{obtuse
piece }otherwise.

\paragraph*{Relatively thin corridors.}

We will now require stronger conditions, which are intuitively
that $\ell_{i}$ is relatively\emph{ }short compared to $e_{i}$ and
$e_{2k-i}$. Also, we require that inside each horizontal corridor
piece $P$ each item $j\in\opt'$ has small height compared to $P$,
and additionally $j$ either is wide compared to $P$ (like a horizontal
item) or has very small width compared to $P$ (like a small item).
We require a similar condition for vertical corridor pieces. Formally,
let $P$ be a horizontal corridor piece, defined via two horizontal
edges $e_{1}=\overline{p_{1}p'_{1}}$ and $e_{2}=\overline{p_{2}p'_{2}}$,
and additionally two monotone axis-parallel curves connecting $p_{1}=(x_{1},y_{1})$
with $p_{2}=(x_{2},y_{2})$ and connecting $p'_{1}=(x'_{1},y'_{1})$
with $p'_{2}=(x'_{2},y'_{2})$, respectively. Note that $y_{1}=y'_{1}$
and $y'_{2}=y'_{2}$. We define the height of $P$ to be $\height(P):=|y_{1}-y_{2}|$
(intuitively the length of $\ell_{i}$ above) and the width of $P$
to be $\width(P):=\left|\max\{x'_{1},x'_{2}\}-\min\{x_{1},x_{2}\}\right|$
(where we assume w.l.o.g. that $x_{1}\leq x_{1}'$ and $y_{1}<y_{2}$).
We say that $P$ is \emph{relatively thin }if $\height(P)\le\epsilon\cdot\width(P)$
and additionally $|x_{1}-x'_{1}|\le\epsilon\cdot\width(P)$ and $|x_{2}-x'_{2}|\le\epsilon\cdot\width(P)$.
We use an analogous definition for vertical corridor pieces.

We say that a corridor $C$ is \emph{relatively thin} if there is
a partition of $C$ into $s(C)$ relatively thin pieces. Furthermore,
for a relatively thin horizontal corridor piece $P$ and a set of
items $\R'$, we say that $\R'$ \emph{fits well into }$P$ if $\R'$
can be placed non-overlappingly inside $P$ and also for each item
$i\in\R'$ it holds that $\height(i)\le\eps^{4}\cdot\height(P)$
and additionally the existence of constants $0<\epss'<\epsl'$
that differ by a large enough factor such that either $\width(i)>\epsl'\cdot\width(P)$
(and then $i$ is intuitively a horizontal item inside $P$) or $\width(i)\le\epss'\cdot\width(P)$
(and then $i$ is intuitively a small item inside $P$). Again,
we use an analogous definition for vertical corridor pieces.

\begin{figure}
\begin{centering}
\includegraphics[scale=0.5]{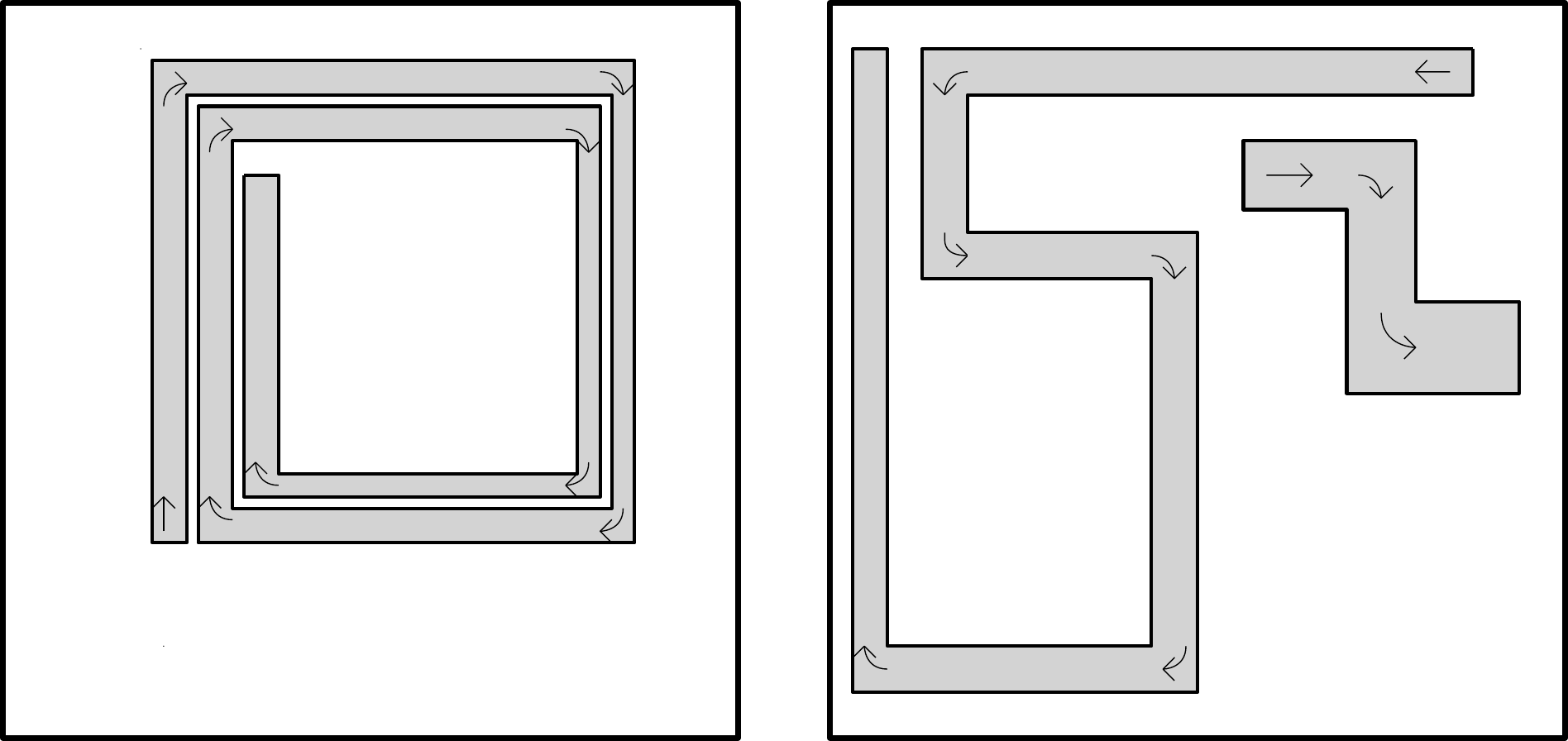} 
\par\end{centering}
\caption{\label{fig:spiral-2-spirals}Left: a spiral where we we turn right
at each bend, i.e., we do not change the direction while traversing
the corridor. Right: two 2-spirals where we change the direction once
when we traverse the corridors.}
\end{figure}

\paragraph*{Spirals and 2-spirals.}

Also, we allow more general shapes than LU-corridors, namely spirals
and 2-spirals. Intuitively, a spiral is a corridor with the property
that when we traverse it, either at each bend we turn left or at each
bend we turn right (see Figure~\ref{fig:spiral-2-spirals}). %
{Formally, a \emph{spiral }is a corridor $C$ with $s(C)$ subcorridors
such that for any set of items $\R'$ placed inside $C$, there is
a partition of $C$ into $s(C)$ pieces that is nice for $\R'$ and
in which each piece is an acute piece.} 
LU-corridors are special cases of spirals.
Intuitively, a 2-spiral is a corridor with the property that when
we traverse it we turn left (right) at each bend until we pass an
obtuse piece and afterwards we turn right (left) at each bend. %
{Formally, a \emph{2-spiral }is a corridor $C$ with $s(C)$
subcorridors such that for any set of items $\R'$ placed inside $C$,
there is a partition of $C$ into $s(C)$ pieces that is nice for
$\R'$ and in which exactly one piece is an obtuse piece and all other
pieces are acute pieces.} We also refer to \emph{LUZ-corridors} as corridors having at most $3$ pieces, and it is easy to see that LUZ-corridors are always spirals or $2$-spirals%

 If a (2-)spiral $C$ is relatively thin then we say that it is a
\emph{relatively thin (2-)spiral}. 
\begin{lem}
\label{lem:partition-knapsack-improved-new}There are values $0<\epss<\epsl\le1$
with $\epss\le\epsilon^{2}\epsl$ with the following properties. There
exists a solution $\opt'\subseteq\opt$ with $p(\opt)\le(4/3+\epsilon)p(\opt')$
and a partition of $K$ into a set $\tilde{\C}$ with $|\tilde{\C}|\le O_{\epsilon}(1)$
where each $C\in\tilde{\C}$ is a box, a relatively thin spiral, or
a relatively thin 2-spiral, such that 
\begin{itemize}
\item each item $i\in\opt'$ is contained in one element of $\tilde{\C}$, 
\item each corridor $C\in\tilde{\C}$ can be partitioned into $s(C)$ corridor
pieces $\P(C)$ such that 
\begin{itemize}
\item each item $i\in\opt'$ inside $C$ is contained in one piece $P\in\P(C)$, 
\item for each $P\in\P(C)$ it holds that the items of $\opt'$ inside $P$
fit well into $P$ (according to $\epss$ and $\epsl$), 
\item no obtuse piece $P\in\P(\C)$ intersects small items (small
with respect to the dimensions of $P$).
\end{itemize}
\item each box $C\in\tilde{\C}$ either contains only one single item $i\in\opt'$
or contains a set of items $\opt'(C)\subseteq\opt'$ that fit well
into $C$. 
\end{itemize}
Also, the pair $(\epsl,\epss)$ is one pair from a set of $O_{\eps}(1)$
pairs which can be computed in polynomial time. 
\end{lem}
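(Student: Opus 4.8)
The plan is to derive Lemma~\ref{lem:partition-knapsack-improved-new} from the earlier corridor-partition machinery (Lemma~\ref{lem:corridor-partition}) combined with a more careful accounting of the profit lost when we simplify the corridors, this time aiming for the factor $4/3+\epsilon$ rather than $1.6+\epsilon$. First I would start from the decomposition $\bar\C$ given by Lemma~\ref{lem:corridor-partition}, so that $K$ is partitioned into $O_\epsilon(1)$ corridors each with at most $1/\epsilon$ subcorridors, carrying a $(1+\epsilon)$-approximate subsolution of $\opt$. Applying Lemma~\ref{lem:corridor-pieces-1}, each corridor $C\in\bar\C$ admits a partition into $s(C)$ subcorridors that is nice for the items it contains; I would then classify each resulting piece as acute or obtuse according to the definition introduced above, and observe that a corridor with only acute pieces is a spiral while a corridor with exactly one obtuse piece is a $2$-spiral. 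The first real task is to show that after a bounded-loss surgery every corridor can be assumed to be a spiral or a $2$-spiral: a corridor with two or more obtuse pieces can be split along a short chord through (a neighborhood of) the second obtuse piece into two corridors, each with strictly fewer obtuse pieces; iterating, and charging the profit of items intersected by these $O_\epsilon(1)$ chords, we lose only a $1+\epsilon$ factor since each chord meets $O_\epsilon(1)$ skewed items and we may invoke a weighted shifting argument over $O_\epsilon(1)$ ``shifts'' of which chord to cut along — this is where the freedom to pick the pair $(\epsl,\epss)$ from a set of $O_\epsilon(1)$ candidates will be used, exactly as in the classification Lemma~\ref{lem:item-classification}.

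Next I would address the ``relatively thin'' and ``fits well'' requirements. Each piece $P$ produced above is already thin in the weak sense ($\height(P)\le\epsilon\epsl\,\mathrm{length}$), but we need the stronger $\height(P)\le\epsilon\cdot\width(P)$ together with the small skew of the bounding monotone curves, and inside each $P$ we need every item of $\opt'$ to have height $\le\epsilon^4\height(P)$ and width either $>\epsl'\width(P)$ or $\le\epss'\width(P)$. I would obtain this by a second round of shifting on the granularity parameters: for a geometrically spaced sequence of thresholds, items whose width relative to $\width(P)$ falls in the ``forbidden'' intermediate band between $\epss'$ and $\epsl'$ are deleted; since the bands are chosen so that $\epss'$ and $\epsl'$ differ by a large constant factor and the thresholds are geometric, only an $\epsilon$ fraction of the profit in $P$ is lost, and the surviving choice of $(\epss',\epsl')$ (hence of the global $(\epss,\epsl)$) comes from the promised set of $O_\epsilon(1)$ pairs. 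The extra requirement that no obtuse piece contains small items is enforced by simply removing, inside each of the $O_\epsilon(1)$ obtuse pieces, all items that are small relative to that piece; here I would argue that these small items carry only an $\epsilon$ fraction of the total profit, because an obtuse piece's two horizontal extents overlap only in a short segment, so the ``bulk'' area available to small-but-many items is limited — this is the structural reason the earlier $1.6$ bound can be pushed to $4/3$, and it is the step I expect to need the most care.

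Finally I would handle boxes separately: a piece which is essentially a single rectangle either is so narrow that only one $\opt'$-item fits (the ``single item'' case) or, after the same intermediate-width deletion, contains a set of items that fits well into it. Collecting the losses — $1+\epsilon$ from dropping large and intermediate items and from Lemma~\ref{lem:corridor-partition}, $1+\epsilon$ from the obtuse-splitting surgery, $1+\epsilon$ from the relative-thinness and fits-well shifting, $1+\epsilon$ from deleting small items in obtuse pieces, and the genuine $4/3$ coming from the combinatorial partition of cycle corridors into spirals/$2$-spirals (analogous to the $1.6$ computation in Lemma~\ref{lem:partition-knapsack}, but with the improved case analysis enabled by the no-small-items-in-obtuse-pieces property) — multiplies to $4/3+O(\epsilon)$, and rescaling $\epsilon$ gives the claimed bound. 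The main obstacle, I expect, is the combinatorial case analysis showing that every relatively thin corridor from $\bar\C$ can be carved into spirals and $2$-spirals while losing at most a $4/3$ factor in profit; this requires enumerating the possible cyclic patterns of acute/obtuse turns (as was sketched for the $s(C)=8$ ring in the $1.6$ argument) and verifying in each case that a single subcorridor can be discarded so that the remainder decomposes into the allowed shapes. The bookkeeping that all the parameter choices ($\epss\le\epsilon^2\epsl$, the $O_\epsilon(1)$ candidate pairs, the thresholds $\epss',\epsl'$ inside pieces) can be made simultaneously consistent is routine but tedious, and I would defer it to a sequence of nested shifting claims.
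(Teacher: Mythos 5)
Your proposal misses the central combinatorial mechanism that produces the factor $4/3$, and some of the surgeries you propose do not go through as stated.

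The paper does not cut corridors into smaller corridors along chords. Instead it colours the pieces of each (path or cycle) corridor with one of \emph{four} types, roughly $\typeC(P_i)=i\bmod 4$ (with careful case analysis for cycle corridors with $s(C)\equiv 2\bmod 4$, using the four extremal U-subcorridors from Remark~\ref{rem:U-shapes}). One then deletes the items of the least profitable type, losing a $1/4$ fraction, and the key structural fact is that after deleting any one type class, the \emph{remaining} pieces automatically reassemble into boxes, spirals, and 2-spirals. This is where allowing spirals and 2-spirals (rather than just L and U) earns its keep: with L/U-corridors one is forced into a mod-3 scheme and the awkward $3/8$ bound of Lemma~\ref{lem:partition-knapsack}, whereas spirals and 2-spirals can have arbitrarily many pieces, so a mod-4 scheme works and yields exactly $1/4$. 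Your sketch says ``a single subcorridor can be discarded'' — that would neither give $1/4$ loss in the worst case nor leave a spiral/2-spiral, since a corridor minus one piece can still have many obtuse turns. You attribute the improvement over $1.6$ to the ``no-small-items-in-obtuse-pieces'' property, but that is a side condition handled separately (the small items from pieces that \emph{become} obtuse after deletion are temporarily removed and then repacked by NFDH, using that their total area is $O_{\epsl}(\epss)N^2$); it is not the source of the ratio.

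Two of your concrete steps would fail. First, the proposed ``split along a short chord through (a neighborhood of) the second obtuse piece'' and the claim that ``each chord meets $O_\epsilon(1)$ skewed items'' is false: a chord that actually separates a thin corridor into two must cross the corridor transversally, and in a thin horizontal piece it then crosses skewed items along their \emph{short} dimension, so it can cross $\Theta(n)$ of them (every horizontal item has width $>\epsl N$, hence a constant fraction of items overlap any fixed $x$). Second, your argument that small items in obtuse pieces carry only an $\epsilon$ fraction of the profit because ``the bulk area available to small-but-many items is limited'' is not substantiated and is not how the paper handles them (and in the weighted setting a tiny-area region can carry large profit). Finally, starting from Lemma~\ref{lem:corridor-partition} is insufficient here: that lemma discards $O_\epsilon(1)$ ``crossing'' items outright, which is fine in the unweighted, skewed-only setting but not in the weighted case with small items. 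The paper replaces it with Lemma~\ref{lem:corridor-partition-improved}, whose proof requires the ``untouchable items'' shifting recursion (sets $M(0),M(1),\ldots$) precisely to avoid losing a constant number of high-profit items. Your proposal acknowledges shifting but does not build the machinery needed to make it sound in the weighted regime.
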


We will give the proof of Lemma~\ref{lem:partition-knapsack-improved-new}
in Section~\ref{sec:Improved-partitioning-corridors}. We guess
$\epsl$ and $\epss$ and $\tilde{\C}$ in time $(nN)^{O_{\epsilon}(1)}$.
We need to generalize Lemma~\ref{lem:partition-knapsack} to the more general
case of $\tilde{\C}$ which might contain spirals and 2-spirals (rather
than only LU-corridors like the set $\C$ due to Lemma~\ref{lem:partition-knapsack}).
\begin{lem}
\label{lem:partition-Ls-improved}There is a solution $\opt''\subseteq\opt'$,
a partition $\left\{ \opt''(C),\opt''_{\mathrm{lonely}}(C)\right\} _{C\in\tilde{\C}}$
for it, and for each $C\in\tilde{\C}$ a set of $O_{\epsilon}(\log N)$
boxes $\tilde{\B}(C)$ and a partition $\P(C)$ of $C$ into $s(C)$
pieces, such that 
\begin{itemize}
\item $\opt''(C)\cup\opt_{\mathrm{lonely}}''(C)\subseteq\opt'(C)$ with
$p(\opt''(C))+p(\opt_{\mathrm{lonely}}''(C))\ge(1-\epsilon)p(\opt'(C))$, 
\item the boxes $\tilde{\B}(C)$ and the items $\opt''_{\mathrm{lonely}}(C)$
can be packed non-overlappingly inside $C$ such that apart from $O_{\epsilon}(1)$
elements in $\tilde{\B}(C)\cup\opt''_{\mathrm{lonely}}(C)$, each
element $E\in\tilde{\B}(C)\cup\opt''_{\mathrm{lonely}}(C)$ is contained
in a horizontal (vertical) piece $P\in\P(C)$ such that $\width(E)>\epss\cdot\width(P)$
(such that $\height(E)>\epss\cdot\height(P)$), 
\item the items in $\opt''(C)$ can be nicely packed into the boxes $\tilde{\B}(C)$. 
\end{itemize}
Also, it holds that 
\begin{itemize}
\item $\sum_{C\in\C}|\opt_{\mathrm{lonely}}''(C)|\le O_{\epsilon}(\log N)$, 
\item in time $N^{O_{\epsilon}(1)}$ we can guess the sizes of all boxes
$\tilde{\B}:=\bigcup_{C\in\C}\tilde{\B}(C)$ and a set $\Rlo$ with
$\bigcup_{C\in\C}\opt_{\mathrm{lonely}}''(C)\subseteq\Rlo\subseteq\R\setminus\bigcup_{C\in\C}\opt''(C)$,
and 
\item $p(\opt'')\ge\Omega(|\tilde{\B}|/\epsilon)\cdot\max_{i\in\R\setminus\Rlo}p(i)$. 
\end{itemize}
\end{lem}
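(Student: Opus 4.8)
The plan is to prove Lemma~\ref{lem:partition-Ls-improved} by applying a subcorridor-partitioning and slicing argument to each corridor $C\in\tilde\C$ separately, much as in Sections~\ref{sec:partition-corridors}--\ref{sec:large-OPT}, but now exploiting the ``relatively thin'' and ``fits well'' structure guaranteed by Lemma~\ref{lem:partition-knapsack-improved-new}. First I would fix, for each $C\in\tilde\C$, the partition $\P(C)$ into $s(C)\le 1/\epsilon$ pieces from Lemma~\ref{lem:partition-knapsack-improved-new}, so that the items of $\opt'$ inside $C$ split into groups $\opt'(P)$, $P\in\P(C)$, each fitting well into its piece. Inside a fixed horizontal piece $P$, the items of $\opt'(P)$ are of two kinds: ``horizontal-in-$P$'' items with $\width(i)>\epsl'\width(P)$ and ``small-in-$P$'' items with $\width(i)\le\epss'\width(P)$, and all of them have $\height(i)\le\epsilon^4\height(P)$. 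For the horizontal-in-$P$ items I would apply the by-now-standard machinery: group by height into $O_\epsilon(\log N)$ classes, use Lemma~\ref{lem:estimate-sizes-1}-style guessing of the profit contributed by each class, replace them by unit-height slices (as in Lemma~\ref{lem:place-pre-slices}), linear-group the widths to $1/\epsilon$ values (Lemma~\ref{lem:place-slices}), and then use a result analogous to Lemma~\ref{lem:partition-subcorridors} to repack the slices of $P$ nicely into $O_\epsilon(1)$ boxes plus one sub-subcorridor whose longer edges coincide with the long edges of $P$, losing only a $(1+\epsilon)$ factor. This yields $O_\epsilon(1)$ boxes per piece, hence $O_\epsilon(1)$ per corridor for the horizontal-in-$P$ parts, and after the rounding-slices step (Lemma~\ref{lem:converting-slices-items}) we know the profit lost to ``fractional'' items in each box is at most twice the number of boxes times the largest item profit, matching the last bullet of the claim.

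Next I would handle the small-in-$P$ items. Since no obtuse piece meets small items, all small-in-$P$ items live in acute pieces, where the piece is essentially a slightly bent thin rectangle; using that these items are small in \emph{both} dimensions relative to $\width(P)$ and $\height(P)$, I would carve out of $P$ a bounded number of additional axis-parallel boxes into which a $(1-\epsilon)$-fraction of the profit of the small-in-$P$ items can be packed by NFDH-type arguments, absorbing the remaining small items (and the boundary ``wastage'' of the sub-subcorridor repacking) into the lonely set $\opt''_{\mathrm{lonely}}(C)$. The sub-subcorridor produced for $P$ is then partitioned along the lines of the one-corridor DP so that it again contributes only $O_\epsilon(1)$ boxes; iterating this across the (constantly many) pieces of $C$ produces the $O_\epsilon(\log N)$ boxes $\tilde\B(C)$ overall (the $\log N$ coming solely from the height/width classes), together with the partition $\P(C)$ and the split of $\opt'(C)$ into $\opt''(C)$ (the items repacked nicely into $\tilde\B(C)$) and $\opt''_{\mathrm{lonely}}(C)$ (the boundary/rounding leftovers). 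The profit bound $p(\opt''(C))+p(\opt''_{\mathrm{lonely}}(C))\ge(1-\epsilon)p(\opt'(C))$ follows by summing the $(1+\epsilon)$-factor losses from slicing, linear grouping, and the repacking, and charging any remaining loss to the lonely items.

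For the global bookkeeping: each box in $\tilde\B:=\bigcup_C\tilde\B(C)$ arises either as one of the $O_\epsilon(1)$ ``guessed'' boxes of a piece or from repacking a sub-subcorridor, so $|\tilde\B|\le O_\epsilon(\log N)$, and each lonely item was created either as an $O_\epsilon(1)$-per-piece boundary item or as a dropped fractional/small item, so $\sum_C|\opt''_{\mathrm{lonely}}(C)|\le O_\epsilon(\log N)$ as well. Guessing the sizes of all boxes in $\tilde\B$ takes $N^{O_\epsilon(1)}$ time by a direct analogue of Lemma~\ref{lem:partition-subcorridors} applied corridor by corridor (there are only $O_\epsilon(\log N)$ boxes, each with $N^{O(1)}$ possible dimensions, encoded jointly as in Lemma~\ref{lem:estimate-sizes-1}); and the set $\Rlo$ is obtained by guessing, for each of the $O_\epsilon(\log N)$ lonely items, its height/width class and the piece it belongs to, which again costs $N^{O_\epsilon(1)}$. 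Finally, the inequality $p(\opt'')\ge\Omega(|\tilde\B|/\epsilon)\cdot\max_{i\in\R\setminus\Rlo}p(i)$ is established by a separate case distinction, treated exactly as the ``large $\opt'$'' case: if the bound failed, then $p(\opt'')$ would be dominated by a single item's profit times $O_\epsilon(\log N)$, so one could afford to guess the $O_\epsilon(\log N)$ most profitable items of $\opt''$ explicitly and solve the residual instance by the color-coding DP of Section~\ref{sec:DP-color-coding}; we simply put this item-set aside and assume the inequality holds. The main obstacle I anticipate is the repacking step for a single relatively thin spiral/2-spiral piece --- i.e.\ the correct analogue of Lemma~\ref{lem:partition-subcorridors} in the presence of both horizontal-in-$P$ and small-in-$P$ items and of the obtuse piece in a 2-spiral --- since one must show that reordering and re-slicing does not violate the monotone-curve constraints of the neighboring pieces; the rest is careful composition of already-developed tools.
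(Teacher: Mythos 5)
Your proposal correctly identifies the overall flow (per‑corridor piece partitioning, slicing, linear grouping, repacking into boxes plus a residual sub‑piece, then rounding) and the difficulty of the obtuse‑piece/2‑spiral case, but there are two genuine gaps where you depart from what the weighted setting actually requires.

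First, grouping the horizontal‑in‑$P$ items \emph{by height} (``group by height into $O_\epsilon(\log N)$ classes\dots replace them by unit‑height slices'') only works in the unweighted case. There, the argument in Lemma~\ref{lem:place-pre-slices} selects the narrowest items from a height class and replaces them by unit‑height slices of equal profit, which preserves profit because every item has profit~1. In the weighted case, two items of the same height can have arbitrarily different profits, so a unit‑height slice of an item carries no consistent profit value; selecting the narrowest items drops an uncontrolled amount of profit, and selecting the most profitable items breaks the width‑nesting property that linear grouping relies on. The paper resolves this by grouping $\Rl$ by \emph{density} ($p(i)/\height(i)$ for horizontal items, $p(i)/\width(i)$ for vertical ones) so that every unit‑height slice within a group has essentially the same profit, and then the unweighted machinery goes through. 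You need this density grouping (or something equivalent); height grouping will not yield a valid fractional replacement.

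Second, your handling of the inequality $p(\opt'')\ge\Omega(|\tilde{\B}|/\epsilon)\cdot\max_{i\in\R\setminus\Rlo}p(i)$ is not a proof of the stated lemma. The lemma is a structural claim about an existing partition and set $\Rlo$; ``if it fails, guess the top items and run the color‑coding DP instead'' changes the algorithm downstream rather than establishing the inequality, and moreover moving the $O_\epsilon(\log N)$ most profitable items of $\opt''$ into $\Rlo$ is circular since those items were defined to be the ones that are nicely packed into $\tilde{\B}$. The paper instead establishes this inequality a priori via a shifting step (Lemma~\ref{lem:shifting-weighted}): it finds a threshold item $i^*$, declares $\Rh=\{i: p(i)>p(i^*)\}$ and $\Rl=\{i: p(i)\le p(i^*)\}$, and proves both that $|\overline{\opt}'\cap\Rh|\le O_{c}(\log(nN))$ and that $p(\overline{\opt}')\ge\Omega\bigl(c(\log(nN)+|\overline{\opt}'\cap\Rh|)\bigr)p(i^*)$. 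The high‑profit items then become the (constantly few per level, $O_\epsilon(\log N)$ total) lonely items, and the bound on $p(\opt'')$ versus the maximum remaining profit falls out because every item of $\R\setminus\Rlo$ has profit at most $p(i^*)$. Without some form of this profit‑threshold shifting, the last bullet of the lemma is unsupported, and so is the companion claim that $\Rlo$ can be guessed in time $N^{O_\epsilon(1)}$ (the paper guesses a superset via color coding over the provably $O_\epsilon(\log N)$ high‑profit items, not by guessing height/width classes as you suggest).
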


We will prove Lemma~\ref{lem:partition-Ls-improved} in Section~\ref{sec:partition-boxes-weighted}.
Now, we {seek to
place the guessed boxes in $\B$, together with a profitable subset
of $\Rlo$. }We will later place items from $\R\setminus\Rlo$ into
the boxes $\B$. We partition the items in $\Rlo$ in the next lemma.

\begin{lem}
\label{lem:assign-items-corridors-improved}In time ${(nN)^{O_{\epsilon}(1)}}$
we can guess a partition of $\Rlo$ into sets $\left\{ \Rlo(C)\right\} _{C\in\tilde{\C}}$
such that $\opt_{\mathrm{lonely}}''(C)\subseteq\Rlo(C)$ for each
$C\in\tilde{\C}$. 
\end{lem}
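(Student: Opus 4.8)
The statement to prove is Lemma~\ref{lem:assign-items-corridors-improved}: given the set $\Rlo$ (obtained from Lemma~\ref{lem:partition-Ls-improved}), we can guess in time $(nN)^{O_\epsilon(1)}$ a partition $\{\Rlo(C)\}_{C\in\tilde\C}$ such that $\opt''_{\mathrm{lonely}}(C)\subseteq\Rlo(C)$ for each corridor $C$.

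The plan is to use color-coding, exactly as in Section~\ref{sec:DP-color-coding} (Lemma~\ref{lem:color-coding-LU}). First I would recall from Lemma~\ref{lem:partition-Ls-improved} that $\sum_{C\in\tilde\C}|\opt''_{\mathrm{lonely}}(C)|\le O_\epsilon(\log N)$, so the total number of "lonely" items we need to distribute among the $O_\epsilon(1)$ corridors of $\tilde\C$ is only $O_\epsilon(\log N)$. Since $|\Rlo|$ may be as large as $\Omega(n)$ but only $O_\epsilon(\log N)$ of its elements actually lie in $\bigcup_C\opt''_{\mathrm{lonely}}(C)$, we color each item of $\R$ uniformly at random with one of $\gamma:=O_\epsilon(\log N)$ colors. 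With probability at least $1/e^{\gamma}\ge 1/N^{O_\epsilon(1)}$ all items in $\bigcup_C\opt''_{\mathrm{lonely}}(C)$ receive distinct colors ("successful" coloring). Repeating $N^{O_\epsilon(1)}$ times, or derandomizing with a $\gamma$-perfect family of hash functions~\cite{alon1995color,NSS95}, we get a polynomial-size family of colorings one of which is successful.

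Next, conditioned on a successful coloring, for each color $d\in\{1,\dots,\gamma\}$ I would guess which corridor $C\in\tilde\C$ contains the (unique) lonely item of color $d$, if any; since $|\tilde\C|\le O_\epsilon(1)$ this is one of $O_\epsilon(1)^\gamma = N^{O_\epsilon(1)}$ guesses. This guess induces an assignment of colors to corridors; I then define $\Rlo(C)$ to consist of all items of $\Rlo$ whose color is assigned to $C$ (breaking the remaining colors arbitrarily, e.g.\ assigning every unguessed color to a fixed default corridor), which gives a genuine partition of $\Rlo$. When the coloring is successful and the guess is correct, every item of $\opt''_{\mathrm{lonely}}(C)$ has a distinct color assigned to $C$, hence lies in $\Rlo(C)$, establishing $\opt''_{\mathrm{lonely}}(C)\subseteq\Rlo(C)$ as required. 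The running time is dominated by the number of colorings times the number of color-to-corridor guesses, which is $(nN)^{O_\epsilon(1)}\cdot N^{O_\epsilon(1)} = (nN)^{O_\epsilon(1)}$.

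The only mild subtlety — and the place I would be careful — is that we do not know the set $\bigcup_C\opt''_{\mathrm{lonely}}(C)$ explicitly (we only know that it is contained in the guessed set $\Rlo$), so the color-coding must be run on all of $\R$ (or at least all of $\Rlo$) rather than on the target set; but since the target set has size only $O_\epsilon(\log N)$, the success probability bound $1/e^{O_\epsilon(\log N)} = 1/N^{O_\epsilon(1)}$ still holds, and the derandomized hash-function family depends only on the universe size and on $\gamma$, so this causes no real difficulty. Everything else is a direct transcription of the argument already used to prove Lemma~\ref{lem:color-coding-LU}.
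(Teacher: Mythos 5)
Your proof is correct and takes essentially the same approach as the paper: both invoke the bound $\sum_{C}|\opt''_{\mathrm{lonely}}(C)|\le O_\epsilon(\log N)$ from Lemma~\ref{lem:partition-Ls-improved}, run color-coding with $O_\epsilon(\log N)$ colors as in Lemma~\ref{lem:color-coding-LU}, guess the color-to-corridor assignment, and derandomize via a perfect hash family. Your write-up is somewhat more explicit about turning the color assignment into an actual partition and about why coloring all of $\R$ (rather than just the unknown target set) suffices, but these are details the paper implicitly assumes.
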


\begin{proof}
Indeed, notice that by Lemma~\ref{lem:partition-Ls-improved} we
have that $\sum_{C\in\C}|\opt_{\mathrm{lonely}}''(C)|\le O_{\epsilon}(\log N)$,
and therefore we can use color coding with $|\Rlo\cap OPT''|$ colors
in a similar fashion as in Lemma \ref{lem:color-coding-LU}.
With probability $1/N^{O_{\eps}(1)}$ all items in $\Rlo\cap OPT''$
are colored with different colors and we can derandomize this procedure
to a deterministic routine with a running time of $(nN)^{O_{\eps}(1)}$.
 Then, as there are $O_{\eps}(1)$ objects in $\tilde{C}$ according
to Lemma~\ref{lem:partition-knapsack-improved-new}, in the event
that all items in $\Rlo\cap OPT''$ are colored differently we can
guess the right assignment of colors into the objects in $\tilde{C}$
and get a partition $\left\{ \Rlo(C)\right\} _{C\in\tilde{\C}}$ in
time $N^{O_{\eps}(1)}$. 
\end{proof}
Like in the unweighted case, we guess in time $(nN)^{O_{\epsilon}(1)}$
for each box in $\B$ the corridor $C\in\tilde{\C}$ such that $B\in\B(C)$
according to Lemma~\ref{lem:partition-Ls-improved}. Then, for each
corridor $C\in\tilde{\C}$ we place the boxes assigned to it and the
most profitable set of items in $\Rlo(C)$ that fit into $C$ via
the following lemma. 
\begin{lem}
{\label{lem:DP-weighted}Given a corridor $C$, the sizes of
a set of boxes $\bar{\B}$, a set of skewed items $\bar{\R}\subseteq\R$,
and an integer $k$. There is an algorithm with a running time of
$2^{|\B|+k}(nN)^{O_{\epsilon}(1)}$ that finds a packing of $\bar{\B}$
and a subset of $\bar{\R}'\subseteq\bar{\R}$ with at most $k$ items
of maximum total profit, among all packings such that apart from $O_{\epsilon}(1)$
elements in $\bar{\B}\cup\bar{\R}'$, each element $E\in\bar{\B}\cup\bar{\R}'$
is contained in a horizontal (vertical) piece $P\in\P(C)$ such that
$\width(E)>\epss\cdot\width(P)$ (such that $\height(E)>\epss\cdot\height(P)$),
for a suitable partition $\P(C)$ of $C$ into pieces.} 
\end{lem}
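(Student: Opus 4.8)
The plan is to generalize the dynamic program with color coding from Section~\ref{sec:DP-color-coding} (Lemma~\ref{lem:DP-coloring}) so that it simultaneously handles (i) the boxes in $\bar{\B}$, which must all be packed, (ii) a budgeted number $k$ of skewed items from $\bar{\R}$, to be chosen so as to maximize total profit, and (iii) the relaxed structural requirement that only $O_\epsilon(1)$ ``exceptional'' elements may violate the ``width $>\epss\cdot\width(P)$ / height $>\epss\cdot\height(P)$'' condition inside a piece. As before, we first reduce from the $O_\epsilon(1)$ corridors in $\tilde{\C}$ to a single corridor $C$; the lemma statement already deals with one corridor $C$, so we can work with it directly. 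The backbone is again the recursive subdivision of $C$ along long chords $\ell_1,\ldots,\ell_k \in \L$ (here $k \le s(C) \le 1/\eps$), where each arising DP-cell $C'$ is the region enclosed by two non-crossing long chords $\ell_1,\ell_2$ together with parts of $e_0$ and $e_{k+1}$, and we cut with a new long chord $\ell$ in between.

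The DP-cell will be indexed by: the pair $\ell_1,\ell_2 \in \L$; a set of $O(k/\epsl)$ items from $\bar{\R}$ that are intersected in their interior by $\ell_1$ or $\ell_2$, together with their placement (by thinness and skewedness, a consistent long chord stabs only $O(k/\epsl)$ skewed items, exactly as in Lemma~\ref{lem:DP-coloring}); a subset $\B' \subseteq \bar{\B}$ of boxes still to be packed in $C'$ (there are only $2^{|\bar{\B}|}$ such subsets); a color set $\Gamma \subseteq \{1,\ldots,\gamma\}$ with $\gamma \le k$, obtained by color-coding the items of the unknown optimal packing $\bar{\R}'$ so that with probability $\ge 1/e^k$ they receive distinct colors (derandomized via a $k$-perfect hash family as in Section~\ref{sec:DP-color-coding}); and a counter $\kappa \in \{0,1,\ldots,O_\epsilon(1)\}$ recording how many ``exceptional'' elements are still allowed to be used in $C'$. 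The value stored is the maximum-profit set of items (with its placement) consistent with all these constraints, or \emph{fail}. When we cut $C'$ by $\ell$ into $C'_1,C'_2$, we guess the $O(k/\epsl)$ items of $\bar{\R}'$ crossing $\ell$ and their placement, guess how the remaining colors, the remaining boxes in $\B'$, and the remaining exceptional budget split between $C'_1$ and $C'_2$, and combine the two recursive solutions, keeping the one of maximum profit over all guesses; the base cases are pairs $\ell_1,\ell_2$ at most one unit apart, where only $O(k/\epsl)$ skewed items plus $O_\epsilon(1)$ boxes fit and we enumerate in time $(nN)^{O(k/\epsl)} 2^{|\bar{\B}|}$. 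The final answer is read off the cell $(\ell_L,\ell_R,\bar{\B},\{1,\ldots,\gamma\},O_\epsilon(1))$. Since $k \le 1/\epsl = O_\epsilon(1)$, the number of DP-cells is $(nN)^{O_\epsilon(1)} \cdot 2^{|\bar{\B}|} \cdot 2^{\gamma} \cdot O_\epsilon(1)$ and the transition work per cell is of the same order, giving total time $2^{|\bar{\B}|+k}(nN)^{O_\epsilon(1)}$ as claimed.

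For correctness, I would argue as usual in two directions. Soundness: any solution assembled by the DP is by construction a non-overlapping packing of all of $\bar{\B}$ and of at most $k$ skewed items of distinct colors (hence, on a successful coloring, genuinely at most $k$ distinct items), respecting the exceptional-element budget piece by piece; patching the guessed ``boundary'' items between adjacent cells is consistent because a long chord $\ell$ separates $C'$ into $C'_1,C'_2$ and every item is counted in exactly the pieces it meets. Completeness: fix the target packing — the one promised by Lemma~\ref{lem:partition-Ls-improved}, packing $\tilde{\B}(C)$ and $\opt''_{\mathrm{lonely}}(C)$ with at most $O_\epsilon(1)$ exceptional elements — together with a partition $\P(C)$ into $s(C)$ pieces that is nice for it. The edges of these pieces give a consistent system of long chords, and one shows by induction on the recursion depth that for the ``right'' choices of $\ell$, of the crossing items, and of the splits of colors/boxes/budget, the DP reaches a cell whose stored value has profit at least that of the target restricted to that cell; at the root this yields profit $\ge p(\bar{\R}' \cap \opt''(C) \cup \ldots)$, i.e.\ the claimed maximum over all structured packings. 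Here the color set in each sub-cell is exactly the colors of the target items lying in the corresponding sub-polygon, which is why color-coding guarantees we never place two items from the same target slot.

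The main obstacle is bookkeeping the interaction between the exceptional-element budget and the long-chord recursion: we must make sure that an element straddling the cut $\ell$ is charged to the exceptional budget of at most one side (or of neither, if it is ``good'' in its piece), and that the total count of exceptional elements over all base cells is $O_\epsilon(1)$ rather than $O_\epsilon(1)$ per cell, which would blow up. This is handled by carrying the global remaining budget $\kappa$ in the cell index and splitting $\kappa = \kappa_1 + \kappa_2$ at each cut, exactly the way the color set and box set are split, so additivity is automatic. A secondary subtlety is that the algorithm does not know the true $\P(C)$, so when it guesses a long chord $\ell$ it cannot verify consistency; but as in Lemma~\ref{lem:DP-coloring}, trying all long chords in $\L$ is harmless — an inconsistent choice either produces a valid (possibly suboptimal) packing or eventually a \emph{fail}, while the consistent choices certify that the optimum is attained. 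Everything else is a direct, if tedious, adaptation of the single-corridor DP of Section~\ref{sec:DP-color-coding} enriched with a profit objective (so we maximize over transitions instead of merely checking feasibility) and with the extra coordinates $\B'$ and $\kappa$.
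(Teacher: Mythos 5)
You take a genuinely different route from the paper. You generalize the long-chord dynamic program of Section~\ref{sec:DP-color-coding} (Lemma~\ref{lem:DP-coloring}), indexing cells by pairs of long chords together with the elements they cross (plus color and box bookkeeping), whereas the paper's proof uses a GEO-DP-style dynamic program in the spirit of the Maximum Independent Set of Rectangles algorithm of Adamaszek and Wiese~\cite{AW13}: cells are indexed by arbitrary rectilinear polygons with at most $t=O_\eps(1)$ axis-parallel edges and a color set, transitions enumerate partitions of a polygon into $\le t$ sub-polygons, and correctness is argued via ``feasible $C$-curves'' (the analogue of your consistent long chords). Both formulations lead to the claimed running time; yours is closer to what was already built in Section~\ref{sec:DP-color-coding}, while the paper's avoids explicitly carrying the two bounding chords and the placements of crossed elements in the cell index, and treats boxes and items uniformly throughout.

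Two points in your sketch deserve tightening. First, the ``crossing set'' stored in a DP cell must contain the boxes of $\bar{\B}$ intersected by $\ell_1,\ell_2$, not only the skewed items of $\bar{\R}$: a box cut by a long chord straddles both sub-cells, and the recursion must remember its placement so that elements packed in the sub-cells avoid it. The paper does exactly this (it bounds the number of items \emph{or boxes} crossed by a feasible $C$-curve), and it additionally guesses up front the placement of the $O_\eps(1)$ boxes that are small in both dimensions, since those can be cut by chords in an uncontrolled way. Second, carrying an exceptional-element budget $\kappa$ in the cell index is unnecessary. The algorithm's output only needs to be \emph{some} valid packing of profit at least that of the best structured packing; the ``skewed in its piece up to $O_\eps(1)$ exceptions'' structure is used only in the completeness analysis, to bound by $O(k/(\epss\epsl))+O_\eps(1)$ the number of elements crossed by a consistent long chord, which in turn keeps the crossing set small and guarantees the DP reaches the base cases. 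Dropping $\kappa$ simplifies the bookkeeping without affecting correctness.
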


Finally, we pack items from $\R\setminus\Rlo$ into the guessed boxes
using the following lemma.

\begin{lem}
\label{lem:pack-into-boxes-weighted} Given a set of boxes $\bar{\B}$
and a set $\bar{I}$ such that a subset $\bar{\R}'\subseteq\bar{I}$
can be nicely packed inside $\bar{\B}$. There is an algorithm with
a running time of $(n|\bar{\B}|)^{O(1)}2^{O(|\B|)}$ that (nicely)
packs a subset $\tilde{\R}'\subseteq\bar{\R}$ into $\bar{\B}$ such
that $p(\tilde{\R}')\ge(1-4\eps)p(\bar{\R}')-|\B|\cdot\max_{i\in\bar{I}}\profit(i)$. 
\end{lem}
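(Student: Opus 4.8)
The plan is to reduce the packing of items into the boxes $\bar\B$ to a collection of one-dimensional knapsack-type subproblems, one per box, glued together by a global guessing step over profit classes. First I would recall the structure guaranteed by the hypothesis: $\bar\R'$ can be \emph{nicely} packed into $\bar\B$, meaning that in each box $B\in\bar\B$ the items of $\bar\R'$ assigned to $B$ are stacked (horizontal items on top of each other, vertical items side by side) ordered by the long dimension. In particular, within a fixed box, the set of horizontal items placed there is feasible if and only if their total height is at most $\height(B)$ and each has width at most $\width(B)$ — a pure one-dimensional selection constraint — and symmetrically for vertical items. Since there are only $O_\epsilon(1)$ boxes, it suffices to solve each box almost optimally and then argue the losses telescope.

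The main obstacle is that a box may simultaneously contain horizontal and vertical items in the optimal nice packing, and the split of a box's area between the horizontal stack and the vertical stack is a continuous choice that we cannot afford to guess exactly; moreover we must not double-count an item that could go into several boxes. To handle the first issue I would observe that in a nice packing each box is divided by one horizontal or one vertical cut into a sub-box holding only horizontal slices and a sub-box holding only vertical slices; guessing which of the two orientations dominates and guessing the cut position up to the $O_\epsilon(1)$ relevant candidate coordinates (the heights/widths of items already rounded, or powers of $(1+\epsilon)$) costs only $(nN)^{O_\epsilon(1)}$ and $2^{O(|\B|)}$ over all boxes. For the double-counting issue I would use the standard trick of first guessing, for each box, the profit class $[(1+\epsilon)^t,(1+\epsilon)^{t+1})$ of the most profitable item it receives from $\bar\R'$; there are only $2^{O(|\B|)}$ such guesses (a bit per box times a logarithmic range collapses via the color-coding / binary-string encoding already used in Lemma~\ref{lem:estimate-sizes-1}), and after fixing these bounds the boxes can be filled greedily/by LP in decreasing profit-to-size ratio.

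Concretely, for each box $B$ and each orientation I would set up the fractional relaxation: maximize total profit of a fractional subset of eligible items whose total height (resp.\ width) is at most the guessed side length of $B$. The LP optimum is at least $p(\bar\R'\cap B)$ by feasibility of the nice packing, and a basic optimal LP solution is integral except for at most one fractional item per box per orientation, hence at most $2|\B| = O_\epsilon(1)$ fractional items in total. Dropping these costs at most $|\B|\cdot\max_{i\in\bar I}\profit(i)$, which is exactly the additive term in the statement. The $(1-4\epsilon)$ multiplicative factor absorbs: the rounding of box side lengths (a $(1-\epsilon)$ factor), the fact that widths of horizontal slices were increased during linear grouping so the recovered items are slightly thinner than the slots (another $(1-\epsilon)$), the profit-class discretization from the anti-double-counting guesses ($(1-\epsilon)$), and a final $(1-\epsilon)$ slack for converting slices back to original items as in Lemma~\ref{lem:converting-slices-items}; multiplying these out is at least $1-4\epsilon$. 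The running time is $2^{O(|\B|)}$ for the orientation/cut/profit-class guesses times $(n|\bar\B|)^{O(1)}$ for solving the polynomially many small LPs (each with $O(n)$ variables and $O(1)$ constraints), matching the claimed bound $(n|\bar\B|)^{O(1)}2^{O(|\B|)}$.
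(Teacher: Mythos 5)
Your proposal diverges from the paper's proof in a way that introduces a genuine gap. The paper reduces the problem to a single instance of the Generalized Assignment Problem and invokes the Shmoys--Tardos rounding theorem (Theorem~\ref{thm:GAP-shmoystardos}): boxes become containers (with capacity equal to height, width, or area depending on a guessed per-box type), items become tasks with orientation-dependent sizes, and the global LP constraints $\sum_j x_{i,j}\le 1$ prevent double-assignment by construction. Shmoys--Tardos then yields an assignment whose profit is at least $p(\bar\R')$ and which overloads each container by at most one task, so dropping one item per box produces the additive loss $|\B|\cdot\max_{i\in\bar I}\profit(i)$; the $(1-4\eps)$ multiplicative loss arises entirely from the subsequent repacking of relatively small items via NFDH in the area-capacity boxes.

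Your approach instead solves a collection of \emph{per-box} one-dimensional LPs, glued together only by guessing, for each box, the profit class of the most profitable item it receives. This does not resolve the double-counting problem you yourself identify: knowing an upper bound on the profit of the best item in each box places no constraint preventing the same item from being fractionally or integrally assigned to two different boxes, since the per-box LPs share no coupling constraints. To prevent double-assignment you need a single LP with per-item constraints over all boxes --- which is exactly GAP --- and the basic-solution/iterated-rounding structure that makes ``at most one fractional task per container'' true is a property of the \emph{global} assignment LP, not of independent one-dimensional knapsack relaxations. There is also a smaller inaccuracy in your setup: in a nice packing each box of $\bar\B$ is already of a single type (stacked horizontal, stacked vertical, or all-small), so there is no within-box horizontal/vertical cut to guess; the $2^{O(|\B|)}$ factor is spent on guessing each box's type, not a cut position. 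Finally, your account of the $(1-4\eps)$ factor as a product of four independent roundings does not match where the loss actually occurs: it comes from partitioning the small items assigned to an area-capacity box into $\Theta(1/\eps)$ area groups, removing the least profitable group, and repacking the rest with NFDH.
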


Using that $p(\opt'')\ge\Omega(|\B|/\epsilon)\cdot\max_{i\in\R\setminus\Rlo}p(i)$
this completes our improvement to an approximation factor of $4/3+\epsilon$
in the weighted case. 
\begin{thm}
There is an $(4/3+\epsilon)$-approximation algorithm for 2DK with
a running time of $(nN)^{O_{\epsilon}(1)}$. 
\end{thm}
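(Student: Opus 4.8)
The plan is to assemble the theorem directly from the lemmas established in this section, following the same skeleton as the unweighted $(1.6+\epsilon)$ algorithm but using the improved corridor decomposition. First I would invoke Lemma~\ref{lem:partition-knapsack-improved-new} to fix the pair $(\epsl,\epss)$ from the $O_\epsilon(1)$ candidate pairs (trying all of them and taking the best outcome) and to obtain the existence of a solution $\opt'\subseteq\opt$ with $p(\opt)\le(4/3+\epsilon)p(\opt')$ together with a partition $\tilde\C$ of $K$ into $O_\epsilon(1)$ boxes, relatively thin spirals, and relatively thin 2-spirals, each refined into $s(C)$ well-fitting pieces. Since $|\tilde\C|=O_\epsilon(1)$ and each corridor has $O(1/\epsilon)$ pieces described by integral coordinates, $\tilde\C$ can be guessed in time $N^{O_\epsilon(1)}$. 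Next I would apply Lemma~\ref{lem:partition-Ls-improved} to pass from $\opt'$ to $\opt''$ (losing only a $(1+\epsilon)$ factor), obtaining for each $C\in\tilde\C$ the $O_\epsilon(\log N)$ boxes $\tilde\B(C)$, the partition $\P(C)$, the split $\opt''(C)\dot\cup\opt''_{\mathrm{lonely}}(C)$, and, crucially, the guessability in time $N^{O_\epsilon(1)}$ of the box sizes $\tilde\B=\bigcup_C\tilde\B(C)$ and of a superset $\Rlo\supseteq\bigcup_C\opt''_{\mathrm{lonely}}(C)$ with $\Rlo\subseteq\R\setminus\bigcup_C\opt''(C)$, plus the profit bound $p(\opt'')\ge\Omega(|\tilde\B|/\epsilon)\cdot\max_{i\in\R\setminus\Rlo}p(i)$.

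After the guessing phase I would carry out the placement in three steps. By Lemma~\ref{lem:assign-items-corridors-improved} (color coding on the $O_\epsilon(\log N)$ lonely items, derandomized) I guess a partition $\{\Rlo(C)\}_{C\in\tilde\C}$ with $\opt''_{\mathrm{lonely}}(C)\subseteq\Rlo(C)$, and I guess for each box in $\tilde\B$ the corridor $C$ it belongs to. Then for each $C\in\tilde\C$ I run the algorithm of Lemma~\ref{lem:DP-weighted} with the boxes $\tilde\B(C)$, the items $\Rlo(C)$, and $k=|\opt''_{\mathrm{lonely}}(C)|$ (which I also guess; by Lemma~\ref{lem:partition-Ls-improved} $\sum_C k=O_\epsilon(\log N)$, so $2^{\sum_C k}=(nN)^{O_\epsilon(1)}$): since $\opt''_{\mathrm{lonely}}(C)$ together with $\tilde\B(C)$ is a feasible configuration of the type optimized over, the algorithm returns a packing of $\tilde\B(C)$ and a subset of $\Rlo(C)$ of profit at least $p(\opt''_{\mathrm{lonely}}(C))$. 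Finally, all boxes in $\tilde\B$ are now placed inside $K$, and since $\bigcup_C\opt''(C)\subseteq\R\setminus\Rlo$ can be nicely packed into $\tilde\B$, Lemma~\ref{lem:pack-into-boxes-weighted} packs a subset $\tilde\R'\subseteq\R\setminus\Rlo$ into the boxes with $p(\tilde\R')\ge(1-4\epsilon)p(\bigcup_C\opt''(C))-|\tilde\B|\cdot\max_{i\in\R\setminus\Rlo}p(i)$.

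Putting the bounds together: the total profit obtained is at least $(1-4\epsilon)p\big(\bigcup_C\opt''(C)\big)+\sum_C p(\opt''_{\mathrm{lonely}}(C))-|\tilde\B|\cdot\max_{i\in\R\setminus\Rlo}p(i)$. Using the profit bound $p(\opt'')\ge\Omega(|\tilde\B|/\epsilon)\cdot\max_{i\in\R\setminus\Rlo}p(i)$ from Lemma~\ref{lem:partition-Ls-improved}, the subtracted term $|\tilde\B|\cdot\max_{i\in\R\setminus\Rlo}p(i)$ is at most $O(\epsilon)\cdot p(\opt'')$, so the recovered profit is at least $(1-O(\epsilon))p(\opt'')\ge(1-O(\epsilon))p(\opt')\ge\frac{1-O(\epsilon)}{4/3+\epsilon}p(\opt)$. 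Rescaling $\epsilon$ (and using the standard reduction that lets us assume there are no large or intermediate items, by losing another $(1+\epsilon)$ factor and solving the bounded-cardinality case by enumeration) yields a $(4/3+\epsilon)$-approximation. The running time is dominated by the guessing of $\tilde\C$, of the box sizes, of $\Rlo$ and its partition, and by the $2^{O(|\tilde\B|)+O(\sum_C k)}(nN)^{O_\epsilon(1)}=(nN)^{O_\epsilon(1)}$ cost of Lemmas~\ref{lem:DP-weighted} and~\ref{lem:pack-into-boxes-weighted}, all of which are $(nN)^{O_\epsilon(1)}$.

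The main obstacle is not this assembly, which is essentially bookkeeping, but the three structural lemmas it rests on — above all Lemma~\ref{lem:partition-knapsack-improved-new}, which must show that spirals and 2-spirals (and nothing more exotic) suffice to host a $(4/3+\epsilon)$-approximate solution with the strong "fits well" and "no small items in obtuse pieces" guarantees, and Lemma~\ref{lem:partition-Ls-improved}, which must slice each corridor into only $O_\epsilon(\log N)$ boxes whose sizes are guessable in $N^{O_\epsilon(1)}$ time while controlling the $O_\epsilon(\log N)$ leftover lonely items. These are proved in Sections~\ref{sec:Improved-partitioning-corridors} and~\ref{sec:partition-boxes-weighted}; given them, the theorem follows as above.
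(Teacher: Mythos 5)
Your proposal is correct and follows essentially the same approach as the paper: both assemble the theorem by chaining Lemma~\ref{lem:partition-knapsack-improved-new} (guess $(\epsl,\epss)$ and $\tilde\C$), Lemma~\ref{lem:partition-Ls-improved} (guess box sizes and $\Rlo$), Lemma~\ref{lem:assign-items-corridors-improved}, Lemma~\ref{lem:DP-weighted}, and Lemma~\ref{lem:pack-into-boxes-weighted}, and close the loss via the bound $p(\opt'')\ge\Omega(|\tilde\B|/\epsilon)\cdot\max_{i\in\R\setminus\Rlo}p(i)$. Your accounting of the running time and of the final profit bound mirrors the paper's.
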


\section{Improved partitioning into corridors}

\label{sec:Improved-partitioning-corridors}

In this section we prove Lemma~\ref{lem:partition-knapsack-improved-new}.
{Our argumentation will be based on the following lemma which follows from~\cite{AHW19}.}
\begin{lem}[Corridor Decomposition Lemma]
	\label{lem:corridorPack-weighted}
	Let $\R$ be a set of items that can be packed inside a given rectangular
	region $\mathcal{K}$ of height and width $T$. Let also $\R'\subseteq\R$
	be a given set of untouchable items, $|\R'|\in O_{\eps}(1)$. Then,
	there exists a corridor partition of $\mathcal{K}$ and a set of items
	$\Rco\subseteq\R$ satisfying: 
	\begin{enumerate}
		\item There exists a set of items $\Rco^{cross}\subseteq\Rco$ such that
		each item in $\Rco\setminus\Rco^{cross}$ is completely contained
		in some corridor of the partition. Furthermore, we have that $\R'\subseteq\Rco\setminus\Rco^{cross}$,
		$a(\Rco^{cross}\cap\Rsm)\le O_{\epsl}(\epss)\cdot a(\mathcal{K})$
		and $|\Rco^{cross}\setminus\Rsm|\in O_{\eps}(1)$. 
		\item $\profit(\Rco)\ge(1-O(\eps))\profit(\R)$. 
		\item The number of corridors is $O_{\eps,\epsl}(1)$ and each corridor
		has at most $\frac{1}{\eps}$ bends and width at most $\epsl N$,
		except possibly for the corridors containing items from $\R'$ which
		correspond to rectangular regions matching exactly the size of these
		items. 
	\end{enumerate}
\end{lem}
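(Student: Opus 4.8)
The plan is to derive Lemma~\ref{lem:corridorPack-weighted} as a consequence of the corridor decomposition of~\cite{AHW19} (building on~\cite{AW15}), used essentially as a black box, plus a standard adaptation to accommodate the untouchable items. Recall the shape of that result: given any non-overlapping placement of a rectangle family inside a square of side $T$ and any $\eps>0$, one can partition the square into $O_{\eps,\epsl}(1)$ corridors, each of width at most $\epsl T$ and with at most $1/\eps$ bends, and select a subfamily of total profit at least $(1-O(\eps))$ times the total, such that every selected item either is fully contained in a single corridor or lies in a controlled ``crossing'' set; among the latter, the non-small items number $O_\eps(1)$ and the small ones are contained in $O_\eps(1)$ thin strips and hence have total area $O_{\epsl}(\epss)\cdot a(\mathcal{K})$. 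Taking $T$ to be the side length of $\mathcal{K}$, conclusions~(2) and~(3) of the lemma, together with the bounds $|\Rco^{cross}\setminus\Rsm|\in O_\eps(1)$ and $a(\Rco^{cross}\cap\Rsm)\le O_{\epsl}(\epss)\cdot a(\mathcal{K})$ in~(1), are then delivered directly; the number of corridors depends only on $\eps,\epsl$ and the bend bound $1/\eps$ is inherited verbatim.

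The only feature not literally present in the bare statement is that the $O_\eps(1)$ untouchable items of $\R'$ must be non-crossing, in fact each alone in its own box-shaped corridor, matching the clause ``except possibly for the corridors containing items from $\R'$''. I would arrange this as in prior work~\cite{AW15,GGHIKW17}: delete $\R'$ first, run the decomposition on $\R\setminus\R'$, and then reinsert each $i\in\R'$ by declaring the rectangle $R(i)$ to be its own corridor and subtracting the $|\R'|=O_\eps(1)$ rectangles $R(i)$ from the corridors produced by the decomposition. The key point that makes this harmless is that in the given packing no other item's rectangle meets the interior of any $R(i)$, so this surgery cuts no item in two: every non-$\R'$ item that was whole in a corridor avoids all the removed rectangles. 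Each corridor of the decomposition is a thin rectilinear polygon, and removing from it $O_\eps(1)$ axis-parallel rectangles and then re-partitioning the remainder (along extensions of the removed rectangles' edges, chosen so as to respect the already-placed items) yields $O_\eps(1)$ thin corridors, still of width at most $\epsl N$ and with at most $1/\eps$ bends each. Hence the number of corridors and the number of non-small crossing items grow only by an $O_\eps(1)$ factor; setting $\Rco$ to be the union of $\R'$ with the selected items from the decomposition, and $\Rco^{cross}$ to be the crossing set from the decomposition (which is disjoint from $\R'$), all of~(2), (3) and the two bounds in~(1) survive, and $\R'\subseteq\Rco\setminus\Rco^{cross}$ by construction. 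Alternatively, one can thread $\R'$ through the construction of~\cite{AHW19} from the outset, always classifying its $O_\eps(1)$ items as defining their own trivial corridors; this gives the same conclusion.

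I expect the main obstacle to be precisely this interaction: verifying that carving out the $O_\eps(1)$ untouchable rectangles preserves all the promised constants (number of corridors, bends per corridor, cardinality and total area of crossing items) while keeping every non-$\R'$ item that was whole inside a corridor whole inside a corridor; here the observation that $R(i)$ is interior-disjoint from every other item does the work, since the surgery then only splits thin polygons into $O_\eps(1)$ thin sub-polygons without cutting items. Everything else --- the $O(\eps)$ bound on lost profit, the $O_{\eps,\epsl}(1)$ bound on the number of corridors, and the ``few or collectively tiny'' dichotomy for boundary-crossing items --- is the content of the shifting/hierarchical argument of~\cite{AHW19} (random grid offsets at $\Theta(1/\eps)$ scales, items classified long/short per scale, long items delimiting corridors and short items either pushed inside a corridor or charged to a low-measure strip), which I would import rather than reprove; matching its long/short threshold to our parameters $\epsl,\epss$ is routine bookkeeping.
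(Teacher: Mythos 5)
The paper never proves this lemma: the text presents it with a single sentence of attribution (``which follows from~\cite{AHW19}'') and no argument, so there is no in-paper proof to compare yours against. Your proposal is therefore best read as an attempt to reconstruct the implicit reduction, and at the level of intent it matches the paper's (cite the corridor decomposition of~\cite{AHW19} and handle the $O_\eps(1)$ untouchable items by giving each one its own box-corridor).

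The place where your primary route (delete $\R'$, decompose $\R\setminus\R'$, then carve the rectangles $R(i)$ out of the resulting corridors and re-partition) is not yet a proof is the re-partitioning step. You assert that extending the edges of $R(i)$ ``chosen so as to respect the already-placed items'' splits the affected thin polygons into $O_\eps(1)$ thin corridors ``without cutting items,'' but the interior-disjointness of $R(i)$ from other items only guarantees that \emph{removing $R(i)$} cuts nothing; it says nothing about the auxiliary cuts you need afterwards. If $R(i)$ sits in the interior of a horizontal subcorridor, the horizontal items there span (almost) its full width, so a vertical extension of a side of $R(i)$ through the subcorridor generically slices through many of them, and a horizontal extension past the $x$-range of $R(i)$ slices the items that are to the left or right of $R(i)$ at its $y$-level. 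Avoiding all such cuts requires monotone staircase curves in the style of Lemma~2.4 of~\cite{AW15}, together with a careful accounting that the resulting pieces still have width at most $\epsl T$, at most $1/\eps$ bends, and are only $O_\eps(1)$ in number; none of this is supplied, and it is exactly where the work lies. You also need to track the newly ``cut'' items (if any) into $\Rco^{cross}$ and re-verify the cardinality/area bounds of item~(1), which the current write-up does not do.

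Your final sentence — thread $\R'$ through the construction of~\cite{AHW19} from the outset, treating each untouchable item as a degenerate box-corridor and ensuring the shifting/random-offset steps never delete it — is the cleaner route and is almost certainly the one the authors have in mind (it is also consistent with how the paper later feeds untouchable items into the lemma after a grid-stretching preprocessing in Section~\ref{sec:lemma-corridor-improved}). If you go that way, the burden shifts to checking that the~\cite{AHW19} construction admits this modification without changing its constants, which is a more localized claim than the carve-and-repartition surgery. As written, the proposal identifies the right source and the right target statement, but the carve-out argument has a genuine gap and the alternative is only gestured at.
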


{Note that in the previous lemma, there is a set of items $\Rco^{cross}\setminus\Rsm$
that are not small and that are not contained in any corridor. In the unweighted case,
these items are negligible since they are only constantly many (unless the optimal
solution contains only a constant number of items which is a trivial case). However,
in the weighted case those items might have a lot of profit and thus we cannot afford to
lose them. Therefore, we prove}
an alternative version of Lemma~\ref{lem:corridorPack-weighted} 
which intuitively is better suited for the weighted case. 
\begin{lem}
\label{lem:corridor-partition-improved}There are values $0<\epss'<\epsl'\le1$
with $\epss'\le\epsilon^{2}\epsl'$ with the following properties.
There exists a solution $\opt'\subseteq\opt$ with $p(\opt)\le(1+\epsilon)p(\opt')$
and a partition of $K$ into a set $\tilde{\C}$ with $|\tilde{\C}|\le O_{\epsilon}(1)$
where each $C\in\tilde{\C}$ is a box, a relatively thin path corridor,
or a relatively thin cycle corridor, such that 
\begin{itemize}
\item each item $i\in\opt'$ is contained in one element of $\tilde{\C}$, 
\item each corridor $C\in\tilde{\C}$ can be partitioned into $s(C)$ corridor
pieces $\P(C)$ such that 
\begin{itemize}
\item each item $i\in\opt'$ inside $C$ is contained in one piece $P\in\P(C)$, 
\item for each $P\in\P(C)$ it holds that the items of $\opt'$ inside $P$
fit well into $P$ (according to $\epss'$ and $\epsl'$). 
\end{itemize}
\item each box $C\in\tilde{\C}$ either contains only one single item $i\in\opt'$
or contains a set of items $\opt'(C)\subseteq\opt'$ that fit well
into $C$. 
\end{itemize}
Also, the pair $(\epsl',\epss')$ is one pair from a set of $O_{\eps}(1)$
pairs which can be computed in polynomial time. 
\end{lem}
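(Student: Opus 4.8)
The plan is to derive Lemma~\ref{lem:corridor-partition-improved} from the Corridor Decomposition Lemma (Lemma~\ref{lem:corridorPack-weighted}) by a sequence of refinements whose only purpose is to repair the two weaknesses of that lemma: (i) the crossing non-small items $\Rco^{cross}\setminus\Rsm$, which are only $O_\eps(1)$ many but potentially carry a lot of profit in the weighted setting, and (ii) the fact that the corridors produced there are only ``$\epsl N$-thin,'' not \emph{relatively thin}, and the items inside a piece need not split cleanly into ``wide'' and ``small'' relative to that piece. First I would apply Lemma~\ref{lem:corridorPack-weighted} to $\mathcal K=K$, $\R=\opt$, with $\R'=\emptyset$ (or, if we want to keep a few designated items, a constant-size set of untouchables), obtaining a corridor partition, a set $\Rco$ with $\profit(\Rco)\ge(1-O(\eps))\profit(\opt)$, and the crossing set $\Rco^{cross}$.

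Next I would handle the crossing non-small items. Since $|\Rco^{cross}\setminus\Rsm|\in O_\eps(1)$, I would give each of these items its own box-corridor (a rectangular region matching its size exactly), exactly as the lemma already does for untouchable items; this increases $|\tilde\C|$ only by $O_\eps(1)$ and costs nothing in profit. The small crossing items $\Rco^{cross}\cap\Rsm$ have total area $O_{\epsl}(\epss)\cdot a(K)$, so by a standard area/shifting argument (choosing $\epss$ small enough relative to $\epsl$) they can be dropped while losing only an $\eps$-fraction of the profit — or re-accommodated later into free space; either way they are not an obstacle. At this point every surviving item of profit $(1-O(\eps))\profit(\opt)$ is contained in a single corridor or box of a partition of $K$ into $O_\eps(1)$ pieces, each corridor having at most $1/\eps$ bends and width at most $\epsl N$.

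Then comes the step that upgrades ``$\epsl N$-thin'' to ``relatively thin,'' and simultaneously installs the well-fitting structure inside each piece. For each corridor I first invoke Lemma~\ref{lem:corridor-pieces-1} to cut it into $s(C)$ subcorridors that are nice for the items it contains, so each piece is monotone and contains only items of one orientation (say horizontal). Inside a horizontal piece $P$ the items are wide ($>\epsl N$) and short ($\le\epss N$); to make $P$ relatively thin I need $\height(P)\le\eps\cdot\width(P)$ and the two monotone boundary curves to deviate horizontally by at most $\eps\cdot\width(P)$ — this can fail only if $P$ is ``short and fat'' or very crooked, in which case I further subdivide $P$ horizontally and/or vertically into $O_{\eps}(1)$ sub-pieces, charging the $O_\eps(1)$ items that get split to a geometric-series / shifting argument over a constant number of ``grid levels'' (this is where the second family parameter and the $O_\eps(1)$ candidate pairs $(\epsl',\epss')$ enter: we pick the level, i.e.\ the threshold pair, that loses the least profit). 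Finally, to get ``fits well into $P$,'' I classify the items inside $P$ by width relative to $\width(P)$: with one more shifting step over $O(1/\eps)$ width-scales I can choose thresholds $\epss'<\epsl'$ (again from an $O_\eps(1)$-size precomputed set) so that no item has width in the forbidden middle band $(\epss'\width(P),\epsl'\width(P)]$, losing only an $\eps$-fraction; the height condition $\height(i)\le\eps^4\height(P)$ is obtained the same way, subdividing $P$ into $O(1)$ horizontal strips if a few tall-ish items violate it and absorbing them by shifting. Repeating all shifting choices \emph{in parallel} across the $O_\eps(1)$ corridors keeps the total loss at $O(\eps)$, and rescaling $\eps$ gives the claimed $(1+\epsilon)$ factor.

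The main obstacle I expect is the bookkeeping in this last step: the thresholds $\epss',\epsl'$ must be chosen \emph{uniformly} — a single global pair that works simultaneously for every piece $P$ of every corridor, even though the relevant scale $\width(P)$ varies from piece to piece. The standard trick is that because there are only $O_\eps(1)$ pieces and each shifting argument only needs to avoid $O(1/\eps)$ bad scales, a pigeonhole over a set of $O_\eps(1)$ candidate pairs (geometrically spaced, with ratios like $\eps^{\,2i}$) yields one pair that is good for all pieces at once while still guaranteeing $\epss'\le\eps^2\epsl'$; this is exactly the ``$O_\eps(1)$ pairs computable in polynomial time'' clause in the statement. Everything else — the box-corridors for crossing items, the area bound for small crossing items, niceness from Lemma~\ref{lem:corridor-pieces-1} — is routine. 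Once Lemma~\ref{lem:corridor-partition-improved} is in hand, Lemma~\ref{lem:partition-knapsack-improved-new} will follow by the further per-corridor surgery (deleting every third subcorridor, the spiral/2-spiral case analysis, and removing small items from obtuse pieces) that incurs the additional $4/3$ factor, but that is the content of a separate argument.
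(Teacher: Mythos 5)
There is a genuine gap in your treatment of the crossing non-small items, and this is precisely the hard part of the weighted case. You propose to ``give each item of $\Rco^{cross}\setminus\Rsm$ its own box-corridor, exactly as the lemma already does for untouchable items.'' But the untouchable-item mechanism in Lemma~\ref{lem:corridorPack-weighted} works because the untouchable set is specified \emph{before} the corridor partition is built, so the construction can be carved around those items. The crossing items $\Rco^{cross}\setminus\Rsm$, by contrast, are items that happen to straddle the boundaries of the \emph{already-constructed} partition. You cannot retroactively enclose each of them in its own box-corridor: any such box would overlap several of the existing corridors, so you would no longer have a partition of $K$, and you cannot simply shift these items aside because the surrounding corridors are fully packed. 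So this step, as written, fails geometrically.

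What the paper actually does here is a recursive shifting argument over applications of Lemma~\ref{lem:corridorPack-weighted}. It maintains disjoint sets $M(0),M(1),\dots$ of $O_\eps(1)$ skewed items each: $M(0)$ is the set of crossing items (together with items cut during the thinning/well-fitting refinements); if $p(M(t))\le\eps\,p(\opt)$ they are dropped, otherwise it re-runs the whole decomposition with $\mathcal M(t)=\bigcup_{i\le t}M(i)$ as the untouchable set, producing a fresh crossing set $M(t+1)$ disjoint from the previous ones. Disjointness forces termination after at most $1/\eps$ iterations. This recursion in turn creates a second difficulty your proposal does not see: once $\mathcal M(t)$ is large enough to occupy most of the knapsack, the naive ``small items have area $O_{\epsl}(\epss)N^2\ll a(K)$, repack them by NFDH somewhere free'' argument no longer works, and the paper handles it by building a non-uniform grid from the boundaries of $\mathcal M(t)$, reclassifying items per cell, rescaling to $[0,1]^2$, and repacking small items cell by cell. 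Your steps on thinning pieces, installing the well-fitting structure, and choosing $(\epsl',\epss')$ from a polynomial family via pigeonhole over width scales are essentially in line with the paper, but without the shifting recursion and the grid/rescaling repacking, the proof of the weighted claim does not go through.
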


{We will prove Lemma~\ref{lem:corridor-partition-improved} later in Section~\ref{sec:lemma-corridor-improved}}.
Now we will proceed with the proof of Lemma~\ref{lem:partition-knapsack-improved-new}, {assuming Lemma~\ref{lem:corridor-partition-improved}}.
Consider first the corridor partition from Lemma~\ref{lem:corridor-partition-improved}
and the induced pieces from it. In order to construct $\tilde{\C}$
we first add all boxes originally in $\C$. Then, for each relatively
thin path or cycle corridor $C$ we will group {the items inside
the corridor} into four {disjoint} sets according to the pieces
that they belong to, in such a way that if we delete any of these
sets, we can subdivide the corridors and remaining items into {boxes},
spirals and 2-spirals only. {Being that the case we can delete the
least profitable} of the sets, with profit at most $\frac{1}{4}p(C)$,
and pack all items in the remaining pieces in the same way as in the
original packing, concluding the proof.

For each $C\in\tilde{\C}$ and {each piece from the decomposition
of Lemma~\ref{lem:partition-knapsack-improved-new}} we will assign a number $\typeC(P)$
between $1$ and $4$. Now consider an arbitrary path corridor $C=P_{1},\dots P_{k}$,
then we may simply assign to each corridor piece $P_{i}$ by $\typeC(P_{i})=i\mod4$,
and define the four disjoint sets as, for each $i=1,\dots4$, the
items from pieces with $\typeC(P)=i$ which are skewed with respect
to their pieces plus the items from pieces with $\typeC(P_{i})=(i+2)\mod4$
which are small with respect to their pieces. Thus if we consider
$\mathcal{T}_{j}$ to be the set of items assigned to the sets $j=1,..4$,
if we delete any of these sets from $C$, we can further partition
the corridor into only boxes and LUZ-corridors without relatively
small items in the obtuse pieces as in the proof of Lemma~\ref{lem:partition-knapsack}
while losing some negligible profit because of some lost small items
from the boxes.

If $C$ is a cycle corridor instead, composed of the pieces $(P_{1},\dots P_{k})$,
we will divide the analysis into cases. As $C$ is a cycle corridor,
$k$ must be even and therefore we have that either $k\equiv0\mod4$
or $k\equiv2\mod4$. If it is the former case we can proceed exactly
as in the case of path corridors as we will obtain only boxes and
LUZ-corridors.

If it is $k\equiv0\mod4$, we will distinguish two cases. First the
following technical observation that will help to distinguish the
cases.

\begin{remark}\label{rem:U-shapes} For any cycle corridor $C$,
there are four 3-tuples of consecutive corridor pieces denoted by
$U^{i}=(P_{i}^{(1)},P_{i}^{(2)},P_{i}^{(3)})_{i=1}^{4}$ such that
they induce U-corridors contained in $C$ where its central piece
$P_{i}^{2}$ is either the topmost horizontal, leftmost vertical,
rightmost vertical or bottom-most horizontal piece with respect to
the knapsack. \end{remark}

The two cases we consider depend on whether these four induced U-corridors
from Remark~\ref{rem:U-shapes} are disjoint or if they overlap.

Consider a cycle corridor $C=P_{1},\dots P_{k}$ and define $\shapeC(C)$
to be a sequence of $s(C)$ symbols in $\{\text{a,o}\}$ such that
$\shapeC(C)_{i}=\text{a}$ if $P_{i}$ is an acute piece and $\shapeC(C)_{i}=\text{o}$
if $P_{i}$ is an obtuse piece. Now, we first study the case when
$(\text{a,a})$ appears as a substring of $\shapeC(C)$ (w.l.o.g.
we can assume it appears first) then we can assign types as follows (see Figure~\ref{fig:CorridorsnotAA}: 

First we set $\typeC(P_{k})=\typeC(P_{2})=1$
and $\typeC(P_{1})=2$, and then we continue assigning each $P_{i}$
as $\typeC(P_{i})=i-1\mod4$ for $3\leq i\leq k-1$. As $k=2\mod4$,
it must be that $\typeC(P_{k-1})=4$. Under this assignment, if we
hypothetically remove the pieces $P$ such that either $\typeC(P)=1$
or $\typeC(P)=2$, then the resulting corridors we obtain are either
boxes or LUZ-corridors. Instead if we remove all corridor pieces such
that $\typeC(P)=3$ we obtain the corridor LUZ-corridors and a path
corridor formed by three acute pieces, an obtuse piece, and another
acute piece, which is a 2-spiral. If we remove all corridor pieces
$P$ such that $\typeC(P)=4$ then we obtain LUZ-shapes and a path
corridor obtained from 1 acute piece followed by an obtuse piece and
then three acute pieces, which is again a 2-spiral. Having this, we
will group the items in $C$ as follows: for each $i=1,\dots,4$,
we assign to a group the skewed items in pieces with $\typeC(P)=i$
plus the small items from the induced obtuse pieces if we hypothetically
delete the previous pieces. The required properties are satisfied
by construction since it is not difficult to check that no piece becomes
an obtuse piece more than once after each deletion.

If the sequence $(\text{a,a})$ does not appear as a substring of
$\shapeC(C)$, it means that all the U-corridors $(P_{i}^{(1)},P_{i}^{(2)},P_{i}^{(3)})$,
$i=1,\dots,4$, from Remark~\ref{rem:U-shapes} are disjoint. Let
us define $S_{i}$ to be the sequence of consecutive corridor pieces
in $C$ starting at $P_{i}^{(2)}$ and ending at $P_{i+1\mod4}^{(2)}$.
Notice that for each $i$, $S_{i}$ has even length as $S_{i}$ either
starts with an horizontal piece and ends with a vertical piece or
vice versa, therefore there must exist an $i$ such that $S_{i}$
has length $2\mod4$. From here we can classify each corridor piece
as follows:

Let us assume w.l.o.g. that $S_{1}$ consists of $\ell=0\mod4$ pieces.
Then if we let $P_{1}=P_{1}^{(2)}$, we will set $\typeC(P_{i})=i\mod4$
for $1\leq i\leq\ell$, with $\ell$ the last piece in $S_{1}$. It
must be the case that $\typeC({P_{\ell}})=2$. Now consider the permutation
$\pi$ given by $(1,3,4,2)$, and we will assign $\typeC(P_{\ell+i})=\pi(i\mod4)$
for all $i\geq1$. This assignment goes on until all remaining pieces
are considered.

\begin{figure}[ht]
		\centering

		\includegraphics[width=0.47\linewidth]{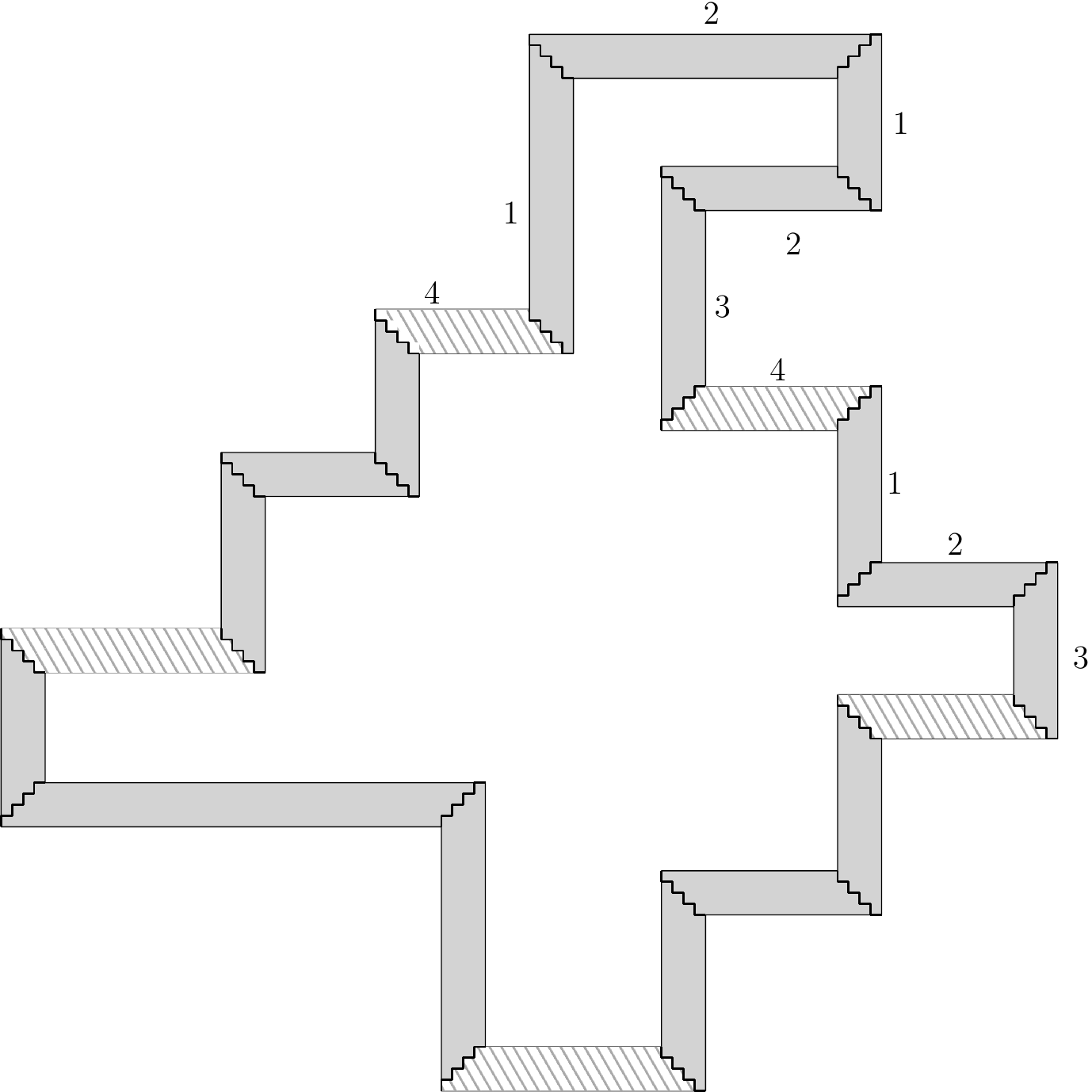}  

	\qquad

		\includegraphics[width=.47\linewidth]{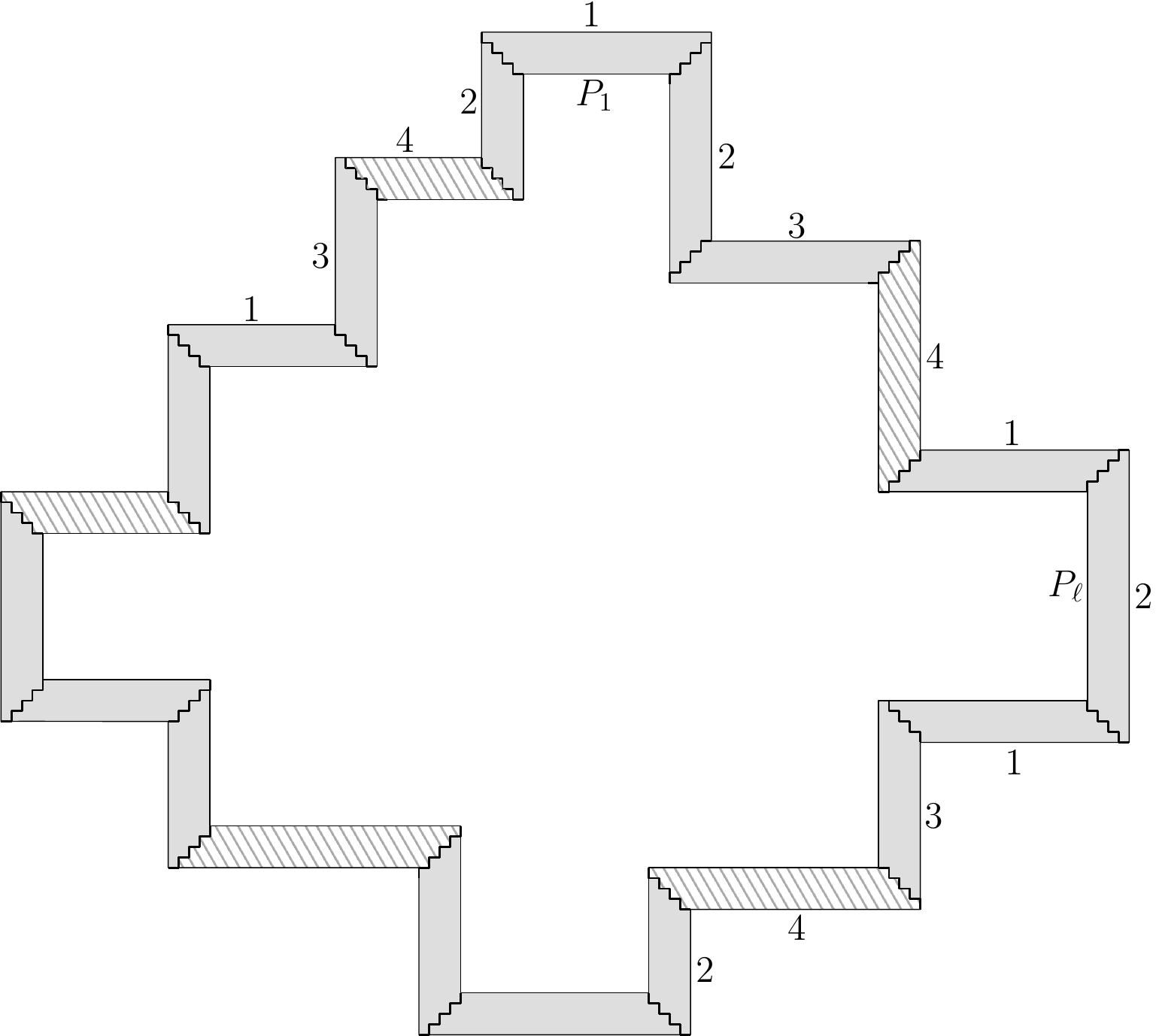}
		\caption{(Left) A cycle corridor with two consecutive acute pieces (on top) and the assigned types. If the pieces of type 4 are removed (dashed regions), we are left only with 2-spirals. (Right) A cycle corridor without consecutive acute pieces. Four disjoint U-subcorridors can be distinguished, and two of them must be at distance $2 \mod 4$ ($P_1$ and $P_{\ell}$). If the pieces of type $4$ are removed (dashed regions) we are left only with $2$-spirals.}
		\label{fig:CorridorsnotAA}
\end{figure}

Let $\bar{S}$ be the sequence of corridor pieces starting from $P_{\ell+1}$
until $P_{k}$. It is not difficult to verify that, if we hypothetically
remove all the pieces of the same type, we are left with only boxes,
spirals and two spirals: this is simple for the pieces of type $1$
and $2$ as we are left only with LUZ-corridors and boxes; if we remove
the pieces of type $3$ (and analogously type $4$) we are left with
only LUZ-corridors plus two 2-spirals involving the disjoint U-corridors
that define $S_{1}$ (see Figure~\ref{fig:CorridorsnotAA}). 

Having
this, again we will group for each $i=1,\dots,4$ the skewed items
in pieces with $\typeC(P)=i$ plus the small items from the induced
obtuse pieces if we hypothetically delete these pieces. Again no piece
becomes an obtuse piece more than once after each deletion, so the
required properties are satisfied.

Finally, we will repack the temporarily removed small items. Indeed,
we have that the total area for the small items removed by the previous
procedure is at most $O_{\epsl}(\epss)N^{2}$, let us denote this
total area by $A_{small}$. Since the knapsack is partitioned into
at least $1/\epsl$ and at most $O_{\epsl}(1)$ many pieces, there
must be at least $1/\eps$ pieces such that a box of height at least
$\frac{\epss}{\eps}N$, width at least $\frac{\epss}{\eps}N$ and
total area at least $(1+2\eps)A_{small}$ can be drawn completely
inside the pieces. By deleting the contents of the least profitable
of these pieces, we can repack all deleted small items into the aforementioned
new box using NFDH (Theorem~\ref{thm:nfdhPack}).

\subsection{Alternative corridor partition}\label{sec:lemma-corridor-improved}
In this section we prove Lemma~\ref{lem:corridor-partition-improved}.

We will first prove the lemma under the assumption that we can
remove $O_{\eps}(1)$ items at no cost. This assumption does not hold
without loss of generality as these items, although being a constant
number, may carry a significant fraction of the profit. However we
will prove how to drop this assumption by standard shifting tricks.

Let $M(0)$ be the set of items that we will delete at no cost during
the argumentation. As there are at most $1/\epsl^{2}$ items in $\optla$
we will just add them to $M(0)$. Let us start initially by applying
Lemma~\ref{lem:corridorPack-weighted} to the optimal solution, which
gives us a corridor partition of $K$ into a constant number of corridors
and boxes and a set $\Rco\subseteq OPT$ of total profit at least
$(1-\eps)p(OPT)$ that can be packed inside the corridors, except
for a set $\optco^{cross}$ consisting of a constant number of skewed
items and a set of small items of total area at most $O_{\epsl}(\epss)N^{2}$.
We will include the skewed items in $\optco^{cross}$ into $M(0)$,
while the small items from that set we will repack later. Let us refine
this partition so as to fulfill all the required extra properties
regarding the relative dimensions of the corridors and the items inside.

Notice that the lines defining the corridors are all longer
than $\frac{\epsl}{2}N$ and
all the pieces of the corridors have height at most $\epsl N$. Let
us start by making the corridors relatively thin, meaning that for
each horizontal (vertical) piece $P$ we will have that $h(P)\le\eps\cdot w(P)$
($w(P)\le\eps\cdot h(P)$) and the width (height) of the neighboring
pieces is at most $w(P)$ ($h(P)$ respectively). 
 Suppose some horizontal piece $P$ is not thin, meaning that its
height is larger than $\eps\cdot w(P)$ or that the width of some
of its neighboring pieces is larger than $\eps\cdot w(P)$ (or the
analogous statement for a vertical piece). We will subdivide the corridor
containing $P$ into $\frac{2}{\eps^{2}}$ thinner corridors in such
a way that each piece $P'$ of the corridor is divided into at least
$\frac{2}{\eps}$ pieces of height/width at most $\frac{\eps}{2}h(P')$
($\frac{\eps}{2}w(P')$), which would result on a relatively thin
corridor as $h(P')\le\epsl N$ and $w(P')\ge\frac{\epsl}{2}N$ for
any horizontal piece $P'$, and the analogous inequalities for vertical
pieces would also hold. This can be achieved as follows: let $P'$
be an horizontal piece in $C$, being the procedure symmetric if the
piece is vertical. We can draw $2/\eps$ horizontal lines across $P'$
equidistantly at distance $\frac{\eps}{2}h(P')$ and then extending
these lines across the corridor without intersecting skewed items
along their short dimension. These lines may intersect some items,
but at most $\frac{1}{\eps\epsl}$ skewed ones as they are crossed
along their long dimension, plus a set of small items of total area
at most $\frac{1}{\eps\epsl}\epss N^{2}$. We apply this procedure
for all the pieces of the corridor, which increases the number of
corridors only by a factor of at most $2/\eps^{2}$. Let $\mathcal{P}'$
be the set of obtained pieces after this procedure, $|\mathcal{P}'|\le O_{\eps}(1)$.
We will add the crossed skewed items, which are at most $\frac{1}{\eps\epsl}|\mathcal{P}'|\le O_{\eps}(1)$
many, to $M(0)$, and will temporarily remove the crossed small items
so as to repack them later. Notice that their total area is at most
$|\mathcal{P}'|\frac{1}{\eps\epsl}\epss N^{2}$.

In the same spirit, we will ensure that the items \emph{fit well}
inside the pieces of the corridor, meaning that for each horizontal
(vertical) piece $P$ the height (width) of the items inside $P$
is at most $\eps^{4}h(P)$ ($\eps^{4}w(P)$ respectively). To this
end we remove, for each horizontal piece $P$, all the items contained
in $P$ whose height is larger than $\eps^{4}h(P)$, and also
we perform the analogous procedure for vertical pieces. Notice that
there are at most $\frac{1}{\eps^{4}\epsl}$ skewed items
which are removed from $P$, and if some small item is removed from
$P$, meaning that $h(P)<\frac{\epss}{\eps^{4}}N$, then the total
area of these removed small items is at most $h(P)\cdot w(P)\le\frac{\epss}{\eps^{4}}N^{2}$.
We do the same for all the pieces in the current partition, again
adding the skewed removed items to $M(0)$ while temporarily keeping
aside the removed small items, of total area at most $|\mathcal{P}'|\frac{\epss}{\eps^{4}}N^{2}$,
to be repacked later. Finally, we will choose values $\epss'$
and $\epsl'$

so that the total profit of items where $w(i)\in(\epss'w(P),\epsl'w(P)]$
is negligible while ensuring that the two values differ by a factor
at least $\eps^{2}$. Indeed, 

these values come from a set of constantly many candidates that can
be computed in polynomial time. Therefore, all the required properties
are satisfied.

Consider now the so far removed small items, let us call them $S(0)$,
whose total area is at most $O_{\epsl}(\epss)N^{2}$. Since $K$ is
partitioned into at least $1/\epsl$ and at most $O_{\epsl}(1)$ pieces,
there must be at least $1/\eps$ pieces such that a box of height
at least $\frac{\epss}{\eps}N$, width at least $\frac{\epss}{\eps}N$
and total area at least $(1+2\eps)a(S(0))$ can be drawn completely
inside the pieces. If we remove the items inside the least profitable
of the pieces, whose total profit is at most $O(\eps)\profit(\opt)$,
we can repack the small items from $S(0)$ into the aforementioned
box using NFDH (Theorem~\ref{thm:nfdhPack}) that fits now inside
the deleted piece. Summarizing, we obtain this way a corridor decomposition
with all the required dimension properties for items of total profit
at least $(1-\eps)\profit(\opt)-\profit(M(0))$.

If the total profit of the items in $M(0)$ is at most $\eps\profit(\opt)$,
then we can safely remove them and obtain the desired partition into
relatively thin corridors. If it is not the case, then we will apply
a shifting argumentation, proceeding again with the same construction
but this time ensuring that the skewed items we tried to remove before
are not removed this time. This may induce a new (disjoint) set of
skewed items that need to be removed, which again if they have small
total profit we can just remove, or otherwise we recourse. After at
most $1/\eps$ iterations we will find a set of small total profit
that can be deleted as all these sets are disjoint by construction.

More in detail, suppose that we are at a further iteration of this
recursion, meaning that so far we have computed disjoint sets $M(0),M(1),\dots,M(t)$
of constantly many skewed items and all of them have significant total
profit. We will again apply Lemma~\ref{lem:corridorPack-weighted}
but this time with the set of untouchable items being $\mathcal{M}(t):=\bigcup_{i=0}^{t}{M(i)}$,
the union of all the items we have tried to remove in previous iterations,
which are in total constantly many. Then we apply again the whole
decomposition process and obtain a constant number of skewed items
that we need to remove: these items form $M(t+1)$. Observe that by
construction, the sets $M(0),M(1),\dots,M(t)$ are pairwise disjoint,
and as a consequence the condition $p(M(t))>\eps\cdot p(OPT)$ can
happen strictly less than $1/\eps$ times.

During this decomposition, as described before, some small items are
temporarily removed that need to be repacked. Aside from them, all
the other items are completely contained in one of the corridors except
for $M(t+1)$. Now we will show how to repack these removed small
items. This time it is not enough to prove that they have negligible
area with respect to $K$, as the set of untouchable items may occupy
almost entirely the knapsack. To this end, we will prove that around
the untouchable items it is possible to repack these small items.

We first define a non-uniform grid $G$ by extending the boundaries
of the items in $\mathcal{M}(t)$ in the optimal solution. This yields
a partition of the knapsack into $O_{\eps}(1)$ rectangular cells,
where each item from $\mathcal{M}(t)$ completely covers one or multiple
cells. Note that items might intersect many cells. We will now classify
the items according to their interaction with the cells. We define
constants $1\ge\epsl''\ge\epss''\ge\Omega_{\eps}(1)$ and denote by
$\R(C)$ the set of items that intersect $C$ for each cell $C$.
Furthermore, $h(C)$ and $w(C)$ denote the height and the width of
the cell $C$ respectively, and $w(i\cap C)$ and $h(i\cap C)$ denote
the height and the width of the intersection of item $i$ with $C$,
respectively. We can then partition $\R(C)$ into $\Rsm(C)$, $\Rla(C)$,
$\Rho(C)$, and $\Rve(C)$ as follows: 
\begin{itemize}
\item $\Rsm(C)$ contains all items $i\in\R(C)$ with $h(i\cap C)\le\epss''h(C)$
and $w(i\cap C)\le\epss''w(C)$, 
\item $\Rla(C)$ contains all items $i\in\R(C)$ with $h(i\cap C)>\epsl''h(C)$
and $w(i\cap C)>\epsl''w(C)$, 
\item $\Rho(C)$ contains all items $i\in\R(C)$ with $h(i\cap C)\le\epss''h(C)$
and $w(i\cap C)>\epsl''w(C)$, and 
\item $\Rve(C)$ contains all items $i\in\R(C)$ with $h(i\cap C)>\epsl''h(C)$
and $w(i\cap C)\le\epss''w(C)$. 
\end{itemize}

We can choose
$\epss''$ and $\epsl''$ in such a way that the items not falling
in any of these categories for any cell $C$ have negligible total
profit, and hence we can safely discard them. For each cell $C$ that
is not entirely covered by some item in $\R$ we add all items in
$\Rla(C)$ that are not contained in $\mathcal{M}(t)$ to $M(t+1)$.

Let us temporarily remove the items which are small with respect to
every cell that they intersect. Imagine that we first stretch the
non-uniform grid into a uniform $[0,1]\times[0,1]$ grid. After this
operation, for each cell $C$ and for each item in $\Rho(C)\cup\Rve(C)\setminus\mathcal{M}(t)$
we know that its height or width is at least $\epsl\cdot\frac{1}{1+2|\mathcal{M}(t)|}$.
We can then apply Lemma~\ref{lem:corridorPack-weighted} with $\mathcal{M}(t)$
being the set of untouchable items which yields a decomposition of
the $[0,1]\times[0,1]$ square into at most $O_{\eps,\epsl}(1)$ corridors.
The decomposition for the stretched $[0,1]\times[0,1]$ square corresponds
to the decomposition for the original knapsack.

We add all items that are not contained in a corridor (at most $O_{\eps}(1)$
many) to $M(t+1)$. Now if we include back the small items that were
temporarily removed before, some of them might not be completely contained
in some corridor. However, this time they are very small with respect
to the cells, and furthermore the total area of unpacked small items
intersecting a cell is very small compared to the area of the cell:
The number of lines defining the corridor partition is $O_{\epsl,\eps}(1)$
while the area of each of these small items is at most $O(\epss)a(C)$.
Since each cell is partitioned into at least $1/\epsl$ pieces of
corridors (this can be ensured thanks to the scaling), we can find
such a piece of negligible total profit such that if we remove these
items, we can place a big enough box to repack the small items from
the cell.

After this construction then we have a corridor decomposition and
a set of items such that every item is contained in some corridor.
We can then again refine this decomposition as explained in the beginning
of the proof in order to ensure that the corridors are relatively
thin and that the items fit well inside their corresponding pieces.
This procedure again removes a constant number of items which we add
to $M(t+1)$ and small items of small total area, which we repack
into the corridors by using similar arguments as before (we argue
about this looking at each cell separately, where items are small
compared to the cells and there are enough corridor pieces to ensure
that items can be repacked at a small loss of profit). This way we
obtain the corridor decomposition with the required properties for
a set of items of total profit at least $(1-\eps)\profit(\opt)-\profit(M(t+1))$.
As mentioned before, after at most $1/\eps$ levels of the recursion
we find a solution of profit at least $(1-O(\eps))\profit(\opt)$.
This concludes the proof of Lemma~\ref{lem:corridor-partition-improved}.

\section{Improved partition into boxes \label{sec:partition-boxes-weighted}}

In this section we prove Lemma~\ref{lem:partition-Ls-improved} in
which the set of corridors $\tilde{\C}$ can contain boxes, spirals,
and 2-spirals. First, we partition $I$ into two subsets $\Rh$ and
$\Rl$. We introduce a constant $c=O_{\epsilon}(1)$ that we will
define later. 
\begin{lem}
\label{lem:shifting-weighted}For any $c=O_{\epsilon}(1)$ there is
a solution $\overline{\opt}'\subseteq\opt'$ and an item $i^{*}\in I$
such that $\Rh=\{i\in I|p(i)>p(i^{*})\}$ and $\Rl=\{i\in I|p(i)\le p(i^{*})\}$
satisfy that 
\begin{itemize}
\item $p(\overline{\opt}')\ge(1-O(\eps))p(\opt')$, 
\item $p(\overline{\opt}')\ge\Omega\left(c\cdot\left(\log(nN)+\left|\overline{\opt}'\cap\Rh\right|\right)\cdot p(i^{*})\right)$, 
\item $\left|\overline{\opt}'\cap\Rh\right|\le O_{c}(\log(nN))$, and 
\item each item $i\in\overline{\opt}'$ it holds that $p(i)\ge\frac{\epsilon}{n}\max_{i'\in\R}p(i')$. 
\end{itemize}
\end{lem}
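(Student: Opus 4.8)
The last bullet I would dispose of first and independently: simply discard from $\opt'$ every item of profit below $\frac{\eps}{n}\max_{i'\in\R}p(i')$. Since a single item is already a feasible packing, $p(\opt)\ge\max_{i'}p(i')$, hence $\max_{i'}p(i')\le(4/3+\eps)p(\opt')$; as fewer than $n$ items are discarded, their total profit is at most $\eps\max_{i'}p(i')=O(\eps)p(\opt')$. Write $\opt'_{0}$ for what remains; then $p(\opt'_{0})\ge(1-O(\eps))p(\opt')$, every $\overline{\opt}'\subseteq\opt'_{0}$ satisfies the fourth bullet by construction, and it suffices to work with $\opt'_{0}$ in place of $\opt'$.

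The core of the argument is a profit truncation at the right scale. Set $\lambda:=\Theta_\eps(c\log(nN))$ — concretely a fixed power of $c$ times $\lceil\log(nN)\rceil$, chosen so that the constants below close — and let $P_{0}:=p(\opt'_{0})$. I would split into two cases according to how the profit of $\opt'_{0}$ is distributed. In the generic case there are exactly $t<\lambda$ items of $\opt'_{0}$ of profit greater than $P_{0}/\lambda$ (there are at most $\lambda-1$ such items, since each carries more than a $1/\lambda$ fraction of $P_{0}$) and $|\opt'_{0}|>t$: then take $\overline{\opt}':=\opt'_{0}$ and let $i^{*}$ be the $(t{+}1)$-st most profitable item of $\opt'_{0}$ (ties broken arbitrarily). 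On one hand $p(i^{*})\le P_{0}/\lambda$, since otherwise there would be $t{+}1$ items above $P_{0}/\lambda$; on the other hand any item of $\overline{\opt}'$ of profit strictly above $p(i^{*})$ lies among those $t$ heavy items, so $|\overline{\opt}'\cap\Rh|\le t<\lambda=O_{c}(\log(nN))$, which is the third bullet. The first bullet is immediate, and the second follows because $p(\overline{\opt}')=P_{0}\ge\lambda\,p(i^{*})$ while $\log(nN)+|\overline{\opt}'\cap\Rh|<2\lambda$, so $c\,(\log(nN)+|\overline{\opt}'\cap\Rh|)\,p(i^{*})\le 2c\lambda\,p(i^{*})\le 2c\,P_{0}$ and hence $p(\overline{\opt}')=\Omega\!\big(c(\log(nN)+|\overline{\opt}'\cap\Rh|)p(i^{*})\big)$ with the hidden constant a $\Theta_\eps(1)$. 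In the remaining case $\opt'_{0}$ has only $O_{c}(\log(nN))$ items (either all of them are heavy, so $|\opt'_{0}|=t<\lambda$, or it was small from the outset): here the third bullet is free, and one takes $\overline{\opt}':=\opt'_{0}$ and $i^{*}$ a minimum-profit item of $I$, which — as $n$ is large relative to $c\log(nN)$, for otherwise the whole instance has constant size — has profit small enough for the second bullet; moreover this few-items regime is anyway solved near-optimally in advance by the color-coding dynamic program (cf.\ Lemma~\ref{lem:DP-weighted}), so it may be assumed away entirely.

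The main obstacle, and the reason the exponent of $c$ in $\lambda=\Theta_\eps(c\log(nN))$ (and the $\Theta_\eps(1)$ factor hidden in the $\Omega(\cdot)$ of the second bullet) must be tuned, is the tension inside the second bullet between the explicit factor $c$, the term $|\overline{\opt}'\cap\Rh|$ (which I can only bound by something proportional to $\lambda$), and the bound $P_{0}/\lambda$ on $p(i^{*})$: raising $\lambda$ strengthens the bound on $p(i^{*})$ but at the same time weakens the bound on $|\overline{\opt}'\cap\Rh|$, so the two requirements reconcile only for an appropriate constant power of $c$ and once the genuinely small-optimum regime has been peeled off. Everything else — the pigeonhole/averaging that keeps $t$ below $\lambda$, the tie-breaking, and tracking the $O(\eps)$ losses — is routine bookkeeping.
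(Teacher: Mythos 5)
Your fourth-bullet truncation and the outer case split are fine, but the core of your argument — taking $i^{*}$ to be the $(t{+}1)$-st most profitable item of $\opt'_{0}$ — does not establish the second bullet with the constant that the lemma actually needs, and this is a missing idea rather than a constant-tuning issue. The paper does \emph{not} define $i^{*}$ by a quantile cut: it first groups the items by profit into classes $\{I^{(\ell)}\}$ so that profits in \emph{different} groups differ by a multiplicative factor at least $c$ (and within a group by at most $c^{1/\eps}$), creating these gaps by a shifting step at a $(1-O(\eps))$ loss, and only then picks $\ell^{*}$ by a threshold count and sets $i^{*}=\arg\max_{I^{(\ell^{*}-1)}}p(\cdot)$. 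The whole point of the grouping is that every item of $\overline{\opt}'\cap\Rh$ then has profit $\ge c\cdot p(i^{*})$, which yields $p(\overline{\opt}')\ge c\,p(i^{*})\,|\overline{\opt}'\cap\Rh|$ and hence the second bullet with the explicit factor $c$ and an \emph{absolute} hidden constant. In your construction, an item just above $i^{*}$ can have profit $p(i^{*})+1$, so the best you have is $p(\overline{\opt}')=P_{0}\ge\lambda\,p(i^{*})$ while $|\overline{\opt}'\cap\Rh|<\lambda$. No choice of $\lambda$ (any power of $c$ times $\log(nN)$) can do better than a hidden constant of order $1/c$: raising $\lambda$ improves the bound $p(i^{*})\le P_{0}/\lambda$ and worsens $|\overline{\opt}'\cap\Rh|<\lambda$ at exactly the same rate, so the factor $c$ cancels and your bound collapses to $p(\overline{\opt}')\ge\Omega\big((\log(nN)+|\overline{\opt}'\cap\Rh|)p(i^{*})\big)$ with no useful $c$-dependence. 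That is fatal downstream: the only reason $c$ appears in the lemma is to be a free dial that the caller turns up (at the end of the proof of Lemma~\ref{lem:slices-weighted}) to dominate $\eps$-dependent constants of the box construction; a bound whose constant degrades as $1/c$ cannot be boosted. Your own closing remark that the tension "reconciles only for an appropriate constant power of $c$" is exactly where the argument breaks.

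A smaller but real issue is the ``few items'' fallback: picking $i^{*}$ to be a minimum-profit item of $I$ and asserting it is ``small enough for the second bullet'' does not follow — the minimum profit in $I$ can be comparable to $P_{0}/|\opt'_{0}|$, which is far larger than $P_{0}/(c\lambda)$ when $|\opt'_{0}|\ll c\lambda$. The paper's grouping handles this uniformly because the factor-$c$ gap gives the lower bound regardless of how many items lie above the threshold.
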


\begin{proof}
First, we define $\overline{\opt}'$ such that the fourth property
is satisfied and such that we can group the items in $\R$ into groups
$\left\{ I^{(\ell)}\right\} _{\ell\in\mathbb{N}}$ (there might be
items in $\R$ that are not contained in any group) such that (i)
each item $i\in\overline{\opt}'$ is contained in one group $I^{(\ell)}$,
(ii) within each group $I^{(\ell)}$, the profits of the items differ
at most by a factor $c^{1/\epsilon}$, and (iii) for two items $i,i'$
in two different groups $I^{(\ell)},I^{(\ell')}$ the profits $p(i),p(i')$
differ at least by a factor $c$. Using standard shifting steps one
can show that there exists such a solution $\overline{\opt}'\subseteq\opt'$
with $p(\overline{\opt}')\ge(1-O(\eps))p(\opt')$. Then let $\ell^{*}$
be the smallest integer such that $\left|\overline{\opt}'\cap\bigcup_{\ell\ge\ell^{*}}I^{(\ell)}\right|\le\log(nN)c^{1/\epsilon+2}$.
We define $i^{*}$ to be the most profitable item in $I^{(\ell^{*}-1)}$.
By definition, it holds that $\left|\{i\in I|p(i)>p(i^{*})\}\right|=\left|\overline{\opt}'\cap\bigcup_{\ell\ge\ell^{*}}I^{(\ell)}\right|\le\log(nN)c^{1/\epsilon+2}$
and hence the third property is satisfied.

Regarding the second property, first assume that $\left|\overline{\opt}'\cap\bigcup_{\ell\ge\ell^{*}}I^{(\ell)}\right|\ge\log(nN)c^{2}$.
Then
\begin{align*}
	p(\overline{\opt}') & \ge  c\cdot p(i^{*})\left|\overline{\opt}'\cap\bigcup_{\ell\ge\ell^{*}}I^{(\ell)}\right|\\
	& \ge  p(i^{*})\cdot\left(\frac{c\log(nN)}{2}+\frac{c\left|\overline{\opt}'\cap\bigcup_{\ell\ge\ell^{*}}I^{(\ell)}\right|}{2}\right)\\
	& \ge  \Omega\left(\left(c\log(nN)+\left|\overline{\opt}'\cap\Rh\right|\right)p(i^{*})\right).
\end{align*}

On the other hand, assume that $\left|\overline{\opt}'\cap\bigcup_{\ell\ge\ell^{*}}I^{(\ell)}\right|<\log(nN)c$.
Then it must hold that $\left|\overline{\opt}'\cap I^{(\ell^{*}-1)}\right|\ge\log(nN)c^{1/\epsilon+1}/2$.
We obtain that

\begin{align*}
	p(\overline{\opt}') & \ge  p(i^{*})\frac{1}{2c^{1/\epsilon}}\left|\overline{\opt}'\cap I^{(\ell^{*}-1)}\right|\\
	& \ge  p(i^{*})\frac{1}{2}\cdot c\log(nN)\\
	& \ge  p(i^{*})\cdot c\left(\frac{\log(nN)}{4}+\frac{\left|\overline{\opt}'\cap\bigcup_{\ell\ge\ell^{*}}I^{(\ell)}\right|}{4}\right).\\
	& \ge  \Omega\left(c\left(\log(nN)+\left|\overline{\opt}'\cap\Rh\right|\right)p(i^{*})\right).
\end{align*}

\end{proof}
For each $C\in\tilde{\C}$ such that $C$ is a box, we apply the following
lemma. 
Note that it uses that inside each box $C\in\tilde{\C}$ the items
assigned to $C$ in $\opt'$ fit well inside $C$. For each $C\in\tilde{\C}$
denote by $\overline{\opt}'(C)$ the items of $\overline{\opt}'$
contained in $C$. 
\begin{lem}
\label{lem:partition-boxes-weighted}For each box $C\in\tilde{\C}$
and the items from $\overline{\opt}'(C)$ packed inside it,
there is a partition of $C$ into a set of $O_{\epsilon}(1)$ boxes
$\B(C)$, a set $\opt''(C)\subseteq\overline{\opt}'(C)$, and a partition
$\opt''(C)=\dot{\bigcup}_{B\in\B(C)}\opt''(C,B)$ such that 
\begin{itemize}
\item ${\profit(\opt''(C))\ge(1-\eps)\profit(\overline{\opt}'(C))}$, 
\item for each $B\in\B(C)$ the items in $\opt''(C,B)$ can be nicely packed
into $B$, 
\item the boxes $\B(C)$ can be guessed in time $(nN)^{O_{\epsilon}(1)}$. 
\end{itemize}
\end{lem}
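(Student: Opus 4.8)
The plan is to convert the given feasible packing of $\overline{\opt}'(C)$ inside $C$ into a \emph{container packing}: a partition of $C$ into $O_{\epsilon}(1)$ axis-parallel sub-boxes in which a subset of $\overline{\opt}'(C)$ of profit at least $(1-\epsilon)\profit(\overline{\opt}'(C))$ can be placed nicely. I would use crucially that $\overline{\opt}'(C)$ fits well into $C$: fixing an orientation (treat $C$ as a horizontal piece of width $w:=\width(C)$ and height $h:=\height(C)$), every item $i\in\overline{\opt}'(C)$ has $\height(i)\le\epsilon^{4}h$ and is either \emph{wide} ($\width(i)>\epsl'w$) or \emph{thin} ($\width(i)\le\epss'w$), where $\epss'$ is much smaller than $\epsl'$. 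Since shrinking an item never destroys feasibility of a packing, I would first round the width of each wide item down to the nearest multiple of $\epsilon\epsl'w$; this loses no profit and leaves the wide items with only $O_{\epsilon}(1)$ distinct widths.

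The construction has two ingredients. First, a strip-and-shift step: cut $C$ by horizontal lines into $1/\epsilon^{2}$ strips of height $\epsilon^{2}h$. Because every item has height at most $\epsilon^{4}h=\epsilon^{2}\cdot(\epsilon^{2}h)$, each item straddles a strip boundary for only an $O(\epsilon^{2})$ fraction of the $\Theta(1/\epsilon^{2})$ offsets of this family of lines, so by averaging there is an offset for which the items crossing a boundary carry at most an $\epsilon$ fraction of $\profit(\overline{\opt}'(C))$; I would discard exactly those items, so that every surviving item lies in a single strip. Second, I would repack within each strip. Inside a strip the wide items have only $O_{\epsilon}(1)$ distinct widths and height at most $\epsilon^{2}$ times the strip height, so the standard container-repacking machinery (grouping wide items by width and stacking, exactly as in the corridor-to-box arguments underlying~\cite{GGHIKW17,AW15}) rearranges them into $O_{\epsilon}(1)$ nice sub-boxes while forfeiting only an $\epsilon$ fraction of their profit; the thin items of a strip are small in \emph{both} coordinates relative to the strip, so a further short shift frees a slice of negligible profit into which, using NFDH (Theorem~\ref{thm:nfdhPack}), they are collected into $O_{\epsilon}(1)$ more nice sub-boxes. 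Taking the union over the $1/\epsilon^{2}$ strips yields the boxes $\B(C)$, the subset $\opt''(C)$, and its partition $\opt''(C)=\dot{\bigcup}_{B\in\B(C)}\opt''(C,B)$. Finally, since $|\B(C)|=O_{\epsilon}(1)$ and each box has corners with integer coordinates in $[0,N]^{2}$, all of $\B(C)$ can be enumerated in time $N^{O_{\epsilon}(1)}$.

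I expect the main obstacle to be carrying the \emph{profit} guarantee through these repackings, since a priori a constant number of wide items, or a set of thin items of tiny total area, could account for almost all of $\profit(\overline{\opt}'(C))$. This is exactly why the strip shift must be charged against profit rather than against area, and why the wide items' widths are rounded by \emph{shrinking only}, so that no item is ever lost to rounding — only the already-feasible packing is relaxed. The second point requiring care is keeping the number of sub-boxes at $O_{\epsilon}(1)$ throughout the within-strip repacking; this closes because $\epsl'$ and $\epss'$ are constants depending only on $\epsilon$ (guaranteed by Lemma~\ref{lem:corridor-partition-improved}), which bounds both the number of width classes of the wide items and the depth of the stacking recursion.
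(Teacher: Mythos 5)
Your proof reaches the right conclusion, but takes a noticeably different (and heavier) route than the paper's. The paper's argument uses \emph{only} the height bound $\height(i)\le\eps^4 h(C)$ from the fits-well condition: it partitions $C$ into $2/\eps$ horizontal strips of height $\tfrac{\eps}{2}h(C)$, notes that each item meets at most two strips, so by averaging some strip is hit by items of total profit at most $\eps\cdot\profit(\overline{\opt}'(C))$; it removes those, compresses the survivors into a box of size $(1-\tfrac{\eps}{2})h(C)\times w(C)$, and then invokes the Resource Augmentation Packing Lemma (Lemma~\ref{lem:augmentPack}) as a black box to convert this into a nice packing in $O_\eps(1)$ boxes back inside $C$. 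Your proof instead exploits the full fits-well condition (the wide-vs-thin width dichotomy as well), shifts a family of $1/\eps^2$ thinner strips, and then re-derives the container packing by hand inside each strip — width rounding, grouping by width class, NFDH for the thin items. The step ``standard container-repacking machinery rearranges them into $O_\eps(1)$ nice sub-boxes while forfeiting only an $\eps$ fraction'' is exactly what Lemma~\ref{lem:augmentPack} already gives you, so you are essentially re-proving a special case of it rather than citing it; the paper's route is shorter, needs no random grid offset (with $2/\eps$ fixed strips the averaging is immediate), and never touches the widths. If you spell out the per-strip container argument in full you will arrive at essentially the same rounding/grouping/LP-style argument that underlies the resource-augmentation lemma, so while the high-level plan is sound you should either invoke Lemma~\ref{lem:augmentPack} directly (as the paper does) or be aware that the hand-rolled step is not as ``standard'' as the phrasing suggests and carries the bulk of the work.
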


\begin{proof}
W.l.o.g. assume that box $C\in\C$ is horizontal, the case of a vertical
box being analogous. Since the items from $\overline{\opt}'(C)$ fit
well inside $C$, we have that the height of each of these items is
at most $\eps^{4}\cdot h(C)$. We partition the box into $2/\eps$
horizontal strips of height $\frac{\eps}{2}h(C)$ each, assuming $2/\eps\in\mathbb{N}$.
Notice that since every item has height at most $\eps^{4}h(C)$, each
item intersects at most two such strips. So there exists a strip $S$
such that the items from $\opt'(C)$ that intersect $S$ have total
profit at most $\eps p(\overline{\opt}'(C))$. We remove all these
items intersected by $S$, and let $\opt_{1}(C)$ be the remaining
items. As mentioned before, $\profit(\opt_{1}(C))\ge(1-\eps)p(\overline{\opt}'(C))$.

Now we have a completely empty strip of height $\frac{\eps}{2}h(C)$
which implies that there exists a packing of $\opt_{1}(C)$ into a
box of size $(1-\frac{\eps}{2})h(C)\times w(C)$. We can now use resource
augmentation (Lemma \ref{lem:augmentPack}) to obtain a nice packing
of $\opt''(C)\subseteq\opt_{1}(C)$ inside $h(C)\times w(C)$ into
$O_{\epsilon}(1)$ boxes $\B'(C)$ such that $p(\opt''(C))\ge(1-O(\eps))p(\overline{\opt}'(C))$.
The obtained boxes can be guessed in time $(nN)^{O_{\eps}(1)}$. 
\end{proof}
Next, for each $C\in\tilde{\C}$ that is not a box, we take each acute
piece $P\in\P(C)$ and apply the following lemma to $P$. 

\begin{lem}
\label{lem:partition-acute-pieces-weighted}
Let $C\in\tilde{\C}$ be a corridor and let $P\in\P(C)$ be an acute
piece. Then in time $(nN)^{O_{\epsilon}(1)}$ we can guess a set of
$O_{\epsilon}(1)$ pairwise disjoint boxes $\tilde{\B}'(P)$ {such
that there is an acute piece $P'\subseteq P$ and} 
\begin{itemize}
\item for each $B\in\tilde{\B}'(P)$ we have that $B\subseteq P$ and $B\cap P'=\emptyset$, 
\item there are pairwise disjoint sets $\overline{\opt}_{1}''(P)\subseteq\overline{\opt}'(P)$,
$\overline{\opt}_{2}''(P)\subseteq\overline{\opt}'(P)$ with 
\begin{itemize}
\item $p(\overline{\opt}_{1}''(P)\cup\overline{\opt}''{}_{2}(P))\ge(1-\epsilon^{2})\left(p(\overline{\opt}'(P))\right)$, 
\item the items in $\overline{\opt}_{1}''(P)$ can be nicely placed inside
$\tilde{\B}'(P)$,%

{} and 
\item the items in $\overline{\opt}_{2}''(P)$ can be nicely placed inside
$P'$. 
\end{itemize}
\end{itemize}
\end{lem}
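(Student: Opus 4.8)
Assume w.l.o.g.\ that $P$ is a horizontal acute piece, the vertical case being symmetric. Since the items of $\overline{\opt}'(P)$ fit well into $P$, each such item $i$ satisfies $\height(i)\le\eps^{4}\height(P)$ and either $\width(i)>\epsl'\width(P)$ (the \emph{wide} items, forming a set $H$) or $\width(i)\le\epss'\width(P)$ (the \emph{narrow} items, forming a set $S$), so that $\overline{\opt}'(P)=H\,\dot{\cup}\,S$. The plan is to pack almost all of $H$ into $O_{\eps}(1)$ axis-parallel boxes $\tilde{\B}'(P)$, to let $P'$ be essentially what remains of $P$ once these boxes are carved out, and to pack almost all of $S$ into $P'$; correspondingly we take $\overline{\opt}_{1}''(P)\subseteq H$ and $\overline{\opt}_{2}''(P)\subseteq S$, which are then automatically disjoint.

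\emph{Wide items.} Fix the given packing of $\overline{\opt}'(P)$ inside $P$. First I would free an empty horizontal strip: partition the vertical extent of $P$ into sufficiently many (but $O_{\eps}(1)$) equal-height horizontal strips; since every item of $H$ has height at most $\eps^{4}\height(P)$, each item meets at most two consecutive strips, so by a standard shifting argument one strip carries incident wide items of total profit at most $\eps^{2}\profit(\overline{\opt}'(P))/4$, and we discard them. The surviving wide items are then packed inside $P$ minus a free horizontal strip whose height is an $\Omega_{\eps}(1)$-fraction of $\height(P)$. Next I would use this vertical slack, the acuteness of $P$ (so that each horizontal slice of $P$ is a single interval whose two endpoints are monotone in the height), the relative thinness of $P$, and the fact that every remaining item has height at most $\eps^{4}\height(P)$, to ``straighten'' $P$ into an honest box and invoke the resource-augmentation packing (Lemma~\ref{lem:augmentPack}), obtaining a nice packing of a subset $\overline{\opt}_{1}''(P)\subseteq H$ with $\profit(\overline{\opt}_{1}''(P))\ge\profit(H)-\eps^{2}\profit(\overline{\opt}'(P))/4$ into $O_{\eps}(1)$ axis-parallel boxes $\tilde{\B}'(P)\subseteq P$. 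Since all box corners have integral coordinates in $[0,N]$ and there are only $O_{\eps}(1)$ boxes, we can guess $\tilde{\B}'(P)$ in time $N^{O_{\eps}(1)}$.

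\emph{Narrow items.} Let $P'$ be $P$ with the boxes $\tilde{\B}'(P)$ and a thin safety margin removed. Because the boxes are axis-parallel and $P$ is a relatively thin acute piece, $P'$ is again an acute piece contained in $P$. Each narrow item has width at most $\epss'\width(P)$ and height at most $\eps^{4}\height(P)$, hence negligible area; since the free space of the original packing around the axis-parallel boxes still has area at least a $(1-O(\eps))$-fraction of the area available to $S$ and is essentially a thin acute region, an area/NFDH argument (Theorem~\ref{thm:nfdhPack}) repacks a subset $\overline{\opt}_{2}''(P)\subseteq S$ with $\profit(\overline{\opt}_{2}''(P))\ge\profit(S)-\eps^{2}\profit(\overline{\opt}'(P))/2$ nicely into $P'$. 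Adding the two bounds gives $\profit(\overline{\opt}_{1}''(P)\cup\overline{\opt}_{2}''(P))\ge(1-\eps^{2})\profit(\overline{\opt}'(P))$, which is the claimed guarantee.

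\emph{Main obstacle.} The delicate step is the treatment of the wide items: unlike the unweighted Lemma~\ref{lem:partition-subcorridors} we cannot cut items into slices, and the boundary of $P$ is a genuine monotone staircase rather than a rectangle, so reducing the wide-item packing to a packing inside an honest box (as required to apply resource augmentation) must exploit acuteness and thinness carefully, with the $\eps^{4}$-smallness of item heights and the freed strip absorbing the discrepancy between $P$ and its axis-parallel bounding box (which is only an $\eps$-fraction of $\width(P)$ in horizontal span, but need not be small relative to a single wide item). Everything else — the strip-shifting, the $O_{\eps}(1)$ box count, the integer-coordinate guessing, and the NFDH repacking of the narrow items — is routine and parallels the proofs of Lemmas~\ref{lem:partition-boxes-weighted} and \ref{lem:partition-subcorridors}.
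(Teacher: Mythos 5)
Your decomposition is the opposite of the paper's, and this creates two genuine gaps. In the paper's proof, $P'$ is a \emph{bottom truncation} of the acute piece $P$ (the part of $P$ below a chosen $y$-coordinate), and it is $P'$ — not the boxes — that receives the very wide items: those with $\width(i)\ge w'(P)$, where $w'(P)$ is the length of the \emph{narrow} parallel edge of $P$. Such items span the whole horizontal extent of $P'$ at every height, so they can be stacked, which makes $P'$ a genuine acute piece trivially. The rectangular boxes $B_0,\ldots,B_{1/\eps^2-1}$ are placed \emph{inside} $P$ (maximally wide at each height level so they are contained in $P$ by construction) and receive everything else: the items with $\width(i)< w'(P)$ from the bottom part are relocated upward into a cheap group $\B_{k^*}$ of boxes via Steinberg's theorem, the items above the chosen empty stripe stay where they are, and Lemma~\ref{lem:partition-boxes-weighted} (resource augmentation per rectangular box) nicely packs the contents of each $B_j$. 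You instead want to put the wide items $H$ into $\tilde\B'(P)$ and the narrow/small items $S$ into $P'$, which you define as ``$P$ minus the boxes''.

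This runs into two concrete obstacles that the paper's construction is designed to avoid. First, $P$ with $O_\eps(1)$ arbitrary axis-parallel rectangles (coming out of resource augmentation) carved out of it is \emph{not} an acute piece — it can have holes and non-monotone boundary — whereas the lemma demands $P'\subseteq P$ be an acute piece disjoint from every $B\in\tilde\B'(P)$; the paper gets this for free by making $P'$ a horizontal truncation of $P$ and placing all the boxes in the complementary truncation. Second, you want to ``straighten'' $P$ into a rectangle and apply Lemma~\ref{lem:augmentPack} to all of $H$ at once, but the boxes the resource-augmentation lemma produces live in the \emph{bounding box} of $P$, not in $P$; near the narrow edge $e_2$ the discrepancy is an $\eps$-fraction of $w(P)$, which is \emph{not} absorbed by the freed strip (that strip is a height perturbation, while the problem is a width overhang), and it can exceed the width of a wide item, so a resource-augmentation box can genuinely protrude outside $P$. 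The paper never needs to straighten $P$: each $B_j$ is a rectangle inside $P$ to begin with, resource augmentation is applied \emph{per} $B_j$, and the only items that do not fit inside their local $B_j$ — those narrower than $w'(P)$ near the narrow end — are absorbed elsewhere by Steinberg's theorem. You flag the straightening issue in your ``main obstacle'' paragraph, but what you propose does not resolve it, and the $P'$-is-an-acute-piece issue is not addressed at all.
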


We apply Lemma~\ref{lem:partition-acute-pieces-weighted}
to each of the $O_{\epsilon}(1)$ acute pieces in $\P:=\bigcup_{C\in\tilde{\C}}\P(C)$.
Let $\tilde{\B}'$ denote the union of all boxes that we guessed,
i.e., $\tilde{\B}':=\cup_{P\in\P:P\,\mathrm{is\,acute}}\tilde{\B}'(C)$.
Similarly, we define $\overline{\opt}_{1}'':=\cup_{P\in\P:P\,\mathrm{is\,acute}}\overline{\opt}_{1}''(P)$,
$\overline{\opt}_{2}'':=\cup_{P\in\P:P\,\mathrm{is\,acute}}\overline{\opt}''{}_{2}(P)$,
and $\overline{\opt}'':=\overline{\opt}_{1}''\cup\overline{\opt}_{2}''$.
Recall that due to Lemma~\ref{lem:partition-knapsack-improved-new}
no obtuse piece $P$ intersects with an item that is small compared
to $P$.

Next, we guess a partition for $\Rh$ into sets $\Rhone,\Rhtwo$ such
that $\Rhone$ is a superset of the items in $\Rh\cap\overline{\opt}_{1}''$
and similarly $\Rhtwo$ is a superset of the other items in $\Rh\cap\overline{\opt}_{2}''$.
\begin{lem} \label{guess-R-prime-weighted}
In time $2^{O(|\overline{\opt}''\cap\Rh|)}n^{O(1)}$ we can guess
sets $\Rhone,\Rhtwo$ such that 
\begin{itemize}
\item $\Rh=\Rhone\dot{\cup}\Rhtwo$, 
\item $\overline{\opt}''{}_{1}\cap\Rh\subseteq\Rhone$, and 
\item $\overline{\opt}''{}_{2}\cap\Rh\subseteq\Rhtwo$. 
\end{itemize}
\end{lem}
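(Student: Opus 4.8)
The plan is to apply color coding, exactly as in the proof of Lemma~\ref{lem:color-coding-LU} and of Lemma~\ref{lem:assign-items-corridors-improved}. Set $\gamma^{*}:=|\overline{\opt}''\cap\Rh|$. Since $\overline{\opt}''\subseteq\overline{\opt}'$, the third bullet of Lemma~\ref{lem:shifting-weighted} gives $\gamma^{*}\le O_c(\log(nN))$. Recall that $\overline{\opt}''=\overline{\opt}_{1}''\cup\overline{\opt}_{2}''$ with $\overline{\opt}_{1}'',\overline{\opt}_{2}''$ disjoint (they are unions over acute pieces of the disjoint sets provided by Lemma~\ref{lem:partition-acute-pieces-weighted}); hence $\overline{\opt}_{1}''\cap\Rh$ and $\overline{\opt}_{2}''\cap\Rh$ bipartition the $\gamma^{*}$-element set $\overline{\opt}''\cap\Rh$, and this bipartition is precisely the information we must ``guess'' (in the superset sense demanded by the lemma). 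Since $\gamma^{*}$ is unknown to the algorithm, the outer loop will try every candidate $\gamma\in\{0,1,\dots\}$ (it suffices to go up to the known bound $O_c(\log(nN))$), charging the running time to the iteration $\gamma=\gamma^{*}$.

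For a fixed candidate $\gamma$, I would colour every item of $\Rh$ independently and uniformly at random with a colour in $\{1,\dots,\gamma\}$, giving color classes $\Rh=\Rh^{(1)}\dot{\cup}\cdots\dot{\cup}\Rh^{(\gamma)}$. When $\gamma=\gamma^{*}$, call the colouring \emph{successful} if the $\gamma^{*}$ items of $\overline{\opt}''\cap\Rh$ receive pairwise distinct colours; this happens with probability $\gamma^{*}!/(\gamma^{*})^{\gamma^{*}}\ge e^{-\gamma^{*}}$, and then there is an (unknown) map $\sigma^{*}:\{1,\dots,\gamma^{*}\}\to\{1,2\}$ such that the unique colour-$d$ item of $\overline{\opt}''\cap\Rh$ lies in $\overline{\opt}_{\sigma^{*}(d)}''$. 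I would then enumerate all $2^{\gamma}$ maps $\sigma:\{1,\dots,\gamma\}\to\{1,2\}$, and for each set $\Rhone:=\bigcup_{d:\sigma(d)=1}\Rh^{(d)}$ and $\Rhtwo:=\bigcup_{d:\sigma(d)=2}\Rh^{(d)}$. Every such choice gives $\Rh=\Rhone\dot{\cup}\Rhtwo$, and the choice with $\gamma=\gamma^{*}$, a successful colouring, and $\sigma=\sigma^{*}$ additionally satisfies $\overline{\opt}_{1}''\cap\Rh\subseteq\Rhone$ and $\overline{\opt}_{2}''\cap\Rh\subseteq\Rhtwo$, as required. To derandomize, I would replace the random colouring by an $(|\Rh|,\gamma)$-perfect family of hash functions (a splitter) of size $2^{O(\gamma)}\log|\Rh|=2^{O(\gamma)}n^{O(1)}$~\cite{alon1995color,NSS95}, at least one member of which is injective on any fixed $\gamma$-subset, in particular on $\overline{\opt}''\cap\Rh$ when $\gamma=\gamma^{*}$. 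Ranging over this family, over the $2^{\gamma}$ maps $\sigma$, and over $\gamma\le O_c(\log(nN))$ produces a list of at most $2^{O(\gamma^{*})}n^{O(1)}=(nN)^{O_c(1)}$ candidate pairs $(\Rhone,\Rhtwo)$ in time $2^{O(\gamma^{*})}n^{O(1)}$, one of which meets all three stated conditions.

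I do not anticipate a genuine obstacle here, since this is essentially a repetition of a technique already applied twice in the paper. The one point that needs care is that the governing parameter $\gamma^{*}=|\overline{\opt}''\cap\Rh|$ is not available to the algorithm; this is handled by the outer enumeration over candidate values of $\gamma$ together with the bound $\gamma^{*}\le O_c(\log(nN))$ from Lemma~\ref{lem:shifting-weighted}, which also guarantees that the parameterized factor $2^{O(\gamma^{*})}$ never exceeds $(nN)^{O_c(1)}$ and hence that the subroutine stays within the intended pseudo-polynomial running time.
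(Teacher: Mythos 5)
Your proof is correct and follows the paper's own argument essentially verbatim: random colouring of $\Rh$ with $|\overline{\opt}''\cap\Rh|$ colours, guessing the colour classes belonging to $\overline{\opt}_1''\cap\Rh$ in time $2^{O(|\overline{\opt}''\cap\Rh|)}$, and derandomizing via a perfect hash family as in \cite{NSS95}. You are somewhat more careful than the paper in spelling out that the parameter $\gamma^*=|\overline{\opt}''\cap\Rh|$ must be enumerated and bounded via Lemma~\ref{lem:shifting-weighted}, but the approach is the same.
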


\begin{proof}
We will use color coding with
the items in $\Rh$, meaning that we will color all these items randomly
using $|\overline{\opt}''\cap\Rh|$ colors. It is possible to show
that the items in $\overline{\opt}''\cap\Rh$ receive different colors
with probability $1/2^{O(|\overline{\opt}''\cap\Rh|)}$~\cite{cygan2015parameterized}.
If we repeat this coloring procedure $2^{O(|\overline{\opt}''\cap\Rh|)}$
times we can ensure that with high probability one of these runs assigns
different colors to the items in $\overline{\opt}''\cap\Rh$. Let
$\{\R^{(j)}\}{j=1,\dots,|\opt|}$ be the partition induced by the
coloring, we will now guess the color classes of the items in $\overline{\opt}''{}_{1}\cap\Rh$
in time $2^{|\overline{\opt}''\cap\Rh|}$ and let the union of the
items in $\Rh$ having these colors be $\Rhone$, while the rest of
the items will be $\Rhtwo$. It is not difficult to see that the required
properties are satisfied for these sets, and this procedure can also
be derandomized using standard techniques~\cite{NSS95} in time $2^{O(|\opt|)}n^{O(1)}$. 
\end{proof}
To the items in $\Rl$ we apply now a similar procedure as we did
to the items in $\Rsk$ in the unweighted case when $|\opt|>O_{\epsilon}(\log N)$.
For simplicity, partition the items in $\Rl$ into two sets $\Rho,\Rve$
such that $\Rho$ contains each item $i\in\Rl$ such that $\height(i)\le\width(i)$
and $\Rve:=\Rl\setminus\Rho$. We group the items in $\Rl$ into $O_{\epsilon}(\log(nN))$
groups where we group the items in $\Rho$ (in $\Rve$) according
to their densities which are defined as the ratio between their profit
and heights (widths). Formally, for each $\ell\in\mathbb{Z}$ we define
$\Rho^{(\ell)}:=\{i\in\Rl\cap\Rho|\frac{p(i)}{\height(i)}\in[(1+\epsilon)^{\ell},(1+\epsilon)^{\ell+1})\}$
and $\Rve^{(\ell)}:=\{i\in\Rl\cap\Rve|\frac{p(i)}{\width(i)}\in[(1+\epsilon)^{\ell},(1+\epsilon)^{\ell+1})\}$
and observe that for only $O(\log(nN)/\epsilon)$ values $\ell$ the
respective sets $\Rho^{(\ell)},\Rve^{(\ell)}$ are non-empty. Intuitively,
for each $\ell$ the items in $\Rho^{(\ell)}$ (in $\Rve^{(\ell)}$)
essentially all have the same density. Now, for each group $\Rho^{(\ell)}$,
$\Rve^{(\ell)}$ we guess an estimate for $p\left(\Rho^{(\ell)}\cap\opt\right)$
and $p\left(\Rve^{(\ell)}\cap\opt\right)$, respectively. We do this
in time $(nN)^{O_{\epsilon}(1)}$ in a similar fashion as in Lemma~\ref{lem:guess-slices},
using a technique from \cite{chekuri2005polynomial}. 
\begin{lem}
In time $(nN)^{O_{\epsilon}(1)}$ we can guess values $\mathrm{opt}_{hor}^{(\ell)},\mathrm{opt}_{ver}^{(\ell)}$
for each $\ell$ with $\Rho^{(\ell)}\cup\Rve^{(\ell)}\ne\emptyset$
such that 
\begin{itemize}
\item $\sum_{\ell}\mathrm{opt}_{hor}^{(\ell)}+\mathrm{opt}_{ver}^{(\ell)}\ge(1-O(\eps))p(\overline{\opt}'')$
and 
\item $\mathrm{opt}_{hor}^{(\ell)}\le p\left(\overline{\opt}''\cap\Rho^{(\ell)}\right)$
and $\mathrm{opt}_{ver}^{(\ell)}\le p\left(\overline{\opt}''\cap\Rho^{(\ell)}\right)$
for each $\ell$. 
\end{itemize}
\end{lem}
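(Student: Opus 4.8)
The plan is to adapt, essentially verbatim, the parallel-guessing argument of Lemma~\ref{lem:estimate-sizes-1} (which is reused in Lemma~\ref{lem:guess-slices}), replacing cardinalities by profits and the height/width classes by the density classes $\Rho^{(\ell)},\Rve^{(\ell)}$. First I would guess the single number $\Pi:=p(\overline{\opt}'')$: since all profits are integers polynomially bounded in $n$, this quantity is a non-negative integer at most $n\cdot\max_{i\in\R}p(i)=n^{O(1)}$, so there are only $n^{O(1)}$ candidates and we can try all of them; from now on assume $\Pi$ has been guessed correctly.

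Next I would fix the quantization. Let $K$ be the number of indices $\ell$ for which $\Rho^{(\ell)}$ is nonempty plus the number of indices $\ell$ for which $\Rve^{(\ell)}$ is nonempty; as observed just before the statement, $K=O(\log(nN)/\epsilon)$ because the densities $p(i)/\height(i)$ and $p(i)/\width(i)$ lie in an interval of the form $[N^{-1},n^{O(1)}]$. Put $\delta:=\frac{\epsilon}{K}\Pi$, and define $\mathrm{opt}_{hor}^{(\ell)}$ to be the largest integer multiple $k_{hor}^{(\ell)}\cdot\delta$ of $\delta$ not exceeding $p(\overline{\opt}''\cap\Rho^{(\ell)})$, and symmetrically $\mathrm{opt}_{ver}^{(\ell)}:=k_{ver}^{(\ell)}\cdot\delta$ the largest multiple of $\delta$ not exceeding $p(\overline{\opt}''\cap\Rve^{(\ell)})$ (with the convention that empty groups contribute $0$). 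The second bullet of the statement then holds by construction (note that the bound claimed there for $\mathrm{opt}_{ver}^{(\ell)}$ should read $p(\overline{\opt}''\cap\Rve^{(\ell)})$). For the first bullet I would use that the sets $\overline{\opt}''\cap\Rho^{(\ell)}$ and $\overline{\opt}''\cap\Rve^{(\ell)}$ partition $\overline{\opt}''$, so the total loss incurred by rounding down is at most $K\delta=\epsilon\Pi=\epsilon\,p(\overline{\opt}'')$, giving
\[
\sum_{\ell}\left(\mathrm{opt}_{hor}^{(\ell)}+\mathrm{opt}_{ver}^{(\ell)}\right)\ \ge\ p(\overline{\opt}'')-K\delta\ =\ (1-\epsilon)\,p(\overline{\opt}'')\ \ge\ (1-O(\epsilon))\,p(\overline{\opt}'').
\]

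For the running time, the key point (exactly as in Lemma~\ref{lem:estimate-sizes-1}) is that, although each individual value admits $\Theta(K/\epsilon)$ choices and there are $\Theta(K)$ of them, the non-negative integers $k_{hor}^{(\ell)},k_{ver}^{(\ell)}$ jointly satisfy $\sum_{\ell}\bigl(k_{hor}^{(\ell)}+k_{ver}^{(\ell)}\bigr)\le \Pi/\delta=K/\epsilon$, because $\sum_{\ell}(\mathrm{opt}_{hor}^{(\ell)}+\mathrm{opt}_{ver}^{(\ell)})\le p(\overline{\opt}'')=\Pi$. Hence I would encode the whole tuple $(k_{hor}^{(\ell)},k_{ver}^{(\ell)})_{\ell}$ as a single binary string in the same way as in that lemma — each $k^{(\ell)}$ as a block of $k^{(\ell)}$ zero-bits followed by a single $1$-bit, concatenated in the order of $\ell$ (first the horizontal values, then the vertical ones) — of total length at most $K+K/\epsilon=O(\log(nN)/\epsilon^{2})$, so there are only $2^{O(\log(nN)/\epsilon^{2})}=(nN)^{O_{\epsilon}(1)}$ strings to enumerate. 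Combined with the $n^{O(1)}$ guesses for $\Pi$, the overall running time is $(nN)^{O_{\epsilon}(1)}$.

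I do not expect a genuine obstacle here: the construction is a direct profit-weighted restatement of the cardinality version, and the only mild point that needs attention is that the quantization step $\delta$ must be taken relative to a quantity (here $p(\overline{\opt}'')$) that we can afford to guess and that is large enough that $K\delta$ is only an $O(\epsilon)$-fraction of the target — precisely the role played by $|\opt'|$ in Lemma~\ref{lem:estimate-sizes-1}.
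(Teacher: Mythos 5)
Your proof is correct and takes essentially the same route as the paper: you quantize each group-profit to multiples of $\delta = \Theta(\epsilon\, p(\overline{\opt}'')/K)$ and then use the Chekuri--Khanna unary-block encoding of the tuple $(k^{(\ell)})_\ell$ into a single $O(\log(nN)/\epsilon^2)$-bit string, exactly as in Lemma~\ref{lem:estimate-sizes-1}. The only cosmetic deviation is that you guess $\Pi=p(\overline{\opt}'')$ exactly (using that profits are polynomially bounded integers, so there are $n^{O(1)}$ candidates), whereas the paper guesses only a $(1+\epsilon)$-approximation $P^*$ via $O(\log_{1+\epsilon} n)$ candidates; both variants fit in the stated running time and lead to the same error analysis, and your observation about the typo ($\Rho^{(\ell)}$ should read $\Rve^{(\ell)}$ in the second bullet) is also correct.
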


\begin{proof}
First, as $p(\overline{OPT}'')$ is a value between $p_{max}$ and
$n\cdot p_{max}$ (with $p_{max}$ the maximum profit among all items)
we can guess a $(1+\epsilon)$-approximation $P^{*}$ for $p(\overline{OPT}'')$
in time $O(\log_{1+\epsilon}n)$, i.e., $(1-\epsilon)p(\overline{OPT}'')\le P^{*}\le p(\overline{OPT}'')$.
Assume w.l.o.g. that for some $\hat{L}=O(\log(nN)/\epsilon)$ it holds
that for each $\ell\notin[\hat{L}]$ the sets $\Rho^{(\ell)}$ and
$\Rve^{(\ell)}$ are empty.

Now, for each $\ell\in[\hat{L}]$ set $\hat{k}_{hor}^{(\ell)}$ to
be the biggest integer such that $\hat{k}_{hor}^{(\ell)}\frac{\eps}{\hat{L}}\cdot P^{*}\leq p(\overline{OPT}''\cap I_{hor}^{\ell})$.
We define $\hat{k}_{ver}^{(\ell)}$ accordingly. We guess all $O(\hat{L})$
values $\hat{k}_{hor}^{(\ell)},\hat{k}_{ver}^{(\ell)}$ for each $\ell$
in a similar way as in the proof of Lemma~\ref{lem:guess-slices},
adopting an argumentation in \cite{chekuri2005polynomial}. Let $w_{hor}$
be a binary string with at most $O(\hat{L}/\eps)$ digits, out of
which exactly $\hat{L}+1$ entries are 1's and for each $\ell$ the
substring of $w_{hor}$ between the $\ell$-th and $(\ell+1)$-th
1 has exactly $k_{hor}^{(\ell)}$ digits (all 0's). We can guess $w_{hor}$
in time $2^{O(\hat{L}/\epsilon)}=(nN)^{O(1/\epsilon)}$, infer all
values $\hat{k}_{hor}^{(\ell)}$ from it, and define $\mathrm{opt}_{hor}^{(\ell)}=\hat{k}_{hor}^{(\ell)}\cdot\eps P^{*}/\hat{L}$
for each $\ell$. We have that $p(\overline{\opt}'')-\left(\sum_{\ell}opt_{hor}^{(\ell)}+opt_{ver}^{(\ell)}\right)\le O(\epsilon)p(\overline{\opt}'')$
since we make a mistake of at most $\frac{\eps}{\hat{L}}\cdot P^{*}\le\frac{\eps}{\hat{L}}\cdot p(\overline{\opt}'')$
for each of the $\hat{L}$ groups.

We guess the values $\mathrm{opt}_{ver}^{(\ell)}$ in a similar fashion. 
\end{proof}
In other words, in order to obtain a $(1+\epsilon)$-approximation
for each $\ell$ it suffices to pack items from $\Rho^{(\ell)}$ with
a total profit of $\mathrm{opt}_{hor}^{(\ell)}$ (and hence of a total
height of $\frac{\mathrm{opt}_{hor}^{(\ell)}}{(1+\epsilon)^{\ell}}$)
and items from $\Rve^{(\ell)}$ with a total profit of $\mathrm{opt}_{ver}^{(\ell)}$
(and hence of a total height of $\frac{\mathrm{opt}_{ver}^{(\ell)}}{(1+\epsilon)^{\ell}}$).
We guess now boxes for the items in the sets $\Rho^{(\ell)},\Rve^{(\ell)}$
in a similar way as we had done it for the items in $\Rsk$ in the
unweighted case.

Intuitively, for each $\ell$ we slice the items in $\Rho^{(\ell)}\cap\overline{\opt}''$
into horizontal slices of height 1 and profit $(1+\epsilon)^{\ell}$
each, and we want to pack $\frac{\mathrm{opt}_{hor}^{(\ell)}}{(1+\epsilon)^{\ell}}$
of these slices. We select the $\frac{\mathrm{opt}_{hor}^{(\ell)}}{(1+\epsilon)^{\ell}}$
shortest slices among them and round their widths to at most $1/\epsilon$
different values via linear grouping, losing a factor of at most $1+\epsilon$
in the profit. We do a similar operation for each set $\Rve^{(\ell)}$.
The way to place the slices and group them into containers is similar
to the unweighted case. We use
the property $p(\overline{\opt}')\ge\Omega\left(c\left(|\tilde{\C}|\log(nN)+\left|\overline{\opt}'\cap\Rh\right|\right)p(i^{*})\right)$
from Lemma~\ref{lem:shifting-weighted} in order to argue that our
guessed boxes yield enough profit, if we choose $c=O_{\epsilon}(1)$
appropriately.%

\begin{lem}
\label{lem:slices-weighted} Given $\epsilon$, we can compute a value
for $c$ with $c=O_{\epsilon}(1)$ such that in time $(nN)^{O_{\epsilon}(1)}$
we can guess the sizes of $O(|\tilde{\C}|\log N/\epsilon)$ boxes
$\tilde{\B}''$ such that 
\begin{itemize}
\item one can nicely place a set of items $\bar{\R}\subseteq\Rl\cup\Rhone$
with $p(\bar{\R})\ge(1-\epsilon^{2})p(\overline{\opt}'\cap(\Rl\cup\Rhone))$
into $\tilde{\B}'\cup\tilde{\B}''$, and 
\item one can place the boxes $\tilde{\B}'\cup\tilde{\B}''$ together with
the items in $\overline{\opt}''\cap\Rhtwo$ non-overlappingly into
the corridors in $\tilde{\C}$. 
\end{itemize}
\end{lem}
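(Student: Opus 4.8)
The plan is to lift the slicing argument of Section~\ref{sec:large-OPT} to the weighted, spiral/2-spiral setting. We start from the packing of $\overline{\opt}''$ guaranteed by Lemmas~\ref{lem:partition-boxes-weighted} and \ref{lem:partition-acute-pieces-weighted} together with Lemma~\ref{lem:partition-knapsack-improved-new}: the items $\overline{\opt}_1''$ are nicely packed into the already-guessed boxes $\tilde{\B}'$, the items $\overline{\opt}_2''$ lie in the sub-acute-pieces $P'\subseteq P$, and inside each obtuse piece the items of $\opt'$ fit well and none of them is small relative to its piece. The boxes $\tilde{\B}'$, the sub-acute-pieces $P'$, the obtuse pieces, and the box-corridors of $\tilde{\C}$ (handled by Lemma~\ref{lem:partition-boxes-weighted}) are pairwise disjoint inside $K$, so it suffices to carve the new boxes $\tilde{\B}''$ out of the union of the $P'$'s and the obtuse pieces and to keep the remaining free space for the ``lonely'' high-profit items $\overline{\opt}''\cap\Rhtwo$.

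Inside this free space I would run the weighted analog of the slicing pipeline. For every density class $\ell$ (only $O(\log(nN)/\epsilon)$ are non-empty) I slice the items of $\overline{\opt}''\cap\Rl$ in $\Rho^{(\ell)}$ (resp.\ $\Rve^{(\ell)}$) into unit-height (unit-width) slices of profit $(1+\epsilon)^{\ell}$, keep the shortest $\mathrm{opt}_{hor}^{(\ell)}/(1+\epsilon)^{\ell}$ (resp.\ $\mathrm{opt}_{ver}^{(\ell)}/(1+\epsilon)^{\ell}$) of them so they still fit into the space occupied by $\overline{\opt}''\cap\Rho^{(\ell)}$ (resp.\ $\Rve^{(\ell)}$), and round their widths to $\le 1/\epsilon$ values by linear grouping at a $1+\epsilon$ loss, in analogy with Lemmas~\ref{lem:place-pre-slices}--\ref{lem:place-slices}. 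Each $P'$ and each obtuse piece is a single subcorridor which, by (the analog of) Lemma~\ref{lem:corridor-pieces-1}, is nice for the resulting slices, so I can apply the structuring of Lemmas~\ref{lem:partition-subcorridors} and \ref{lem:slices-structured} to pack essentially all slices of each class into $O_\epsilon(1)$ boxes plus one sub-sub-piece (with a long edge on the piece), and then chain the sub-sub-pieces along each corridor exactly as in the slice-placement step of Section~\ref{sec:large-OPT}. Taking the union of all these boxes over all pieces and density classes gives $\tilde{\B}''$; since $|\tilde{\C}|=O_\epsilon(1)$, each corridor has at most $1/\epsilon$ pieces, and there are $O(\log N/\epsilon)$ density classes, we get $|\tilde{\B}''|=O(|\tilde{\C}|\log N/\epsilon)$. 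Crucially we do not guess the per-class boxes directly (that would cost $N^{O_\epsilon(\log N)}$): we only guess, in time $N^{O_\epsilon(1)}$, the $O_\epsilon(1)$ coarse boxes and the sub-sub-piece of each $P'$ and obtuse piece, and then the subdivision of each of these into the per-class boxes is determined by the estimates $\mathrm{opt}_{hor}^{(\ell)},\mathrm{opt}_{ver}^{(\ell)}$ plus a per-box refinement guessed in time $(nN)^{O_\epsilon(1)}$ via the binary-string encoding of Lemmas~\ref{lem:estimate-sizes-1} and \ref{lem:guess-slices}.

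Finally, converting slices back to genuine items as in Lemma~\ref{lem:converting-slices-items}, from the slices in each box we recover items of $\Rl$ while losing only $O(1)$ fractionally assigned boundary items per box per class; together with the items of $\overline{\opt}'\cap\Rhone$ that Lemma~\ref{lem:partition-acute-pieces-weighted} places nicely into $\tilde{\B}'$, this yields the desired $\bar{\R}\subseteq\Rl\cup\Rhone$. The profit lost to this last step is $O(|\tilde{\B}'|+|\tilde{\B}''|)=O_\epsilon(|\tilde{\C}|\log(nN)/\epsilon)$ times $\max_{i\in\Rl}p(i)\le p(i^{*})$, and the remaining losses are the $1+\epsilon$ of linear grouping and the $\epsilon^{2}$ of Lemma~\ref{lem:partition-acute-pieces-weighted}; by the bound $p(\overline{\opt}')=\Omega\!\big(c(|\tilde{\C}|\log(nN)+|\overline{\opt}'\cap\Rh|)p(i^{*})\big)$ of Lemma~\ref{lem:shifting-weighted}, and by running the whole argument with parameter $\Theta(\epsilon^{2})$ in place of $\epsilon$, choosing the constant $c=O_\epsilon(1)$ large enough makes the total loss at most $\epsilon^{2}\,p(\overline{\opt}'\cap(\Rl\cup\Rhone))$, which is the first bullet. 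The second bullet is immediate: $\tilde{\B}'$ already sits inside $\tilde{\C}$, the boxes $\tilde{\B}''$ were carved from the $P'$'s and obtuse pieces, and the items of $\overline{\opt}''\cap\Rhtwo$ go into the still-free parts of those pieces exactly as in the packing of $\overline{\opt}''$ (using that $\Rhtwo$ contains all high-profit items of $\overline{\opt}''$ outside $\tilde{\B}'$, by Lemma~\ref{guess-R-prime-weighted}). The main obstacle is precisely this tension: keeping $|\tilde{\B}''|$ only $O(\log N)$ per corridor while still guessing all of their sizes in time $(nN)^{O_\epsilon(1)}$, and simultaneously absorbing the $\Theta(\log N)$ boundary-slice items lost per corridor into a negligible fraction of the profit --- this is what forces both the indirect guessing via compressed profit estimates and the large-but-constant choice of $c$ supplied by Lemma~\ref{lem:shifting-weighted}. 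A minor extra check is that the structuring of Lemma~\ref{lem:partition-subcorridors}, which we invoke on obtuse pieces as well, applies there since an obtuse piece is itself a single subcorridor and, by Lemma~\ref{lem:partition-knapsack-improved-new}, carries no item that is small relative to it.
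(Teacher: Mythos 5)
Your high-level template matches the paper's (slice the $\Rl$ items by density class, compress the counting into a logarithmic number of boxes/containers, recover items while paying a per-container loss absorbed via Lemma~\ref{lem:shifting-weighted}), but two concrete steps are missing.

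First, and most importantly, the density classes $\R_{hor}^{(\ell)}$, unlike the height classes of Section~\ref{sec:large-OPT}, mix items of very different widths and heights. Under the packing of Lemma~\ref{lem:partition-acute-pieces-weighted}, an item $i\in\R_{hor}^{(\ell)}$ may lie in the acute sub-piece $P'$ as a wide (skewed) item, or inside a box $B\in\tilde{\B}'(P)$ as a \emph{small} item with $\width(i)\le\eps\cdot\width(B)$ and $\height(i)\le\eps\cdot\height(B)$. For the slices to replace the items in place after sorting, one must first refine each $\R_{hor}^{(\ell)}$ into $2^{O_\eps(1)}$ subgroups $\R_{hor,j}^{(\ell)}$ so that, within each subgroup, all items are small with respect to exactly the same boxes of $\tilde{\B}'$ and wide with respect to exactly the same acute pieces; only then does one pick minimum-width representatives, slice, round, and sort \emph{per subgroup}. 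Your proof slices the entire class and sorts by width globally, so nothing ensures the slices you reassign to a box $B\in\tilde{\B}'$ remain small for $B$, or that the wide ones fit in $P'$; the replacement is not feasible without this refinement, and you have no substitute for it. Second, you invoke Lemma~\ref{lem:partition-subcorridors} verbatim on the obtuse piece of a 2-spiral, but the proof of that lemma explicitly assumes $x_1\le x_2\le x'_2\le x'_1$, i.e.\ an acute piece; its stacking/box-carving construction does not carry over. The paper instead partitions each obtuse piece into boxes induced by the sorted slices and $\Rh$-items of the two adjacent acute pieces together with the $\Rh$-items inside the obtuse piece, and then refines those into $O_\eps(1)$ subboxes. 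Your ``chain the sub-sub-pieces exactly as in Section~\ref{sec:large-OPT}'' step has no obtuse-piece analogue to lift, so that construction (or an equivalent) must be supplied.
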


We define $\tilde{\B}:=\tilde{\B}'\cup\tilde{\B}''$, $\opt'':=\bar{\R}\cup(\overline{\opt}''\cap\Rhtwo)$,
and $\Rlo:=\Rhtwo$. This completes the proof of Lemma~\ref{lem:partition-Ls-improved}.

\section{Improved approximation ratio via rotations}\label{sec:rotations}

In this section we present an improved approximation guarantee of
$1.25+\epsilon$ for the cardinality case of two-dimensional geometric
knapsack with rotations (2DKR), i.e. we assume that $\profit(i)=1$
for each item $i\in\R$ and that we are allowed to rotate items by
90 degrees. %

We will prove the following lemma later in this section. 
\begin{lem}
\label{lem:partition-knapsack-rot}There exists a solution $\opt'\subseteq\opt$
with $p(\opt)\le(1.25+\eps)p(\opt')$ and a partition of $K$ into
a set $\hat{\C}$ of $O_{\eps}(1)$ objects where each of them is
either a box, a spiral, or a 2-spiral, such that 
\begin{itemize}
\item each item $i\in\opt'$ is contained in one element of $\hat{\C}$, 
\item each box $B\in\C$ either contains only one single item $i\in\opt'\cap\Rla$,
or only items in $\opt'\cap(\Rho\cup\Rsm)$, or only items in $\opt'\cap(\Rve\cup\Rsm)$, 
\item for each corridor $C\in\hat{\C}$ there is a partition of $C$ into
$s(C)$ pieces $\P(C)$ such that 
\begin{itemize}
\item each obtuse piece in $\P(C)$ does not intersect any item in $\opt'\cap\Rsm$, 
\item each very thin acute piece $P\in\P(C)$ does not intersect any item
in $\opt'\cap\Rsm$. 
\end{itemize}
\end{itemize}
\end{lem}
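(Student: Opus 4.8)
The plan is to follow the proof of Lemma~\ref{lem:partition-knapsack-improved-new}: start from a corridor partition, refine it so that every corridor is relatively thin and every item fits well into its piece, and then cut the corridors into boxes, spirals, and 2-spirals. The only new ingredient is that rotations let the cutting step lose a factor $5/4$ instead of $4/3$.

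First I would dispose of large items: since $|\optla|\le 1/\epsl^{2}=O_{\eps}(1)$, unless $|\opt|=O_{\eps}(1)$ (solved optimally by enumeration) I place each large item of $\opt'$ in its own singleton box -- the first elements of $\hat{\C}$ -- and declare them untouchable. Then I apply the Corridor Decomposition Lemma~\ref{lem:corridorPack-weighted} with the large items as the untouchable set, obtaining $O_{\eps}(1)$ corridors (each with at most $1/\eps$ bends and width at most $\epsl N$) plus the singleton boxes, together with a subset of $\opt$ of cardinality $(1-O(\eps))|\opt|$ packed into them except for $O_{\eps}(1)$ crossing skewed items (negligible since $|\opt|$ is large) and a set of small items of total area $O_{\epsl}(\epss)N^{2}$, which we repack at the end. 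Next, exactly as in the proof of Lemma~\ref{lem:corridor-partition-improved}, I refine the decomposition so that every corridor becomes relatively thin and every item of $\opt'$ fits well into its piece: I subdivide pieces along short chords and drop the $O_{\eps}(1)$ items (plus negligible-area small items) that are too tall or wide relative to their piece, as well as a negligible-profit set of items that would otherwise keep the remaining items from fitting well. Each such step costs a $1+\eps$ factor and $O_{\eps}(1)$ items, and since the corridor decomposition is applied to the already-fixed packing of $\opt$, this part is identical to the no-rotation case.

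The heart of the argument is cutting each relatively thin path/cycle corridor $C$ into boxes, spirals, and 2-spirals while keeping a $(\frac{4}{5}-\eps)$-fraction of $p(\opt'(C))$. As in Section~\ref{sec:Improved-partitioning-corridors} I assign to each piece $P$ of $C$ a type $\typeC(P)$ and partition the items of $\opt'(C)$ into classes indexed by types -- the items skewed with respect to their piece in the type-$t$ pieces, together with the items small with respect to their piece in suitably offset pieces -- so that deleting any one class turns every affected piece into a thin box of small items and leaves the remaining fragments of $C$ as boxes, spirals, or 2-spirals. Without rotations four types are unavoidable, because a block of four consecutive pieces can contain two obtuse pieces and hence fail to be a 2-spiral; with rotations I can instead use five types and repair such a block by rotating the items of its extra obtuse piece and re-packing them -- together with the small items we already discard -- into the space freed by the deleted class, using the cardinality slack to also drop the $O_{\eps}(1)$ too-wide items that would otherwise obstruct this re-packing. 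A case distinction on $\shapeC(C)$ -- whether the substring $(\text{a,a})$ occurs and, if not, whether the four canonical U-subcorridors of Remark~\ref{rem:U-shapes} overlap -- shows the five type classes can always be chosen with these properties; since a cycle corridor has an even number $\ge 4$ of pieces and $s(C)\le 1/\eps$, this is a finite check. Verifying that across the relevant deletions no piece ever turns obtuse more than once, so that every fragment really is a spiral or a 2-spiral, is what I expect to be the main obstacle. Deleting the cheapest of the five classes then costs at most $\frac{1}{5}p(\opt'(C))$, i.e.\ a factor $5/4$.

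Finally I clean up. The small items temporarily removed above have total area $O_{\epsl}(\epss)N^{2}$; since $K$ is partitioned into between $1/\epsl$ and $O_{\epsl}(1)$ pieces, some piece of profit $O(\eps)|\opt|$ can be emptied to house a box of area $1+2\eps$ times their total area, into which we repack them by NFDH (Theorem~\ref{thm:nfdhPack}); applying the same repacking idea we route small items away from every obtuse piece and every very thin acute piece, which yields the last two bullets. The box-type bullet is immediate: large items occupy singleton boxes, and every other box is either a thin small-item-only piece or is obtained by refining a single horizontal (resp.\ vertical) piece, so every item of $\opt'$ in it has height (resp.\ width) at most $\epss N$ and hence the box contains only items of $\opt'\cap(\Rho\cup\Rsm)$ (resp.\ $\opt'\cap(\Rve\cup\Rsm)$). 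Absorbing the $1+\eps$ losses from the decomposition, the refinement and the repacking into the $5/4$ loss of the cutting step, and discarding the $O_{\eps}(1)$ crossing and too-large items using the cardinality assumption, gives a solution $\opt'\subseteq\opt$ with $p(\opt)\le(5/4+\eps)p(\opt')$ and a partition $\hat{\C}$ with all the claimed properties.
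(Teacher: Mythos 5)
Your proposal takes a genuinely different route from the paper, and it has a gap at the crucial step. The paper does \emph{not} obtain the $5/4$ factor by a $5$-type colouring of pieces. Instead it introduces a long/short classification of items and pieces (an item is \emph{long} if one side has length $\ge (1/2+2\epsl)N$), proves structural facts about long pieces (Lemma~\ref{lem:long-corridors-properties}), and then constructs \emph{five alternative packings}: Packings 1 and 2 repeatedly process all horizontal (resp.\ vertical) acute pieces and then all remaining short acute pieces, yielding a pure box decomposition while losing a set $\R'_{\mathrm{skew},1}$ of bounded total height; Packings 3 and 4 delete the skewed items in every other short piece, after which each remaining fragment has at most one short piece and is therefore a spiral or $2$-spiral by Lemma~\ref{lem:long-corridors-properties}(3); Packing 5 uses a randomly placed thin strip (rotations make a horizontal strip wlog) to free room and, in the regime where Packings 1--2 have no large repository box, repack $\R'_{\mathrm{skew},1}\cup\R'_{\mathrm{skew},2}\cup(\R_{\mathrm{long}}\cap\opt')$ into that strip. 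Rotations enter both in the strip argument and via the long/short notion. A convex-combination averaging at the end shows $\max_i|\opt''_i|\ge(4/5-\eps)|\opt'|$.

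The gap in your argument is the claim that one can use five type classes and, with rotations, "repair" a bad block of four consecutive pieces. After deleting one of five types from a path corridor, a remaining fragment of four consecutive pieces $P_i,\dots,P_{i+3}$ has its two boundary pieces acute, but both internal pieces $P_{i+1},P_{i+2}$ can be obtuse, giving a W-shaped fragment that is neither a spiral nor a $2$-spiral. Your proposed fix---rotate the items of "the extra obtuse piece" and pack them into "the space freed by the deleted class"---does not go through: those items can be arbitrarily many (not $O_\eps(1)$), the freed space lies elsewhere in the corridor with incompatible geometry, and rotating a vertical item in a thin horizontal piece by $90^\circ$ makes it not fit in the piece at all. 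The cycle-corridor cases compound the difficulty. There is no finite case check that rescues this; the paper's proof avoids the issue entirely by never deleting pieces in a round-robin of five, using instead the long/short machinery to force fragments to have at most one short piece, which is what guarantees the spiral/$2$-spiral shape. Your warm-up (discarding large items, applying Lemma~\ref{lem:corridorPack-weighted} and the relative-thinness refinement of Lemma~\ref{lem:corridor-partition-improved}) and your concluding small-item NFDH repacking are fine and consistent with the paper, but the central cutting step needs to be replaced by the paper's five-packings argument.
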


Like before, for each $C\in\hat{\C}$ let $\opt'(C)\subseteq\opt'$
denote the items from $\opt'$ that are contained in~$C$. We apply
the following lemma to $\hat{\C}$ whose proof is essentially identical
to the proof of Lemma~\ref{lem:partition-Ls-improved}
(the only difference is that we can rotate items, which however affects
the proofs only marginally). 
\begin{lem}
\label{lem:partition-Ls-rotation}Assume that $|\opt'|\ge\Omega_{\epsilon}(1)$.
There is a solution $\opt''\subseteq\opt'$, a corresponding partition
$\left\{ \opt''(C),\opt''_{\mathrm{lonely}}(C)\right\} _{C\in\hat{\C}}$
, and a set of $O_{\epsilon}(\log N)$ boxes $\B(C)$ for each $C\in\hat{\C}$
such that 
\begin{itemize}
\item $\opt''(C)\cup\opt_{\mathrm{lonely}}''(C)\subseteq\opt'(C)$ with
$|\opt''(C)|+|\opt_{\mathrm{lonely}}''(C)|\ge(1-\epsilon)|\opt'(C)|$, 
\item the items in $\opt''(C)$ can be nicely packed into the boxes $\hat{\B}(C)$
such that at most $O_{\epsilon}(1)$ boxes in $\hat{\B}(C)$ contain
an item from $\Rsm$, and 
\item the boxes $\hat{\B}(C)$ and the items $\opt''_{lonely}(C)$ can be
packed non-overlappingly inside $C$. 
\end{itemize}
Also, it holds that 
\begin{itemize}
\item $\sum_{C\in\hat{\C}}|\opt_{\mathrm{lonely}}''(C)|\le O_{\epsilon}(\log N)$, 
\item in time $N^{O_{\epsilon}(1)}$ we can guess the sizes of all boxes
$\B:=\bigcup_{C\in\hat{\C}}\B(C)$ and a set $\Rlo$ with $\bigcup_{C\in\hat{\C}}\opt_{\mathrm{lonely}}''(C)\subseteq\Rlo\subseteq\Rsk\setminus\bigcup_{C\in\hat{\C}}\opt''(C)$,
and 
\item $|\opt''|\ge\Omega(|\B|/\epsilon)$. 
\end{itemize}
\end{lem}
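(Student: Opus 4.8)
The plan is to reproduce the proof of Lemma~\ref{lem:partition-Ls-improved} from Section~\ref{sec:partition-boxes-weighted} almost line by line, since the passage to the cardinality case with rotations only forces cosmetic changes. I would start from the solution $\opt'$ and the partition $\hat{\C}$ produced by Lemma~\ref{lem:partition-knapsack-rot}, first freezing each item to the orientation it uses in $\opt'$ so that from now on no item is ever rotated; after this step every geometric argument of Section~\ref{sec:partition-boxes-weighted} applies verbatim, and this is essentially the only effect of allowing rotations. Then I would process $\hat{\C}$ in the same three phases as in the weighted proof. First, for each box $C\in\hat{\C}$ I would invoke the analog of Lemma~\ref{lem:partition-boxes-weighted} to carve $C$ into $O_{\epsilon}(1)$ boxes capturing a $(1-\epsilon)$ fraction of the items assigned to $C$ in a nice packing (single‑item boxes being trivial), and by splitting each of these boxes once more (isolating the small items) I would ensure that at most $O_{\epsilon}(1)$ boxes per corridor receive small items. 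Second, for every corridor $C\in\hat{\C}$ and every acute piece $P\in\P(C)$ I would invoke the analog of Lemma~\ref{lem:partition-acute-pieces-weighted} to obtain $O_{\epsilon}(1)$ boxes $\tilde{\B}'(P)\subseteq P$ and an acute sub-piece $P'\subseteq P$ such that a $(1-\epsilon^{2})$ fraction of the items of $\opt'$ inside $P$ splits into a part that is nicely placed in $\tilde{\B}'(P)$ and a part that is nicely placed in $P'$.

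Third, I would handle the items that still live in the obtuse pieces and in the sub-pieces $P'$ exactly as in Section~\ref{sec:large-OPT} and in the slicing part of Lemma~\ref{lem:partition-Ls-improved}: group them into $O_{\epsilon}(\log N)$ groups according to their long dimension (in the cardinality setting the ``density'' of the weighted proof is just the reciprocal of that length, so grouping by density coincides with grouping by length), slice every item into unit-thickness slices, keep the shortest $\tfrac{1}{1+\epsilon}\mathrm{opt}^{(\ell)}$ slices per group, round their widths into $1/\epsilon$ classes by linear grouping, and repack the rounded slices into $O_{\epsilon}(1)$ boxes per group that fit back into the corridors (the analog of Lemma~\ref{lem:slices-weighted}). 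This yields the box set $\tilde{\B}:=\tilde{\B}'\cup\tilde{\B}''$ with $|\tilde{\B}|\le O_{\epsilon}(\log N)$. Since in the cardinality case all items are sliced, one may take $\opt''_{\mathrm{lonely}}(C):=\emptyset$ (so $\sum_{C}|\opt''_{\mathrm{lonely}}(C)|=0$ and $\Rlo:=\emptyset$ satisfy the required inclusions); the set $\opt''_{\mathrm{lonely}}$ is retained in the statement only for uniformity with Lemma~\ref{lem:partition-Ls-improved}, and should one prefer to keep the obtuse-piece items outside the boxes they can instead be declared lonely, in which case $\Rlo$ (still of size $O_{\epsilon}(\log N)$) is guessed by the color coding of Lemma~\ref{lem:assign-items-corridors-improved}. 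The box sizes are guessed in time $N^{O_{\epsilon}(1)}$ exactly as in Lemma~\ref{lem:partition-Ls-improved}.

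Finally I would check the counting, which is identical to the weighted case: each slicing, linear-grouping and slice-to-item step costs a $(1+O(\epsilon))$ factor plus at most $2|\tilde{\B}|$ dropped items, so the recovered items satisfy $|\opt''(C)|+|\opt''_{\mathrm{lonely}}(C)|\ge(1-\epsilon)|\opt'(C)|$ for every $C$, they admit a nice packing into $\tilde{\B}$ with at most $O_{\epsilon}(1)$ boxes per corridor touching small items, and $\tilde{\B}$ together with $\opt''_{\mathrm{lonely}}$ packs back inside the corridors. For the bound $|\opt''|\ge\Omega(|\tilde{\B}|/\epsilon)$ I would use the cardinality analog of Lemma~\ref{lem:shifting-weighted}: since every item has profit $1$, this analog simply says that it suffices to treat the regime $|\opt'|>c_{\epsilon}\log N$ for a large enough constant $c_{\epsilon}=O_{\epsilon}(1)$, the complementary regime $|\opt'|=O_{\epsilon}(\log N)$ not requiring this lemma and being handled by the color-coding dynamic program of Section~\ref{sec:DP-color-coding} extended to spirals and $2$-spirals via Lemma~\ref{lem:DP-coloring}; choosing $c_{\epsilon}$ large enough then gives $|\tilde{\B}|\le O_{\epsilon}(\log N)\le\epsilon|\opt'|$, whence $|\opt''|\ge(1-O(\epsilon))|\opt'|\ge\Omega(|\tilde{\B}|/\epsilon)$. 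The hard part — and the one place where one must verify that rotations really ``affect the proofs only marginally'' — is re-establishing, for the orientation-frozen instance, the \emph{relatively thin} and \emph{fit well} structure of all pieces (the rotation analog of Lemma~\ref{lem:corridor-partition-improved}), so that the slices coming from obtuse-piece items can be staircase-repacked into those non-rectangular pieces without overlap; once the orientations are fixed this is exactly the same argument as in Section~\ref{sec:partition-boxes-weighted}.
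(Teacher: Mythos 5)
Your approach is a genuinely different route from the paper's, and it is worth spelling out the contrast. The paper's proof of Lemma~\ref{lem:partition-Ls-rotation} is very short precisely because it \emph{uses} the rotational freedom in a strong way: it first rotates every skewed item to be horizontal, so that $\Rve=\emptyset$ and $\Rsk=\Rho$. This eliminates one of the two families of groups $\Rve^{(\ell)}$ from Lemma~\ref{lem:estimate-sizes-1} and from the slicing machinery entirely. The price it pays is that the constructed boxes in $\hat{\B}$ are then all ``horizontal'', but some of them must be placed in vertical corridor pieces; the paper therefore adds one additional guess — a rotation bit for each box in $\hat{\B}$, in time $2^{|\hat{\B}|}=(nN)^{O_\epsilon(1)}$ — and leaves the rest of the weighted proof unchanged. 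Your proposal instead \emph{freezes} each item of $\opt'$ to its orientation in $\opt'$, thereby reducing to a genuinely non-rotational instance in which both $\Rho$ and $\Rve$ may be nonempty; then the boxes naturally come with the correct orientation and no rotation guess is needed, but you keep carrying both the horizontal and vertical group families. Both routes can be made to work, but they are not cosmetic variants of each other: yours trades the paper's ``all items horizontal $+$ box-rotation guessing'' for ``two-sided bookkeeping $+$ no rotation guessing.'' You should be aware of the paper's simplification, because if one writes the argument your way, one also has to decide on an orientation for items \emph{outside} $\opt'$ that are offered to the slicing step (they are selected by minimum width among all of $\R_{hor}^{(\ell)}$, not just among $\opt'$); the paper's approach of forcing $\Rve=\emptyset$ sidesteps this completely.

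A second point concerns the ``lonely'' items. Your proposal wavers between $\opt''_{\mathrm{lonely}}=\emptyset$ and ``declare the obtuse-piece items lonely.'' Only the first reading is consistent with what the paper actually does: in the cardinality case all profits are $1$, so Lemma~\ref{lem:shifting-weighted} gives $\Rh=\emptyset$, hence $\Rlo=\Rhtwo=\emptyset$ and the lonely sets vanish; obtuse-piece items are \emph{not} lonely — they are still sliced and packed into boxes exactly as in Lemma~\ref{lem:slices-weighted}. Finally, your last paragraph worries about re-establishing the relatively-thin / fit-well structure under rotation; that concern belongs to Lemma~\ref{lem:partition-knapsack-rot}, which is a hypothesis the present lemma already inherits, so it is not something that needs to be re-proved here.
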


\begin{proof}
Since we can rotate items by 90 degrees, we can assume w.l.o.g.~that
$\Rve=\emptyset$ and hence $\Rsk=\Rho$. Lemmas~\ref{lem:partition-boxes-weighted},
\ref{lem:partition-acute-pieces-weighted}, and \ref{guess-R-prime-weighted} hold
accordingly. In Lemma~\ref{lem:estimate-sizes-1} we do not guess the
values $\opt_{ver}^{(\ell)}$ since $\Rve=\emptyset$. We adjust the
proof of Lemma~\ref{lem:slices-weighted} such that we additionally guess
for each box in $\hat{\B}$ whether it needs to be rotated or not.
We can do this in time $2^{|\hat{\B}|}=(nN)^{O_{\epsilon}(1)}$. The
rest of the proof of Lemma~\ref{lem:slices-weighted} stays unchanged. 
\end{proof}

We guess the boxes $\B$ and like before (in the case without rotations)
we want to place $\B$ into the corridors and boxes in $\hat{\C}$,
together with as many items from $\Rlo$ as possible. We guess the
correct assignment of the boxes in $\B$ to the corridors and boxes
in $\hat{\C}$ in time $(Nn)^{O_{\epsilon}(1)}$. We guess a partition
the items in $\Rlo$ in the next lemma,  which can be proven exactly like Lemma~\ref{lem:assign-items-corridors-improved}. 
\begin{lem}
\label{lem:assign-items-corridors-improved-1}In time $N^{O_{\epsilon}(1)}$
we can guess a partition of $\Rlo$ into sets $\left\{ \Rlo(C)\right\} _{C\in\hat{\C}}$
such that $\opt_{\mathrm{lonely}}''(C)\subseteq\Rlo(C)$ for each
$C\in\hat{\C}$. 
\end{lem}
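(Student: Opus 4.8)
The plan is to reuse, essentially verbatim, the argument of Lemma~\ref{lem:assign-items-corridors-improved}: the only new feature of the rotational setting --- that items may be placed rotated by $90$ degrees --- is invisible to this guessing step, which manipulates only colors and the constantly many labels of the objects of $\hat{\C}$. First I would invoke Lemma~\ref{lem:partition-Ls-rotation}, which guarantees both $\sum_{C\in\hat{\C}}|\opt_{\mathrm{lonely}}''(C)|\le O_{\epsilon}(\log N)$ and $\bigcup_{C\in\hat{\C}}\opt_{\mathrm{lonely}}''(C)\subseteq\Rlo$; set $\gamma:=\left|\bigcup_{C\in\hat{\C}}\opt_{\mathrm{lonely}}''(C)\right|=O_{\epsilon}(\log N)$, the number of items that have to be assigned correctly.

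Next I would apply color-coding exactly as in Lemma~\ref{lem:color-coding-LU}: color every item of $\Rlo$ uniformly at random with one of $\gamma$ colors. Since $\gamma=O_{\epsilon}(\log N)$, with probability at least $1/e^{\gamma}\ge 1/N^{O_{\epsilon}(1)}$ the $\gamma$ items of $\bigcup_{C\in\hat{\C}}\opt_{\mathrm{lonely}}''(C)$ all receive pairwise distinct colors; call such a coloring \emph{successful}. Repeating $N^{O_{\epsilon}(1)}$ times yields a successful coloring with high probability, and this can be derandomized using a $\gamma$-perfect family of hash functions~\cite{alon1995color,NSS95} without leaving the claimed running time.

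Conditioned on a successful coloring, each item of $\opt_{\mathrm{lonely}}''(C)$ carries a color carried by no other item of $\bigcup_{C'\in\hat{\C}}\opt_{\mathrm{lonely}}''(C')$, so it suffices to guess, for every color $d\in\{1,\dots,\gamma\}$, to which of the $O_{\epsilon}(1)$ objects of $\hat{\C}$ (recall $|\hat{\C}|\le O_{\epsilon}(1)$ by Lemma~\ref{lem:partition-knapsack-rot}) the $\opt_{\mathrm{lonely}}''$-item of color $d$ belongs. There are at most $O_{\epsilon}(1)^{\gamma}=N^{O_{\epsilon}(1)}$ such assignments, all enumerable in time $N^{O_{\epsilon}(1)}$; for each one I would let $\Rlo(C)$ be the set of items of $\Rlo$ whose color is assigned to $C$. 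This yields a partition $\{\Rlo(C)\}_{C\in\hat{\C}}$ of $\Rlo$, and for the correct guess --- which exists whenever the coloring is successful --- we obtain $\opt_{\mathrm{lonely}}''(C)\subseteq\Rlo(C)$ for every $C\in\hat{\C}$, as required.

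I do not expect any real obstacle; the one point to watch is bookkeeping of the single parameter $\gamma$: it must stay $O_{\epsilon}(\log N)$, which is exactly what Lemma~\ref{lem:partition-Ls-rotation} delivers, so that both the inverse success probability $e^{\gamma}$ and the number $O_{\epsilon}(1)^{\gamma}$ of color-to-object assignments are of the form $N^{O_{\epsilon}(1)}$. The possibility of rotating items never enters, since this step touches only colors and the (constantly many) labels of the objects in $\hat{\C}$, so the proof transfers directly from Lemma~\ref{lem:assign-items-corridors-improved}.
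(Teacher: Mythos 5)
Your proof is correct and follows essentially the same route as the paper's own argument: color-code $\Rlo$ with $\gamma=O_{\epsilon}(\log N)$ colors so that the $\gamma$ lonely items get distinct colors with probability $1/N^{O_{\epsilon}(1)}$, then guess the color-to-object assignment among the $O_{\epsilon}(1)$ objects of $\hat{\C}$, and derandomize via perfect hash families. The only (minor) difference is that you are more explicit about what set is being hit by the coloring (the union $\bigcup_{C}\opt_{\mathrm{lonely}}''(C)$, not all of $\Rlo$), which is a useful clarification of a point the paper leaves slightly loose.
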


\begin{proof}
Indeed, as $\sum_{C\in\hat{\C}}|\opt_{\mathrm{lonely}}''(C)|\leq O_{\eps}(\log N)$,
we can use color coding with $O_{\epsilon}(\log N)$ colors, assigning all in $\Rlo$ a different color with probability $1/N^{O_{\eps}(1)}$.
Then, as there are $O_{\eps}(1)$ objects in $\hat{C}$ according
to Lemma~\ref{lem:partition-knapsack-rot}, in the event that all
items in $OPT''$ are colored differently we can guess the right assignment
of colors into the objects in $\hat{C}$ and get a partition $\left\{ \R(C)\right\} _{C\in\hat{\C}}$
in time $N^{O_{\eps}(1)}$. By repeating this procedure $N^{O_{\eps}(1)}$
times we can obtain such an assignment with good probability. Finally,
as in Lemma~\ref{lem:color-coding-LU}, we can derandomize
this procedure to a deterministic routine with a running time of $(nN)^{O_{\eps}(1)}$. 
\end{proof}
Then, for each corridor $C\in\hat{\C}$ we place the boxes assigned
to it together with the maximum number of possible items from $\Rlo(C)$
that fit into $C$ via Lemma~\ref{lem:DP-weighted}. Finally, we pack items
from $\R\setminus\Rlo$ into the guessed boxes $\B$ using Lemma~\ref{lem:pack-into-boxes-weighted}.
This completes our improvement to an approximation factor of $1.25+\epsilon$. 
\begin{thm}
There is an $(1.25+\epsilon)$-approximation algorithm for the unweighted
case of 2DK with rotations with a running time of $(nN)^{O_{\epsilon}(1)}$. 
\end{thm}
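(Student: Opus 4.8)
The plan is to compose the structural Lemma~\ref{lem:partition-knapsack-rot} with the algorithmic Lemmas~\ref{lem:partition-Ls-rotation}, \ref{lem:assign-items-corridors-improved-1}, \ref{lem:DP-weighted} and \ref{lem:pack-into-boxes-weighted}, handling separately the regime in which the optimum is small. First I would apply Lemma~\ref{lem:partition-knapsack-rot} to fix $\opt'\subseteq\opt$ with $|\opt|\le(1.25+\eps)|\opt'|$ together with the partition $\hat{\C}$ of $K$ into $O_\eps(1)$ boxes, spirals and $2$-spirals and the asserted piece structure. If $|\opt'|\le c_\eps\log N$ for a suitable constant $c_\eps=O_\eps(1)$ (which in particular covers the range excluded by the hypothesis $|\opt'|\ge\Omega_\eps(1)$ of Lemma~\ref{lem:partition-Ls-rotation}), I would instead run the color-coding dynamic program of Section~\ref{sec:DP-color-coding}: after guessing $\hat{\C}$ in time $N^{O_\eps(1)}$, it only uses that all items are skewed and that each element of $\hat{\C}$ is a corridor with at most $1/\eps$ subcorridors, so it applies here and returns a solution of size exactly $|\opt'|$ in time $(nN)^{O_\eps(1)}$ (because the color-coding reduces to one corridor at a time and $2^{O(c_\eps\log N)}=N^{O_\eps(1)}$). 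Hence from now on I may assume $|\opt'|\ge c_\eps\log N$, so that in particular $O_\eps(\log N)\le\eps|\opt'|$.

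Second, I would apply Lemma~\ref{lem:partition-Ls-rotation} to obtain $\opt''\subseteq\opt'$, the partition $\{\opt''(C),\opt''_{\mathrm{lonely}}(C)\}_{C\in\hat{\C}}$, the boxes $\hat{\B}(C)$, and the set $\Rlo$. The algorithm then performs, all in time $(nN)^{O_\eps(1)}$, the following guesses: the relevant constants and the partition $\hat{\C}$; the sizes of all boxes in $\B:=\bigcup_{C}\hat{\B}(C)$ together with, for each box, whether it is rotated, and the set $\Rlo$, as granted by Lemma~\ref{lem:partition-Ls-rotation}; for each box of $\B$, the element of $\hat{\C}$ it is assigned to; and, via Lemma~\ref{lem:assign-items-corridors-improved-1}, a partition $\{\Rlo(C)\}_{C\in\hat{\C}}$ of $\Rlo$ with $\opt''_{\mathrm{lonely}}(C)\subseteq\Rlo(C)$ for all $C$. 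For each $C\in\hat{\C}$ I would then invoke Lemma~\ref{lem:DP-weighted} (with all profits equal to $1$ and $k:=|\Rlo(C)|$) to pack the boxes assigned to $C$ together with a maximum-cardinality subset of $\Rlo(C)$ that fits into $C$ in the required structured way; since $\opt''_{\mathrm{lonely}}(C)$ is itself a feasible such subset, this recovers at least $|\opt''_{\mathrm{lonely}}(C)|$ lonely items from each corridor while placing all of $\B$. Finally I would apply Lemma~\ref{lem:pack-into-boxes-weighted} with $\bar{\B}=\B$ and $\bar I=\R\setminus\Rlo$: because the items $\opt''\setminus\bigcup_C\opt''_{\mathrm{lonely}}(C)$ can be nicely packed into $\B$, this places a subset of $\R\setminus\Rlo$ of size at least $(1-4\eps)\bigl|\opt''\setminus\bigcup_C\opt''_{\mathrm{lonely}}(C)\bigr|-|\B|$.

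Third comes the accounting. The total number of items placed is at least $(1-4\eps)\bigl(|\opt''|-\sum_C|\opt''_{\mathrm{lonely}}(C)|\bigr)-|\B|+\sum_C|\opt''_{\mathrm{lonely}}(C)|\ge(1-4\eps)|\opt''|-|\B|-O_\eps(\log N)$. Since Lemma~\ref{lem:partition-Ls-rotation} guarantees $|\opt''|\ge\Omega(|\B|/\eps)$, the term $|\B|$ is at most $O(\eps)|\opt''|$; since $|\opt'|\ge c_\eps\log N$, the term $O_\eps(\log N)$ is at most $\eps|\opt'|$; and $|\opt''|\ge(1-\eps)|\opt'|$ by summing the per-corridor bound. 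Hence the computed solution has size at least $(1-O(\eps))|\opt'|\ge\frac{1-O(\eps)}{1.25+\eps}|\opt|$, which is a $(1.25+O(\eps))$-approximation, and every guess and subroutine runs in time $(nN)^{O_\eps(1)}$; rescaling $\eps$ at the outset yields the stated bound.

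The only genuinely hard ingredient is Lemma~\ref{lem:partition-knapsack-rot} itself — the assembly above is a routine composition. For that lemma I would proceed as in Section~\ref{sec:Improved-partitioning-corridors}: start from the corridor decomposition of Lemma~\ref{lem:corridorPack-weighted}, make the corridors relatively thin and the items fit well inside their pieces, and then run a type-assignment argument on the corridor pieces, now exploiting that items may be rotated so that one may assume $\Rve=\emptyset$. The new point is that a more careful deletion scheme (using that surviving skewed items can be rotated wherever a piece changes orientation) loses only a $1/5$-fraction of the items rather than $1/4$, while still leaving only boxes, spirals and $2$-spirals, with obtuse and very thin acute pieces free of small items (the small items being repacked via NFDH into a cheap piece, as before). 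The delicate part is the exhaustive case analysis over the residue of $s(C)$ and over the acute/obtuse shape patterns of cycle corridors, checking that deleting the cheapest class always yields admissible shapes and that no piece becomes obtuse more than once after the deletion; this is the work carried out in the remainder of the section.
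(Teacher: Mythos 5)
Your outer assembly — compose Lemma~\ref{lem:partition-knapsack-rot} with Lemmas~\ref{lem:partition-Ls-rotation}, \ref{lem:assign-items-corridors-improved-1}, \ref{lem:DP-weighted} and \ref{lem:pack-into-boxes-weighted}, then do the accounting — matches the paper's proof of the theorem, and the arithmetic is fine up to minor constants. However, your description of how Lemma~\ref{lem:partition-knapsack-rot} is proved is substantially off, and that is precisely the piece you yourself identify as ``the only genuinely hard ingredient.''

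You claim the lemma follows from a ``more careful deletion scheme'' in the spirit of the type-assignment argument of Section~\ref{sec:Improved-partitioning-corridors}, now losing a $1/5$ fraction rather than $1/4$, after using rotations to assume $\Rve=\emptyset$. The paper's proof does none of that. It does \emph{not} reduce to $\Rve=\emptyset$ for the corridor decomposition, and it does \emph{not} use a $5$-way type assignment to pieces. Instead it introduces a new dichotomy of items into \emph{long} (one side length at least $(1/2+2\epsl)N$) and \emph{short}, proves a structural Lemma~\ref{lem:long-corridors-properties} — in particular that a corridor with at most one short piece must already be a spiral or $2$-spiral — and then constructs five \emph{distinct packings} with qualitatively different strategies: processing all horizontal acute pieces (Packing~1), all vertical acute pieces (Packing~2), deleting the skewed content of every even- resp.\ odd-indexed short piece (Packings~3 and~4), and clearing a random strip via resource augmentation (Packing~5). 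The $4/5-\eps$ bound comes from a convex-combination argument over the counts $|LT|,|ST|,|LF|,|SF|$ and does \emph{not} come from deleting one of five disjoint type-classes of pieces. A straight $5$-class deletion argument on cycle corridors would face the same parity obstructions that already make the $1/4$ bound delicate (the $k\equiv 2\pmod 4$ case), and rotations do not obviously buy you a free fifth class there — that is exactly why the authors switch to the averaging-over-packings strategy. So the gap in your proposal is a missing key idea (the long/short classification and the five-packing averaging argument), not merely deferred detail.

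A smaller point on the small-$\opt'$ regime: Lemma~\ref{lem:partition-knapsack-rot} still allows boxes carrying single large items and pieces carrying small items, whereas the color-coding DP of Section~\ref{sec:DP-color-coding} is stated only for skewed items. You would need to add the (easy) bookkeeping to peel off the $O_\eps(1)$ boxes holding single large items and to handle small items separately before you can invoke that DP verbatim. The paper sidesteps this by only requiring $|\opt'|\ge\Omega_\eps(1)$ and relying on the slicing/shifting machinery of Section~\ref{sec:partition-boxes-weighted} (which the rotation proof of Lemma~\ref{lem:partition-Ls-rotation} adapts, guessing a rotation bit per box) rather than by re-running the unweighted Section~5 dichotomy at the threshold $c_\eps\log N$.
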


\subsection{Partitioning into corridors in rotational case}

In this section we prove Lemma~\ref{lem:partition-knapsack-rot}.
We can assume that $|\opt|\ge\Omega_{\epsilon}(1)$, i.e. $|\opt|$
is larger than any given constant (that might depend on $\epsilon$),
since otherwise we can simply pack each item into a separate box for
it. Hence, by losing only a factor of $1+\epsilon$ in the number
of packed items, we can assume that $\opt\cap\Rla=\emptyset$ since
$|\opt\cap\Rla|\le1/\epsl^{2}$.

We start with the corridor partition due to Lemma~\ref{lem:corridor-partition}
and obtain a partition $\bar{\C}$ into boxes, corridors, and cycles
and a corresponding (essentially optimal) solution $\opt'$. Like
before, for each element $C\in\bar{\C}$ let $\P(C)$ denote a partition
of $C$ into $s(C)$ pieces, see Lemma~\ref{lem:corridor-pieces-1}.
Also, for each piece $P\in\P(C)$ denote by $\opt'(P)$ the items
in $\opt'$ that are contained in $P$. To each box $C\in\bar{C}$
we apply Lemma~\ref{lem:partition-boxes-weighted} as before.

We want to partition the corridors and cycles in $\bar{\C}$ and use
the following lemma as a tool for this. Note that we did not use this
lemma (or a similar statement) for the non-rotational case. Intuitively,
the lemma states that given a corridor or cycle $C\in\bar{\C}$ and
an acute piece $P\in\P(C)$ we can remove skewed items from $P$ with
small total height and also some small items in $C$ of small total
area and rearrange the remaining items from $\opt'$, such that then
we can partition $P$ into $O_{\epsilon}(1)$ boxes and $C\setminus P$
into $O_{\epsilon}(1)$ corridors. We say that a piece $P$ is horizontal
(vertical) if the two parallel edges defining it (that are connected
via monotone curves) are horizontal (vertical). 
\begin{lem}
\label{lem:process-piece}Let $C\in\bar{\C}$ and let $P\in\P(C)$
be a horizontal (vertical) acute piece. Let $\epsilon'>0$. There
a set of $O(1/\epsilon')$ boxes $\B(P)$ with equal height (width)
inside $P$, a partition of $C\setminus\B(P)$ into $O_{\epsilon}(1)$
corridors $\bar{\C}(C)$, and 
\begin{itemize}
\item a set $\Rsk'(P)\subseteq\Rsk\cap\opt'(P)$ with $\height(\Rsk'(P))\le\epsilon'h(P)/\epsl$, 
\item a set $\Rsk''(C)\subseteq\Rsk\cap\opt'(C)$ with $|\Rsk''(C)|\le O(1/(\epsilon\cdot\epsilon'\cdot\epsl))$, 
\item a set $\Rsm'(C)\subseteq\Rsm\cap\opt'(C)$ with $\area(\Rsm'(C))\le\epsilon'h(P)N+O\left(\frac{1}{\epsilon\cdot\epsilon'}\epss N^{2}\right)$, 
\end{itemize}
such that $\opt'(C)\setminus(\Rsm'(C)\cup\Rsk'(P)\cup\Rsk'(C))$ can
be assigned into \textup{$\B(P)$} and $\bar{\C}(C)$. 
\end{lem}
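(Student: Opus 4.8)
The plan is to mimic, for a single acute piece $P$ inside a corridor or cycle $C\in\bar{\C}$, the box-extraction argument already developed for boxes in Lemma~\ref{lem:partition-boxes-weighted} and for acute pieces in the weighted setting in Lemma~\ref{lem:partition-acute-pieces-weighted}, but paying explicit attention to the items that leak out of $P$ along the monotone curves into the neighbouring pieces. First I would fix the nice partition $\P(C)$ of $C$ for $\opt'(C)$ (Lemma~\ref{lem:corridor-pieces-1}) and w.l.o.g.\ treat $P$ as horizontal. Since all items are skewed and the corridor is thin, $\opt'(P)$ splits into items that are horizontal relative to $P$ (width $>\epsl w(P)$, so they essentially span $P$) and items that are vertical relative to $P$ (width $\le\epss w(P)$), plus small items. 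I would slice $P$ by $O(1/\epsilon')$ equally spaced horizontal lines at vertical distance $\epsilon' h(P)$; these lines are extended through $P$ cutting items only along their short dimension. A line can cross at most $1/(\epsilon'\epsl)$ vertical-relative skewed items (these become part of $\Rsk''(C)$) and small items of total area $O((1/\epsilon')\epss N^2)$ (part of $\Rsm'(C)$); choosing the cheapest of the $O(1/\epsilon')$ strips and removing the $\Rsk$-items intersecting it — of total height at most $\epsilon' h(P)/\epsl$ by an averaging argument over the strips, which is where $\Rsk'(P)$ comes from — empties one strip of height $\epsilon' h(P)$.

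Next I would use the empty strip together with resource augmentation (Lemma~\ref{lem:augmentPack}, referenced in the paper) applied inside $P$: the remaining items of $\opt'(P)$ fit into a sub-box of $P$ of height $(1-\Omega(\epsilon'))h(P)$, so we obtain a nice packing of them into $O(1/\epsilon')$ boxes $\B(P)$ of equal height that fit inside $P$. The key structural point is then to argue that $C\setminus\B(P)$ decomposes into $O_\epsilon(1)$ corridors: since $P$ was an acute piece and the $\B(P)$ are a stack of equal-height boxes aligned with the two horizontal edges of $P$, removing them from $C$ leaves the two ``end stubs'' of $P$ on either side of the boxes together with the remaining subcorridors of $C$, each of which is still a path/cycle corridor with at most $s(C)+O(1)$ pieces; gluing consecutive pieces that were not touched keeps the number of resulting corridors $O_\epsilon(1)$. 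Finally I would reinsert the temporarily removed small items of $\Rsm$: their total area is the claimed $\epsilon' h(P)N + O((1/(\epsilon\epsilon'))\epss N^2)$, and as in the proof of Lemma~\ref{lem:corridor-partition-improved} one uses that $C$ is partitioned into many pieces to find a cheap piece into which a box of sufficient area can be drawn and NFDH-pack these small items (Theorem~\ref{thm:nfdhPack}); the items displaced by this go into $\Rsm'(C)$ as well, keeping its area bound.

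The main obstacle I expect is the bookkeeping that $C\setminus\B(P)$ really is a disjoint union of $O_\epsilon(1)$ genuine corridors (path or cycle) rather than some more irregular polygon: the boxes $\B(P)$ are axis-aligned rectangles but $P$ itself is bounded by monotone axis-parallel \emph{curves}, not straight edges, so one must choose the horizontal slicing lines and the widths of the boxes $\B(P)$ carefully — essentially placing the boxes against the ``core'' rectangle of $P$ given by the extremal coordinates of its monotone curves — so that the leftover region on each side is again a monotone-bounded subcorridor. Because $P$ is \emph{acute}, the two monotone curves bounding it are ``consistent'' (both turning the same way), which is exactly what guarantees the leftover stubs attach cleanly to the neighbouring subcorridors without creating new bends beyond a constant number; for an obtuse piece this would fail, which is why the lemma is stated only for acute pieces. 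A secondary subtlety is the averaging step giving $\height(\Rsk'(P))\le\epsilon' h(P)/\epsl$: one needs that each $\Rsk$-item relative to $P$ has width $>\epsl w(P)$ and hence contributes its full height to at most one of the $O(1/\epsilon')$ strips it can meet — a direct consequence of ``fits well'', so this part is routine once the classification is in place.
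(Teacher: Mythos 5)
Your high-level plan --- slice $P$ into $O(1/\epsilon')$ equal-height strips, empty one strip, shift the remaining items into fewer boxes using acuteness, and extend chords to carve up the rest of $C$ --- is the same idea the paper uses. But there are several substantive gaps in how you fill in the details.

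First, the resource-augmentation step does not do what you need. Lemma~\ref{lem:augmentPack} produces $O_{\epsilon}(1)$ boxes of \emph{varying} sizes, whereas the lemma you are proving explicitly requires $O(1/\epsilon')$ boxes of \emph{equal} height. The paper never invokes resource augmentation here. Instead it directly defines $1/\epsilon'$ stacked boxes $B_0,\dots,B_{1/\epsilon'-1}$ of height $\epsilon' h(P)$, each of maximal width within $P$; it removes from $\Rsk\cap\opt'(P)$ (resp.\ $\Rsm\cap\opt'(P)$) exactly the items intersecting $B_0$ and then translates the rest of $\opt'(P)$ down by $\epsilon' h(P)$. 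Acuteness of $P$ is what makes this translation valid: the boxes widen monotonically toward $e_1$, so items that fit in $B_{j+1}$ fit in $B_j$ after the shift. Your version would lose exactly this equal-height stack structure.

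Second, you hand-wave the decomposition of $C\setminus\B(P)$ into corridors. This is the crux of the lemma. The paper explicitly constructs, from the top-left and top-right corner of each box $B_j$, a chord consisting of at most $s(C)$ axis-parallel segments, each perpendicular to the long axis of the piece it traverses, that exits through $e_0$, $e_{k+1}$, or the side of another box. These chords split $C\setminus P$ into $O(1/\epsilon')$ genuine corridors. Saying ``gluing consecutive pieces that were not touched'' does not explain where the corridor boundaries come from near the two monotone curves bounding $P$, nor why the result is a disjoint union of corridors rather than a single large irregular polygon.

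Third, and as a consequence of the previous point, your accounting of $\Rsk''(C)$ is off. Those items are the skewed items in $\opt'(C)$ (mostly in pieces \emph{other} than $P$) that are crossed by the chords. There are $O(1/\epsilon')$ chords, each with $\le s(C)\le 1/\epsilon$ segments, each segment crossing at most $O(1/\epsl)$ skewed items since it runs across the thin dimension of its piece; hence $|\Rsk''(C)|=O(1/(\epsilon\cdot\epsilon'\cdot\epsl))$. Your bound $1/(\epsilon'\epsl)$ drops the $s(C)$ factor, and your description of lines ``extended through $P$ cutting items only along their short dimension'' is wrong: $P$ is a horizontal piece containing horizontal skewed items, and a horizontal slicing line inside $P$ would cut those items along their \emph{long} dimension. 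The chords that do the cutting run outside $P$, not inside.

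Finally, the NFDH-repacking of removed small items is not part of this lemma. The lemma only asserts the existence of $\Rsm'(C)$ with an area bound; the decision of how (or whether) to repack is made in the places where the lemma is invoked, after combining several such applications.
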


\begin{proof}
Assume w.l.o.g.~that $1/\epsilon'$ is an integer. The proof start
with a similar construction as the proof of Lemma~\ref{lem:partition-acute-pieces-weighted}.
Assume w.l.o.g.~that $P$ is horizontal, i.e., $P$ is defined via
two horizontal edges $e_{1}=\overline{p_{1}p'_{1}}$ and $e_{2}=\overline{p_{2}p'_{2}}$,
and additionally two monotone axis-parallel curves connecting $p_{1}=(x_{1},y_{1})$
with $p_{2}=(x_{2},y_{2})$ and connecting $p'_{1}=(x'_{1},y'_{1})$
with $p'_{2}=(x'_{2},y'_{2})$, respectively. Assume w.l.o.g. that
$x_{1}\le x_{2}\le x'_{2}\le x'_{1}$ and that $y_{1}<y_{2}$ (see
Figure~\ref{fig:construction}). Let $h(P)$ denote the the height
of $P$ which we define as the distance between $e_{1}$ and $e_{2}$.
Intuitively, we place $1/\epsilon'$ boxes inside $P$ of height $\epsilon'\cdot h(P)$
each, stacked one on top of the other, and of maximum width such that
they are contained inside $P$. Formally, we define $1/\epsilon'$
boxes $B_{0},...,B_{1/\epsilon'-1}$ such that for each each $j\in\{0,...,1/\epsilon'-1\}$
the bottom edge of box $B_{j}$ has the $y$-coordinate $y_{1}+j\cdot\epsilon'\cdot h(P)$
and the top edge of $B_{j}$ has the $y$-coordinate $y_{1}+(j+1)\cdot\epsilon'\cdot h(P)$
(see Figure~\ref{fig:construction}). For each such $j$ we define
the $x$-coordinate of the left edge of $B_{j}$ maximally small and
the $x$-coordinate of the right edge of $B_{j}$ maximally large
such that $B_{j}\subseteq P$. We define $\Rsk'(P)$ to be all items
from $\Rsk$ that intersect with $B_{0}$ and we define $\Rsm'(P)$
to be all items from $\Rsm$ that intersect with $B_{0}$. Intuitively,
we remove the items in $\Rsk'(P)\cup\Rsm'(P)$ from $\opt'(C)$. Note
that the total height of the items in $\Rsk'(P)$ can be at most $\epsilon'h(P)/\epsl$.
Intuitively, we move down all remaining items in $\opt'(P)$ by $\epsilon'h(P)$
units. Hence, they fit into the boxes $\left\{ B_{0},...,B_{1/\epsilon'-2}\right\} $.

Then, we take the top left corner $p$ of each box $B\in\left\{ B_{0},...,B_{1/\epsilon'-1}\right\} $
and start at $p$ a sequence of at most $s(C)$ line segments that
do not intersect any skewed item in $\opt'(C)$ parallel to its respective
shorter edge and connects $p$ with a boundary of $C$ or with the
right edge of a box in $\left\{ B_{0},...,B_{1/\epsilon'-1}\right\} $.
Since our corridors are thin, this is always possible. We do the same
operation with the top right corner of each box in $\left\{ B_{0},...,B_{1/\epsilon'-1}\right\} $.
This partitions $P$ into $1/\epsilon'$ boxes and $C\setminus P$
into $1/\epsilon'$ corridors. Let $\Rsk''(C)$ denote the set of
skewed items in $\opt'(C)$ that are intersected by one of the constructed
lines or by an edge of a box in $\left\{ B_{0},...,B_{1/\epsilon'-1}\right\} $.
Note that there can be at most $O(\frac{1}{\epsilon\cdot\epsilon'\cdot\epsl})$
many of them.

Moreover, denote by $\Rsm'(C)$ the union of $\Rsm'(P)$ with all
small items in $\opt'$ that are intersected by one of the constructed
line segments. Since we constructed only $O\left(\frac{1}{\epsilon\cdot\epsilon'}\right)$
line segments, the total area of $\Rsm'(C)$ is bounded by $\epsilon'h(P)N+O\left(\frac{1}{\epsilon\cdot\epsilon'}\epss N\right)$. 
\end{proof}
In the sequel, we will say that we \emph{process} some acute piece,
meaning that we apply Lemma~\ref{lem:process-piece} to it. We say
that an item $i\in\opt'$ is \emph{long }if $\height(i)\ge(1/2+2\epsl)N$
or $\width(i)\ge(1/2+2\epsl)N$ and \emph{short }otherwise. Let $\R_{long}$
and $\R_{short}$ denote the long and short input items, respectively.
We say that a piece $P\in\P:=\cup_{C\in\bar{\C}}\P(C)$ is \emph{long}
if in $\opt'$ it contains a long item, and $P$ is \emph{short} otherwise.
Based on this definition, we obtain some properties of the pieces
that we will use later. 
\begin{lem}
\label{lem:long-corridors-properties}Let $C\in\bar{\C}$ and let
$P_{1},...,P_{k}$ be the pieces of $C$ in the order in which they
appear within $C$. 
\begin{enumerate}
\item If $C$ is a corridor then $P_{1}$ and $P_{k}$ are acute. 
\item For each $j,j'$, if $P_{j}$ and $P_{j'}$ are both horizontal (resp.
both vertical) long pieces, then there must be an acute vertical (resp.
horizontal) piece $P_{j''}$ with $j<j''<j'$. 
\item If $C$ is a corridor and contains at most one short piece then $C$
is a spiral or a 2-spiral. 
\end{enumerate}
\end{lem}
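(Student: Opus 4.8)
I would prove the three items in order: (1) is immediate from the definition of a path corridor; (2) is the geometric core and rests on a ``width-additivity'' argument for long items; and (3) follows combinatorially from (1) and (2) together with the thinness of the corridors in $\bar{\C}$.

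\textbf{Item (1).} Recall that a path corridor $C$ with $k$ pieces has two ``end'' edges $e_0$ and $e_{k+1}$, each a single straight segment, and that $e_0$ (resp.\ $e_{k+1}$) is one of the two monotone connecting curves bounding $P_1$ (resp.\ $P_k$). Assume $P_1$ is horizontal (the vertical case is symmetric) and write $p_1=(x_1,y_1),p'_1=(x'_1,y'_1)$ for the endpoints of its lower defining edge and $p_2=(x_2,y_2),p'_2=(x'_2,y'_2)$ for those of its upper one, with $e_0$ joining $p_1$ to $p_2$. Since $e_0$ is a single vertical segment, $x_1=x_2$, and therefore one of the chains $x_1\le x_2\le x'_2\le x'_1$ or $x_2\le x_1\le x'_1\le x'_2$ holds (according to whether $x'_2\le x'_1$ or not), i.e.\ $P_1$ is acute; likewise $P_k$ via $e_{k+1}$.

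\textbf{Item (2).} A long horizontal item $i$ has $\width(i)\ge(1/2+2\epsl)N$, so its $x$-projection is an interval of length at least $(1/2+2\epsl)N$ inside $[0,N]$; an elementary computation shows that any such interval contains the central window $W:=[N/2-\epsl N,\,N/2+\epsl N]$. Hence both horizontal long pieces $P_j,P_{j'}$ contain a horizontal rectangle spanning $W$, and since $P_j\cap P_{j'}=\emptyset$ these two $W$-spanning parts lie at disjoint heights; say $P_j$ lies above $P_{j'}$. Suppose, for contradiction, that every vertical piece $P_m$ with $j<m<j'$ is obtuse. Consider first $j'=j+2$, so $P_{j+1}$ is the unique (obtuse) vertical piece between the two. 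By definition an obtuse piece has the ``offset-parallelogram'' shape, which, together with the fact that $C$ is thin (width $O(\epsl N)$), forces $P_j$ and $P_{j+2}$ to attach to $P_{j+1}$ at opposite ends and to extend to opposite $x$-sides of a vertical line $x=x^{*}$ lying within $O(\epsl N)$ of the $x$-coordinate of $P_{j+1}$. Then $\width(P_j)\le x^{*}+O(\epsl N)$ and $\width(P_{j+2})\le N-x^{*}+O(\epsl N)$, so $\width(P_j)+\width(P_{j+2})\le N+O(\epsl N)$, contradicting $\width(P_j),\width(P_{j+2})\ge(1/2+2\epsl)N$. Hence $P_{j+1}$ is acute. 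For $j'>j+2$: if all horizontal pieces strictly between $P_j$ and $P_{j'}$ are long as well, apply the previous step to each consecutive pair of long horizontal pieces; otherwise argue directly from $W$: the corridor runs through $W$ inside $P_j$ and inside $P_{j'}$ at two different heights, and since an obtuse piece is traversed monotonically in both coordinates, a run consisting only of obtuse vertical pieces (linked by horizontal pieces) cannot bring the corridor back into $W$ at a different height, a contradiction. In all cases some vertical piece between $P_j$ and $P_{j'}$ is acute. Exchanging the roles of horizontal and vertical gives the ``both vertical'' case.

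\textbf{Item (3), and the main obstacle.} Assume $C$ is a path corridor with at most one short piece. By (1), every obtuse piece is internal. If the unique short piece (if any) is vertical, then all horizontal pieces are long, so by (2) applied to consecutive horizontal long pieces and by (1) \emph{every} vertical piece is acute; the situation is symmetric if the short piece is horizontal. Thus all obtuse pieces share one orientation, and a second use of the $W$/width-additivity argument rules out two of them: two obtuse horizontal pieces would each carry a width-$\ge(1/2+2\epsl)N$ item, while the acute vertical pieces between them force the intervening long pieces to extend consistently to one side, so the corridor would have to ``double back'' across $x=N/2$ inside $[0,N]^2$, which is impossible. Hence $\P(C)$ has at most one obtuse piece, so $C$ is a spiral ($0$ obtuse pieces) or a $2$-spiral ($1$ obtuse piece). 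The remaining case, in which the unique short piece may itself be obtuse and sit between two long pieces of the opposite orientation, is handled by re-partitioning $C$ locally around that piece, again yielding a spiral or $2$-spiral. I expect the real work to be (i) making the ``opposite-sides''/width-additivity step of (2) rigorous from the precise definitions of acute/obtuse pieces and of thin corridors, and (ii) the ensuing finite case analysis in (3) on the position and type of the at-most-one short piece, which is of the same flavour as the corridor-shape case distinctions used elsewhere in the paper.
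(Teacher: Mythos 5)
Your overall architecture matches the paper's: item~(1) from the straight end edges $e_0,e_{k+1}$, item~(2) from the geometry of long pieces and the line $x=N/2$, and item~(3) combining (1), (2) with an extent argument. Item~(1) is essentially the paper's argument written out more explicitly. However, there are two places where your proposal diverges in ways that introduce gaps.

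For item~(2) the paper's proof is a one-liner that your proposal over-complicates. Both long horizontal pieces $P_j,P_{j'}$ intersect the vertical line $\ell$ at $x=N/2$; if every vertical piece between them were obtuse, the corridor could not cross back over $\ell$ (traversing an obtuse piece preserves the sense of the $x$-direction), so $\ell$ could not hit both $P_j$ and $P_{j'}$ --- contradiction. Your second case (``argue directly from $W$'') is essentially this argument and is the one to use from the start; the ``width-additivity/opposite sides'' computation you set up for $j'=j+2$, and which you yourself flag as needing work, is unnecessary and harder to make precise than the $\ell$-crossing argument.

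For item~(3) you've missed that property~(2) already forces the short piece itself to be acute, so your ``remaining case'' is vacuous. Concretely: if $P_S=P_m$ is internal, its two neighbors $P_{m-1}=P_L$ and $P_{m+1}=P_R$ are long and of the same orientation (opposite to $P_S$), so applying~(2) with $j=m-1$, $j'=m+1$ forces the unique intermediate piece $P_m=P_S$ to be acute; if $P_S$ is an end piece, (1) gives acuteness directly. There is therefore no ``obtuse short piece'' to handle and no local re-partitioning step. The paper's version of the final step is also cleaner than your ``double back across $x=N/2$'' claim: it observes that every piece other than $P_L,P_R$ has two long same-orientation neighbors and hence is acute by~(2) (endpoints by~(1)), and then rules out $P_L,P_R$ both being obtuse because their far long neighbors would extend to opposite sides, pushing the horizontal extent of $C$ beyond $N$. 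You should replace the hand-wavy ``double back'' argument with this extent bound, which is the same kind of arithmetic you already use in your first-case argument for~(2).
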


\begin{proof}
To show property 1, note that both $P_{1}$ and $P_{k}$ have at least
two corners where the edges meet at $90^{\circ}$. Thus $P_{1}$ and
$P_{k}$ must be acute.

Assume $P_{j}$ and $P_{j'}$ are both horizontal long pieces. Let
$\ell$ be the vertical line at $x=N/2$. Then $P_{j}$ and $P_{j'}$
both intersect $\ell$ as their length is at least $(1/2+\epsl)N$.
Now assume all intermediate vertical pieces are obtuse. 
Then $\ell$ can not intersect both $P_{j}$ and $P_{j'}$. This is
a contradiction. So there must be an acute vertical piece $P_{j''}$
with $j<j''<j'$. This proves property 2.

Before proving property 3, first let us show that if $C$ contains
only long pieces it forms a spiral. For contradiction, assume $C$
contains an obtuse piece $P_{O}$. Now $P_{O}$ has width $\le\epsl$
and its two neighbor pieces have length $\ge(\frac{1}{2}+2\epsl)N$.
Hence, the distance between the left most endpoint and the rightmost
endpoint of $C$ is at least $2(\frac{1}{2}+2\epsl)N-\epsl N>N$.
Which is a contradiction.

Now assume $C$ contains one short piece $P_{S}$. Let $P_{L},P_{R}$
be its neighbor pieces. Say $P_{x}(\neq P_{L},P_{R})$ be a corridor,
the neighbor pieces of $P_{x}$ are both horizontal long (or both
vertical long) pieces. So, using property 2, $P_{x}$ is acute. $P_{L}$
and $P_{R}$ both can not be obtuse as then the distance between the
right most endpoint and the left most endpoint in $C$ is at least
$2(\frac{1}{2}+2\epsl)N-2\epsl N>N$. Hence, at most one of them is
obtuse, implying $C$ to be spiral or 2-spiral. 
\end{proof}
We present now several ways to partition the corridors and cycles
in $\bar{\C}$ and prove later that one of them yields an approximation
ratio of $1.25+\epsilon$.

\subsubsection{Packings 1 and 2}

Let $C\in\bar{\C}$. First, we {\em process} each horizontal acute
piece in $C$. Consider one of the resulting corridors $C'$. By Lemma~\ref{lem:long-corridors-properties},
$C'$ have at most one long piece. Also, by Lemma~\ref{lem:long-corridors-properties}
the first and the last piece of $C'$ are acute. Unless $C'$ consists
of only one piece, one of the latter pieces must be short and we process
this short acute piece. We continue until there is no short acute
piece left to be processed. We do this operation with every corridor
$C\in\bar{\C}$. We obtain a partition $\hat{\C}_{1}$ of $K$ into
boxes. Let $\R'_{skew,1}$ denote the set of all items in the respective
sets $\Rsk'(P)$ when we processed a piece $P$ (see Lemma~\ref{lem:process-piece}).
Intuitively, these items are lost at this moment. Then from Lemma
\ref{lem:process-piece}, $\height(\R'_{skew,1})\le\eps'N/\epsl^{2}$,
as the sum of heights of all pieces $\le N/\epsl$. By choosing $\eps'\le\epsl^{4}\eps N$,
we make $\height(\R'_{skew,1})\le\epsl^{2}\eps N$. Similarly, let
$\R_{small,1}$ denote the set of all items in the respective sets
$\Rsm'(C)$. Now items in $\Rsm'(C)$ comes from piece in $C$ that
have length $\ge\epsl N$. Hence, their total area $\sum_{C}a(\Rsm'(C))\le\sum_{C}\eps'h(P)N+O\left(\frac{1}{\eps\cdot\eps'}\epss N^{2}\right)\le O(\eps')N^{2}$.
Since they have small area, we can put them back into the packing
using the following lemma. 
\begin{lem}
\label{lem:rotsmall} Given a partition $\hat{\C}$ of $K$ into $|\hat{\C}|$
(which is $O_{\eps}(1)$) boxes, spirals, and 2-spirals, a set of
items $\R'$ that are placed inside the objects in $\hat{\C}$, and
a set $\R'_{small}\subseteq\Rsm$ with $\area(\R'_{small})\le\eps'N^{2}$.

Then $(1-\epsilon)|\R'|+|\R'_{small}|$ items from $\R'\cup\R'_{small}$
can be placed inside the objects in $\hat{\C}$. 
\end{lem}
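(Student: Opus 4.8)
The plan is to reuse a small, lightly occupied sub‑region of the given packing as a home for all of $\R'_{small}$, at the cost of only an $\eps$‑fraction of $\R'$.

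First I would fix, for each object $C\in\hat{\C}$, a partition into $s(C)\le 1/\eps$ corridor pieces (a box being a single piece), and let $\P$ denote the resulting collection of pieces, so $|\P|=O_\eps(1)$. Arguing exactly as in the proof of Lemma~\ref{lem:corridor-partition-improved} — using that the long edge of every corridor piece has length at least $\epsl N/2$, that boxes of $\hat{\C}$ are either genuinely small (and hence negligible) or have both sides $\Omega(\epsl N)$, and that $\epss$ may be taken much smaller than $\epsl$ — one identifies a sub-collection $\P'\subseteq\P$ with $|\P'|\ge 1/\eps$ such that inside each $P\in\P'$ one can draw an empty axis‑parallel box $B_P$ with $\height(B_P),\width(B_P)\ge\frac{\epss}{\eps}N$ and $\area(B_P)\ge(1+2\eps)\area(\R'_{small})$. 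Here we use that $\area(\R'_{small})\le\eps' N^2$ with $\eps'$ sufficiently small, which holds for the value of $\eps'$ fixed earlier.

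Second, since the $\ge 1/\eps$ pieces in $\P'$ are pairwise disjoint, by pigeonhole there is one, say $P^*$, that contains at most $\eps|\R'|$ items of $\R'$. I delete from the given packing exactly the items of $\R'$ lying inside $P^*$; this removes at most $\eps|\R'|$ items, so at least $(1-\eps)|\R'|$ items of $\R'$ survive, each still in its original position inside an object of $\hat{\C}$, and the piece $P^*$ — and in particular the box $B_{P^*}\subseteq P^*$ — is now empty. Finally I pack $\R'_{small}$ into $B_{P^*}$ by NFDH (Theorem~\ref{thm:nfdhPack}): every item of $\R'_{small}$ has both side lengths at most $\epss N\le\eps\cdot\min\{\height(B_{P^*}),\width(B_{P^*})\}$ and $\area(\R'_{small})\le\frac{1}{1+2\eps}\area(B_{P^*})$, so NFDH places all of $\R'_{small}$ inside $B_{P^*}$. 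Together with the surviving items of $\R'$ this yields a feasible placement of $(1-\eps)|\R'|+|\R'_{small}|$ items inside the objects of $\hat{\C}$.

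I expect the main obstacle to be the first step: making precise the averaging/area argument that guarantees $\Omega(1/\eps)$ pieces simultaneously admitting an empty box of the required (not-too-small) side lengths and area. This is exactly where the separation $\epss\ll\epsl$ and the smallness of $\eps'$ are used, and it is handled in the same way as in the repacking steps inside the corridor-partition lemmas earlier in the paper; the remaining steps (pigeonhole and NFDH) are routine.
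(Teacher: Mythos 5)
Your route is genuinely different from the paper's, but it has a gap in its first step. The paper does not look for $\ge 1/\eps$ pieces that each host a large box. Instead it picks the single largest piece $P_B$ (with area $\ge N^2/n_p$, where $n_p\le|\hat{\C}|/\eps$ is the total number of pieces), inscribes a maximal rectangle $R_M$ in it using the thinness of corridors (width $\le\eps\epsl N$), partitions $R_M$ into $1/\eps$ equal horizontal strips, and deletes only the items \emph{fully contained} in the least-populated strip; items crossing the strip boundary are retained, so the usable empty region is the strip shrunk by $\epss N$ on each side, and the constraints~(\ref{roteq1})--(\ref{roteq2}) are chosen precisely so that this shrunken region still has height $\ge 2\epss N$ and enough area for NFDH. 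Thus the paper's pigeonhole is over $1/\eps$ strips inside one piece, whereas yours is over $\ge 1/\eps$ whole pieces.

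The gap is the assertion that $\ge 1/\eps$ pieces each admit an empty box of sides $\ge\epss N/\eps$ and area $\ge(1+2\eps)\area(\R'_{small})$. You justify this by reference to the proof of Lemma~\ref{lem:corridor-partition-improved}, but there the knapsack is explicitly split into at least $1/\epsl$ thin corridor pieces; in Lemma~\ref{lem:rotsmall} as stated $\hat{\C}$ is just $O_\eps(1)$ boxes, spirals, and 2-spirals (possibly including fat boxes), and showing that $1/\eps$ pieces are \emph{simultaneously} large enough to host such a box requires an averaging argument that uses thinness of the corridor pieces (e.g., sorting pieces by area and bounding the $(1/\eps)$-th largest using that each piece has area $\le\eps\epsl N^2$), which you do not spell out. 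The paper avoids this entirely because picking the single largest piece is a one-line averaging step. You should also state explicitly that the piece partition is nice for $\R'$, so that deleting the items inside $P^*$ actually empties $B_{P^*}$; granted that, your version of the final step is cleaner than the paper's, which has to keep boundary-crossing items and argue separately that they do not intrude into the shrunken strip.
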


\begin{proof}
Here, we will show the existence of a free empty rectangular region
that has height, width $\ge2\epss$ and area $\ge2\area(\R'_{small})$.
Then we can pack all items in $\R'_{small}$ inside the strip using
NFDH. Let $n_{p}$ be the number of pieces in $\hat{\C}$. As each
corridor can have at most $1/\eps$ pieces, $n_{p}\le|\hat{\C}|/\eps$.
So, there exists a piece $P_{B}$ of area $\ge N^{2}/n_{p}$.  As corridors have width $\le\eps\epsl N$.
There exists a maximal rectangle $R_{M}$ contained in $P_{B}$ of
area at least $N^{2}/n_{p}-2\eps^{2}\epsl^{2}N^{2}$. Let us assume
$w(R_{M})\ge h(R_{M})$. Now we add two constraints: 
\begin{equation}
\eps/(|\hat{\C}|)-2\eps^{2}\epsl^{2}\ge4\epss/\eps\label{roteq1}
\end{equation}
\begin{equation}
(\eps/3)(\eps/(|\hat{\C}|)-2\eps^{2}\epsl^{2})\ge\eps'\label{roteq2}
\end{equation}

From \eqref{roteq1}, we get $h(R_{M})\ge a(R_{M})/N\ge4\epss N/\eps$.
From \eqref{roteq2}, we get $\eps a(R_{M})/3\ge\eps'N^{2}$.

Let us partition $R_{M}$ into $1/\eps$ equal strips of height $\eps h(R_{M})$
and let $R_{F}$ be the strip that contains the minimum number of
rectangles. We remove these rectangles (completely contained inside
$R_{F}$) loosing only $\eps$ fraction of profit. Note that we do
not remove rectangles cut by the boundary of the strip. Now within
this strip we obtain a height of $\eps h(R_{M})-2\epss N\ge4\epss N-2\epss N\ge2\epss N$,
which is completely free. The area of this free region $\ge3\eps'N^{2}-2\epss N^{2}\ge\eps'N^{2}$.
Thus we obtain the required box to pack small items using NFDH. 
\end{proof}
In some of the later packings, we will include $\R'_{skew,1}$ so
that these items are not entirely lost. Also, there is one case in
which we can put back $\R'_{skew,1}\cap\R_{short}$ into our packing
as follows: Let $B_{hor}^{*}$ denote the box $B\in\hat{\C}_{1}$
of maximum height with $w(B)\ge(1/2+2\epsl)N$ and similarly let $B_{ver}^{*}$
denote the box $B'\in\hat{\C}_{1}$ of maximum width with $h(B')\ge(1/2+2\epsl)N$.
Assume that $\max\left\{ h(B_{hor}^{*}),w(B_{hor}^{*})\right\} \ge\epsl^{2}N/3$
and assume w.l.o.g that $h(B_{hor}^{*})\ge\epsl^{2}N/3$. Then we
pack all items in $\R'_{skew,1}\cap\R_{short}$ into $B_{hor}^{*}$
(or $B_{ver}^{*}$) while losing only an $\epsilon$-fraction of the
items packed in $B_{hor}^{*}$ (or $B_{ver}^{*}$). 
\begin{lem}
Assume that $h(B_{hor}^{*})\ge\epsl^{2}N/3$ and let $\opt'(B_{hor}^{*})$
denote the items packed in $B_{hor}^{*}$. Then we can partition $B_{hor}^{*}$
into smaller boxes and pack $(1-\epsilon)|\opt'(B_{hor}^{*})|+|\R'_{skew,1}\cap\R_{short}|$
items into these boxes. 
\end{lem}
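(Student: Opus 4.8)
The plan is to free a thin, nearly full-length empty sub-box inside $B_{hor}^{*}$ at the cost of an $\epsilon$-fraction of the items currently packed in it, and then to insert all of $\R'_{skew,1}\cap\R_{short}$ into that sub-box after rotating them suitably. Observe first that $B_{hor}^{*}$ has $\width(B_{hor}^{*})\ge(1/2+2\epsl)N$ by its definition and $\height(B_{hor}^{*})\ge\epsl^{2}N/3$ by hypothesis, so it is far too large to be one of the very thin boxes produced while \emph{processing} an acute piece; hence it is one of the boxes obtained from an original box of $\bar{\C}$ via Lemma~\ref{lem:partition-boxes-weighted}, and we may assume that its items $\opt'(B_{hor}^{*})$ are either all horizontal (each of height at most $\epss N$) or all vertical (each of width at most $\epss N$). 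Next I would bound how ``small'' $\R'_{skew,1}\cap\R_{short}$ is: each of its items was removed as a member of some set $\Rsk'(P)$ while processing an acute piece $P$, so by Lemma~\ref{lem:process-piece} the sum of the \emph{shorter} side lengths of the items of $\R'_{skew,1}$ is $O(\epsilon' N/\epsl^{2})$, which, choosing $\epsilon'$ small enough as in the preceding discussion, is a comfortably small multiple of $\epsilon\epsl^{2}N$; and since each such item is skewed and short, one of its sides is at most $\epss N$ and the other lies in $(\epsl N,(1/2+2\epsl)N]$.

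To carve the empty region, suppose first that $\opt'(B_{hor}^{*})$ consists of horizontal items. Partition $B_{hor}^{*}$ into $O_{\epsilon}(1)$ horizontal strips of equal height $\Theta(\epsilon\epsl^{2}N)$ (possible since $\height(B_{hor}^{*})\ge\epsl^{2}N/3$); since each item has height at most $\epss N$, negligible next to a strip's height, it meets at most two strips, so the least-occupied strip meets at most $\epsilon\,|\opt'(B_{hor}^{*})|$ items. Delete these, keeping $(1-\epsilon)|\opt'(B_{hor}^{*})|$ items; the chosen strip is then empty apart from at most two ``rows'' of total height $2\epss N$ straddling its boundaries, leaving a free box of width $\width(B_{hor}^{*})\ge(1/2+2\epsl)N$ and height $\Theta(\epsilon\epsl^{2}N)$. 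Rotating each item of $\R'_{skew,1}\cap\R_{short}$ so that its short side is vertical makes it of height at most $\epss N$ and width at most $(1/2+2\epsl)N$; as their heights sum to a small multiple of $\epsilon\epsl^{2}N$, they stack on top of one another inside the free box (or use NFDH, Theorem~\ref{thm:nfdhPack}). Cutting the strip to isolate this box gives the partition of $B_{hor}^{*}$ into smaller boxes, with $(1-\epsilon)|\opt'(B_{hor}^{*})|+|\R'_{skew,1}\cap\R_{short}|$ items packed. If instead $\opt'(B_{hor}^{*})$ consists of vertical items, one carves symmetrically a free \emph{vertical} strip of width $\Theta(\epsilon\epsl^{2}N)$ and full height $\height(B_{hor}^{*})$, and places the items of $\R'_{skew,1}\cap\R_{short}$ rotated so that their long sides are vertical (width at most $\epss N$, height at most $(1/2+2\epsl)N$) side by side; this succeeds as long as $\height(B_{hor}^{*})\ge(1/2+2\epsl)N$.

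The step I expect to be the main obstacle is exactly the vertical-items case when $\height(B_{hor}^{*})$ is only $\Theta(\epsl^{2}N)$ and thus possibly smaller than $(1/2+2\epsl)N$: a rotated copy of an item of $\R'_{skew,1}\cap\R_{short}$ whose long side is close to $N/2$ then fits neither ``vertically'' in $B_{hor}^{*}$ (too tall) nor ``horizontally'' in the $\Theta(\epsilon\epsl^{2}N)$-wide strip we can afford to free (too wide). I would resolve this by showing this configuration cannot arise — arguing that a box produced by Lemma~\ref{lem:partition-boxes-weighted} which has width at least $(1/2+2\epsl)N$ and is filled only with vertical items automatically has height at least $(1/2+2\epsl)N$, by tracing the resource-augmentation step (Lemma~\ref{lem:augmentPack}) through an original box that is itself wide — or, failing that, by noting that in the surrounding case analysis of ``Packings~1 and~2'' this statement is only invoked when the receiving box is wide \emph{and} tall (otherwise $B_{ver}^{*}$ takes its place under the symmetric hypothesis). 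Once the receiving box is either horizontal-packed or both wide and tall, the two carving arguments above apply and yield the claimed packing.
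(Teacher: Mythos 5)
Your approach is the same as the paper's: free the least-occupied thin horizontal strip of $B_{hor}^{*}$ and stack the short removed items there after rotation. The paper's own proof is a one-liner pointing to Lemma~\ref{lem:rotsmall}, so your more detailed write-up adds value — and you are right to flag that the argument only works if every item of $\opt'(B_{hor}^{*})$ has small height, since otherwise the items straddling the strip boundary cannot be absorbed into an $O(\epss N)$ margin.

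However, your proposed closures of that gap are shaky. The first route — ``a box from Lemma~\ref{lem:partition-boxes-weighted} of width $\ge(1/2+2\epsl)N$ filled only with vertical items must have height $\ge(1/2+2\epsl)N$'' — is false: a box of width $N$ and height $2\epsl N$ can nicely stack vertical items of height $1.5\epsl N$ side by side. Your second route also does not directly apply, because the condition in the surrounding text is just $h(B_{hor}^*)\ge\epsl^2 N/3$, not full-height. The correct resolution is structural: trace which boxes of $\hat{\C}_{1}$ can possibly have width $\ge(1/2+2\epsl)N$ \emph{and} height $\ge\epsl^{2}N/3$. The boxes $\B(P)$ created by processing an acute piece $P$ have their ``thin'' dimension $\eps'\cdot\min\{h(P),w(P)\}\le\eps'\epsl N\ll\epsl^{2}N/3$; vertical (sub)corridor pieces and boxes carved out of them have width $\le\epsl N<(1/2+2\epsl)N$. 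So any surviving candidate is a horizontal piece, or a box carved from a horizontal box of $\bar{\C}$, and in either case contains only items with height $\le\epss N$ (horizontal or small after the intermediate classes are dropped). Hence the vertical-items case you worried about simply cannot arise, and a horizontal strip of height $3\eps\cdot h(B_{hor}^{*})\ge\eps\epsl^{2}N$ can be freed and used to stack $\R'_{skew,1}\cap\R_{short}$. Also note: $B_{hor}^{*}$ need not come specifically from Lemma~\ref{lem:partition-boxes-weighted} — it could be a single ``stranded'' horizontal subcorridor that became a box during processing — but the same conclusion about item heights holds there as well.
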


\begin{proof}
The proof is similar to Lemma \ref{lem:rotsmall}. We again divide
$B_{hor}^{*}$ into $1/3\eps$ strips and remove the one that contains
smallest number of rectangles. Thus we can obtain an empty strip with
sufficient height to pack rectangles $\R'_{skew,1}$ as a stack. 
\end{proof}
Let $\opt''_{1}\subseteq\opt'_{1}$ denote set of items packed into
$\hat{\C}_{1}$. Packing 2 uses the same strategy, the only difference
is that at the beginning we process each \emph{vertical} acute piece
in each $C\in\bar{\C}$ (and afterwards all remaining short pieces).
Let $\R'_{skew,2}$, $\hat{\C}_{2}$, and $\opt''_{2}$ denote the
respective sets of removed items, the partition of $K$ into boxes,
and the set of packed items.

\subsubsection{Packings 3 and 4}

Let $C\in\bar{\C}$. Let $P_{1}^{S},P_{2}^{S},\dots,P_{k}^{S}$ be
the short pieces of $C$ in the order in which they appear within
$C$. Consider each $P_{j}^{S}$ such that $j$ is even. We delete
all skewed items of $\opt'$ that are contained in $P_{j}^{S}$. We
replace $P_{j}^{S}$ intuitively by the largest box $B_{j}^{S}$ that
fits into $P_{j}^{S}$ and place essentially all small items from
$\opt'$ in $P_{j}^{S}$ inside $B_{j}^{S}$. 
\begin{lem}
There exists a box $B_{j}^{S}\subseteq P_{j}^{S}$ such that $\area(B_{j}^{S})\ge(1-\epsilon)\area(P_{j}^{S})$,
$C\setminus B_{j}^{S}$ consists of one or two corridors, and into
$B_{j}^{S}$ we can assign $(1-\epsilon)\left|\opt'(P_{j}^{S})\cap\Rsm\right|$
items from $\opt'(P_{j}^{S})\cap\Rsm$. 
\end{lem}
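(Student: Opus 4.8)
I would first refine $\bar{\C}$ so that every corridor piece is relatively thin and still has length at least $\epsl N/2$ (this is the geometric refinement underlying Lemma~\ref{lem:corridor-partition-improved}; in the cardinality case it only removes $O_\eps(1)$ skewed items and a set of small items of total area $O_\eps(\epss)N^2$, the latter re-inserted by Lemma~\ref{lem:rotsmall}), and in the same step arrange that a piece which still carries items of $\opt'\cap\Rsm$ is not very thin. Assume w.l.o.g.\ that $P_j^S$ is horizontal, bounded by the horizontal edges $e_1=\overline{p_1p_1'}$ at height $y_1$ and $e_2=\overline{p_2p_2'}$ at height $y_2>y_1$, with $p_1=(x_1,y_1)$, $p_1'=(x_1',y_1)$, $p_2=(x_2,y_2)$, $p_2'=(x_2',y_2)$, and by the two monotone axis-parallel curves joining $p_1$ to $p_2$ and $p_1'$ to $p_2'$. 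Put $w:=\width(P_j^S)=\max\{x_1',x_2'\}-\min\{x_1,x_2\}$ and $h:=\height(P_j^S)=y_2-y_1$; relative thinness means $|x_1-x_2|\le\eps w$ and $|x_1'-x_2'|\le\eps w$ (and, being short, $P_j^S$ contains no item of $\opt'$ with a side of length $\ge(1/2+2\epsl)N$, which is what keeps the corridors in the surrounding construction of the allowed kind, but is not used within this lemma).

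The plan is to set $B_j^S:=\bigl[\max\{x_1,x_2\},\,\min\{x_1',x_2'\}\bigr]\times[y_1,y_2]$, the largest axis-parallel rectangle of full height $[y_1,y_2]$ contained in $P_j^S$. That $B_j^S\subseteq P_j^S$ holds because a monotone curve from $(x_1,y_1)$ to $(x_2,y_2)$ stays inside $[\min\{x_1,x_2\},\max\{x_1,x_2\}]\times[y_1,y_2]$ (and symmetrically for the right curve), so every $(x,y)$ with $x\in[\max\{x_1,x_2\},\min\{x_1',x_2'\}]$ and $y\in[y_1,y_2]$ lies in $P_j^S$. For the area, $\width(B_j^S)\ge w-|x_1-x_2|-|x_1'-x_2'|\ge(1-2\eps)w$, so $\area(B_j^S)\ge(1-2\eps)wh$, whereas $P_j^S$ is contained in its bounding box of area $wh$; hence $\area(B_j^S)\ge(1-2\eps)\area(P_j^S)$, which is the first claim after rescaling $\eps$. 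For the second claim, $P_j^S\setminus B_j^S$ is a ``left wedge'' ($x<\max\{x_1,x_2\}$) and a ``right wedge'' ($x>\min\{x_1',x_2'\}$); each wedge, bounded by one monotone curve and by parts of $e_1,e_2$, is absorbed by the neighboring piece of $C$ at that end of $P_j^S$ (whose parallel edges are orthogonal to those of $P_j^S$) and, together with all remaining pieces on that side, forms one corridor with fewer bends than $C$. Thus $C\setminus B_j^S$ is two corridors, or one if $P_j^S$ is an extremal piece of $C$ (any empty wedge hanging off a free end of $C$ is discarded).

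For the last claim: if $P_j^S$ carries no item of $\opt'\cap\Rsm$ there is nothing to prove; otherwise $P_j^S$ is not very thin, so $\height(P_j^S)$ is above the very-thin threshold, and this together with $\width(P_j^S)\ge\epsl N/2$ and $\epss\le\eps^2\epsl$ makes every item of $\opt'(P_j^S)\cap\Rsm$ \emph{tiny relative to} $B_j^S$, i.e.\ with both side lengths at most $\eps\cdot\min\{\width(B_j^S),\height(B_j^S)\}$. For such a family NFDH (Theorem~\ref{thm:nfdhPack}) packs into $B_j^S$ any subfamily of total area at most $(1-O(\eps))\area(B_j^S)$, since each shelf is then filled to width at least $(1-\eps)\width(B_j^S)$ before the next shelf starts, and the unused top band has height at most $\epss N\le\eps\,\height(B_j^S)$. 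As $\opt'$ packs the items of $\opt'(P_j^S)\cap\Rsm$ disjointly inside $P_j^S$, their total area is at most $\area(P_j^S)$; deleting the $c\eps$-fraction of largest area among them (for a suitable constant $c$) removes at least a $c\eps$-fraction of that area, so the surviving $(1-\eps)\lvert\opt'(P_j^S)\cap\Rsm\rvert$ items have total area at most $(1-c\eps)\area(P_j^S)\le(1-O(\eps))\area(B_j^S)$ (for $c$ large, using $\area(B_j^S)\ge(1-2\eps)\area(P_j^S)$), which is below the NFDH threshold; we pack them into $B_j^S$. When $\lvert\opt'(P_j^S)\cap\Rsm\rvert$ is below a constant no deletion is needed, since these few tiny items fit trivially in $B_j^S$ (whose area is $\Omega_\eps(N^2)$ when $P_j^S$ is not very thin).

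The step I expect to be the main obstacle is this last one: one must tune the constants — $\epss$ against $\epsl$, the very-thin threshold, and the target loss $\eps$ — so that the area wasted when inscribing $B_j^S$, the area overhead of NFDH, and the boundary regimes (a very thin $P_j^S$; a $P_j^S$ with only few small items) are all absorbed into a loss of just one $\eps$-fraction of the \emph{number} of small items. The first two claims, by contrast, follow directly from monotonicity of the two boundary curves of $P_j^S$ and from relative thinness.
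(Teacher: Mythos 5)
Your proof is correct and takes essentially the same approach as the paper's (very brief) proof: inscribe the maximal full-thickness axis-parallel box inside the thin piece, and pack the small items into it with NFDH after discarding an $\eps$-fraction of them. The constant-tuning concern you flag at the end is not a real obstacle once $\epss\le\eps^2\epsl$ and the "not very thin" property are in place (a removal of roughly a $4\eps$-fraction of items by area suffices against the $(1-2\eps)$ NFDH threshold and the $(1-2\eps)$ area loss from inscribing $B_j^S$); you also correctly identify that this last property, while only implicit in the paper's one-sentence proof, is what makes the NFDH step go through.
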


\begin{proof}
Similarly to the creation of boxes for small items from Lemma~\ref{lem:partition-knapsack}, we can temporarily remove
the items from the piece, place a box inside whose total area is close
to the area of the piece due to the fact that the pieces are thin,
and place back almost all the small items from the piece by means
of NFDH (Theorem~\ref{thm:nfdhPack}). 
\end{proof}
By Lemma~\ref{lem:long-corridors-properties} each remaining corridor
is a spiral or a 2-spiral. Let $\hat{\text{\ensuremath{\C}}}_{3}$
denote the resulting a partition of $K$ and let $\opt''_{3}\subseteq\opt'$
denote the corresponding set of packed items. Packing 4 is defined
similarly, the only difference is that we do the above operation with
each $P_{j}^{S}$ such that $j$ is odd; let $\hat{\text{\ensuremath{\C}}}_{4}$
denote the resulting partition of $K$ and $\opt''_{4}\subseteq\opt'$
the set of packed items.

\subsubsection{Packing 5}

In our final packing we start with removing items from $\opt'$ such
that we can free up a strip of height $\epsilon N$. Intuitively,
we do this via a random placement of a horizontal or vertical strip
of width $\epsilon N$ and arguing that in expectation we lose only
relatively few items. 
\begin{lem}
\label{lem:randstrip} There exists a solution $\overline{\opt}'\subseteq\opt'$
with at least \[(1-O_{\epsl,\epsst}(1))\left(\frac{\left|\opt'\cap\R_{long}\right|}{2}+\frac{\left|\opt'\cap\R_{short}\right|}{4}\right)\]
items in which no item intersects with $[0,N]\times[0,\epsst N]\subseteq K$. 
\end{lem}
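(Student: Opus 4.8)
The plan is a randomized shifting argument. Assume w.l.o.g.\ that $1/\epsst\in\N$ and set $M:=1/\epsst$; recall that $\opt'\subseteq\opt$, so $\opt'\cap\Rla=\emptyset$ in the setting of Lemma~\ref{lem:partition-knapsack-rot}. For an orientation $o\in\{\mathrm{hor},\mathrm{ver}\}$ and an index $j\in\{0,\dots,M-1\}$ I would define a subroutine $\mathsf{Free}(o,j)$ that produces a feasible placement of a subset $\overline{\opt}'(o,j)\subseteq\opt'$ inside $K$ leaving $[0,N]\times[0,\epsst N]$ empty, as follows. If $o=\mathrm{hor}$, let $S_j:=[0,N]\times[j\epsst N,(j+1)\epsst N]$; delete every item of $\opt'$ whose interior meets the interior of $S_j$, translate every surviving item whose bottom edge has $y$-coordinate at least $(j+1)\epsst N$ downwards by $\epsst N$, and finally reflect the whole configuration about the line $y=N/2$. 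Every surviving item lies entirely above or entirely below $S_j$, so the translation creates no overlap and keeps all items inside $K$; after it the strip $[0,N]\times[N-\epsst N,N]$ is empty, and after the reflection the strip $[0,N]\times[0,\epsst N]$ is empty. If $o=\mathrm{ver}$, I would run the symmetric construction with vertical strips, translating the items lying to the right of the strip leftwards, reflecting about $x=N/2$ (which frees $[0,\epsst N]\times[0,N]$), and then applying a global $90^{\circ}$ rotation so that the free region becomes $[0,N]\times[0,\epsst N]$; this rotation is legitimate precisely because we allow $90^{\circ}$ rotations of items, which is why the lemma is placed in the rotational section.

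Next I would draw $o$ uniformly from $\{\mathrm{hor},\mathrm{ver}\}$ and $j$ uniformly from $\{0,\dots,M-1\}$, independently, and estimate, for each $i\in\opt'$, the probability that $i$ survives in $\overline{\opt}'(o,j)$. An item $i$ meets at most $\lceil \height(i)/(\epsst N)\rceil+1\le \height(i)/(\epsst N)+2$ of the $M$ horizontal strips, so $\Pr[i\text{ deleted}\mid o=\mathrm{hor}]\le \height(i)/N+2\epsst$, and symmetrically $\Pr[i\text{ deleted}\mid o=\mathrm{ver}]\le \width(i)/N+2\epsst$. If $i$ is short, then $\height(i),\width(i)<(1/2+2\epsl)N$, so $\Pr[i\text{ survives}]\ge \tfrac12-2\epsl-2\epsst\ge\tfrac14$ once $\epsl+\epsst$ is below a suitable constant. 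If $i$ is long, then since $\opt'\cap\Rla=\emptyset$ it is either long in width, in which case $\height(i)\le\epsl N$ and $\Pr[i\text{ survives}\mid o=\mathrm{hor}]\ge 1-\epsl-2\epsst$ (and we bound the $o=\mathrm{ver}$ contribution trivially by $0$), or long in height, which is symmetric; either way $\Pr[i\text{ survives}]\ge\tfrac12(1-\epsl-2\epsst)$.

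Summing over $i\in\opt'$, I would get $\mathbb{E}\big[|\overline{\opt}'(o,j)|\big]=\sum_{i\in\opt'}\Pr[i\text{ survives}]\ge \tfrac12(1-\epsl-2\epsst)\,|\opt'\cap\R_{long}|+(\tfrac12-2\epsl-2\epsst)\,|\opt'\cap\R_{short}|\ge(1-O(\epsl+\epsst))\big(\tfrac12|\opt'\cap\R_{long}|+\tfrac14|\opt'\cap\R_{short}|\big)$, which is the claimed bound with $O_{\epsl,\epsst}(1)=O(\epsl+\epsst)$. Hence some fixed pair $(o,j)$ realizes a set $\overline{\opt}':=\overline{\opt}'(o,j)$ of at least that size, and by construction no item of $\overline{\opt}'$ intersects $[0,N]\times[0,\epsst N]$, proving the lemma; the argument is easily derandomized by trying all $2M$ choices of $(o,j)$.

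I expect the only genuinely delicate point to be the correctness of $\mathsf{Free}(o,j)$: one must check that after deleting the items meeting $S_j$ the remaining items split cleanly into an ``above'' part and a ``below'' part, so that the rigid downward translation by $\epsst N$ produces no overlap and stays inside $K$, and that the reflections and the single $90^{\circ}$ rotation (needed in the vertical case) are legitimate operations in the rotational model. The probabilistic part is then a routine union-type estimate; the long/short threshold $(1/2+2\epsl)N$ is exactly what makes a short item survive with probability $\approx\tfrac12$ irrespective of the chosen orientation, while a long item survives the favourable orientation with probability $\approx 1$ and hence contributes $\approx\tfrac12$ overall (so the stated bound, with coefficient $\tfrac14$ on the short items, actually has slack).
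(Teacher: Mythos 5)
Your proposal is correct and follows essentially the same approach as the paper: pick a random thin horizontal or vertical strip, delete the items it hits, translate the surviving items rigidly to consolidate the empty strip at the boundary, and use a square-symmetry/$90^\circ$ rotation to bring a vertical free strip to the horizontal position $[0,N]\times[0,\epsst N]$. The only cosmetic differences are that you discretize the strip position into $1/\epsst$ candidates while the paper draws it uniformly from $[0,(1-\epsst)N]$, and you move the items above the strip down and then reflect, whereas the paper moves the items below the strip up; both give the same $O_{\epsl,\epsst}(1)$ loss.
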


\begin{proof}
Let $X_{H}$ (resp. $X_{V}$) be a random horizontal (resp. vertical)
strip of height (resp. width) $\epsst N$ and width (resp. height)
$1$, fully contained in the knapsack. We choose the bottom (resp.
left) boundary of the strip uniformly at random over $[0,N-\epsst N]$.
Given a packing $\opt'$ of rectangles in knapsack $K$, a rectangle
$i\in\opt'\cap(\Rho\cup\Rsm)$ is intersected by $X_{H}$ with probability
at most $\epsst+2\epsl$. A long (resp.~short) rectangles $i\in\opt'\cap\Rve$
is intersected by $X_{H}$ with probability at most $1$ (resp. $\frac{1}{2}+O_{\epsst,\epsl}(1)$).
Now with equal probability we either choose $X_{H}$ or $X_{V}$,
and remove all rectangles intersected by the strip. Then the profit
of remaining rectangles $\overline{\opt}'$ is at least $(1-O_{\epsl,\epsst}(1))\left(\frac{\left|\opt'\cap\R_{long}\right|}{2}+\frac{\left|\opt'\cap\R_{short}\right|}{4}\right)$.
W.l.o.g. assume the strip is horizontal (otherwise, use rotations).
The items below the strip can be translated vertically by $\epsst N$
amount to obtain a completely free region of $[0,N]\times[0,\epsst N]$. 
\end{proof}
Next, we use some standard techniques in order to place the items
in $\overline{\opt}'$ into $O_{\epsilon}(1)$ boxes, while keeping
a thin strip empty. 
\begin{lem}
There is a set of boxes $\B$ such that the items in $\overline{\opt}'$
are nicely placed in $\B$ and the boxes in $\B$ are placed non-overlappingly
in $[0,N]\times[0,(1-\epsst/2)N]\subseteq K.$ 
\end{lem}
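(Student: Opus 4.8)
The plan is to take the packing of $\overline{\opt}'$ guaranteed by Lemma~\ref{lem:randstrip} and re-arrange it into $O_{\eps}(1)$ boxes by the standard resource-augmentation machinery, using the empty strip $[0,N]\times[0,\epsst N]$ (of which we only need to consume half) as the slack required for the augmentation. First I would translate every item of $\overline{\opt}'$ downwards by $\epsst N$; since by Lemma~\ref{lem:randstrip} no item of $\overline{\opt}'$ meets $[0,N]\times[0,\epsst N]$, the result is a feasible packing of $\overline{\opt}'$ inside the box $B_{0}:=[0,N]\times[0,(1-\epsst)N]$. Recall that all items of $\opt'$, hence of $\overline{\opt}'$, are skewed: large items were discarded at the start of the section and there are no small items.

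Next I would invoke the resource-augmentation packing lemma (Lemma~\ref{lem:augmentPack}) on the set $\overline{\opt}'$ packed inside $B_{0}$, augmenting only in the vertical direction and with parameter $\eps_{0}:=\min\{\eps,\epsst/2\}$. This yields a partition of $[0,N]\times[0,(1+\eps_{0})(1-\epsst)N]$ into $O_{\eps_{0}}(1)=O_{\eps}(1)$ sub-boxes $\B$, together with a nice packing of $\overline{\opt}'$ into $\B$ (items stacked inside horizontal boxes of $\B$, placed side by side inside vertical ones). Since $(1+\eps_{0})(1-\epsst)\le(1+\epsst/2)(1-\epsst)=1-\epsst/2-\epsst^{2}/2\le 1-\epsst/2$, every box of $\B$ lies inside $[0,N]\times[0,(1-\epsst/2)N]\subseteq K$, which is exactly what is claimed.

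The step that needs care, and which I expect to be the only real obstacle, is to argue that this re-packing loses no item of $\overline{\opt}'$, i.e.\ that a $\Theta(\epsst)$ vertical augmentation suffices for a \emph{loss-free} container packing. For this I would use that every item is skewed and that $N$ is (polynomially) large compared with the short side of any skewed item, which is at most $\epss N$ with $\epss\ll\epsst$: an item that ends up ``small'' with respect to some container $B\in\B$ has one side at most $\epss N$, so the total area of all such leftover items assigned to $B$ is an $O(\epss)$-fraction of $\area(B)$, and they can be re-inserted into $B$ by NFDH (Theorem~\ref{thm:nfdhPack}) using a thin reserved sub-region whose height is a $\Theta(\epsst)$-fraction of the corresponding side of $B$ and therefore $\gg\epss N$; horizontal and vertical leftovers are handled symmetrically. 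If the version of Lemma~\ref{lem:augmentPack} being cited is already loss-free in this regime, this paragraph is unnecessary. In either case, shrinking $\epsst$ by a suitable constant factor at the outset makes the reserved slack dominate the leftover area, and we obtain the claimed boxes $\B$ with a nice packing of all of $\overline{\opt}'$.
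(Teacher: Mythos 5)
Your approach --- shifting the packing into the box $[0,N]\times[0,(1-\epsst)N]$ using the empty strip from Lemma~\ref{lem:randstrip}, then invoking the Resource Augmentation Packing Lemma (Lemma~\ref{lem:augmentPack}) with augmentation parameter at most $\epsst/2$ so the resulting containers fit inside $[0,N]\times[0,(1-\epsst/2)N]$ --- is exactly the paper's one-line proof, spelled out. One minor inaccuracy: in the rotational section small items are \emph{not} discarded (Lemma~\ref{lem:randstrip} explicitly accounts for $\Rsm$), so your claim that every item of $\overline{\opt}'$ is skewed is wrong; this does not matter, however, because Lemma~\ref{lem:augmentPack} already tolerates an $O(\eps)$ profit loss (which is how the paper reads the statement), making your speculative ``loss-free'' paragraph unnecessary.
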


\begin{proof}
Follows from Resource Augmentation Lemma (Lemma \ref{lem:augmentPack}). 
\end{proof}
Recall that for packings 1 and 2 we obtained an improved packing if
$\max\left\{ h(B_{hor}^{*}),w(B_{hor}^{*})\right\} \ge\epsl^{2}N/3$.
Now we obtain an improved packing of packing 5 for the converse case
that $\max\left\{ h(B_{hor}^{*}),w(B_{hor}^{*})\right\} <\epsl^{2}N/3$.
We claim that in this case we can pack into the empty area $[0,N]\times[(1-\eps N/20)N,N]$
(by taking $\epsst=\eps N/40$) all items in $\R'_{skew,1}\cup\R'_{skew,2}\cup(\R_{long}\cap\opt')$:
The items in $\R'_{skew,1}\cup\R'_{skew,2}$ fit into the empty area
because their total height is at most $\epsilon N/10$. Regarding
the items in $\R_{long}\cap\opt'$, observe that packings 1 packs
all items in $(\R_{long}\cap\opt')\setminus\R'_{skew,1}$ in $O_{\epsilon}(1)$
boxes $\hat{\C}_{1}$, and since $\max\left\{ h(B_{hor}^{*}),w(B_{hor}^{*})\right\} <\epsl^{2}N/3$
all long items in these $O_{\epsilon}(1)$ boxes have a total height
of $\epsilon N/10$. Let $\hat{\C}_{5}$ denote the resulting partition
of $K$ into boxes and let $\opt''_{5}\subseteq\opt'$ denote the
set of packed items.

We obtained partitions $\hat{\C}_{1},...,\hat{\C}_{5}$ of $K$ into
boxes, spirals, and 2-spirals, and corresponding sets $\opt''_{1},...,\opt''_{5}$
which are nicely packed into them. In the next lemma we show that
one of these packings yields an approximation ratio of $1.25+\epsilon$
which completes the proof of Lemma~\ref{lem:partition-knapsack-rot}. 
\begin{lem}
It holds that $\max\left\{ |\opt''_{1}|,|\opt''_{2}|,|\opt''_{3}|,|\opt''_{4}|,|\opt''_{5}|\right\} \ge\left(\frac{4}{5}-\epsilon\right)|\opt'|$. 
\end{lem}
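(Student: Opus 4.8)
Write $L:=|\opt'\cap\R_{long}|$ and $S:=|\opt'\cap\R_{short}|$, so that $|\opt'|=L+S$. The plan is to prove that one of the five constructions retains a $(\tfrac45-O(\epsilon))$-fraction of $L+S$; rescaling $\epsilon$ then gives the lemma. Throughout I would absorb into a single $(1+O(\epsilon))$-factor all the ``soft'' losses — the NFDH repackings of small items (Lemmas~\ref{lem:rotsmall} and~\ref{lem:process-piece}), the resource-augmentation steps, and the box used to re-insert the removed skewed items — so that the only quantities I must track by hand are: the sets $\R'_{skew,1},\R'_{skew,2}$ removed while processing acute pieces (Packings $1,2$), the skewed items deleted from alternate short pieces (Packings $3,4$), and the items cut by the random strip of Lemma~\ref{lem:randstrip} (Packing $5$). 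As a preliminary, I would use Lemma~\ref{lem:process-piece}, the choice $\epsilon'\le\epsl^{4}\epsilon$, the relative thinness of the corridors, the ``fit well'' property, and the fact (Lemma~\ref{lem:long-corridors-properties}) that there are only $O_\epsilon(1)$ long pieces each contributing $O(\epsilon'\epsl N)$ items, to conclude $|\R'_{skew,i}\cap\R_{long}|=O_\epsilon(\epsilon' N)$, which is negligible for $|\opt'|$ large.

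Next I would record three quantitative bounds. First, for Packings $3$ and $4$: every skewed item they delete lies in a short piece (a piece carrying a long item is a long piece), and along each corridor the short pieces split into even- and odd-indexed ones, so each such item is deleted in exactly one of the two packings; since both retain all long items, all small items of short pieces, and all items of long pieces, and the residual corridors are spirals or $2$-spirals (Lemma~\ref{lem:long-corridors-properties}), we get $|\opt''_3|+|\opt''_4|\ge 2|\opt'|-S$, hence $\max\{|\opt''_3|,|\opt''_4|\}\ge L+\tfrac{S}{2}$. Second, for Packing $5$: Lemma~\ref{lem:randstrip} and the subsequent resource-augmentation packing give $|\opt''_5|\ge(1-O(\epsilon))(\tfrac{L}{2}+\tfrac{S}{4})$ unconditionally, and in the case $\max\{h(B^*_{hor}),w(B^*_{hor})\}<\epsl^{2}N/3$ the freed strip also absorbs $\R'_{skew,1}\cup\R'_{skew,2}\cup(\R_{long}\cap\opt')$, so there $|\opt''_5|\ge(1-O(\epsilon))(L+\tfrac{S}{4})$. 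Third, for Packings $1$ and $2$: after processing, $K$ is a union of boxes, spirals and $2$-spirals, the only unplaced items of $\opt'$ lie in $\R'_{skew,\cdot}$, and when $\max\{h(B^*_{hor}),w(B^*_{hor})\}\ge\epsl^{2}N/3$ the short part of $\R'_{skew,1}$ is re-inserted into $B^*_{hor}$ or $B^*_{ver}$, so $|\opt''_1|\ge(1-O(\epsilon))|\opt'|-|\R'_{skew,1}\cap\R_{long}|$.

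Finally I would split on the size of $B^*_{hor}$. If $\max\{h(B^*_{hor}),w(B^*_{hor})\}\ge\epsl^{2}N/3$, then the bound for Packing $1$ together with the preliminary estimate on $|\R'_{skew,1}\cap\R_{long}|$ gives $|\opt''_1|\ge(1-O(\epsilon))|\opt'|$, well above $(\tfrac45-\epsilon)|\opt'|$. Otherwise every box of $\hat\C_1$ with a side $\ge(\tfrac12+2\epsl)N$ is thin, which forces the long items to have small total height, i.e.\ $L\le O(\epsilon)|\opt'|$; I would then combine $\max\{|\opt''_3|,|\opt''_4|\}\ge L+\tfrac S2$ with the strengthened Packing $5$ bound $|\opt''_5|\ge(1-O(\epsilon))(L+\tfrac S4)$, refining $S$ into its skewed and small parts so as to use the extra fact (hidden in the proof of Lemma~\ref{lem:randstrip}) that small items survive the random strip almost surely, and check the resulting ratios against $\tfrac45(L+S)$ to close this case.

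The main obstacle is precisely this last combine: one has to guarantee that, in each of the two cases, every item a packing genuinely drops is either recaptured by another of the five packings or bounded by $O(\epsilon)|\opt'|$ — and making this simultaneously true in both cases is what dictates the small choice of $\epsilon'$, the use of Lemma~\ref{lem:long-corridors-properties} to cap the number of long pieces, and the ``$B^*_{hor}$ small $\Rightarrow$ $L$ small'' dichotomy. Once the three bounds above and the two cases are in place, the arithmetic tying the worst case to the value $\tfrac45$ is routine.
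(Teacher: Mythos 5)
There are genuine gaps, and they cluster around one missing idea: the proof needs a \emph{four-way} partition of $\opt'$, not the two-way split into $L=|\opt'\cap\R_{long}|$ and $S=|\opt'\cap\R_{short}|$ you use. The paper's argument introduces $w=|LT|$, $x=|ST|$, $y=|LF|$, $z=|SF|$, where $LT$ and $ST$ are the long and short items in $\R'_{skew,1}\cup\R'_{skew,2}$ and $LF$, $SF$ are the remaining long and short items, and then bounds each $|\opt''_i|$ in terms of $w,x,y,z$ before taking a convex combination. Tracking $ST$ versus $SF$ separately is exactly what makes the accounting close, and that distinction is not expressible in your $(L,S)$ bookkeeping.

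Concretely, three steps in your plan fail. First, in Case~1 you argue $|\R'_{skew,1}\cap\R_{long}|=O_\epsilon(\epsilon'N)$ and call it ``negligible for $|\opt'|$ large'' — but that is an absolute bound in $N$, not a bound relative to $|\opt'|$, and the only assumption available here is $|\opt'|\ge\Omega_\epsilon(1)$. The paper avoids any such claim by taking $\max\{|\opt''_1|,|\opt''_2|\}$ and using that the smaller of $|\R'_{skew,1}\cap\R_{long}|$ and $|\R'_{skew,2}\cap\R_{long}|$ is at most $w/2$, giving $\max\{|\opt''_1|,|\opt''_2|\}\ge y+w/2+x+z$. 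Second, in Case~2 your Packing~5 bound $|\opt''_5|\ge(1-O(\epsilon))(L+\tfrac S4)$ is strictly weaker than what is true and needed. The paper's point is that when $B^*_{hor}$ is small, the freed strip absorbs \emph{all} of $\R'_{skew,1}\cup\R'_{skew,2}\cup(\R_{long}\cap\opt')$ — in particular the \emph{short} deleted items $ST$ — so $|\opt''_5|\ge w+x+y+\tfrac{3z}{4}$. The extra $x$ term (worth $\tfrac{3x}{4}+\tfrac z2$ over your bound) is precisely what lets the convex combination with $\max\{|\opt''_1|,|\opt''_2|\}\ge y+w/2+z$ (weights $\tfrac15$ and $\tfrac45$) reach $\tfrac45(w+x+y+z)$. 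Third, your Case~2 combine $\max\{L+\tfrac S2,\,L+\tfrac S4\}$ cannot reach $\tfrac45(L+S)$ when $L$ is small (it caps at $L+\tfrac S2$), and the auxiliary claim that small $B^*_{hor}$ forces $L\le O(\epsilon)|\opt'|$ is again an absolute-vs-relative confusion that the paper never needs. You do acknowledge the combine as ``the main obstacle'' and propose refining $S$ into skewed/small parts; that is the wrong refinement — the operative split is membership in $\R'_{skew,1}\cup\R'_{skew,2}$, i.e.\ $ST$ vs.\ $SF$.

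Your Packing~3/4 bound $\max\{|\opt''_3|,|\opt''_4|\}\ge L+\tfrac S2$ is correct and matches the paper's $y+w+\tfrac{z+x}{2}$, and your observation that the freed strip in Case~2 absorbs the removed skewed items is on the right track. What is missing is carrying that observation through to the stronger Packing~5 bound in terms of the four quantities, and replacing both ``negligibility'' claims with the $\max$-over-$\{1,2\}$ averaging argument.
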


\begin{proof}
Let us define, $LT:=(\R'_{skew,1}\cup\R'_{skew,2})\cap\R_{long}$,
$ST:=(\R'_{skew,1}\cup\R'_{skew,2})\cap\R_{short}$, $LF:=(\opt'\cap\R_{long})\setminus LT$,
$SF:=(\opt'\cap\R_{long})\setminus ST$. Also assume $w=|LT|,x=|ST|,y=|LF|,z=|SF|$.
Then $w+x+y+z=(1-\eps)|\opt'|$. We have two cases.

\begin{itemize}
	\item Case 1. $\max\left\{ h(B_{hor}^{*}),w(B_{hor}^{*})\right\} \ge\epsl^{2}N/3$.
	
	Now, using an averaging argument, it follows that 
	\[\max\left\{ |\opt''_{1}|,|\opt''_{2}|\right\} \ge y+w/2+x+z
	\text{ and }\max\left\{ |\opt''_{3}|,|\opt''_{4}|\right\} \ge y+w+(z+x)/2.\]
	
	Hence, $\max\left\{ |\opt''_{1}|,|\opt''_{2}|,|\opt''_{3}|,|\opt''_{4}|,|\opt''_{5}|\right\} $
	\begin{align*}
		& \ge(1-\eps)\max\left\{ y+w/2+x+z,y+w+(z+x)/2,3z/4+y/2+w+x)\right\} \\
		& \ge(1-\eps)\left(\frac{2}{5}(y+w/2+x+z)+\frac{1}{5}(y+w+(z+x)/2)+\frac{2}{5}(\frac{3}{4}z+\frac{1}{2}y+w+x)\right)\\
		& \ge\left(4/5-\epsilon\right)(y+w+z+x)\ge\left(4/5-\epsilon\right)|\opt'|
	\end{align*}
	Here the first inequality follows from the properties of packings
	1-5. The second inequality follows from the fact that $\max\{a,b,c\}\ge\alpha a+\beta b+\gamma c$
	for $\alpha+\beta+\gamma=1,\alpha\ge0,\beta\ge0,\gamma\ge0$.
	\item Case 2. $\max\left\{ h(B_{hor}^{*}),w(B_{hor}^{*})\right\} <\epsl^{2}N/3$.
	\begin{align*}
		\max\left\{ |\opt''_{1}|,|\opt''_{2}|,|\opt''_{5}|\right\} &\ge(1-\eps)\max\left\{ y+w/2+z,\frac{3}{4}z+w+x+y)\right\} \\
		&\ge(1-\eps)\left(\frac{1}{5}(y+w/2+z)+\frac{4}{5}(\frac{3}{4}z+w+x+y)\right)\\
		&\ge\left(\frac{4}{5}-\epsilon\right)(y+w+z+x)\ge\left(\frac{4}{5}-\epsilon\right)|\opt'|
	\end{align*}

\end{itemize}

\end{proof}

\bibliography{bibliography.bib}
 
\appendix

\section{Known subroutines}

\subsection{Next Fit Decreasing Height (NFDH)}

The following theorem summarizes the properties of the NFDH algorithm
\cite{CGJT80} in terms of area covered by the solution, which is
useful when the items are small compared to the region where they
are packed.
\begin{thm}[\cite{CGJT80}]
	\label{thm:nfdhPack} Let $C$ be a rectangular region of height
	$h$ and width $w$. Assume that, for some given parameter $\eps\in(0,1)$,
	for each $i\in\R$ one has $\width(i)\leq\eps w$ and $\height(i)\leq\eps h$.
	Then NFDH is able to pack in $C$ a subset $\R'\subseteq\R$ of area
	at least $a(\R')\geq\min\{a(\R),(1-2\eps)w\cdot h\}$. In particular,
	if $a(\R)\leq(1-2\eps)w\cdot h$, all items in $\R$ are packed. 
\end{thm}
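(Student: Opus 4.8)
The plan is to run the textbook shelf-based NFDH and analyse it with the classical level-packing argument, being a little careful with the constants so as to land exactly at $(1-2\eps)$. Recall that NFDH sorts $\R$ by non-increasing height and then processes the items one by one, maintaining a single active \emph{shelf} (a horizontal level inside $C$): each item is placed at the leftmost free position of the active shelf; if it does not fit there horizontally, the shelf is closed and a new shelf is opened directly above it whose height equals that of the new item, provided the sum of shelf heights still does not exceed $h$; if that also fails, the item is discarded and the procedure halts. Let $\R' \subseteq \R$ be the packed set. If $\R' = \R$ then $a(\R') = a(\R) \ge \min\{a(\R), (1-2\eps)w h\}$ and we are done; so assume some item $i^{*}$ is discarded, and let $S_1, \dots, S_k$ (from bottom to top) be the shelves created before this happens, with heights $H_1, \dots, H_k$ (note $k \ge 1$, since the first item has $\width(i) \le \eps w \le w$ and $\height(i) \le \eps h \le h$ and hence always opens $S_1$).

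First I would record that, because items are processed in non-increasing order of height and the first item placed on a shelf fixes its height, $H_1 \ge H_2 \ge \dots \ge H_k$, every item on $S_j$ has height at least $H_{j+1}$ for $j < k$, and every item on $S_k$ has height at least $\height(i^{*})$ (since $i^{*}$ is processed after all of them). Two facts then follow from the discarding of $i^{*}$: (i) for each $j \in \{1, \dots, k\}$ some item of width at most $\eps w$ failed to fit horizontally on $S_j$ --- the first item of $S_{j+1}$ when $j < k$, and $i^{*}$ itself when $j = k$ --- so the total width of the items on $S_j$ exceeds $(1-\eps)w$; and (ii) no new shelf fit on top of $S_k$, i.e.\ $\sum_{j=1}^{k} H_j + \height(i^{*}) > h$, hence $\sum_{j=2}^{k} H_j + \height(i^{*}) > h - H_1 \ge (1-\eps)h$ using $H_1 \le \eps h$. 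Using $a(S_j) \ge (\text{total width on } S_j)\cdot(\text{min item height on } S_j)$ for each shelf and combining, $a(\R') \ge \sum_{j=1}^{k} a(S_j) \ge \sum_{j=1}^{k-1} (1-\eps) w\, H_{j+1} + (1-\eps) w\, \height(i^{*}) = (1-\eps) w \left( \sum_{j=2}^{k} H_j + \height(i^{*}) \right) \ge (1-\eps)^2 w h \ge (1-2\eps) w h$. This establishes $a(\R') \ge \min\{a(\R), (1-2\eps)w h\}$ in both cases, and the ``in particular'' clause is immediate: if $a(\R) \le (1-2\eps)w h$ then $a(\R') \ge a(\R)$, and $\R' \subseteq \R$ forces $\R' = \R$.

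I do not expect a genuine obstacle here --- this is a standard fact about NFDH. The two points that need attention are: fixing the precise variant of the algorithm (in particular that it halts at the first discarded item, which is what makes the ``some item of width $\le \eps w$ failed to fit on $S_k$'' claim in (i) clean), and the constant bookkeeping, since the most naive form of this argument --- throwing away the top shelf $S_k$ and separately ``paying'' for the wasted width --- only yields $(1-3\eps)w h$. The improvement to $(1-2\eps)w h$ comes precisely from also charging $a(S_k)$ against $\height(i^{*})$ rather than discarding it, and from absorbing the loss $H_1 \le \eps h$ together with the width loss into a single factor $(1-\eps)^2 \ge 1-2\eps$.
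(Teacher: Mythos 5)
Your proof is correct and takes essentially the same route as the paper's: bound the horizontal fill of each shelf by $(1-\eps)w$ via the item that failed to fit horizontally, charge each shelf's area against the height of the next shelf's opening item (or the rejected item for the top shelf), use that those heights plus $H_1\le\eps h$ sum to more than $h$, and combine to get $(1-\eps)^2 wh\ge(1-2\eps)wh$. The only differences are notational ($H_j$ vs.\ the paper's $\height(i_{n(j)})$) and that you spell out the halt-on-first-discard convention, which the paper leaves implicit.
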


\begin{proof}
	The claim trivially holds if all items are packed. Thus suppose that
	this is not the case. Observe that $\sum_{j=1}^{r+1}\height({i_{n(j)}})>h$,
	otherwise item $i_{n(r+1)}$ would fit in the next shelf above $i_{n(r)}$;
	hence $\sum_{i=2}^{r+1}\height({i_{n(j)}})>h-\height({i_{n(1)}})\geq(1-\eps)h$.
	Observe also that the total width of items packed in each round $j$
	is at least $w-\eps w=(1-\eps)w$, since $i_{n(j+1)}$, of width at
	most $\eps w$, does not fit to the right of $i_{n(j+1)-1}$. It follows
	that the total area of items packed in round $j$ is at least $(w-\eps w)\height({n(j+1)-1})$,
	and thus 
	\begin{eqnarray*}
		a(\R') & \geq & \sum_{j=1}^{r}(1-\eps)w\cdot\height({n(j+1)-1})\geq(1-\eps)w\sum_{j=2}^{r+1}\height({n(j)})\\
		& \geq & (1-\eps)^{2}w\cdot h\geq(1-2\eps)w\cdot h.
	\end{eqnarray*}
\end{proof}


\subsection{Resource Augmentation}

If a set of items can be packed non-overlappingly in a container,
then under resource augmentation it can be packed nicely in a constant
number of boxes within this container. This has been shown in previous
work and we state here a lemma in \cite{GGHIKW17} summarizing this.
\begin{lem}[Resource Augmentation Packing Lemma \cite{GGHIKW17}]
	\label{lem:augmentPack} Let $\R$ be a collection of items that
	can be packed into a box of size $a\times b$, and $\eps_{au}>0$
	be a given constant. Then there exists a {nice} packing  of $\R'\subseteq\R$ inside a
	box of size $a\times(1+\eps_{au})b$ (resp., $(1+\eps_{au})a\times b$)
	such that:
\end{lem}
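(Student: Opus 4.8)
The plan is to turn an arbitrary feasible packing of $\R$ into the $a\times b$ box into a clean, box-structured one, spending the resource-augmentation slack $(1+\eps_{au})$ to absorb all the rounding. First I would fix threshold constants $0<\epss\ll\epsl\ll\eps_{au}$ by a standard shifting argument, so that the items whose shorter side, measured relative to the corresponding box dimension $s\in\{a,b\}$, lies in $(\epss\cdot s,\epsl\cdot s]$ carry at most an $\eps$-fraction of $\profit(\R)$; these ``intermediate'' items are discarded. What remains are items that, relative to the $a\times b$ box, are large, horizontal, vertical, or small. I would then apply the Corridor Decomposition Lemma (Lemma~\ref{lem:corridorPack-weighted}) to the given packing, obtaining $O_{\eps}(1)$ corridors, each with at most $1/\eps$ bends and width at most $\epsl\cdot\max\{a,b\}$, whose interiors contain a subset of $\R$ of profit at least $(1-O(\eps))\profit(\R)$; the constantly many skewed crossing items are added to the discarded set, and the small crossing items have negligible total area and will be repacked by NFDH at the very end.

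Next I would cut each corridor into its $O(1/\eps)$ subcorridors along a partition that is nice for the packed items, so each subcorridor contains only horizontal (or only vertical) items plus small ones. For a horizontal subcorridor I would delete the items intersecting a thin empty horizontal sub-strip of relative height $O(\eps)$ — affordable since the subcorridor is thin and the items fit well, so each subcorridor loses only an $\eps$-fraction of its profit — translate the rest downward, and then stack the surviving horizontal items by decreasing width while placing the small items by NFDH (Theorem~\ref{thm:nfdhPack}); linear grouping on the widths reduces this to $O_{\eps}(1)$ axis-parallel boxes per subcorridor. Straightening a subcorridor into these boxes costs a $(1+\eps_{au})$ stretch in its long direction; since there are only $O_{\eps}(1)$ subcorridors, charging this slack once per subcorridor keeps the whole arrangement inside a box of size $a\times(1+\eps_{au})b$ (resp.\ $(1+\eps_{au})a\times b$). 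The $O(1/\epsl^{2})$ large items are handled trivially, each in its own box.

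The hard part will be the geometric bookkeeping: after every subcorridor has been replaced by its $O_{\eps}(1)$ boxes, the resulting boxes must still tile a slight enlargement of the original box without overlap, even though horizontal and vertical subcorridors meet at the bends and large items sit among them. The standard remedy is to process the corridors in a fixed order, one subcorridor at a time, each time cutting along monotone axis-parallel curves chosen to avoid skewed items (possible because the corridors are thin), so that rounding errors propagate in a single direction and are swallowed by the $\eps_{au}$-slack. Finally I would count: the construction yields $O_{\eps,\eps_{au}}(1)$ boxes, so their sizes can be guessed in polynomial time, and the total profit lost — intermediate items, skewed crossing items, the thin sub-strips, the dropped linear-grouping groups, and the NFDH leftovers — is $O(\eps)\profit(\R)$, so after rescaling $\eps$ the packed subset $\R'$ satisfies $\profit(\R')\ge(1-\eps)\profit(\R)$, which together with the (bulleted) structural consequences that the statement goes on to list completes the proof.
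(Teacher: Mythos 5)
The paper does not prove this lemma at all: it is stated in the ``Known subroutines'' appendix and cited as a black box from \cite{GGHIKW17}, where it is established by a direct shifting/rounding argument (classify items by size, use a shifting step to free up a thin strip, round the widths and heights of the skewed items, and then pack the rounded classes into $O_{\eps_{au}}(1)$ containers, with the $\eps_{au}$ slack in one direction absorbing the rounding error and a fractional-to-integral step at the end). So there is no in-paper proof to compare to, and your route via the Corridor Decomposition Lemma is a genuinely different---and much heavier---approach.

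Unfortunately, the heavier approach does not go through as sketched, and the gap is precisely where you flag it. After applying Lemma~\ref{lem:corridorPack-weighted} and splitting each corridor into nice subcorridors, you propose to replace each subcorridor by $O_{\eps}(1)$ straight boxes and then charge a $(1+\eps_{au})$ stretch ``in its long direction'' per subcorridor. But the subcorridors of a single corridor are glued along the bends, oriented alternately horizontal and vertical, and can wind in both rotational senses (Z-shapes, spirals), so there is no coherent ``single direction'' along which all the straightening slack can be pushed; a one-sided $(1+\eps_{au})$ enlargement of the outer box cannot simultaneously absorb the straightening of a vertical and a horizontal subcorridor that meet at a corner. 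Nor can you appeal to the paper's own subcorridor-to-box machinery to finish: Lemmas~\ref{lem:box-into-boxes}, \ref{lem:partition-subcorridors} and \ref{lem:partition-boxes-weighted} all invoke Lemma~\ref{lem:augmentPack} internally, so using them here would be circular. There is also a smaller mismatch: Lemma~\ref{lem:corridorPack-weighted} as stated applies to a square region, whereas the statement concerns an arbitrary $a\times b$ box, and rescaling to a square changes which items count as ``skewed''. The standard, far more elementary route (shifting plus linear grouping directly inside the $a\times b$ box, without any corridor decomposition) avoids all of these issues, and that is what the cited reference actually does.
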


\itemsep0pt 
\begin{enumerate}
	\item $p(\R')\geq(1-O(\eps_{au}))p(\R)$; 
	\item The number of containers is $O_{\eps_{au}}(1)$ and their sizes belong
	to a set of cardinality $n^{O_{\eps_{au}}(1)}$ that can be computed
	in polynomial time. 
\end{enumerate}

\section{Omitted proofs}

The following lemma ensures that the intermediate items can be neglected
by losing only a factor of $1+\epsilon$ in the approximation ratio while maintaining that $\epsl$ and $\epss$ are lower-bounded by
some constant depending only on $\epsilon$. 

 \begin{lem}\label{lem:item-classification}Given $\epsilon>0$. There are values
 $0<\epss<\epsl\le1$ with $\epss\le\epsilon^{2}\epsl$ such that the
 total number of intermediate rectangles in the optimal solution is bounded by $\eps\cdot|OPT|$.
 The pair $(\epsl,\epss)$ is one pair from a set of $O_{\eps}(1)$
 pairs which can be computed in polynomial time. 
 \end{lem}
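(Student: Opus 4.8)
The plan is a standard shifting (averaging) argument over a geometric sequence of thresholds; there is no deep obstacle, only bookkeeping. Without loss of generality assume $\epsilon<1$. First I would fix $K:=\lceil 2/\epsilon\rceil$ and define thresholds $\sigma_i:=\epsilon^{2i}$ for $i\in\{0,1,\ldots,K\}$, so that $1=\sigma_0>\sigma_1>\cdots>\sigma_K>0$ and consecutive thresholds satisfy $\sigma_{i+1}=\epsilon^2\sigma_i$. The set of candidate pairs is then $\{(\epsl,\epss)=(\sigma_i,\sigma_{i+1}):i\in\{0,\ldots,K-1\}\}$: each such pair fulfills $0<\epss<\epsl\le 1$ and $\epss=\epsilon^2\epsl\le\epsilon^2\epsl$, there are only $K=O_\epsilon(1)$ of them, and they are trivially computable in polynomial time.

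Next, for each $i\in\{0,\ldots,K-1\}$ let $M_i\subseteq\opt$ denote the items that are intermediate with respect to the pair $(\sigma_i,\sigma_{i+1})$, i.e., the items $j$ with $\width(j)\in(\sigma_{i+1}N,\sigma_iN]$ or $\height(j)\in(\sigma_{i+1}N,\sigma_iN]$. The key observation is that the half-open intervals $(\sigma_{i+1}N,\sigma_iN]$, for $i=0,\ldots,K-1$, are pairwise disjoint, so for a fixed item $j$ its width lies in at most one of them and its height lies in at most one of them. Hence $j$ belongs to at most two of the sets $M_0,\ldots,M_{K-1}$, and summing over all $j\in\opt$ yields $\sum_{i=0}^{K-1}|M_i|\le 2|\opt|$.

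Finally I would apply averaging: since there are $K\ge 2/\epsilon$ terms in the sum, some index $i^*\in\{0,\ldots,K-1\}$ satisfies $|M_{i^*}|\le\frac{2}{K}|\opt|\le\epsilon|\opt|$. Taking $(\epsl,\epss):=(\sigma_{i^*},\sigma_{i^*+1})$ — one of the $O_\epsilon(1)$ precomputable candidate pairs — proves the statement. The only points requiring (minor) care are the disjointness of the threshold intervals, which is exactly what drives the double-counting bound, and fixing the geometric ratio to be precisely $\epsilon^2$ so that the required separation $\epss\le\epsilon^2\epsl$ holds for every candidate pair simultaneously. Note that the algorithm need not identify $i^*$: it simply enumerates all $K$ pairs, and the analysis above guarantees that at least one of them is good.
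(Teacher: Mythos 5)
Your proposal is correct and follows essentially the same shifting/averaging argument as the paper: both define $O(1/\epsilon)$ geometrically spaced thresholds (with consecutive ratios enforcing $\epss\le\epsilon^2\epsl$), observe that each rectangle falls into at most two of the resulting intermediate classes (once via its width, once via its height), and apply averaging to pick the lightest class. The only cosmetic difference is the choice of starting threshold ($\sigma_0=1$ versus $\epsilon_1=\epsilon$ in the paper), which does not affect the argument.
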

\begin{proof}
	Define $k+1=2/\eps+1$ constants $\eps_{1},\ldots,\eps_{k+1}$, such
	that $\epsilon_{1}:=\epsilon$ and $\epsilon_{i+1}:=f(\epsilon,\eps_{i})$
	for each~$i$. Consider the $k$ ranges of widths and heights of
	type $(\eps_{i+1}N,\eps_{i}N]$. By an averaging argument there exists
	one index $j$ such that the total profit of items in $\opt$ with
	at least one side length in the range $(\eps_{j+1}N,\eps_{j}N]$ is
	at most $2\cdot\frac{\eps}{2}p(\opt)$. It is then sufficient to set
	$\epsl=\eps_{j}$ and $\epss=\eps_{j+1}$.
\end{proof}

\subsection{Proof of Lemma~\ref{lem:corridor-partition}}
\begin{proof} This result can be seen as a corollary of Lemma~\ref{lem:corridorPack-weighted}. In fact, since we assume there are no small items, we only need to care about the items that are intersected by the lines of the corridor partition which are at most $K\in O_{\eps}(1)$. If $|OPT|\le K/\eps$ then we can simply set all the items as untouchable and prove the claim (each item will be associated to a corridor which is in turn a box matching its size and nothing is discarded), and otherwise we can safely discard these $K$ items as $K\le \eps OPT$. \end{proof}

\subsection{Proof of Lemma~\ref{lem:partition-knapsack}}
\begin{proof}
	Consider the solution $\overline{\opt}$ together with the corridor partition from Lemma~\ref{lem:corridor-partition},
	and its further nice partition into subcorridors from Lemma~\ref{lem:corridor-pieces-1}. Recall that each subcorridor contains either only items from $\R'\cap\Rho$ or
	only items from from $\R'\cap\Rve$.
	
	For each corridor $C$ we will group its subcorridors (let us say they are numbered $P_1, P_2, \dots, P_{s(C)}$ in such a way that, for each $i=1,\dots,s(C)-1$, $P_{i}$ and $P_{i+1}$ are consecutive and also $P_{s(C)}$ and $P_1$ are consecutive) into sets (not necessarily disjoint) in such a way that if we remove the items contained in
	any of the sets of subcorridors, then we can subdivide the corridors into L- and U-corridors
	only. In each case we will prove that one of the removed set of items
	will have cardinality at most $\frac{3}{8}|\overline{\opt}(C)|$, where $\overline{\opt}(C)$ denotes the set of items contained in corridor $C$, which would conclude the proof.

	If $C$ is a path-corridor, then we will group the subcorridors $P_j$, where $j\equiv\alpha\bmod3$,
	into one set for each $\alpha\in\{0,1,2\}$ (notice that the groups
	are disjoint so one of them must have at most $\frac{1}{3}|\overline{\opt}(C)|$ items associated to it).
	If we remove the items from a group, we can subdivide the corridors
	into L-corridors by simply extending the boundaries of the subcorridors which are next to the deleted ones; this induces a constant number of L-corridors (and possibly rectangular regions in the boundaries of $C$).
	
	If instead $C$ is a cycle-corridor (notice that a cycle-corridor
	always has an even number of pieces), then we distinguish the following cases: 
	\begin{itemize}
		\item If $C$ has $4$ subcorridors then we can assign each subcorridor to a different set, in which case the deletion of one of the sets and associated items would induce a U-corridor if we extend the boundaries of the subcorridors which are next to the deleted one as before; since the sets are disjoint, one of them will have associated at most $\frac{1}{4}|\overline{\opt}(C)|$ items.
		\item If $C$ has $6$ or $12$ subcorridors then it can be treated exactly as in the case of path-corridors and obtain only L-corridors since these numbers are divisible by $3$. This way, one of the sets will have associated at most $\frac{1}{4}|\overline{\opt}(C)|$ items as required.
		\item If $C$ has $8$ or $14$ subcorridors, we will group them into sets as follows: let us choose a subcorridor uniformly at random and relabel them starting from there. We will then group the subcorridors $P_j$, where $j\equiv\alpha\bmod3$,
		into one set for each $\alpha\in\{0,1,2\}$ but we will also assign the subcorridor $P_{1}$ into the set containing $P_{3}$. This way we obtain only L-corridors when deleting any of the sets and the expected number of items in the deleted set of subcorridors if $s(C)=8$ is at most $\left(\frac{7}{8}\cdot \frac{1}{3} + \frac{1}{8}\cdot \frac{2}{3}\right)|\overline{\opt}(C)| = \frac{3}{8}|\overline{\opt}(C)|$ since every subcorridor is removed in one out of the three sets except for the one chosen randomly which is removed twice (if $s(C)=14$ then this expected value is at most $\left(\frac{13}{14}\cdot \frac{1}{3} + \frac{1}{14}\cdot \frac{2}{3}\right)|\overline{\opt}(C)| = \frac{5}{14}|\overline{\opt}(C)|$ which is better). This procedure can be derandomized by simply choosing the subcorridor with less items inside as the new $P_1$.
		\item If $C$ has $10$ subcorridors, we will use the fact that there must be four consecutive subcorridors such that any three consecutive subcorridors out of them induce a U-corridor. To see this, notice that the leftmost vertical, rightmost vertical, topmost horizontal and bottommost horizontal subcorridors must be the middle subcorridors of U-corridors, and since $s(C)<12$ these U-corridors cannot be disjoint. We will then group the subcorridors as follows: We will first relabel the subcorridors so that $P_1$ becomes the third out of the four aforementioned subcorridors. Then we will group the subcorridors $P_j$, where $j\equiv\alpha\bmod3$,
		into one set for each $\alpha\in\{0,1,2\}$. This leads to disjoint sets so one of them will have at most $\frac{1}{3}|\overline{\opt}(C)|$ items associated to it. Furthermore, it is not difficult to see that the deletion of any of the sets induces only L-corridors except possibly inside the sequence of subcorridors $P_{9}, P_{10}, P_1$ and $P_2$, which belong to sets $0, 1, 1$ and $2$ respectively. However, due to our choice of the subcorridor $P_1$, both the corridors induced by $P_9-P_{10}-P_1$ and $P_{10}-P_1-P_2$ are U-corridors, obtaining then only L- and U-corridors when deleting any of the sets.
		\item Finally if $C$ has at least $16$ subcorridors, we can delete one subcorridor chosen uniformly at random and then processing the induced path-corridor as described in the beginning. This way we obtain only L-corridors and the expected number of items of the deleted set is at most $\left(\frac{1}{16} + \frac{15}{16}\cdot \frac{1}{3}\right)|\overline{\opt}(C)| = \frac{3}{8}|\overline{\opt}(C)|$. This can again be derandomized by removing first the subcorridor with less items inside instead.
\end{itemize} \end{proof}

\subsection{Proof of Lemma~\ref{lem:partition-subcorridors}}
We will first consider the simpler case where the subcorridor is just a rectangular region and then use this idea to address the more general case.

\begin{lem}\label{lem:box-into-boxes} Given a box $S$ of height $h(S)$ and width $w(S)$, we can guess in time $N^{O_{\eps}(1)}$ a set of $O_{\eps}(1)$ non-overlapping boxes $\mathcal{B}(S)$ inside $S$ such that we can nicely place slices from $\tilde{I}(S)$ with total profit of $(1-\eps)\profit(\tilde{I}(S))$ inside the boxes $\mathcal{B}(S)$. \end{lem}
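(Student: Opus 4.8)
The plan is to reduce the problem to a known one-dimensional-style argument combined with resource augmentation. Recall that the slices in $\tilde{I}(S)$ are unit-height (if $S$ is horizontal) and come with only $O_\epsilon(1)$ distinct widths, since they originate from the linearly-grouped sets $\tilde{\R}_{hor,j}^{(\ell)}$ across the $O_\epsilon(\log N)$ values of $\ell$ and $1/\epsilon$ values of $j$; wait — across all $\ell$ there could be $\Theta_\epsilon(\log N)$ widths, so I first need to coarsen. The first step is therefore to apply a second round of linear grouping (or geometric rounding of widths to powers of $1+\epsilon$ and then, within each power, group into $1/\epsilon$ classes) so that, after discarding slices of total profit at most $\epsilon\,\profit(\tilde I(S))$, the remaining slices assigned to $S$ have only $O_\epsilon(1)$ distinct widths. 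Here I must be careful: profit is not proportional to height, so the grouping has to be done per profit-class — but since all slices of a given $\ell$ have the same profit $1/(1+\epsilon)^\ell$, I can group within each $\ell$ and then observe that only $O_\epsilon(1)$ of the groups carry a $(1-\epsilon)$ fraction of the profit inside $S$; alternatively, guess directly the $O_\epsilon(1)$ most profitable width-classes. This is the step I expect to require the most care, because it is where the ``$O_\epsilon(1)$ boxes'' bound really comes from and where one must not lose more than a $(1+\epsilon)$ factor.

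Second, once the slices assigned to $S$ (in the near-optimal packing we are comparing against) have $O_\epsilon(1)$ distinct widths, I would argue that there is a near-optimal sub-packing that is ``shelf-like'': group the slices by width, and for each width class stack its slices into a single axis-aligned rectangular block. Since $S$ is a box and the slices have unit height, this is essentially a strip-packing / bin-packing-into-a-box argument — I would invoke Lemma~\ref{lem:augmentPack} (Resource Augmentation) with the width direction as the augmented dimension, or just directly observe that the optimal fractional packing of these $O_\epsilon(1)$ width classes into a box can be rounded to an integral packing into $O_\epsilon(1)$ sub-boxes losing only an $\epsilon$ fraction. Concretely: the total area of slices of each width class is known, so the near-optimal selection is determined by how much total height each width class gets; a block for width class $t$ is a box of that width and the corresponding height, and these $O_\epsilon(1)$ blocks can be packed side by side (and re-stacked when one hits $h(S)$) into $S$ up to resource augmentation. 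Each such block is one box in $\mathcal{B}(S)$, so $|\mathcal{B}(S)| = O_\epsilon(1)$.

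Third, for the algorithmic (guessing) part: the boxes in $\mathcal{B}(S)$ have corners with integral coordinates in $[0,N]^2$, so each box is described by $O(1)$ integers in $\{0,\dots,N\}$, and there are $O_\epsilon(1)$ of them; hence there are $N^{O_\epsilon(1)}$ possible choices for $\mathcal{B}(S)$, and we simply enumerate all of them, keeping one that admits a nice placement of slices from $\tilde I(S)$ of the desired profit (which we can check since, given the boxes, the best nice packing of unit-height slices with $O_\epsilon(1)$ widths into $O_\epsilon(1)$ boxes is a trivial greedy computation — sort by width and fill). The whole thing runs in time $N^{O_\epsilon(1)}$ as claimed. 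I would present the area/shelf argument in enough detail to make the $(1-\epsilon)$ profit bound explicit, citing Theorem~\ref{thm:nfdhPack} for the rounding of the small leftover slices if needed, and otherwise lean on Lemma~\ref{lem:augmentPack}; the remaining steps are routine.
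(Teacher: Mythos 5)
Your plan has a genuine gap in its very first step. You propose to coarsen the slice widths so that only $O_\eps(1)$ distinct widths remain after discarding an $\eps$-fraction of the profit, and you rightly flag this as the step requiring the most care — but this reduction is not achievable in general. The slices in $\tilde I(S)$ originate from $\Theta_\eps(\log N)$ different classes $\ell$, each contributing up to $1/\eps$ rounded widths, and the profit can be spread essentially evenly over all of these $\Theta_\eps(\log N)$ width classes; then no $O_\eps(1)$ subset of width classes captures a $(1-\eps)$ fraction, and geometric rounding of widths to powers of $1+\eps$ still leaves $\Theta_\eps(\log N)$ classes. Since your shelf/area argument and the claimed $|\mathcal{B}(S)|=O_\eps(1)$ bound both rest on this coarsening, the proposal as written does not go through.

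The paper's proof avoids the width count entirely. It partitions $S$ into $1/\eps$ horizontal strips of height $\eps\, h(S)$, discards the slices contained in the least profitable strip (losing at most $\eps\,\profit(\tilde I(S))$), and concludes that the surviving slices admit a packing into a box of dimensions $(1-\eps)h(S)\times w(S)$. It then applies the Resource Augmentation Packing Lemma (Lemma~\ref{lem:augmentPack}), augmenting the \emph{height} back up to $h(S)$; that lemma already produces a nice packing of a $(1-O(\eps))$-profit subset into $O_\eps(1)$ boxes whose sizes come from a polynomial-size computable set, so the boxes can be guessed in $N^{O_\eps(1)}$ time. No bound on the number of distinct widths is needed at any point. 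You do mention Lemma~\ref{lem:augmentPack}, but only as a subroutine inside the shelf argument and with the width as the augmented dimension, which is not the right direction for a horizontal box packed with horizontal slices; you should drop the width-coarsening step altogether and invoke resource augmentation directly after freeing a strip in the height direction.
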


\begin{proof} Assume w.l.o.g. that $S$ is horizontal and that the slices inside correspond to horizontal items. We will partition $S$ into $\frac{1}{\eps}$ horizontal strips of height $\eps h(S)$. Since every slice is completely contained in one of the strips, there must be a strip such that the total profit of slices inside is at most $\eps \profit(\tilde{I}(S))$, and we remove this strip and the corresponding slices completely. Since we now have an empty strip of height $\eps h(S)$ in the region, this means that there exists a packing of the slices with total profit at least $(1-\eps)\profit(\tilde{I}(S))$ into a box of height $(1-\eps)h(S)$ and width $w(S)$. Consequently we can use resource augmentation (Lemma~\ref{lem:augmentPack}) to obtain the claimed packing of the slices. The obtained boxes can be guessed in time $N^{O_{\eps}(1)}$. \end{proof}

Now we will proceed with the proof for subcorridors.

\begin{proof}[Proof of Lemma~\ref{lem:partition-subcorridors}] Let $S$ be a subcorridor and assume w.l.o.g.~that $S$ is horizontal,
	i.e., it is defined via two horizontal edges $e_{1}=\overline{p_{1}p'_{1}}$
	and $e_{2}=\overline{p_{2}p'_{2}}$, and additionally two monotone
	axis-parallel curves connecting $p_{1}=(x_{1},y_{1})$ with $p_{2}=(x_{2},y_{2})$
	and connecting $p'_{1}=(x'_{1},y'_{1})$ with $p'_{2}=(x'_{2},y'_{2})$,
	respectively. Assume w.l.o.g. that $x_{1}\le x_{2}\le x'_{2}\le x'_{1}$
	and that $y_{1}<y_{2}$ (see Figure~\ref{fig:construction} for a depiction). Let
	$h(S)$ denote the the height of $S$ which we define as the distance
	between $e_{1}$ and $e_{2}$. Intuitively, we place $1/\epsilon^{2}$
	boxes inside $S$ of height $\epsilon^{2}h(S)$ each, stacked one
	on top of the other, and of maximum width such that they are contained
	inside $S$. Formally, we define $1/\epsilon^{2}$ boxes $B_{0},...,B_{1/\epsilon^{2}-1}$
	such that for each each $j\in\{0,...,1/\epsilon^{2}-1\}$ the bottom
	edge of box $B_{j}$ has the $y$-coordinate $y_{1}+j\cdot\epsilon^{2}h(S)$
	and the top edge of $B_{j}$ has the $y$-coordinate $y_{1}+(j+1)\cdot\epsilon^{2}h(S)$. For each such $j$ we define
	the $x$-coordinate of the left edge of $B_{j}$ maximally small and
	the $x$-coordinate of the right edge of $B_{j}$ maximally large
	such that $B_{j}\subseteq S$.
	
	Now consider the first $1/\epsilon$ boxes, i.e., $B_{0},...,B_{1/\epsilon-1}$.
	For each box $B_{j}\in\left\{ B_{0},...,B_{1/\epsilon-1}\right\} $
	consider the horizontal stripe $S_{j}:=[y_{1}+j\cdot\epsilon^{2}h(S),y_{1}+(j+1)\cdot\epsilon^{2}h(S)]\times[0,N]$
	(i.e., the horizontal stripe of height $\epsilon^{2}h(S)$ that contains
	$B_{j}$). By the pidgeon hole principle, one of the stripes $j^*$ must contain slices of total profit at most $\eps\profit(\tilde{I}(S))$. We delete all such slices. Next, we move down all the slices that remain and that intersect the boxes $B_{j^{*}+1},...,B_{1/\epsilon^{2}-1}$
	by $\epsilon^{2}h(P)$ units. Note that then they fit into the area
	defined by the union of the boxes $B_{j^{*}},...,B_{1/\epsilon^{2}-2}$.
	
	We define $w(S):=x'_{1}-x_{1}$, i.e., the length of $e_{1}$ (which
	is longer than $e_{2}$). Let $w'(S):=x'_{2}-x_{2}$, i.e., the length
	of $e_{2}$. Next, we would like to ensure that below the box $B_{j^{*}}$ there is no item $i$
	with $\width(i)<w'(S)$ (we want to achieve this since then we can
	stack the items underneath $B_{j^{*}}$ on top of each other).
	
	Therefore, consider the topmost $1/\epsilon^{2}-1/\epsilon$ boxes.
	We group them into $1/12\epsilon-1$ groups with $12/\epsilon$ boxes
	each, i.e., for each $k\in\{0,...,1/12\epsilon-2\}$ we define a group
	$\B_{k}:=\{B_{j}|j\in\{1/\epsilon+12k/\epsilon,...,1/\epsilon+12(k+1)/\epsilon-1\}\}$.
	Note that each group $\B_{k}$ contains exactly $12/\epsilon$ boxes
	and below $B_{j^{*}}$ there are at most $1/\epsilon$ boxes. By the
	pidgeon hole principle, there is a value $k^{*}\in\{0,...,1/\epsilon-2\}$
	such that the boxes in the group $\B_{k^{*}}$ intersect with
	slices of total profit of $O(\epsilon)\profit(\tilde{I}(S))$. We proceed then to delete all the slices that intersect a box in $\B_{k^{*}}$.
	
	Consider all slices that intersect one of the stripes in $\left\{ S_{0},...,S_{j^{*}-1}\right\} $ and
	that satisfy that $\width(i)\le w'(P)$. Due to Steinberg's algorithm~\cite{S97}
	they fit into a box of height $3\epsilon\cdot h(S)$ and width $w'(S)$.
	Therefore, they fit into $3/\epsilon$
	boxes in $\B_{k^{*}}$. We assign them to these boxes in $\B_{k^{*}}$.
	
	Now we define $S'$ as the sub-subcorridor induced by $e_{1}$, the bottom
	edge of $B_{j^{*}}$, and the respective part of the two monotone
	axis-parallel curves connecting $p_{1}=(x_{1},y_{1})$ with $p_{2}=(x_{2},y_{2})$
	and connecting $p'_{1}=(x'_{1},y'_{1})$ with $p'_{2}=(x'_{2},y'_{2})$,
	respectively. Each remaining slice intersecting $S'$
	satisfies that $\width(i)\ge w'(P)$. Therefore, we can stack these
	items on top of each other (using that $w'(P)>w(P)/2$).
	
	We obtain that each remaining slice is assigned to
	a box in $\left\{ B_{j^{*}},...,B_{1/\epsilon^{2}-1}\right\} $ or
	lies in $S'$. We finally apply Lemma~\ref{lem:box-into-boxes}
	to each box $B\in\left\{ B_{j^{*}},...,B_{1/\epsilon^{2}-1}\right\} $
	in order to partition $B$ further and such that the slices assigned
	to $B$ are nicely packed inside $B$.\end{proof}
\subsection{Proof of Lemma~\ref{lem:slices-structured}}
\begin{proof}
	Without loss of generality,  our subcorridors and boxes are horizontal. If $F$ is a box, it follows directly from the fact that we can push all slices to the left side of the box, and then sort them by width, resolving ties considering the input items to which each slice belongs to. If $F$ is a sub-subcorridor instead, we need to consider the case that two slices from the same item are at different positions of $F$ with respect to its defining curves $C_1$ and $C_2$. For this, consider $i^*$ to be the slice positioned the highest in $F$, then we move all other slices along the $x$-axis until they are all aligned at position $lc(i^*)$. Given that all slices in $F$ are nicely packed and $C_1$ and $C_2$ are monotonic, we can do this without intersecting other slices. Finally, we sort them from top to bottom non-increasingly by width.
\end{proof}

\subsection{Proof of Lemma~\ref{lem:converting-slices-items}}
\begin{proof}
Consider a value $\ell\in\{0,\dots,\lfloor\log_{1+\eps}N\rfloor\}$
and the set $\R_{hor}^{(\ell)}$. Denote by $\bar{\R}_{hor}^{(\ell)}$
its corresponding slices and by $\mathcal{B}_{hor}^{(\ell)}$ the
corresponding boxes. We interpret the assignment of the slices in
$\bar{\R}_{hor}^{(\ell)}$ due to Lemma~\ref{lem:place-slices} as
a fractional assignment of the items in $\R_{hor}^{(\ell)}$ to the
boxes in $\R_{hor}^{(\ell)}$. We model this as the solution to a
linear program where for each $i\in\R_{hor}^{(\ell)}$ and each $j\in\mathcal{B}_{hor}^{(\ell)}$
we introduce a variable $x_{i,j}$ that denotes the fractional extent
by which the item $i$ assigned to the box $j$. The constraints of
the linear program model that each item is assigned \emph{at most}
once (note that maybe not all slices of an item $i$ are assigned
to some box) and for each box $j\in\mathcal{B}_{hor}^{(\ell)}$ the
resulting height achieved by stacking the packed (fractional) items
must not exceed the height of the box. For each box $j\in\mathcal{B}_{hor}^{(\ell)}$
we denote by $h(j)$ its height. 
\begin{align*}
 & \max\sum_{i\in I,j\in\mathcal{B}}x_{i,j}\\
\text{s.t. } & \sum_{j}x_{i,j}\leq1,\quad\text{for any }i\in I\\
 & \sum_{i\in I}h(i)x_{i,j}\leq h(j),\quad\text{for any }j\in\mathcal{B}\\
 & x_{i,j}\geq0
\end{align*}
Let $x^{*}$ denote the optimal fractional solution and assume w.l.o.g.~that
$x^{*}$ is an extreme point solution. Since our constructed assignment
of the slices yields a feasible solution, we know that the profit
of $x^{*}$ is at least $(1-O(\epsilon))\left|\Rho^{(\ell)}\cap\opt'\right|$.
By the rank lemma, there are at most $|\R_{hor}^{(\ell)}|+|\mathcal{B}_{hor}^{(\ell)}|$
variables in the support of $x^{*}$. Also, using the rank lemma we
can assume that there are at most $|\mathcal{B}_{hor}^{(\ell)}|$
items which are assigned to more than one box. We drop all slices
of these items, i.e., the slices of at most $|\mathcal{B}_{hor}^{(\ell)}|$
items. Since $x^{*}$ is an extreme point solution, for each $j\in\mathcal{B}_{hor}^{(\ell)}$
there can be at most one item $i\in\R_{hor}^{(\ell)}$ with $0<x_{i,j}^{*}<1$.
We drop also these fractionally assigned items, at most $|\mathcal{B}_{hor}^{(\ell)}|$
many.%

Hence, we dropped all items that were not integrally assigned, which
are at most $2|\mathcal{B}_{hor}^{(\ell)}|$ items in total. Therefore,
we obtain an integral packing of $(1-O(\epsilon))\left|\Rho^{(\ell)}\cap\opt'\right|-2|\B_{hor}^{(\ell)}|$
items in $\R_{hor}^{(\ell)}$. A similar argumentation works for sets
of vertical items $\R_{ver}^{(\ell)}$.
\end{proof}

\subsection{Proof of Lemma~\ref{lem:opt-large}}
\begin{proof}
	Recall that $OPT'$ is the solution obtained thanks to Lemma~\ref{lem:partition-knapsack}. By using Lemma~\ref{lem:place-slices} we can turn this solution into a packing of slices whose corresponding profit is at least $(1-O(\eps))OPT$ and that obey a restricted number of classes $\tilde{\R}_{hor}^{(\ell)}, \tilde{\R}_{ver}^{(\ell)}$. This solution can be rearranged and decomposed into a set of $O_\eps(\log N)$ boxes as stated in Lemmas~\ref{lem:partition-subcorridors} and \ref{lem:slices-structured}. Using the existence of this structured solution, we can actually compute a feasible solution in time $(nN)^{O_{\eps}(1)}$ consisting of at least $(1-O(\eps))|OPT'| - |\mathcal{B}|$ items by means of Lemmas~\ref{lem:guess-slices} and \ref{lem:converting-slices-items}, where $|\mathcal{B}|$ is the number of obtained boxes. Since $|OPT'|> c_{\eps} \log N$ and $|\mathcal{B}| \le O_{\eps}(\log N)$, we can choose $c_{\eps}$ to be sufficiently large so that $|\mathcal{B}|\le \eps|OPT'|$, concluding the proof. 
\end{proof}
\subsection{Proof of Lemma~\ref{lem:color-coding-LU}}
\begin{proof}
	Let $k = |OPT'|$. As $k$ is bounded by $c\cdot \log N$, we can guess its value in $c\cdot \log N$ time. We color each item in $I$ uniformly at random using $k$ colors. Then, the probability that we get a successful coloring is $k!/ k^{k} \geq 1/e^k$ which is again at least $1/N^{O(c)}$. Now, given a successful coloring, we can guess which item goes to which corridor $C\in \mathcal{C}$ in time $O_\eps(1)^{k} = N^{O_\eps(c)}$. Finally, we can derandomize this algorithm using an $(n, k)$-perfect hash family as described in~\cite{NSS95} which can be constructed in time $(2e)^k k^{{O}(\log k)} n^{O(1)} = N^{O_\eps(c)}n^{O(1)}$.
\end{proof}
\subsection{Proof of Lemma~\ref{lem:DP-coloring}}

\begin{proof} We will use the previously described Dynamic Program to determine whether there exists the set $I_C'\subseteq I_C$ with $\gamma$ different colors that can be packed inside $C$. Recall that $I_C$ contains all the items from $OPT'$ that are placed inside $C$ and they are colored with $\gamma$ different colors in such a way that the items from $OPT' \cap I_C$ have different colors as we assume a successful coloring. Formally a DP-cell is defined by two long chords $\ell_1, \ell_2$ that do not cross, a set of items $\tilde{I_C}\subseteq I_C$ of cardinality at most $O(k/\epsl)$ plus a placement of them inside $C$ so that they intersect at least one of the chords, and a set of color $\Gamma$. We will first prove the following claim: For any DP-cell $(\ell_1, \ell_2, \tilde{I_C}, \Gamma)$ and $C'$ the polygon surrounded by $\ell_1, \ell_2, e_0$ and $e_{k+1}$, there exists a long chord $\ell$ different from $\ell_1$ and $\ell_2$ that lies within or $|OPT(\ell_1, \ell_2, \tilde{I_C}, \Gamma)| \le O(k/\epsl)$, and furthermore this chord is consistent if both $\ell_1$ and $\ell_2$ are consistent. This implies that if the DP cannot recourse it is because it is at a base case.
	
	To see the proof of the claim, recall that each chord $\ell_i$ is defined by a sequence of axis-parallel lines $\ell_i^{(1)}, \dots, \ell_i^{(k)}$, where $i\in\{1,2\}$ and $k$ is the number of subcorridors of $C$. Now consider two cases: If there exists an index $j\in\{1,\dots,k\}$ such that $\ell_1^{(j)}$ and $\ell_2^{(j)}$ have distance at least $2$ (meaning that there exists some segment of length at least $2$ which is perpendicular to both lines and does not properly cross them), then we can define a new chord by ``following'' $\ell_1$ and $\ell_2$. More in detail, suppose that $\ell_1^{(j)}$ and $\ell_2^{(j)}$ are horizontal (the remaining case being symmetric), which implies that there exists an horizontal line $\ell^{(j)}$ that is completely disjoint from both the previous lines and it lies in between them. We will then extend this line to the left until intersecting $\ell_1$ or $\ell_2$ (notice that we cannot intersect both), and turn following the direction of the corridor iterating the same procedure; if at some point the line does not intersect any of the two chords, it is because we reached $e_0$. We do the analogous procedure extending $\ell^{(j)}$ to the right and turning according to the corridor once we intersect any of the chords. 

	This new sequence of lines $\ell$ is indeed a long chord, different from $\ell_1$ and $\ell_2$ thanks to the segment in subcorridor $j$, and it is contained in $C'$. Also if $\ell_1$ and $\ell_2$ are consistent, $\ell$ can be made consistent by ``turning'' every time it intersects the boundary of a subcorridor rather than when it intersects $\ell_1$ or $\ell_2$.
	
	On the other hand, if $\ell_1$ and $\ell_2$ are at distance at most $1$ everywhere in the corridor, then since items have integral heights and widths it is possible to place at most $O(k/\epsl)$ items in $C'$ as they will have to be one next to the other along their long dimension, or equivalently $|OPT(\ell_1, \ell_2, \tilde{I_C}, \Gamma)| \le O(k/\epsl)$. This can indeed be computed optimally by brute force.
	
	The final solution that we output is encoded in the DP-cell $(\ell_L, \ell_R, \emptyset, \{1,\dots,\gamma\})$, and it is actually optimal as we can completely recover the optimal solution by decomposing the corridor via a sequence of consistent long chords that intersect the items. If there is no such placement the DP will simply return ``fail''. The running time is bounded the number of cells which is at most $(nN)^{O(k/\epsl)}$ and the number of guesses it does when computing a cell which is at most $(nN)^{O(k/\epsl)}\cdot 2^{O(\gamma)}$.  \end{proof}

\subsection{Proof of Lemma~\ref{lem:DP-weighted}}

Notice first that, since there are at most $O_{\eps}(1)$ boxes $E$ in $\bar{\B}$
for the unweighted case ( at most $O_{\eps}(1)$ elements $E$ in
$\bar{\B}\cup\bar{I}'$ for the weighted case) such that $\width(E)<\epsl\cdot N$
and $\height(E)<\epsl\cdot N$, denote by $\bar{E}_{small}\subseteq\bar{\B}$
($\bar{E}_{small}\subseteq\bar{\B}\cup\bar{I}'$ for weighted, with
$\bar{\B}_{small}=\bar{E}_{small}\cap\bar{\B}$) such elements. We
can guess a way to pack all these elements inside the corridor in
time $N^{O_{\eps}(1)}$. Let now $\B:=\bar{\B}\setminus\bar{\B}_{small}$
be the set of all other boxes, which we can consider among the set
of items to be packed, and even enforce that they are included in
the final solution by assigning them a large enough profit.

Now we use color coding with parameter $k$, meaning that we will
randomly color all the items in $\bar{\R}$ using $k$ colors, and
it is possible to show that with probability at least $1/2^{k}$ all
the items from an optimal solution with at most $k$ items receive
different colors, e.g.,~\cite{cygan2015parameterized}. If we repeat
this procedure $2^{O(k)}$ times we can ensure that with high probability
one of the runs delivers such a coloring, and this can be derandomized
efficiently using standard techniques~\cite{NSS95}. {Also,
we color each box with a different color}.

We devise a dynamic programming algorithm to pack items of different
colors inside the corridor while maximizing the profit. This algorithm
is inspired on GEO-DP, a dynamic programming algorithm originally
developed for the Maximum Independent Set of Rectangles problem by
Adamaszek and Wiese~\cite{AW13}.

Let us fix a parameter $t\in\mathbb{N}$ and an embedding of $C$
into the plane such that all the coordinates of the vertices of the
corridor are integral. Let $\mathcal{P}$ denote the set of all polygons
inside the corridor having integral coordinates and at most $t$ axis-parallel
edges {which are not overlapping with the already placed boxes
$\bar{\B}_{small}$}. These polygons may not be simple and have holes,
at which case the bound on the number of edges counts both the outer
edges and the boundaries of the holes. We will introduce a DP-cell
for each polygon $P\in\mathcal{P}$ and each possible set of colors
$K\subseteq\{1,\dots,k+|\mathcal{B}|\}$, where such a cell will store
the optimal solution for the problem of packing at most one item of
each color {in $K$}inside the polygon while maximizing the
total profit. Following from the result in~\cite{AW13}, the number
of cells is at most $N^{O(t)}\cdot2^{k+|\B|}$. {We choose $t={5(\frac{1}{\eps}\cdot\frac{1}{\epsl}+|\bar{\B}_{small}|)+\frac{3}{\eps}}=O_{\epsilon}(1)$.}

To compute the solution for a given cell, consisting of a polygon
$P\in\mathcal{P}$ and a set of colors $K\subseteq\{1,\dots,k+|\B|\}$,
we proceed as follows: If $K=\emptyset$ we simply return an empty
solution and terminate; if $|K|=1$, we return the item colored as
the element of $K$ of maximum profit that can be packed inside $P$
(checking if an item can be packed inside a polygon in $\mathcal{P}$
can be done efficiently as its coordinates can be restricted to combinations
of the coordinates of the vertices that define the polygon) returning
$\emptyset$ if no such packing is possible; otherwise, we enumerate
all the possible ways to partition $P$ into $t'$ polygons in $\mathcal{P}$
and $K$ into $t'$ subsets of $K$ {for each $t'\le t$},
and return the solution with the maximum value among all these
DP-cells. For each partition we look up the DP table value for the
polygons with their corresponding sets of colors, and return the partition
of maximum total profit (where the total profit is the sum of the
profits of the solutions returned for each polygon in the partition).
At the end, the algorithm outputs the value in the DP-cell corresponding
to the polygon defined by the corridor minus the boxes from $\bar{\B}_{small}$
and the set of colors $\{1,\dots,k+|\B|\}$ such that its profit,
among all possible packings, is maximized. It is not difficult to
see that the running time of this algorithm can be bounded above by
$N^{O(t^{2})}2^{O(k+|\B|)}$.

Now we will argue about the correctness of the algorithm, meaning
that we will specify a sequence of polygons such that the optimal
solution can be feasibly packed inside them, and this packing can
be found by our dynamic program. Since the running time of the algorithm
in this case would be bounded by $N^{O_{\eps}(1)}2^{O(k+|\B|)}$,
this would conclude the proof of the lemma.

Consider initially the corridor $C$ and an optimal solution for the
problem with the boxes from $\bar{\B}_{small}$ inside as we guessed.
Let us first define a $C$-curve, which consists of an axis-parallel
curve consisting of $s(C)$ edges that is completely contained in
$C$ and intersects both $e_{0}$ and $e_{s(C)+1}$. We say that a
$C$-curve is \emph{feasible} if whenever it intersects an item or
a box from $\B$, it completely crosses it along its longer dimension
(which by assumption has at least length $\epsl\cdot N$). Notice
that any feasible $C$-curve can cross at most $\frac{1}{\eps}\cdot\frac{1}{\epsl}$
items or boxes from $\B$.

{ Let us assume w.l.o.g. that $C$ contains at least two items (or
boxes from $\B$) such that one of them, which we call $i$, is horizontal,
i.e., has width at least $\epsl\cdot N$. Consider any feasible $C$-curve
$L$ that crosses $i$. Then we can partition the corridor into a
collection of disjoint polygons as follows: we consider a rectangular
region for each box or item crossed by $L$ such that it matches both
its size and position. Then we include {all the} regions enclosed
by the border of the previous containers, along with the curve $L$
and the edges of the corridor. Notice that this gives us at most $3\left(\frac{1}{\eps}\cdot\frac{1}{\epsl}+|\bar{\B}_{small}|\right)\le t$
polygons such that every item is completely contained in one of them.
Furthermore, we can easily check that there are at most $t$ edges
involved in the construction of these polygons. From here we will
recourse the procedure on the obtained polygons until the solution
is completely partitioned.}

We will make sure that the constructed polygons always have the following
structure: there exist two feasible $C$-curves that do not cross,
such that the boundary of the polygon is contained in the union of
the curves, the boxes for the items that they cross and the boundary
of the corridor. Provided with this it is not difficult to see that
all the obtained polygons belong to $\mathcal{P}$. Notice that this
property is satisfied for the polygons of the first level of the recursion
as the boundaries of the corridor define also feasible $C$-curves.

Now suppose now we are at a deeper level of the recursion, meaning
that we have a polygon defined by two feasible $C$-curves containing
at least two items (or boxes from $\B$) inside. We perform a similar
procedure as before, identifying an item $i'$ and taking a feasible
$C$-curve that crosses the item. However this time we will take the
following specific $C$-curve: We draw a line crossing item $i'$
along its long dimension until it intersects the feasible $C$-curves
that bound the polygon; if this is possible while only crossing completely
items along its long dimension, then we continue along those feasible
$C$-curves completing a different feasible $C$-curve; {if on the
other hand some item blocks the candidate line, we just bend and continue
following the shape of the corridor until intersecting the feasible
$C$-curve or intersecting $e_{0}$ or $e_{s(C)+1}$}. We again define
rectangular regions for the items crossed by our new feasible $C$-curve,
and it is easy to see that the constructed polygons satisfy the required
property, concluding the proof.

\subsection{Proof of Lemma~\ref{lem:pack-into-boxes-weighted}}
In order to prove Lemma~\ref{lem:pack-into-boxes-weighted}  we will make use of
the following well-known result for the Generalized Assignment problem
from Shmoys and Tardos~\cite{ST93}. In the Generalized Assignment
problem we are given a set of $m$ containers where each container
$i$ has a given capacity $C_{i}$, and a set of $n$ tasks where
each task $j$ has, for every possible container $i$, a profit $p_{ij}$
and size $s_{ij}$ if it is assigned to $i$. The goal is to find
a subset of the tasks and a way to assign them to the containers so
that the total size of the tasks in each container is at most its
capacity and the total profit of the assignment is maximized. The
following theorem states that it is possible to compute an assignment
of the tasks into the containers having at least the optimal profit
but where the capacity constraints are not satisfied.
\begin{thm}[Shmoys and Tardos~\cite{ST93}]
\label{thm:GAP-shmoystardos} Given an instance of GAP, it is possible
to compute an assignment of the items into the containers in time
$(nm)^{O(1)}$ such that: 
\begin{itemize}
\item The total profit of the items assigned to the containers is at least
the profit of the optimal solution for the instance, and 
\item For each container $i$, the total size of the items assigned to $i$
is at most $C_{i}+\underset{j\text{ assigned to }i}{\max}{s_{ij}}$. 
\end{itemize}
\end{thm}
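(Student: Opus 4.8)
The plan is to run the classical LP‑rounding argument of Shmoys and Tardos. First I would write the natural LP relaxation of GAP: introduce variables $x_{ij}\in[0,1]$ for assigning task $j$ to container $i$, forcing $x_{ij}=0$ whenever $s_{ij}>C_i$, imposing the assignment constraints $\sum_i x_{ij}\le 1$ for every task $j$, the capacity constraints $\sum_j s_{ij}x_{ij}\le C_i$ for every container $i$, and taking the objective $\max\sum_{i,j}p_{ij}x_{ij}$. This LP is solvable in $(nm)^{O(1)}$ time, and since every feasible integral assignment is feasible for the LP, its optimum is at least the optimum of the GAP instance. Let $x^{*}$ be an optimal LP solution.

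The core is a structured rounding of $x^{*}$ to an integral solution of no smaller profit. For each container $i$ set $k_i:=\lceil\sum_j x^{*}_{ij}\rceil$ and create $k_i$ \emph{slot nodes} $v_{i,1},\dots,v_{i,k_i}$. Order the tasks $j$ with $x^{*}_{ij}>0$ non‑increasingly by size $s_{ij}$, and distribute their fractional weights $x^{*}_{ij}$ greedily along $v_{i,1},v_{i,2},\dots$: slot $v_{i,1}$ absorbs weight until its total reaches $1$ (splitting the last task between $v_{i,1}$ and $v_{i,2}$ if necessary), then $v_{i,2}$, and so on. This defines a bipartite graph between tasks and slots with edge weights $y_{j,(i,t)}$ satisfying $\sum_{(i,t)}y_{j,(i,t)}=\sum_i x^{*}_{ij}\le 1$ for each task $j$, $\sum_j y_{j,(i,t)}=1$ for every slot $v_{i,t}$ with $t<k_i$, and $\sum_j y_{j,(i,t)}\le 1$ for the last slot $v_{i,k_i}$. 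Assigning the edge $(j,v_{i,t})$ the profit $p_{ij}$, the weighted profit of $y$ equals $\sum_{i,j}p_{ij}x^{*}_{ij}$, the LP value. Since $y$ is a feasible fractional matching and the bipartite matching polytope is integral (its constraint matrix is totally unimodular), a maximum‑profit \emph{integral} matching $M$ in this bipartite graph has profit at least that of $y$, hence at least the LP value, hence at least the GAP optimum; such an $M$ is computable deterministically in $(nm)^{O(1)}$ time via max‑weight bipartite matching. From $M$, assign task $j$ to container $i$ exactly when $M$ matches $j$ to some slot $v_{i,t}$; each task goes to at most one container, each slot receives at most one task, and the profit is preserved.

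It remains to verify the capacity bound. Fix a container $i$ and, for $t=1,\dots,k_i$, let $\hat s_{i,t}$ be the largest size $s_{ij}$ over tasks $j$ fractionally assigned to slot $v_{i,t}$; the task matched to $v_{i,t}$ (if any) has size at most $\hat s_{i,t}$. Because tasks were processed in non‑increasing size order and slots filled contiguously, for $t\ge 2$ every task in slot $v_{i,t}$ has size at most every task in slot $v_{i,t-1}$; since slot $v_{i,t-1}$ carries fractional weight exactly $1$, this gives $\hat s_{i,t}\le \sum_j s_{ij}\,y_{j,(i,t-1)}$, the fractional load of $v_{i,t-1}$. The task matched to the first slot $v_{i,1}$ is assigned to $i$, so its size is at most $\max_{j\text{ assigned to }i}s_{ij}$. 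Summing, the total load on $i$ is at most
\[
\max_{j\text{ assigned to }i}s_{ij}+\sum_{t=2}^{k_i}\sum_j s_{ij}\,y_{j,(i,t-1)}
\;\le\;\max_{j\text{ assigned to }i}s_{ij}+\sum_j s_{ij}x^{*}_{ij}
\;\le\;\max_{j\text{ assigned to }i}s_{ij}+C_i,
\]
which is exactly the claimed bound; the total running time is $(nm)^{O(1)}$.

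The step requiring the most care is this capacity analysis: the greedy slot‑filling must be set up so that every slot $v_{i,t}$ with $t<k_i$ is \emph{exactly} full while only $v_{i,k_i}$ may be underfull, one must track the at most one task shared between consecutive slots, and the monotonicity $\hat s_{i,t}\le(\text{fractional load of }v_{i,t-1})$ must be argued cleanly (using that a minimum of sizes is at most a weight‑$1$ average). The remaining ingredients — the LP relaxation, integrality of the bipartite matching polytope, and profit bookkeeping — are routine.
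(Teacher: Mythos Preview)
Your proof is correct and is precisely the classical Shmoys--Tardos LP-rounding argument: solve the natural LP relaxation, split each container into $\lceil\sum_j x^*_{ij}\rceil$ unit-capacity slots, spread the fractional mass across slots in non-increasing size order, round via maximum-weight bipartite matching (integrality of the matching polytope preserves the LP profit), and bound the load on container $i$ by telescoping each slot's matched size against the previous slot's fractional load. The care point you flag---that slots $v_{i,1},\dots,v_{i,k_i-1}$ must be \emph{exactly} full so that the minimum size in slot $t-1$ dominates the maximum size in slot $t$ and is in turn dominated by the weight-$1$ average---is indeed the crux, and you handle it correctly.

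There is nothing to compare against in the paper: Theorem~\ref{thm:GAP-shmoystardos} is quoted from~\cite{ST93} as a black box and not proved here. Your write-up simply supplies the standard proof that the paper omits.
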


\begin{proof}[Proof of Lemma~\ref{lem:pack-into-boxes-weighted}]
Consider the set of boxes $\bar{\B}$ and the way $\bar{\R}'$ are
nicely packed into them. We will create an instance of GAP so as to
use Theorem~\ref{thm:GAP-shmoystardos} to decide which items are
assigned to the boxes and how, and then remove some rectangles to
turn the assignment into a feasible nicely packed solution. For each
box $B\in\bar{\B}$ we create a container whose capacity will be $h(B)$
if $B$ contains items one on top of the other, $w(B)$ if $B$ contains
items one next to the other or $a(B)$ if $B$ contains only items
relatively small compared to it (this with respect to the packing
of $\bar{\R}'$ into $\bar{\B}$). We will also create for each item
$j\in\tilde{I}$ a task of profit $p_{ij}=\profit(j)$ for every container
$i$ and of size $s_{ij}$ equal to 
\begin{itemize}
\item $h(j)$ if $i$ corresponds to a box for items one on top of the other
and $j$ fits inside the box associated to $i$, 
\item $w(j)$ if $i$ corresponds to a box for items one next to the other
and $j$ fits inside the box associated to $i$, 
\item $a(j)$ if $i$ corresponds to a box for relatively small items and
$h(j)\le\eps h(B)$ and $w(j)\le\eps w(B)$, where $B$ is the box
associated to container $i$, or 
\item $N+1$ otherwise. 
\end{itemize}
In principle we do not know the way items from $\bar{\R}'$ are nicely
packed into the boxes, but we can guess a label for each box saying
if it will contain items one on top of the other, one next to the
other or only relatively small items in time $2^{O(|\B|)}$. Notice
that the optimal profit for this GAP instance is at least $\profit(\bar{\R}')$
as the way $\bar{\R}'$ is nicely packed into $\bar{\B}$ induces
a feasible solution. We now apply Theorem~\ref{thm:GAP-shmoystardos}
to obtain an assignment of a set of items $\R_{GAP}\subseteq\bar{\R}$
into the boxes of total profit at least $\profit(\bar{\R}')$ where
the containers are slightly overloaded. Although this does not induce
a feasible packing, due to the size guarantees of Theorem~\ref{thm:GAP-shmoystardos}
we can remove the task of largest size from each container (i.e. the
item of largest width from each box with items one next to the other,
the item of largest height from each box with items one on top of
the other and the item of largest area from each box for relatively
small items) and obtain a set of items $\R_{GAP}'\subseteq\bar{\R}$
of total profit at least $p(\bar{\R}')-|\B|\cdot\max_{i\in\bar{I}}\profit(i)$
such that items assigned to boxes that stack their items are nicely
packed into their boxes, and for each remaining box we have a set
of items of total area at most the area of the box and whose dimensions
are at most an $eps$ fraction of the dimensions of the box.

In order to turn this into a nicely packed solution we need to pack
(a subset of) the small items into their corresponding boxes. Consider
a box $B$ for relatively small items and the set of items that were
assigned to it from $\R_{GAP}'$, which we will denote by $\R_{GAP}'(B)$.
If $a(\R_{GAP}'(B))\le(1-2\eps)a(B)$ then we can pack all of them
into the box using NFDH (Theorem~\ref{thm:nfdhPack}). If it is not
the case, then we will partition the items by means of the following
procedure: let us greedily pick items (in any order) until their total
area becomes larger than $2\eps a(B)$ and call this a set, and restart
the procedure with the remaining items until we finish. Notice that
the total area of each set will be at least $2\eps a(B)$ and at most
$3\eps a(B)$ (except maybe for the last one). The number of obtained
sets with total area at least $2\eps a(B)$ is at least $\frac{1-5\eps}{3\eps}$
and hence one of such sets must have total profit at most $\frac{3\eps}{1-5\eps}\profit(\R_{GAP}')\le4\eps\profit(\R_{GAP}')$.
If we remove this set then the rest can be packed inside the box using
NFDH (Theorem~\ref{thm:nfdhPack}).

If we apply this procedure to all the boxes for relatively small items,
then overall we get a set of items $\tilde{\R}'\subseteq\bar{\R}$
of total profit at least $(1-4\eps)p(\bar{\R}')-|\B|\cdot\max_{i\in\bar{I}}\profit(i)$
and a way to nicely pack them into $\bar{\B}$ in time $(n|\bar{\B})^{O(1)}$.

\end{proof}

\subsection{Proof of Lemma~\ref{lem:partition-acute-pieces-weighted}}

Let $P$ be an acute piece and assume w.l.o.g.~that $P$ is horizontal,
i.e., $P$ is defined via two horizontal edges $e_{1}=\overline{p_{1}p'_{1}}$
and $e_{2}=\overline{p_{2}p'_{2}}$, and additionally two monotone
axis-parallel curves connecting $p_{1}=(x_{1},y_{1})$ with $p_{2}=(x_{2},y_{2})$
and connecting $p'_{1}=(x'_{1},y'_{1})$ with $p'_{2}=(x'_{2},y'_{2})$,
respectively. Assume w.l.o.g. that $x_{1}\le x_{2}\le x'_{2}\le x'_{1}$
and that $y_{1}<y_{2}$ (see Figure~\ref{fig:construction}). Let
$h(P)$ denote the the height of $P$ which we define as the distance
between $e_{1}$ and $e_{2}$. Intuitively, we place $1/\epsilon^{2}$
boxes inside $P$ of height $\epsilon^{2}h(P)$ each, stacked one
on top of the other, and of maximum width such that they are contained
inside $P$. Formally, we define $1/\epsilon^{2}$ boxes $B_{0},...,B_{1/\epsilon^{2}-1}$
such that for each each $j\in\{0,...,1/\epsilon^{2}-1\}$ the bottom
edge of box $B_{j}$ has the $y$-coordinate $y_{1}+j\cdot\epsilon^{2}h(P)$
and the top edge of $B_{j}$ has the $y$-coordinate $y_{1}+(j+1)\cdot\epsilon^{2}h(P)$
(see Figure~\ref{fig:construction}). For each such $j$ we define
the $x$-coordinate of the left edge of $B_{j}$ maximally small and
the $x$-coordinate of the right edge of $B_{j}$ maximally large
such that $B_{j}\subseteq P$.
%

For proving the unweighted case, we delete all skewed (i.e., horizontal) items in $\opt'(P)$
that intersect a horizontal edge of a box in $\left\{ B_{0},...,B_{1/\epsilon^{2}-1}\right\} $.
Note that there can be at most $1/\epsilon^{2}\cdot1/\epsl$ many.

Now consider the first $1/\epsilon$ boxes, i.e., $B_{0},...,B_{1/\epsilon-1}$.
For each box $B_{j}\in\left\{ B_{0},...,B_{1/\epsilon-1}\right\} $
consider the horizontal stripe $S_{j}:=[y_{1}+j\cdot\epsilon^{2}h(P),y_{1}+(j+1)\cdot\epsilon^{2}h(P)]\times[0,N]$
(i.e., the horizontal stripe of height $\epsilon^{2}h(P)$ that contains
$B_{j}$). 

Each item $i$ contained in $P$ satisfies that $\height(i)\le\eps^{4}\cdot\height(P)$
and therefore each item $i$ contained in $P$ intersects at most
2 stripes in $\left\{ S_{0},...,S_{1/\epsilon^{2}-1}\right\} $.

Therefore, by the pigeon hole principle, two consecutive boxes $B_{j^{*}},B_{j^{*}+1}\in\left\{ B_{0},...,B_{1/\epsilon-1}\right\} $
have the property that the stripe $S'_{j^{*}}:=[0,N]\times[y_{1}+j^{*}\cdot\epsilon^{2}h(P),y_{1}+(j^{*}+2)\cdot\epsilon^{2}h(P)]$
(containing $B_{j^{*}}$ and $B_{j^{*}+1}$) intersects at most $4\epsilon|\opt'(P)|$
of the remaining items in $\opt'(P)$. We delete all items in $\opt'(P)$
that are intersected by $S'_{j^{*}}$. Next, we move down all items
in $\opt'(P)$ that intersect the boxes $B_{j^{*}+2},...,B_{1/\epsilon^{2}-1}$
by $\epsilon^{2}h(P)$ units. Note that then they fit into the area
defined by the union of the boxes $B_{j^{*}+1},...,B_{1/\epsilon^{2}-2}$.
In the weighted case, we assign to $B_{j^{*}}$ all items that intersect
a horizontal edge of a box in $\left\{ B_{j^{*}+1},...,B_{1/\epsilon^{2}-1}\right\} $.
This can be done since each such item has a height of at most $\eps^{4}\cdot\height(P)$
which implies that $\eps^{4}\cdot\height(P)/\epsilon^{2}\le\epsilon^{2}\cdot\height(P)$.

We define $w(P):=x'_{1}-x_{1}$, i.e., the length of $e_{1}$ (which
is longer than $e_{2}$). Let $w'(P):=x'_{2}-x_{2}$, i.e., the length
of $e_{2}$. Due to the definition of corridors, we have that $w'(P)\ge(1-2\epsilon)w(P)$. Next, we would
like to ensure that below the box $B_{j^{*}}$ there is no item $i$
with $\width(i)<w'(P)$ (we want to achieve this since then we can
stack the items underneath $B_{j^{*}}$ on top of each other) and
no small item intersects the boundary of a box.

\begin{figure}
\begin{centering}
\includegraphics[scale=0.8]{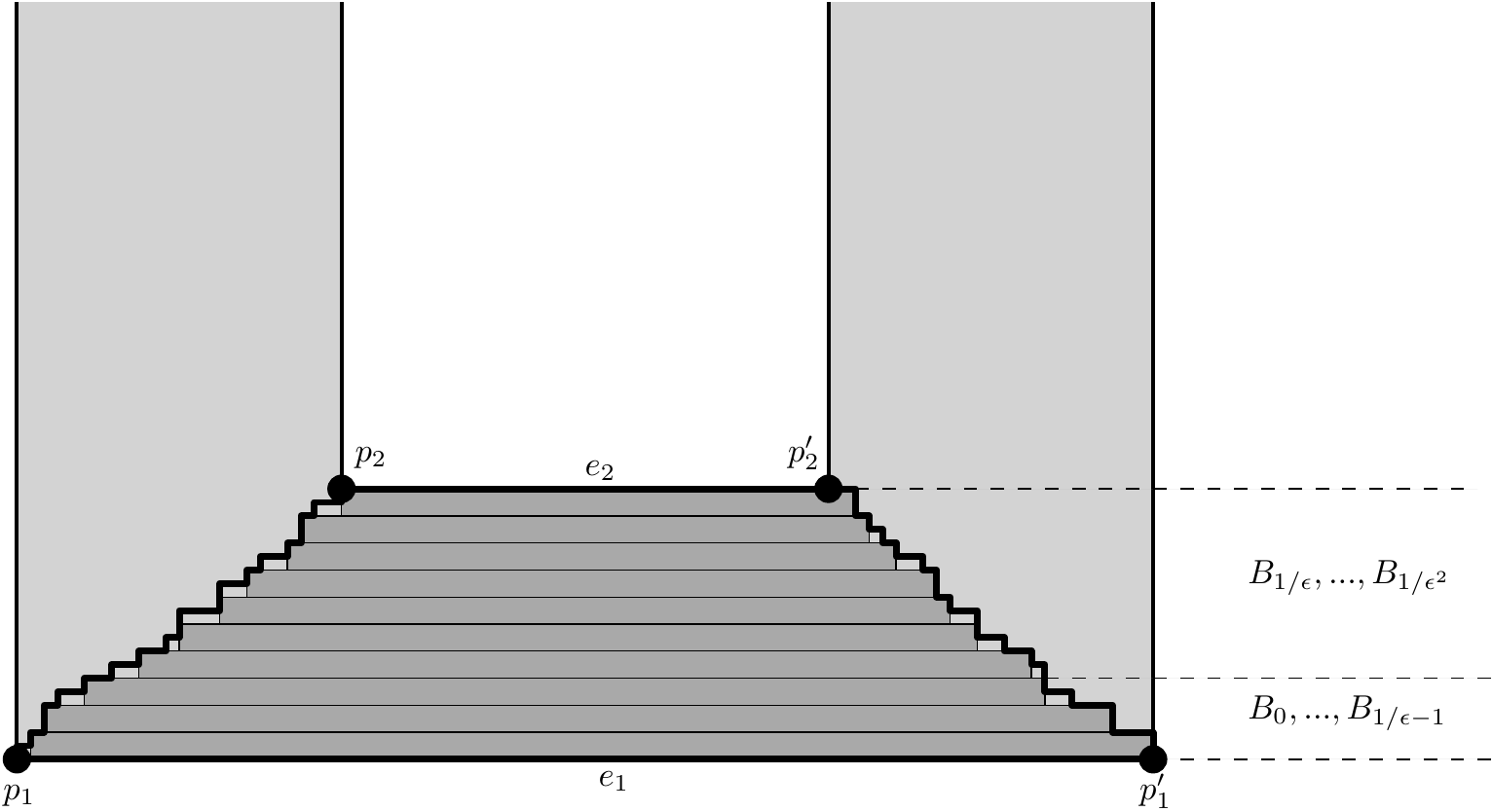} 
\par\end{centering}
\caption{\label{fig:construction}The construction used in the proof of Lemma~\ref{lem:partition-acute-pieces-weighted}.}
\end{figure}

Therefore, consider the topmost $1/\epsilon^{2}-1/\epsilon$ boxes.
We group them into $1/12\epsilon-1$ groups with $12/\epsilon$ boxes
each, i.e., for each $k\in\{0,...,1/12\epsilon-2\}$ we define a group
$\B_{k}:=\{B_{j}|j\in\{1/\epsilon+12k/\epsilon,...,1/\epsilon+12(k+1)/\epsilon-1\}\}$.
Note that each group $\B_{k}$ contains exactly $12/\epsilon$ boxes
and below $B_{j^{*}}$ there are at most $1/\epsilon$ boxes. By the
pigeon hole principle, there is a value $k^{*}\in\{0,...,1/\epsilon-2\}$
such that the the boxes in the group $\B_{k^{*}}$ intersect with
items of total weight of $O(\epsilon)w(\opt'(P))$ where $w(\opt'(P))$
denotes the weight of the remaining items in $\opt'(P)$. Therefore,
we delete all items that intersect a box in $\B_{k^{*}}$.

Consider all items $i\in\opt'(P)\cap\Rsk$ that intersect at least
one of the stripes in $\left\{ S_{0},...,S_{j^{*}-1}\right\} $ and
that satisfy that $\width(i)\le w'(P)$. Due to Steinbergs algorithm~\cite{S97}
they fit into a box of height $3\epsilon\cdot h(P)$ and width $w'(P)$.
Therefore, all but $O(1/\epsl\cdot1/\epsilon)$ of them fit into $3/\epsilon$
boxes in $\B_{k^{*}}$. We assign them to these boxes $3/\epsilon$
boxes in $\B_{k^{*}}$.

Now consider all items $i\in\opt'(P)\cap\Rsm$ that intersect the
boundary of a box in $\left\{ B_{0},...,B_{1/\epsilon^{2}-1}\right\} $
or that are contained in one of the stripes in $\left\{ S_{0},...,S_{j^{*}-1}\right\} $.
In the unweighted case, if there are such items we know that $h(P)>\epss N/\epsilon^{4}$
and hence their total area is at most $\epsilon\cdot h(P)\cdot w(P)+w(P)N\epss/\epsilon^{2}\le4\epsilon\cdot h(P)\cdot w'(P)$,
using that $w'(P)\ge(1-2\epsilon)w(P)$. Also, note that for each
item $i\in\opt'(P)\cap\Rsm$ we have that $\width(i)\le\epss N\le O(\epsilon^{2})w'(P)$
and $\height(i)\le\epss N\le O(\epsilon^{2})w'(P)$.  Therefore, we can partition
all these small items into 9 groups such that each group contains
items with a total area of $\epsilon\cdot h(P)w'(P)/2$ and each group
can be packed into a box in $\B_{k^{*}}$ using Steinbergs algorithm.
In the weighted case, the total area of items $i$ that intersect
the boundary of a box in $\left\{ B_{0},...,B_{1/\epsilon^{2}-1}\right\} $
or that are contained in one of the stripes in $\left\{ S_{0},...,S_{j^{*}-1}\right\} $
with $\height(i)\le\eps^{4}\cdot\height(P)$ and $\width(i)\le\epss\cdot\width(P)$
is bounded by $\epsilon\cdot h(P)\cdot w(P)+O(w(P)h(P)\epss/\epsilon^{2})\le O(\epsilon)\cdot h(P)\cdot w'(P)$.
Hence, we can pack them into the boxes $\B_{k^{*}}$ like before.

Now we define $P'$ as the acute piece induced by $e_{1}$, the bottom
edge of $B_{j^{*}}$, and the respective part of the two monotone
axis-parallel curves connecting $p_{1}=(x_{1},y_{1})$ with $p_{2}=(x_{2},y_{2})$
and connecting $p'_{1}=(x'_{1},y'_{1})$ with $p'_{2}=(x'_{2},y'_{2})$,
respectively. Each remaining item $i\in\opt'(P)$ intersecting $P'$
satisfies that $\width(i)\ge w'(P)$. Therefore, we can stack these
items on top of each other (using that $w'(P)>w(P)/2$).

We obtain that each remaining item from $\opt'(P)$ is assigned to
a box in $\left\{ B_{j^{*}},...,B_{1/\epsilon^{2}-1}\right\} $ or
lies in $P'$. We define $\opt'_{1}(P)$ to be the former set of items
and $\opt'_{2}(P)$ to be the latter set. Finally, we apply Lemma~\ref{lem:partition-boxes-weighted}
to each box $B\in\left\{ B_{j^{*}},...,B_{1/\epsilon^{2}-1}\right\} $
in order to partition $B$ further and such that the items assigned
to $B$ are nicely packed inside $B$.

\subsection{Proof of Lemma \ref{lem:slices-weighted}}

For each corridor $C\in\C$ let $\P(C)$ be a partition of $C$ into
$s(C)$ pieces such that each skewed item in $\opt'$ is contained
in a piece in $\P(C)$. Let $\P:=\bigcup_{C\in\C}\P(C)$.

Consider a set $\R_{hor}^{(\ell)}$. Observe that its items might
be contained in the acute piece $P'$ corresponding to some acute
piece $P$ according to Lemma~\ref{lem:partition-acute-pieces-weighted}
or in a box $B\in\tilde{\B}'(P)$ (according to Lemma~\ref{lem:partition-acute-pieces-weighted}).
Note that in the latter case, an item $i\in\R_{hor}^{(\ell)}$ might
be placed as a ``small item'', i.e., such that $\width(i)\le\epsilon\cdot\width(B)$
and $\height(i)\le\epsilon\cdot\height(B)$. To this end, we partition
$\R_{hor}^{(\ell)}$ into $2^{O_{\epsilon}(1)}$ subgroups $\left\{ \R_{hor,j}^{(\ell)}\right\} _{j}$
such that for each subgroup $\R_{hor,j}^{(\ell)}$ it holds that for
each acute piece $P$ and each box $B\in\tilde{\B}'(P)$ either each
item $i\in\R_{hor,j}^{(\ell)}$ satisfies that $\width(i)\le\epsilon\cdot\width(B)$
and $\height(i)\le\epsilon\cdot\height(B)$, or no item $i\in\R_{hor,j}^{(\ell)}$
satisfies that $\width(i)\le\epsilon\cdot\width(B)$ and $\height(i)\le\epsilon\cdot\height(B)$.
Similarly, we require that for each acute piece $P$ (according to
Lemma~\ref{lem:partition-acute-pieces-weighted}) either each item
$i\in\R_{hor,j}^{(\ell)}$ satisfies that $\width(i)>\epsl\cdot\width(P)$
and $\height(i)\le\epsilon^{4}\cdot\height(P)$, or no item $i\in\R_{hor,j}^{(\ell)}$
satisfies that $\width(i)>\epsl\cdot\width(P)$ and $\height(i)\le\epsilon^{4}\cdot\height(P)$.
Hence, intuitively all items in $\R_{hor,j}^{(\ell)}$ are small w.r.t.
exactly the same set of previously guessed boxes and wide w.r.t. exactly
the same set of acute pieces $P$. Let $L=O_{\epsilon}(\log nN)$
denote the total number of resulting groups (for horizontal and vertical
items).

Then, with each set $\R_{hor,j}^{(\ell)}$ we  compute
a set $\hat{\R}_{hor,j}^{(\ell)}\subseteq\R_{hor}^{(\ell)}$ of items
of minimum width among the items in $\R_{hor}^{(\ell)}$, such that
$\height(\hat{\R}_{hor,j}^{(\ell)})\le\height(\overline{\opt}'\cap\R_{hor,j}^{(\ell)})$
but also $\profit(\hat{\R}_{hor,j}^{(\ell)})\ge(1-O(\epsilon))\profit(\overline{\opt}'\cap\R_{hor,j}^{(\ell)})$.
Then we define slices based on $\hat{\R}_{hor,j}^{(\ell)}$ and round
them to $1/\epsilon$ different widths, losing a factor of at most
$1+\epsilon$. Again, let $\tilde{\R}_{hor,j}^{(\ell)}$ denote the
resulting set of rounded slices and let $\tilde{\R}_{hor,j}^{(\ell)}=\tilde{\R}_{hor,j,1}^{(\ell)}\dot{\cup}...\dot{\cup}\tilde{\R}_{hor,j,1/\epsilon}^{(\ell)}$
denote a partition of $\tilde{\R}_{hor,j}^{(\ell)}$ according to
the widths of the slices. For each $k\in[1/\epsilon]$ we group the
slices in $\tilde{\R}_{hor,j,k}^{(\ell)}$ into \emph{packs }such
that the items in each pack have a total profit of $\epsilon p(\overline{\opt}')/(c'_{\epsilon}(L+|\overline{\opt}'\cap\Rh|))$
up to factors of $1+\epsilon$, for a constant $c'_{\epsilon}$ to
be defined later, apart from one pack that might have smaller profit.
Let $\hat{\B}$ denote the set of boxes constructed in this way. Note
that by construction $|\hat{\B}|\le c'_{\epsilon}(L+|\overline{\opt}'\cap\Rh|)/\epsilon+L/\epsilon$.

Then we argue that we can find a packing of the slices $\left\{ \tilde{\R}_{hor,j}^{(\ell)}\right\} _{j,\ell},\left\{ \tilde{\R}_{ver,j}^{(\ell)}\right\} _{j,\ell}$
such that they are packed into few containers. Now, consider the optimal packing and slice each
item in $\overline{\opt}'\cap\Rho$ horizontally and each item in
$\opt'\cap\Rve$ vertically. For each $\ell$ such that $\opt_{hor}^{(\ell)}>0$
recall that $\height(\hat{\R}_{hor}^{(\ell)})\le\height(\overline{\opt}'\cap\R_{hor}^{(\ell)})$
by construction and, therefore, we can replace the items in $\opt'\cap\R_{hor,j}^{(\ell)}$
by the slices in $\tilde{\R}_{hor,j}^{(\ell)}$ for each $j,\ell$.
Now we reorder the packing of these. For each horizontal acute piece
$P$ and each guessed box $\bar{\B}(P)$ (see Lemma~\ref{lem:partition-acute-pieces-weighted})
containing skewed items, we order the horizontal slices and the items
$\opt'\cap\Rh$ by width. Also, in the corresponding acute piece $P'$
(in which the items in $\overline{\opt}'$ were stacked on top of
each other) we sort the horizontal slices and the items in $\opt'\cap\Rh$
by width. 
Next, we partition each obtuse piece $P$
into boxes. In our case, these boxes are induced by the horizontal
slices, the items in $\opt'\cap\Rh$ in the acute pieces adjacent
to $P$, and the items in $\overline{\opt}'\cap\Rh$ contained in
$P$. 
We partition each of the resulting boxes in $P$ into $O_{\epsilon}(1)$ subboxes.
Overall, this induces $c''_{\epsilon}(L+|\overline{\opt}'\cap\Rh|)$
containers for the slices for some constant $c''_{\epsilon}$. Note
that in some of them the slices are contained as ``small items''.
Then we argue that we can assign almost all boxes in $\hat{\B}$ into
these containers, more precisely, we can assign at least $|\hat{\B}|-c''_{\epsilon}(L+|\overline{\opt}'\cap\Rh|)$
of them. When we assign a box $B$ to a container that corresponds
to small items, we ensure only that we do not pack items and boxes
of too much area into the container, rather than trying to find an
actual packing of the boxes in the containers. Let $\hat{\B}'\subseteq\hat{\B}$
denote the subset of boxes from $\hat{\B}$ that we packed in this
way. We choose $c'_{\epsilon}:=c''_{\epsilon}$ and then the lost
boxes have only small total profit of 
\[
\frac{\epsilon p(\opt')}{c'_{\epsilon}(L+|\overline{\opt}'\cap\Rh|)}\cdot c''_{\epsilon}(L+|\overline{\opt}'\cap\Rh|)\le\epsilon p(\opt').
\]
It remains to argue that we can pack many items into $\hat{\B}'$.
Recall the item $i^{*}$ due to Lemma~\ref{lem:shifting-weighted}. 
\begin{lem}
If $p(\overline{\opt}')\ge p(i^{*})|\hat{\B}|/\epsilon$ then we can
nicely pack $(1-O(\epsilon))p(\overline{\opt}'\cap\Rl)$ items inside
the boxes $\hat{\B}'$. 
\end{lem}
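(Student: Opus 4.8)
The statement is the weighted analogue of Lemma~\ref{lem:converting-slices-items}: I would interpret the (nice) placement of the $\Rl$-slices inside the boxes $\hat{\B}'$ as a \emph{fractional} packing of the original items of $\Rl$ and then round it to an integral nice packing, discarding only the few items that end up split between boxes. Since every item of $\Rl$ has profit at most $p(i^{*})$ and the number of discarded items will be $O(|\hat{\B}'|)\le O(|\hat{\B}|)$, the hypothesis $p(\overline{\opt}')\ge p(i^{*})|\hat{\B}|/\epsilon$ makes the discarded profit an $O(\epsilon)$-fraction of the target.

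\textbf{Step 1 (slice profit inside $\hat{\B}'$).} First I would account for how much profit of $\overline{\opt}'\cap\Rl$ survives as slices placed in $\hat{\B}'$. The sets $\{\tilde{\R}_{hor,j}^{(\ell)},\tilde{\R}_{ver,j}^{(\ell)}\}$ were obtained from $\overline{\opt}'\cap\Rl$ by horizontal/vertical slicing, by selecting the shortest slices of the appropriate total height, and by linear grouping of the widths/heights, each step costing at most a factor $1+\epsilon$ in profit; and the placed boxes $\hat{\B}'$ carry all these slices except those in the $O_{\epsilon}(L+|\overline{\opt}'\cap\Rh|)$ discarded packs. Because the packs were sized to have profit $\Theta(\epsilon\, p(\overline{\opt}')/(c'_{\epsilon}(L+|\overline{\opt}'\cap\Rh|)))$ and $c'_{\epsilon}=c''_{\epsilon}$, this discarded profit is $O(\epsilon\, p(\overline{\opt}'))$; using the hypothesis (which guarantees $\hat{\B}$ is large enough that the pack-boxes carry essentially all of $p(\overline{\opt}'\cap\Rl)$), this is absorbed, so the slices lying in $\hat{\B}'$ have total profit at least $(1-O(\epsilon))\,p(\overline{\opt}'\cap\Rl)$.

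\textbf{Step 2 (slices to a fractional item packing).} For each group $\Rho^{(\ell)}$ (and symmetrically $\Rve^{(\ell)}$) I set up an assignment LP with a variable $x_{i,B}$ for the extent to which item $i\in\Rho^{(\ell)}$ is assigned to box $B\in\hat{\B}'$, with constraints $\sum_{B}x_{i,B}\le 1$ and one capacity constraint per box: $\sum_{i}\height(i)\,x_{i,B}\le\height(B)$ for boxes that stack horizontal items, the symmetric width constraint for vertical stacks, and $\sum_{i}\area(i)\,x_{i,B}\le\area(B)$ for boxes holding items small relative to $B$ (the three box types whose labels we guessed in Lemmas~\ref{lem:partition-acute-pieces-weighted} and \ref{lem:partition-subcorridors}). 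The nice placement of the slices inside $\hat{\B}'$ from Step~1 witnesses a feasible fractional solution of value $(1-O(\epsilon))\,p(\overline{\opt}'\cap\Rl)$. I take an extreme-point optimal solution $x^{*}$; by the rank lemma its support has at most $|\Rl|+|\hat{\B}'|$ nonzero entries, so at most $|\hat{\B}'|$ items are split between two or more boxes and for each box at most one item is assigned a strictly fractional amount. Dropping these $\le 2|\hat{\B}'|$ items gives an integral assignment; since each dropped item is in $\Rl$, its profit is at most $p(i^{*})$, so the dropped profit is at most $2|\hat{\B}'|\,p(i^{*})\le 2|\hat{\B}|\,p(i^{*})\le 2\epsilon\,p(\overline{\opt}')$, again an $O(\epsilon)$-fraction of the target.

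\textbf{Step 3 (back to a nice packing) and main obstacle.} For the boxes that stack items, the integral assignment is already a nice packing (sort by height resp.\ width and stack); for the area-bounded "small-item" boxes I finish exactly as in the proof of Lemma~\ref{lem:pack-into-boxes-weighted}: greedily group the assigned items into bundles of area $\Theta(\epsilon)\,\area(B)$, drop the least-profitable bundle, and pack the rest via NFDH (Theorem~\ref{thm:nfdhPack}), losing another $O(\epsilon)$-fraction. Combining the losses from Steps~1--3 yields a nicely packed set of items from $\Rl$ inside $\hat{\B}'$ of profit $(1-O(\epsilon))\,p(\overline{\opt}'\cap\Rl)$. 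I expect the main delicate point to be precisely this bookkeeping of the two kinds of loss — the $O(\epsilon\,p(\overline{\opt}'))$ from discarded packs (Step~1) and from LP rounding plus NFDH (Steps~2--3) — and checking that both are subsumed by the slack $p(\overline{\opt}')\ge p(i^{*})|\hat{\B}|/\epsilon$ together with the proportional sizing of packs; the LP/rank-lemma rounding itself and the area-box NFDH step are routine, being carbon copies of Lemmas~\ref{lem:converting-slices-items} and \ref{lem:pack-into-boxes-weighted}.
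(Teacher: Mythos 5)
Your proof is correct and achieves the same bound as the paper's, but the rounding mechanism differs. The paper rounds the fractional (slice-based) assignment by a direct combinatorial argument: take out the horizontal slices, sort them and the pack-boxes increasingly by width, and refill the boxes greedily, which guarantees at most two items per box are split fractionally, hence at most $2|\hat{\B}|$ dropped items of profit at most $2|\hat{\B}|\,p(i^{*})\le 2\epsilon\,p(\overline{\opt}')$. You instead transplant the LP plus rank-lemma argument from Lemma~\ref{lem:converting-slices-items}: set up an assignment LP per group, take an extreme point, and drop the $\le |\hat{\B}'|$ items split across boxes plus the $\le |\hat{\B}'|$ per-box fractionals. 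Both give $O(|\hat{\B}|)$ dropped items and the same profit loss, so the two routes are interchangeable; the LP route is more systematic and mirrors the unweighted case, whereas the greedy-sorting route is shorter and avoids invoking extreme-point structure.

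One small caveat: your Step~3, which handles area-bounded ``small-item'' boxes via NFDH as in Lemma~\ref{lem:pack-into-boxes-weighted}, is not actually needed here. The boxes $\hat{\B}'$ are the packs constructed within $\tilde{\R}_{hor,j,k}^{(\ell)}$ and $\tilde{\R}_{ver,j,k}^{(\ell)}$, each holding slices of a single rounded width (resp.\ height) stacked one on top of another (resp.\ side by side), so they are all stacking boxes; the ``small w.r.t.\ the container'' situation refers to where the packs themselves are placed inside the corridor pieces, not to the contents of the packs. Including Step~3 does no harm, but it addresses a case that does not arise for this lemma. Also, both your write-up and the paper's state the conclusion as $(1-O(\epsilon))\,p(\overline{\opt}'\cap\Rl)$ while the loss terms are bounded against $O(\epsilon)\,p(\overline{\opt}')$; this is a harmless imprecision inherited from the lemma statement, since the quantity actually needed downstream (in Lemma~\ref{lem:slices-weighted}) is $p(\overline{\opt}'\cap\Rl)-O(\epsilon)\,p(\overline{\opt}')$, which is exactly what both proofs deliver.
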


\begin{proof}
We can nicely pack slices from the sets $\left\{ \tilde{\R}_{hor,j}^{(\ell)}\right\} _{j,\ell},\left\{ \tilde{\R}_{ver,j}^{(\ell)}\right\} _{j,\ell}$
with a total profit of $(1-\epsilon)p(\opt'\cap\Rl)$ into the boxes
$\hat{\B}'$. 
We take out the horizontal slices, order them increasingly by width, order
their boxes increasingly by width, and put them back greedily. This
yields a fractional assignment of the horizontal slices for almost
all items in each set $\hat{\R}_{hor,j}^{(\ell)}$ (losing only a
factor $1+\epsilon$ here) to the boxes $\hat{\B}'$ such that in
each box $B\in\hat{\B}'$ there are at most two items that are fractionally
packed into $B$. We drop these fractionally assigned items and keep
only the integrally assigned ones. This loses at most $2|\hat{\B}|$
items which have a total profit of at most $2|\hat{\B}|p(i^{*})$. 
\end{proof}
We guess the subset $\hat{\B}''\subseteq\hat{\B}'$ of boxes that
we packed above into the obtuse pieces and into the respective acute
pieces $P'$ according to Lemma~\ref{lem:partition-acute-pieces-weighted}.
We can do this in time $2^{|\hat{\B}|}=(nN)^{O_{\epsilon}(1)}$. Finally,
in order to argue that $p(\overline{\opt}')\ge p(i^{*})|\hat{\B}|/\epsilon$
recall that Lemma~\ref{lem:shifting-weighted} gives that $p(\overline{\opt}')\ge\Omega\left(\frac{c\log(nN)+\left|\overline{\opt}'\cap\Rh\right|}{\epsilon}p(i^{*})\right)$
for any given constant $c$. Since $|\hat{\B}|\le c'_{\epsilon}(L+|\overline{\opt}'\cap\Rh|)/\epsilon+L\le O_{\epsilon}(\log(nN)+|\overline{\opt}'\cap\Rh|)$
we can define $c$ such that $p(\overline{\opt}')\ge p(i^{*})|\hat{\B}|/\epsilon$.

\end{document}